\documentclass{article}
\usepackage{wright}
\usepackage{tabularx}
\usepackage[table]{xcolor}
\pdfoutput=1

\AtEveryBibitem{
    \clearfield{day}
    \clearfield{month}
    \clearfield{series}
    \clearfield{venue}
    \clearname{editor}
    \clearlist{publisher}
    \clearlist{location}
    \clearfield{venue}
    \clearfield{issn}
    \clearfield{isbn}
    \clearfield{urldate}
    \clearfield{eventdate}
    \clearfield{pages}
    \clearfield{number}
    \clearfield{volume}
}

\renewcommand{\P}{\class{P}}
\newcommand{\NC}{\class{NC}}
\newcommand{\NP}{\class{NP}}
\newcommand{\BPP}{\class{BPP}}
\newcommand{\BQP}{\class{BQP}}
\newcommand{\AM}{\class{AM}}
\newcommand{\PSPACE}{\class{PSPACE}}
\newcommand{\EXP}{\class{EXP}}
\newcommand{\NEXP}{\class{NEXP}}
\newcommand{\FNEXP}{\class{FNEXP}}
\newcommand{\ALL}{\class{ALL}}
\newcommand{\BQPSPACE}{\class{BQPSPACE}}
\newcommand{\BQuPSPACE}{\class{BQ_UPSPACE}}

\newcommand{\puP}{\class{projUnitaryP}}
\newcommand{\puPSPACE}{\class{projUnitaryPSPACE}}
\newcommand{\uP}{\class{unitaryP}}
\newcommand{\uPSPACE}{\class{unitaryPSPACE}}
\newcommand{\uALL}{\class{unitaryALL}}

\newcommand{\Apx}{\normalfont\textsc{Apx}}
\newcommand{\Phase}{\normalfont\textsc{Phase}}
\newcommand{\Entry}{\normalfont\textsc{Entry}}
\newcommand{\Column}{\normalfont\textsc{Column}}

\title{Unitary complexity in polynomial space}
\author{William Kretschmer\thanks{University of Texas at Austin. Email: \texttt{kretsch@cs.utexas.edu}}
\and
Ewin Tang\thanks{Princeton University. Email: \texttt{ent@cs.princeton.edu}}}
\date{}

\begin{document}

\maketitle
\begin{abstract}
We show that if quantum commitments exist, then either there is no polynomial-time solution to the unitary synthesis problem, or $\mathsf{BPP} \neq \mathsf{NEXP}$.
Thus, showing unconditionally that quantum commitments exist would require answering at least one of two longstanding open questions in complexity theory.
We prove our main result as a consequence of a more general lemma, which shows that every unitary in $\mathsf{unitaryPSPACE}$ either cannot be synthesized efficiently relative to any classical oracle, or can be synthesized efficiently with an oracle for $\mathsf{NEXP}$ search problems.
Our lemma has other noteworthy consequences, including that certain oracle separations involving $\mathsf{unitaryPSPACE}$ would imply breakthrough classical lower bounds such as $\mathsf{NC} \neq \mathsf{NP}$.

Along the way, we propose new definitions for the unitary complexity classes $\mathsf{unitaryP}$ and $\mathsf{unitaryPSPACE}$.
Our changes address the biggest conceptual issues with definitions suggested in prior work, and lead to elegant proofs.
We study both implementations that erase garbage and implementations that allow it, because we cannot rule out the possibility that the two definitions differ.
Nevertheless, we show that both definitions can be viewed as special cases of each other.
We also showcase many other ways in which our definitions are robust.
For example, we show that $\mathsf{unitaryPSPACE}$ has an equivalent characterization as the set of unitary transformations whose entries can be computed to arbitrary precision in polynomial space.
Consequently, we deduce that $\mathsf{unitaryPSPACE}$ can generically erase garbage, a result that provably fails relative to unitary oracles.
\end{abstract}
\hypersetup{linktocpage}

\clearpage
\tableofcontents
\clearpage

\section{Introduction}

The complexity of quantum-input computational tasks may behave radically differently than the intuitions we have from classical computation.
For example, a basic principle in classical cryptography is that all superpolynomially-computationally secure cryptosystems rely on the hardness assumption $\P \neq \NP$, with few exceptions.
Yet, several works~\cite{Kre21-pseudorandom,KQST23-prs,KQT24-owf} have established that the same might not be true of \textit{quantum} cryptography, by constructing oracles relative to which $\NP$-complete problems are easy and yet substantial quantum cryptography exists.

Motivated by these separations, a flurry of work has attempted to characterize the world of cryptography that could exist without assuming $\P \neq \NP$, now regularly called ``Microcrypt''~\cite{Sat26-zoo}.
A large fraction of this effort has gone toward identifying the \textit{minimal} primitive that is necessary for quantum cryptography, much the same way that one-way functions are considered the foundation without which classical cryptography could not exist~\cite{IL89-owf,Imp95-average}.
Quantum commitment schemes have emerged as the most promising such candidate, as they have many equivalent notions and are implied by almost all other forms of quantum cryptography~\cite{Yan22-commitments,BCQ23-efi,HMY23-superpositions,KT24-commitments,BJ24-commitments,CCC+26-meta}.

The complexity-theoretic picture surrounding quantum commitments is particularly perplexing: commitments are not known to imply \textit{any} kind of hardness or separation in standard complexity theory.
In fact, some researchers are optimistic that the existence of quantum commitments might not be subject to the same barriers of relativization~\cite{BGS75-p-np}, natural proofs~\cite{RR97-nat-proofs}, or algebrization~\cite{AW09-algebrization} that have thus far prevented attacks on the $\P \stackrel{?}{=} \NP$ problem.
It is conceivable that we could unconditionally prove the computational security of a quantum commitment scheme, using only the complexity-theoretic lower bound techniques that we know of today.
Indeed, \cite{Qia24-unconditional,MNY24-unconditional} recently demonstrated this possibility in the \textit{auxiliary input model}, in which all parties receive copies of a (possibly exponentially-complex) advice state on a polynomial number of qubits.

So, does traditional complexity theory have \textit{nothing} to say about the hardness that underlies quantum cryptography?
Or, is there some yet-unknown method to show that quantum cryptography would still imply breakthrough classical lower bounds?
This uncertainty is best captured in the unresolved \textit{unitary synthesis problem}~\cite{AK07-qcma-qma,Ros21-unitary,LMW24-synthesis}.
First posed over 20 years ago, it asks whether unitary computation always reduces efficiently to classical computation.
More formally, a positive resolution to unitary synthesis means the existence of a $\poly(n)$-time quantum oracle algorithm such that for every $n$-qubit unitary transformation $U$, there exists a choice of the classical oracle function that allows the algorithm to approximately implement $U$.
An efficient unitary synthesis algorithm, then, would at least show that any superpolynomial lower bounds for a unitary implementation problem would yield superpolynomial hardness of \textit{some} classical function.

Still, it is not clear whether unitary synthesis ought to directly inform quantum cryptography or vice versa.
For one, the unitary synthesis problem as originally stated is purely existential; it says nothing about how the oracle function's complexity relates to the unitary $U$!
Conversely, even a negative solution to unitary synthesis could be highly nonexplicit.
A lower bound for synthesizing Haar-random unitaries $U$ might not be directly relevant to the hardness of any particular quantum commitment scheme.

The hope of using quantum states as a new foundation for cryptography and the lack of a clear connection to classical complexity theory have motivated the introduction of a ``quantum-native'' language of complexity theory that treats quantum transformations as first-class objects.
Whereas ordinary complexity classes such as $\P$, $\BQP$, or $\PSPACE$ capture classical-input decision problems, some newly-defined classes express \textit{unitary transformation} problems, where the computational task is to coherently apply a quantum operation on an unknown and arbitrary input state~\cite{RY22,MY23,BEMPQY26-uhlmann,Zha25-unitary-oracles}.
Two obvious choices for classes to study are $\uP$ and $\uPSPACE$, which correspond to the unitary transformations that can be implemented in polynomial time or space, respectively, on a quantum computer.
We defer a formal treatment of the definitions until \Cref{sec:detour}.
For now, we note that the $\uP \stackrel{?}{=} \uPSPACE$ problem has already emerged as a central question in the context of cryptography, because it has been shown that the existence of any falsifiable computationally-secure quantum cryptosystem implies the separation $\uP \neq \uPSPACE$~\cite{BEMPQY26-uhlmann}.
Thus, while we do not know whether the existence of quantum commitments would (say) imply a separation of $\P$ and $\PSPACE$, they would at least separate $\uP$ and $\uPSPACE$.

\subsection{This work}
\label{sec:this_work}

Broadly speaking, the first goal of this paper is to probe the computational hardness of $\uPSPACE$ using classical complexity-theoretic techniques.
For example, our most easily-stated result establishes a direct connection between quantum cryptography, the unitary synthesis problem, and classical lower bounds:

\begin{theorem}[\Cref{cor:efi_bpp_nexp}, informal]
    \label{thm:main-informal}
    Suppose quantum commitments exist.
    Then either:
    \begin{enumerate}
        \item There is no polynomial-time solution to the unitary synthesis problem, or
        \item $\BPP \neq \NEXP$.
    \end{enumerate}
\end{theorem}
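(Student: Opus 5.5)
We argue by contrapositive: assume both that the unitary synthesis problem has a polynomial-time solution and that $\BPP = \NEXP$, and show that quantum commitments cannot exist. The route goes through the general lemma described in the abstract. That lemma says every unitary in $\uPSPACE$ either cannot be synthesized efficiently relative to any classical oracle, or can be synthesized efficiently with an oracle for $\NEXP$ search problems. The work then splits into three steps.

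\textbf{Step 1: place the relevant unitary in $\uPSPACE$.} Quantum commitments are equivalent to EFI pairs, that is, pairs of efficiently preparable mixed states that are statistically far apart yet computationally indistinguishable. Breaking such a pair reduces to implementing a unitary of Uhlmann type. Concretely, take the purifications $|\psi_0\rangle, |\psi_1\rangle$ of the two committed states. An adversary against binding only needs to implement, on the purifying register, the Uhlmann unitary that maps $|\psi_0\rangle$ to something of maximal fidelity with $|\psi_1\rangle$. Following \cite{BEMPQY26-uhlmann}, this Uhlmann transformation can be realized by a family of unitaries in $\uPSPACE$: its matrix entries are computable in polynomial space, via polar decomposition of a $\PSPACE$-computable operator. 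We then use the paper's characterization of $\uPSPACE$ as the unitaries whose entries are computable to arbitrary precision in polynomial space, together with the garbage-erasing property, to get a clean implementation.

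\textbf{Step 2: synthesize the unitary with a $\NEXP$ search oracle.} Polynomial-time unitary synthesis says this unitary can be implemented efficiently relative to \emph{some} classical oracle. So the first branch of the lemma fails, and the second branch gives a polynomial-time quantum algorithm with an oracle for $\NEXP$ search problems that implements it.

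\textbf{Step 3: remove the oracle.} If $\BPP = \NEXP$, then $\NEXP$ search problems can be solved in randomized polynomial time. The standard argument is to fix bits of a witness one at a time with $\NEXP$ decision queries, but witnesses are exponentially long. So we will need the lemma's oracle to be queried only at polynomially many positions of a canonical, say lexicographically first, witness. Each such bit is a $\NEXP$ decision problem, hence in $\BPP$. Replacing the oracle by these $\BPP$ computations, with amplification so that errors are negligible and the answers are essentially deterministic, lets us apply them coherently inside the quantum algorithm. The result is a polynomial-time quantum adversary that implements the Uhlmann unitary and so breaks binding, which is a contradiction.

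\textbf{Main obstacle.} The crux is the lemma itself: showing that the existence of \emph{any} classical oracle enabling synthesis implies that a \emph{canonical} oracle works, one computable by $\NEXP$ search. The plan for this is to guess an exponential-size oracle truth table and verify that the fixed synthesis algorithm with that table implements $U$. Verification can be done in exponential time by computing the algorithm's matrix and comparing it entrywise with the $\PSPACE$-computable entries of $U$, up to inverse-polynomial diamond-norm error. An $\NEXP$ search oracle then returns a valid table, and the synthesis algorithm reads only polynomially many of its bits. The subtle points are, first, that the oracle must be consistent across all input lengths the algorithm queries, and second, that the approximation tolerance must survive discretization.
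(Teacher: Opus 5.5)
Your overall plan matches the paper's architecture: show that an EFI-breaking unitary lies in $\uPSPACE$, use the dichotomy lemma to put it in $\uP^{\FNEXP}$, then remove the oracle. Your sketch of the dichotomy lemma is essentially the paper's proof of \Cref{lem:unitarypspace_dichotomy_witnexp}.

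The genuine gap is Step~3. Bits of the lexicographically first witness are not known to form a $\NEXP$ decision problem. To decide bit $i$ you must pin down the $i-1$ preceding bits, which are exponentially many, so the scaled-up search-to-decision reduction must query the decision oracle on exponentially long prefixes. Written directly, ``bit $i$ of the lex-first witness is $1$'' is $\exists y\,[R(x,y)\wedge y_i=1\wedge\forall y'<y\ \neg R(x,y')]$ over exponentially long strings. That is naturally an $\EXP^{\NP}$ computation with exponentially long queries, not a $\NEXP$ one. Querying only polynomially many positions does not help, since a single bit already has this complexity. Nor is $\BPP=\NEXP$ known to put it in $\BPP$: that equality concerns polynomial-length inputs and does not transfer to exponentially long prefixes. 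This is exactly the ``$\EXP$ is not self-low'' obstruction in the remark after \Cref{lem:ikw_easy_witness}.

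The missing ingredient is the easy witness lemma of Impagliazzo, Kabanets, and Wigderson. $\BPP=\NEXP$ gives $\BPP=\AM=\PSPACE=\EXP=\NEXP$, and $\AM=\NEXP$ implies every $\FNEXP$ relation has a witness oracle in $\EXP$. For example, take the truth table of the first small circuit whose truth table is a valid witness, found by brute force. Then $\uPSPACE\subseteq\uP^{\FNEXP}\subseteq\uP^{\PSPACE}=\uP^{\BPP}=\uP$, the last equality by \Cref{prop:tidy_contains_bbqp,prop:tidy_low}.

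Step~1 is also unsupported as written. The result of \cite{BEMPQY26-uhlmann} placing Uhlmann transformations in $\uPSPACE$ is for their definitions (partial isometries, error scaling polynomially in $1/\eps$), which the paper stresses are materially more generous. Here you need a fixed full unitary whose entries are computable to error $\eps$ in space $\poly(n,\log 1/\eps)$. The polar part of an operator is discontinuous where singular values vanish, and there is no evident $2^{-\poly(n)}$ lower bound on the nonzero ones. So entrywise computability does not follow from computability of the operator. The Uhlmann attack also breaks binding only for the statistically hiding flavor, so you would additionally need flavor conversion.

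The paper avoids all of this with a different unitary. Take $A_n=\rho_{0,n}-\rho_{1,n}$, whose entries have an FPSEAS by a path-sum calculation. Then $e^{iA_n}$ has an FPSEAS by truncating its Taylor series, with no conditioning issues since $A_n$ has operator norm at most $1$. So it lies in $\uPSPACE$ by \Cref{thm:entrywise_characterization}. Phase estimation on it gives an approximate Helstrom measurement, so $\uP=\uPSPACE$ would yield an efficient distinguisher (\Cref{prop:efi_separates_up_upspace}).
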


Recall that $\BPP$ is the set of problems solvable in randomized polynomial time, while $\NEXP$ is nondeterministic \textit{exponential} time.
Some might be surprised to find that $\BPP \stackrel{?}{=} \NEXP$ is a question that we do not yet know the answer to, but it has been open ever since the classes were first defined and remains one of the frontier questions in classical complexity theory.
Consequently, any unconditional proof that quantum commitments exist would have to resolve at least one of a 20+ year old open problem~\cite{AK07-qcma-qma} or a 40+ year old one~\cite{Hel86-bpp-exp,BFL91-mip-nexp}!

We prove \Cref{thm:main-informal} as a consequence of the following key lemma, stated in terms of unitary complexity classes:

\begin{restatable}[\Cref{lem:unitarypspace_dichotomy_witnexp}, informal]{lemma}{lemintrokey}
    \label{lem:all-fnexp-informal}
    For every $\calU \in \uPSPACE$, either
    \begin{enumerate}
        \item $\calU \not\in \uP^\ALL$, or
        \item $\calU \in \uP^\FNEXP$.
    \end{enumerate}
\end{restatable}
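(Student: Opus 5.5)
Suppose $\calU = (U_n)_n \in \uPSPACE$ and $\calU \in \uP^\ALL$. Then there is a polynomial-time quantum oracle algorithm $A$ and some classical oracle $O$ such that $A^O$ implements $U_n$ to within small error on every input state. The plan is to show that a suitable oracle can be \emph{found} by an $\FNEXP$ machine. The key observation is that $A$ runs in time $p(n)$, so on inputs of size $n$ it queries $O$ only on strings of length at most $p(n)$. Only the finite restriction $O_n$ of $O$ to those lengths matters, and $O_n$ is a truth table of size $2^{\poly(n)}$. The $\FNEXP$ machine will, on input $1^n$, guess such a truth table and verify it in exponential time. A $\uP$ algorithm can then query this $\FNEXP$ function bit by bit. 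Concretely, the search problem is ``given $(1^n, x)$, output $O_n(x)$ for a canonical (say lexicographically first) valid $O_n$''. Alternatively, it is ``output a valid table'', with the $\uP$ machine querying individual bits through an index argument. The canonicity issue is handled by asking for the lexicographically first valid table, which is still computable with an $\NEXP$ oracle by prefix search.

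The main step is exponential-time verification that a candidate $O_n$ makes $A^{O_n}$ implement $U_n$. I would use the paper's characterization of $\uPSPACE$ via entries computable to arbitrary precision in polynomial space (announced in the abstract; I assume it as the earlier result). With it, the verifier can compute the entries of $U_n$ to $2^{-\poly(n)}$ precision in exponential time. Given the explicit truth table $O_n$, it can also compute every entry of the channel or unitary that $A^{O_n}$ implements (including ancilla and garbage handling) by brute-force simulation of the $2^{\poly(n)}$-dimensional circuit in exponential time. It then checks closeness in the diamond or operator norm required by the definition of $\uP$, which is again an exponential-size linear-algebra computation.

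To make the check robust, I would accept when the distance is at most $\epsilon$ and promise to find one whenever some oracle achieves $\epsilon/2$, with the thresholds chosen to match the error parameters in the definition of $\uP$. Since $\calU \in \uP^\ALL$ guarantees that a good table exists, the $\NEXP$ search succeeds. The resulting $\uP^\FNEXP$ algorithm runs $A$ and answers each oracle query by calling the $\FNEXP$ oracle on $(1^n, x)$.

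I expect the main obstacle to be this robustness and precision issue. The definition of $\uP$ likely involves error $1/\poly(n)$, or arbitrary inverse-polynomial error with an algorithm family indexed by the error, and possibly uniformity of $A$. The oracle found must work for the right error parameter, and exact-versus-approximate diamond-norm computation must be done in exponential time with enough precision. The computation is a semidefinite program, so it is only computable approximately. To avoid it, I would first try to use a simpler sufficient test, such as the operator norm on the purified/Stinespring implementation, or the average-case fidelity with a maximally entangled input, which is computable exactly from the entries. The cost is a polynomial loss in the error, absorbed by requesting smaller error from $A$.
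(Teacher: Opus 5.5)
Your plan is the paper's proof. You guess the truth table of the oracle on every string the circuit can query, and verify it in exponential time by brute-force simulation of the oracle circuit. The comparison is against entries of $U_n$ obtained from the polynomial-space entrywise characterization (\Cref{thm:entrywise_characterization}). The $\uP$ machine then reads bits of an accepted table through a witness oracle. Three points need fixing, none of which changes the strategy.

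First, drop the lexicographically-first detour. It is unnecessary: a witness oracle in the paper's sense is a fixed language $W$, so all bits $W(z,i)$ for one instance $z$ automatically come from a single accepted witness. The justification you give also does not hold for a polynomial-time machine. Prefix search for an exponentially long witness needs exponentially many rounds, with exponentially long prefixes as queries; the paper's remark after the easy witness lemma describes the same obstruction.

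Second, index the relation by $(1^n,\eps)$ rather than $1^n$. Circuits $C_{n,\eps}$ exist for every $\eps$ and query strings of length $\poly(n,\log 1/\eps)$, so no single finite table serves all $\eps$. Verification time $2^{\poly(n,\log 1/\eps)}$ is still exponential in that input length, and the simulation queries $W$ at $(1^n,\eps,i_s)$.

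Third, the test must be sensitive to global phase and to the ancilla returning to $\ket{0^a}$, since $\uP$ requires closeness to an \emph{extension}. A diamond-norm or channel-fidelity test could accept tables that implement $e^{i\theta}U_n$ or leave garbage. The paper uses a simple check that avoids any SDP:
\begin{itemize}
\item Run $C_{n,\delta}$ with $\delta=\eps/(6\sqrt{2^n})$.
\item Estimate each zero-ancilla column $C^w_{n,\delta}\ket{y}\ket{0^a}$ and $(U_n\ket{y})\ket{0^a}$ entrywise.
\item Accept iff every column's estimated distance is at most $2\delta$.
\item Then \Cref{lem:basis_to_op_norm} and \Cref{lem:extension_close_op_norm} turn true column error $3\delta$ into an extension within $\eps$.
\end{itemize}
The $\sqrt{2^n}$ factor lost here is an exponential multiplicative loss, not a polynomial one. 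It is affordable only because the definition has $\polylog(1/\eps)$ error scaling; it would not be under the $1/\poly(n)$-type definitions you speculated about.
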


Here, $\uP^\class{K}$ is the set of unitary problems that can be solved in polynomial time with an oracle from $\class{K}$, and $\ALL$ is the set of all languages.
Thus, $\uP^\ALL$ is shorthand for the set of unitaries that \textit{do} admit an efficient unitary synthesis algorithm relative to some oracle.
(Indeed, in the above notation, one can express the unitary synthesis problem succinctly as $\uP^\ALL \stackrel{?}{=} \uALL$, where $\uALL$ is the set of all unitaries.)
For the second item, $\FNEXP$ denotes the set of $\NEXP$ search problems, and an oracle for $\FNEXP$ means the ability to query the bits of some valid $\NEXP$ witness (see \Cref{sec:nexp_relations} for more details).
Hence, \Cref{lem:all-fnexp-informal} shows that for every unitary $\calU$ implementable in polynomial space, either $\calU$ is ``unitary synthesis-hard'' in the sense that it cannot be implemented efficiently relative to any classical oracle, or $\calU$ reduces efficiently to $\NEXP$ witness search.

Lombardi, Ma, and Wright~\cite{LMW24-synthesis} once remarked that we do not even know whether an oracle for the \textit{halting problem} suffices to break every quantum commitment scheme, or more generally to synthesize unitaries in $\uPSPACE$.
While true, \Cref{lem:all-fnexp-informal} shows that that statement is perhaps misleading, as for every unitary in $\uPSPACE$, either a halting oracle is as useless as any other, or it is substantial overkill.

Another way to interpret \Cref{lem:all-fnexp-informal} is as a statement that a positive resolution to unitary synthesis would immediately place an upper bound on the power of an oracle needed to implement $\uPSPACE$ efficiently:

\begin{corollary}[\Cref{cor:u-all-up-all}, informal]
    \label{cor:upspace_fnexp}
    If $\uALL = \uP^\ALL$, then $\uPSPACE \subseteq \uP^\FNEXP$.
\end{corollary}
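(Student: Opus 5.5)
The corollary follows directly from \Cref{lem:all-fnexp-informal}, applied pointwise to each member of $\uPSPACE$. The argument first fixes an arbitrary $\calU \in \uPSPACE$. By the lemma, either $\calU \notin \uP^\ALL$ or $\calU \in \uP^\FNEXP$.

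The hypothesis $\uALL = \uP^\ALL$ rules out the first alternative. Since $\uALL$ is the class of all unitary problems, it trivially contains $\uPSPACE$. This gives $\calU \in \uPSPACE \subseteq \uALL = \uP^\ALL$, so the second alternative must hold: $\calU \in \uP^\FNEXP$. As $\calU$ was arbitrary, $\uPSPACE \subseteq \uP^\FNEXP$.

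The only point needing care is definitional rather than mathematical. The formal versions of $\uALL$, $\uP^\ALL$, $\uPSPACE$ and $\uP^\FNEXP$ given in \Cref{sec:detour} must be mutually compatible. In particular, the inclusion $\uPSPACE \subseteq \uALL$ requires that $\uALL$ be phrased over the same kind of objects (families of unitaries, possibly with garbage, up to the same approximation notion) as $\uPSPACE$. Likewise, the version of the lemma invoked must use the same variant of $\uP$ (garbage-erasing or not) as the corollary. If the two variants differ, I would state the corollary separately for each one, using the corresponding version of \Cref{lem:unitarypspace_dichotomy_witnexp}. With those conventions fixed, no real obstacle remains; all the substantive work sits in the lemma itself.
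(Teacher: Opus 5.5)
Your proposal is correct and matches the paper's proof: the paper states the formal version, \Cref{cor:u-all-up-all}, as immediate from \Cref{lem:unitarypspace_dichotomy_witnexp}, because $\uPSPACE \subseteq \uALL = \uP^\ALL$ rules out the first alternative, and the same argument applied to $\uPSPACE^L$ gives the relativized statement. Your definitional worry is unnecessary: $\uALL$ is simply the set of all unitary languages $\{U_n\}_{n \in \N}$, with no approximation or garbage notion attached, and the lemma and corollary both use the garbage-free $\uP$.
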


\cite{BEMPQY26-uhlmann} raised the question of how unitary complexity classes formally relate to their classical counterparts.
As one concrete example, they asked: does $\P = \PSPACE$ imply $\uP = \uPSPACE$, or can one give oracle evidence that it does not?
More generally, one could also ask about a ``$\PSPACE$-effective'' version of unitary synthesis: $\uPSPACE \stackrel{?}{=} \uP^\PSPACE$.
Indeed, if polynomial-space unitaries \textit{can} be synthesized efficiently, it is natural to want the classical oracle to also be computable in polynomial space.
Assuming the unitary synthesis conjecture, \cref{cor:upspace_fnexp} almost obtains ``$\PSPACE$-effective'' synthesis, but with the $\PSPACE$ oracle replaced by $\FNEXP$.
Nevertheless, we emphasize that $\FNEXP$ is not known to be separated from $\PSPACE$.
For example, if $\AM = \NEXP$, then the $\FNEXP$ oracle can be replaced by $\PSPACE$ as a consequence of the \textit{easy witness lemma} of Impagliazzo, Kabanets, and Wigderson~\cite{IKW02-easy-witness}.
Thus \cref{cor:upspace_fnexp} gives a condition under which $\PSPACE$-effective unitary synthesis reduces to full unitary synthesis.
Or, stated another way, we have:

\begin{corollary}[\Cref{cor:nc_np_separation}, informal]
    \label{cor:am_nexp_informal}
    If $\uP^{\PSPACE} \neq \uPSPACE$, then at least one of the following holds:
    \begin{enumerate}
        \item $\uP^{\ALL} \neq \uALL$, or
        \item $\AM \neq \NEXP$.
    \end{enumerate}
\end{corollary}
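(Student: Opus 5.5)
The plan is to derive the statement from \Cref{lem:all-fnexp-informal} (equivalently \Cref{cor:upspace_fnexp}) together with the easy witness lemma of Impagliazzo, Kabanets, and Wigderson~\cite{IKW02-easy-witness}. We prove the contrapositive: assume $\uP^{\ALL} = \uALL$ and $\AM = \NEXP$, and deduce $\uPSPACE \subseteq \uP^{\PSPACE}$. The reverse inclusion $\uP^{\PSPACE} \subseteq \uPSPACE$ is standard, since a polynomial-space machine can answer $\PSPACE$ queries coherently and uncompute them. Together these give $\uP^{\PSPACE} = \uPSPACE$.

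\textbf{Step 1.} From $\uALL = \uP^\ALL$, \Cref{cor:upspace_fnexp} gives $\uPSPACE \subseteq \uP^{\FNEXP}$. In more detail, every $\calU \in \uPSPACE$ lies in $\uALL = \uP^\ALL$, so the first alternative of \Cref{lem:all-fnexp-informal} fails. Hence $\calU$ is implemented by a polynomial-time quantum algorithm that queries the bits of some valid witness for a fixed $\NEXP$ relation $R$. We need to keep track of what ``an oracle for $\FNEXP$'' means: the algorithm must succeed for \emph{any} valid witness oracle, or at least for some canonical choice that we are free to fix. We will fix a convenient choice in Step 2.

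\textbf{Step 2.} Replace the $\FNEXP$ oracle by a $\PSPACE$ oracle using the easy witness lemma. If $\AM = \NEXP$, then $\NEXP \subseteq \AM \subseteq \P/\poly$, and by~\cite{IKW02-easy-witness} every $\NEXP$ verifier whose instances have witnesses has witnesses that are truth tables of polynomial-size circuits. Moreover, $\NEXP = \AM \subseteq \PSPACE$ collapses $\EXP$ and $\NEXP$ to $\PSPACE$, since $\PSPACE \subseteq \EXP$. This gives the following $\PSPACE$ procedure. On input $(x,i)$, enumerate circuits of size $n^{c}$ in lexicographic order, and for each circuit $C$ check whether its truth table is a valid witness for $x$. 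The check runs the exponential-time verifier, which is a $\NEXP$-, hence $\PSPACE$-, computation on the succinct input $(x,C)$. Return bit $i$ of the truth table of the first circuit that passes. Because the first passing circuit is determined uniquely, this defines a single consistent valid witness oracle, and it is computable in $\PSPACE$. Substituting it into the algorithm from Step 1 yields $\calU \in \uP^{\PSPACE}$.

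\textbf{Main obstacle.} The delicate point is Step 2's interface with the precise definition of $\uP^{\FNEXP}$ in \Cref{sec:nexp_relations}. There are three things to check. First, the oracle must be a deterministic, canonical witness function, since coherent queries require consistency across branches; the lexicographically-first small circuit provides this. Second, the witness length and the circuit size bound must be uniform across input lengths. Third, the search must be performed correctly even on inputs with no witness, where the $\PSPACE$ oracle should return a default value. We also need to confirm that the consequence ``$\NEXP \subseteq \P/\poly$ implies succinct witnesses'' applies to the particular $\NEXP$ relation produced by \Cref{lem:unitarypspace_dichotomy_witnexp}. If that relation needs witnesses satisfying extra structure, we would fold that structure into the verifier and rely on the lemma's statement for arbitrary $\NEXP$ verifiers.
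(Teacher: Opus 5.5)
Your overall plan matches the paper's: prove the contrapositive, use \Cref{cor:u-all-up-all} to get $\uPSPACE \subseteq \uP^{\FNEXP}$, then replace the witness oracle by a $\PSPACE$ one. Your consistency worry in Step 1 is moot. By the paper's convention the base machine is correct for \emph{every} witness oracle, so any single $\PSPACE$-computable witness oracle suffices.

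The gap is in Step 2. The inference ``$\AM = \NEXP$, hence $\NEXP \subseteq \AM \subseteq \P/\poly$'' is not valid. The inclusion $\AM \subseteq \P/\poly$ is not known, since it would imply $\NP \subseteq \P/\poly$. The hypothesis $\AM = \NEXP$ does not supply it either: Adleman-style amplification only gives $\AM \subseteq \NP/\poly$. Without $\NEXP \subseteq \P/\poly$, you cannot invoke the form of the easy witness lemma saying every accepted $\NEXP$ instance has a witness that is the truth table of a polynomial-size deterministic circuit. Your lexicographic search over circuits of size $n^c$ may then find nothing, and the resulting oracle need not be a witness oracle. The items in your ``main obstacle'' list are routine by comparison; this is the step that actually fails.

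The fix is to use the easy-witness result in the form the paper does (\Cref{lem:ikw_easy_witness}, i.e.\ \cite[Theorem 29]{IKW02-easy-witness}): if $\AM = \NEXP$, then every $\FNEXP$ relation has a witness oracle in $\EXP$. You already correctly observed that $\AM = \NEXP$ forces $\EXP = \PSPACE$. Together these give a $\PSPACE$ witness oracle directly, with no circuit enumeration, and hence $\uPSPACE \subseteq \uP^{\PSPACE}$. With the easy inclusion $\uP^{\PSPACE} \subseteq \uPSPACE$, this completes the contrapositive.
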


We remark that most of our results relativize in some fashion, and in the formal statements of our results, we will always be completely explicit about how the statements hold relative to oracles.
In one case, we even show that \textit{a mere oracle separation} involving $\uPSPACE$ would imply new non-oracular lower bounds:

\begin{corollary}[\Cref{cor:nc_np_separation}, informal]
    \label{cor:nc_np_informal}
    If there is a language $L$ relative to which $\uP^{\PSPACE^L} \neq \uPSPACE^{L}$, then either
    \begin{enumerate}
        \item $\uP^{\ALL} \neq \uALL$, or
        \item $\NC \neq \NP$
    \end{enumerate}
\end{corollary}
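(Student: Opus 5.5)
We argue the contrapositive. Assume both $\uP^{\ALL} = \uALL$ and $\NC = \NP$, and fix an arbitrary language $L$; the goal is $\uPSPACE^{L} \subseteq \uP^{\PSPACE^L}$.

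\textbf{Step 1: apply the key lemma relative to $L$.} Take $\calU \in \uPSPACE^L$ and use the relativized form of \Cref{lem:all-fnexp-informal} (i.e.\ \Cref{lem:unitarypspace_dichotomy_witnexp}, whose formal statement we expect to relativize). It gives that either $\calU \notin \uP^{\ALL}$, or $\calU$ is in $\uP$ with an oracle for an $\FNEXP^L$ search problem. Since $\uP^{\ALL} = \uALL$ contains every unitary, the first alternative is impossible. Hence $\calU$ reduces in polynomial time to querying bits of a witness for some $\NEXP^L$ relation $R$.

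\textbf{Step 2: exploit the structure of $R$.} We will inspect the proof of the key lemma to check that $R$ only queries $L$ on strings of length at most $p(n)$ for a polynomial $p$. This is plausible because $R$ verifies a classical description tied to a polynomial-space machine, and that machine itself makes only polynomial-length queries. Concretely, $R$ is then an $\NEXP$ relation whose input is $x$ together with the truth table $T$ of $L$ restricted to lengths $\le p(n)$. The string $T$ has size $N = 2^{O(p(n))}$.

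\textbf{Step 3: make the witness bits computable in $\PSPACE^L$.} We must answer queries to a fixed witness, say the lexicographically first one. Searching for that witness bit by bit is a $\P^{\NP}$ computation, and hence lies in $\mathsf{PH}$, on input $(x,T)$ padded to length $\mathrm{poly}(N)$. The assumption $\NC = \NP$ collapses $\mathsf{PH}$ to $\NC$, because $\NC$ is closed under complement and under $\NC$-reductions. Applying this unrelativized collapse to the padded language, each witness bit is computed by a logspace-uniform circuit of size $\mathrm{poly}(N)$ and depth $\mathrm{polylog}(N) = \mathrm{poly}(n)$, whose inputs are bits of $T$.

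A depth-$d$ uniform circuit can be evaluated in space $O(d \cdot \log N)$ by depth-first recursion. Each input bit of $T$ is supplied on demand by one query to $L$. So every witness bit is computable in $\PSPACE^L$, and therefore $\calU \in \uP^{\PSPACE^L}$.

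\textbf{Main obstacle.} The delicate step is Steps 2--3: making the non-relativizing hypothesis $\NC = \NP$ usable relative to $L$. This works only if the $\NEXP^L$ relation produced by the key lemma never queries $L$ beyond polynomial length, so that the oracle can be folded into an exponentially long but explicitly readable input. We must also confirm that the witness search carries through the padding with uniformity preserved. If the lemma's relation instead uses exponential-length queries, we would first modify the construction so that all oracle access is routed through the $\uPSPACE^L$ machine's polynomial-length queries.
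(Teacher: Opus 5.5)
Your proposal is correct and follows essentially the same route as the paper. The formal key lemma (\Cref{lem:unitarypspace_dichotomy_witnexp}) already yields a relation in $\FNEXP^{L[\poly]}$. The paper's \Cref{lem:fnp_fnc} then folds the truth table of $L$ into an $\NP$ instance, as you do, and uses $\NC = \NP$ to obtain logspace-uniform polylog-depth witness-producing circuits, which it evaluates recursively in $\PSPACE^L$ with one $L$-query per input bit. The paper gets these circuits via search-to-decision rather than your lexicographically-first-witness $\P^{\NP}$ computation, an immaterial difference. Like the paper, you only prove $\uPSPACE^L \subseteq \uP^{\PSPACE^L}$; the reverse containment needed for equality follows immediately from \Cref{prop:bqpspace,prop:tidy_contains_bbqp,prop:tidy_low}.
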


Here, $\NC$ is the set of problems solvable in polylogarithmic time on a machine with a polynomial number of parallel processors.
Like most complexity separations, whether $\NC \stackrel{?}{=} \NP$ is wide open, and we emphasize that the second implication does \textit{not} involve relativization.
This is noteworthy because \cite{BEMPQY26-uhlmann} left open the problem of constructing a classical oracle $L$ relative to which $\P^L = \PSPACE^L$ but $\uP^L \neq \uPSPACE^L$.
\Cref{cor:nc_np_separation} explains why such an oracle has remained elusive: it would either have to resolve the unitary synthesis problem, or yield a breakthrough lower bound in classical complexity!

\section{Detour: (re)defining unitary complexity classes}
\label{sec:detour}

It has proven challenging to settle upon good definitions of unitary complexity classes, in part because they behave less robustly than traditional complexity classes such as $\P$ or $\BQP$.
Indeed, there are many reasonable-looking choices one can make in the definitions that lead to starkly different notions.
As one example, unitary complexity classes are typically defined allowing some kind of error in the implementation, not just because we want to capture probabilistic algorithms, but also because often the gates in a quantum circuit are discrete even though the unitary families we want to implement are continuous.
Yet unlike with (say) $\BQP$, there is no generic method for error \textit{reduction}: the standard technique of amplifying a Monte Carlo algorithm by repetition plus majority voting has no analogue in the unitary setting.
So, unitary complexity classes must make a choice for how to bake error reduction into the definition.
The many possible choices have led to numerous different definitions for the complexity classes $\uP$ (often denoted $\class{unitaryBQP}$\footnote{We find it strange and inconsistent that the ``$\class{BQ}$'' prefix typically makes its way into $\class{unitaryBQP}$ but not $\uPSPACE$. As the ``$\class{Q}$'' meaning ``quantum'' is made redundant by ``$\class{unitary}$,'' we prefer the more succinct notation.}) and $\uPSPACE$~\cite{RY22,MY23,BEMPQY26-uhlmann,Zha25-unitary-oracles}.

\subsection{Why do we need new definitions?}

Despite the variety of definitions for unitary complexity classes, all existing proposals present serious issues.
Disconcertingly, it is not clear whether the results claimed in these works are insensitive to the selections made in their respective definitions.
If we wish to put unitary complexity on solid foundations, we have to ensure that the definitions that we choose are sensible.
We now describe the most critical issues with prior definitions: these involve showing that one can ``smuggle'' a hard-to-compute unitary into a class which intuitively should not capture such a problem.

Rosenthal and Yuen~\cite{RY22} and Metger and Yuen~\cite{MY23} were some of the earliest works to study unitary complexity, but the first formal treatment of unitary complexity classes was defined and investigated in the work of Bostanci, Efron, Metger, Poremba, Qian, and Yuen~\cite{BEMPQY26-uhlmann}.
The central trouble with their definition is with how it treats error: an $n$-qubit unitary is in $\uP$ as defined by them, when there is a circuit of size $\poly(n, 1/\eps)$ which approximates the unitary to $\eps$ error.\footnote{The arXiv v1 version of \cite{BEMPQY26-uhlmann} used a different definition with further issues that were discovered by the first author, Fermi Ma, and John Wright, and subsequently raised to the authors.
The arXiv v3 of their preprint details these objections, responds, and makes changes to address some of them.
The class as discussed here is based on their most recent definition.}
This definition is self-consistent until we introduce a notion of $\uPSPACE$, which is defined analogously, except that the circuit must merely use space $\poly(n, 1/\eps)$.
At this point, we can try composing different unitary complexity classes: for example, we expect that $\uPSPACE$ is not strengthened when we provide it with a $\uP$ oracle; in other words, we expect that $\uPSPACE^\uP = \uPSPACE$, as is true for the non-unitary versions of these classes.
More generally, we should even expect that $\uPSPACE$ is self-low, meaning $\uPSPACE^\uPSPACE = \uPSPACE$.
However, this is not the case under the conventional notion of oracle access: consider the task of, given an $x \in \{0,1\}^n$, applying the unitary
\begin{equation}
    \label{eq:u_x_exp}
    U_x = \begin{bmatrix}
        1 \\ & e^{\ii \pi L(x) / 2^n}
    \end{bmatrix},
\end{equation}
where $L(x)$ is a function which requires exponential time and space to compute.
Then this unitary is in their $\uP$, as it is close to the identity for large $\eps$, and the phase can be computed in $\poly(1/\eps)$ time for $\eps$ exponentially small in $n$.
However, if we give a polynomial-space circuit access to $U_x$, it can compute $U_x^{2^n} = Z^{L(x)}$ and therefore implement a unitary which requires exponential space.

\cite{BEMPQY26-uhlmann} handles this corner case by switching to an extremely non-standard notion of oracle access: they specify that, when a class is given an oracle, the resulting circuit must satisfy the error guarantee, even when the oracle is replaced with \emph{an adversarially chosen perturbation of the oracle}.
Encoding robustness into the notion of the oracle is an unnecessary concession: if instead we required the time/space complexities to scale polynomially with $\log(1/\eps)$ instead of $1/\eps$, this issue does not arise, and the statement $\uPSPACE^\uP = \uPSPACE$ holds for the conventional notion of oracle.
However, the authors settle for these infelicities, as their main results critically rely on the more generous error scaling of their definitions.

The other prominent suite of definitions was proposed by Zhandry~\cite{Zha25-unitary-oracles}.
Zhandry points out that having complexity scale inverse polynomially with error is unsuitable for cryptographic applications, and proposes three alternative definitions of $\uP$/$\uPSPACE$, where a circuit for implementing $U$ is required to have either arbitrarily-small inverse-polynomial, negligible, or arbitrarily-small inverse-exponential error in $n$.
Curiously, the dependence on the error scaling is handled in a non-uniform manner, meaning that the length of source code describing a quantum algorithm could blow up arbitrarily between (say) $\eps = 1/n$ to $\eps = 1/{2n}$.
Though, inverse-exponential error does address some of the issues of the prior work.
However, having error scale with $n$ (as opposed to being a parameter which one can choose independently) again leads to issues under composition.
As before, we can hide a difficult problem in a slight phase, such that it can be implemented to inverse-exponential error, but when this is given as an oracle to a more powerful class which can apply it repeatedly, the difficult problem is revealed.
Consider the example of, given $x \in \{0,1\}^n$, applying
\begin{equation*}
    U_x = \begin{bmatrix}
        1 \\ & e^{\ii \pi L(x) / 2^{2^n}}
    \end{bmatrix},
\end{equation*}
but now let $L(x)$ be an undecidable function, such as the halting problem.
Then the identity implements $U_x$ to any inverse-exponential error $\eps = 2^{-\poly(n)}$.
However, a polynomial-time algorithm given the padded input $x | 1^{2^{2^{|x|}}}$ can use an oracle for $U_x$ to compute $U_x^{2^{2^n}} = Z^{L(x)}$, and thus solves an undecidable problem.
The failure of ``closure under padding'' is resolved by requiring the existence of an efficient circuit for all $\eps$, not just for $\eps$ with a particular scaling in $n$.

\subsection{Motivating our definitions}

We wanted to ensure that our results rest on rigorous footing, rather than leaving any possibility that they are the consequences of arbitrary choices.
As such, we consider it essential to derive a set of unitary complexity definitions forward from first principles instead of backward from results.
We posit that a good complexity class should satisfy several properties:

\begin{enumerate}
    \item \textbf{Robust:} the class should remain the same, even if the definition is altered slightly, or the underlying implementation details are changed.
    \item \textbf{Consistent:} definitions should follow a common pattern across similar classes.
    \item \textbf{Intuitive:} the class should behave consistently with our intuition about the types of computations we are trying to capture.
    For example, it should include the kinds of problems which we expect it to include.
    \item \textbf{Composable:} fundamental ``composition'' operations like oracle reductions should be natural and produce sensible outcomes.
\end{enumerate}

These desiderata will guide us as we make our definitions.
We will discuss the reasoning behind how we make our choices, and point out when our choices (1) are cosmetic, so a different choice would produce the same class; (2) lead to materially different but reasonable classes; or (3) are critical to making well-behaved classes.

In this work, we consider $\uP$ and $\uPSPACE$, which ought to have the following colloquial definition.

\begin{center}
    $\uP$/$\uPSPACE$ is the class of \textit{unitary languages} which \\ can be \textit{implemented} in \textit{polynomial time/space} on a \textit{quantum computer}.
\end{center}

We now formalize this definition, piece by piece.

\paragraph{``unitary language'':}
For us, a unitary problem is a sequence of unitaries of increasing size, $\{U_n\}_{n \in \N}$ where $U_n$ is a unitary on $n$ qubits (\cref{def:u-language}).
This sequence of unitaries can be thought of as defining a problem for every input size, where the problem is to apply the appropriate unitary to the input state.
For example, one might define the unitary language \textsc{QFT} of applying the $n$-qubit quantum Fourier transform.

Previous works~\cite{BEMPQY26-uhlmann,Zha25-unitary-oracles} define unitary problems differently: they index their unitaries by strings $x \in \{0,1\}^*$.
This does not affect the expressivity of unitary languages.
Languages in our sense can be converted to languages in their sense by taking $U_x = U_{\abs{x}}$, and languages in their sense can be converted into languages in our sense by defining $U_{2n}$ to be the unitary which takes $\ket{x}\ket{\psi}$ to $\ket{x} U_x \ket{\psi}$.
However, it is often useful to apply the unitary to a coherent superposition of inputs $x$, which makes the latter notion more convenient.

\cite{BEMPQY26-uhlmann} also allow their implementation problems to consist of \textit{partial isometries}, which only need to implement a given unitary on some promised subspace.
Hence, partial isometries generalize unitary transformations in the same way that promise problems generalize languages.
While it would be interesting to generalize our definitions in the same fashion, we choose to focus on the conceptually simplest case.

Because we choose to only index by input size, we do not have a hard-coded classical input to our problems.
Note that we cannot make our choice simpler by discarding the index over $n$, since we need to apply a different map for different input sizes.

\paragraph{``implemented'':}
Here, we make two different choices about how to formalize this.
For us, to implement a unitary $U_n$ is to be able to, for every $\eps$, find a unitary quantum circuit $C_{n, \eps}$ such that, for all $n$-qubit states $\ket{\psi}$, one of the following two conditions holds:
\begin{align*}
    C_{n,\eps}(\ket{\psi}\ket{0^a}) &\approx_\eps (U\ket{\psi}) \ket{0^a} \\
    C_{n,\eps}(\ket{\psi}\ket{0^a}) &\approx_\eps (U\ket{\psi}) \ket{g} \text{ for some } \ket{g} \text{ independent of } \ket{\psi}
\end{align*}
The first definition requires that our implementation return the ancilla qubits to their initial state; the second is not concerned with this choice, and asks only that $U$ be correctly applied to the first register.
We refer to the resulting classes as $\uP/\uPSPACE$ and $\puP$/$\puPSPACE$, respectively.
Note that the first definition maintains global phase information on $U$, but the second definition allows the circuit to implement $U$ with any global phase, because the phase can be viewed as part of the garbage state $\ket{g}$.
The forgetting of global phase is the sense in which $\puP$ is ``projective,'' referring to the projective unitary group of unitary matrices modulo phase.
However, it is not \textit{a priori} clear whether the two definitions produce the same class after disregarding global phase.

The choice between allowing or disallowing garbage was the most contentious topic among the authors of this paper.
Both appear to be valid choices, both could be argued to capture the notion of ``implementing'' a unitary, and both have precedent in prior work, with \cite{Zha25-unitary-oracles} using the clean notion and \cite{BEMPQY26-uhlmann} allowing garbage.
The garbage-free definition has the advantage that it is, well, \textit{cleaner}.
And even though global phase is unobservable, it is often convenient to keep track of it in quantum algorithms, e.g.\ in linear combinations of unitaries~\cite{cw12}.
Certain operations, such as mapping $U$ to its controlled operation $\controlled U$, only make sense when the global phase is defined.
Zhandry uses precisely these arguments to favor $\uP$ over $\puP$, but we chose to include both, for three reasons.
First, we find it hard to deny that implementing a unitary with garbage should count as implementing that unitary.
Second, we cannot rule out that the set of unitaries which can be implemented efficiently with garbage is substantively larger than those which can be implemented cleanly.
(In fact, we show that relative to unitary oracles, allowing implementations with garbage yields \textit{strictly more} power; see \Cref{thm:no_global_phase_informal} further below.)
Finally, provided that we generally only work with $\puP$ modulo the global phase, there are no complexity-theoretic inconsistencies.

A naive choice would be to not allow any error, and instead require an exact implementation of $U_n$ (as if we took $\eps = 0$ above).
However, this would mean that changing the underlying gate set used in the quantum computer would yield different definitions of $\uP$.
So, we allow some error in implementation.

On the other hand, the choice of error metric turns out not to matter.
Up to polynomial overhead, the definitions above are equivalent to other sensible definitions.
In particular, the clean definition is equivalent to requiring that $C_{n, \eps}$ is close to an \emph{extension} of $U$ (i.e., some $\overline{U}$ such that $\overline{U} \ket{\psi} \ket{0^a} = (U \ket{\psi}) \ket{0^a}$) in operator norm.
The garbage definition is equivalent to requiring that $C_{n, \eps}$ is close to a \emph{dilation} of $U$ (i.e., some $\overline{U}$ such that $\overline{U} \ket{\psi} \ket{0^a} = (U \ket{\psi}) \ket{g}$) in diamond norm or operator norm.
It is also equivalent to requiring that $C_{n, \eps}$ is close to $U$ in diamond norm upon tracing out the ancilla register.
Note that these are all notions of \emph{worst-case} error, where we would like to be close for all inputs.

Technically, our earlier choice to index by length $n$ rather than strings $x \in \{0,1\}^n$ could materially affect the definitions of $\puP$ and $\puPSPACE$, as the ability to implement the map $\ket{\psi} \to U_x\ket{\psi}$ individually for every $x$ might not let us coherently apply the map $\ket{x}\ket{\psi} \to \ket{x}U_x\ket{\psi}$ on a superposition over $x$; the latter requires that the same garbage state is produced for every $\ket{x}$.
Indeed, \cite[Section 3.6.1]{BEMPQY26-uhlmann} raises the same concern and uses it to argue for indexing by classical input strings.
However, it will turn out that even if different $x$ yield different garbage states, there is another way that we can coherently encode all of the unitaries $U_x$ into a single larger unitary.
Namely, one can coherently apply the map $\ket{x}\ket{\psi}\ket{\phi} \to \ket{x}U_x\ket{\psi}U_x^\dagger\ket{\phi}$ (a consequence of \Cref{thm:garbage_iff_informal} further below).

\paragraph{``polynomial time/space'':}

We require that our algorithms run in time/space polynomial in $n$, the input size, as well as $\log 1/\eps$.
This differs from the choice of~\cite{BEMPQY26-uhlmann}, which only requires polynomial time/space in $n$ and $1/\eps$.
Zhandry is agnostic to this choice, making three different classes according to the scaling of $\eps$ as a fixed function of $n$.

Another difference here compared to prior work is to require \textit{uniform} dependence on both $n$ and $\eps$: we assume the existence of a polynomial-time/space Turing machine $M$ such that $M(1^n, \eps)$ outputs a description of a quantum circuit $C_{n, \eps}$ with the desired guarantee.
The Turing machine has a write-only output tape that specifies $C_{n, \eps}$ one gate at a time; this allows a polynomial-space machine to specify an exponentially-long circuit, so long as the circuit uses only a polynomial number of qubits.

The choices of error dependence are the subject of the greatest disagreement among different definitions: \cite{BEMPQY26-uhlmann} uses polynomial scaling with $\eps$ as a uniform parameter, while \cite{Zha25-unitary-oracles} argues for negligible or inverse-exponential error with circuits only specified for $\eps$ scaling with a fixed function of $n$, non-uniformly.
Our choice of scaling is the only option that leads to well-behaved definitions.
As mentioned before, choosing a larger scaling than $\polylog(1/\eps)$ runs into issues with composition.
In particular, self-lowness of $\uPSPACE$ fails if the allowed runtime is $s(\log 1/\eps)$ for some superpolynomial $s$, by considering the same $U_x$ from \Cref{eq:u_x_exp} but with $L(x)$ a hard language for space $s(n)$.
On the other hand, if the error scaling were \textit{smaller} than $\log(1/\eps)$, then $\uP$ would no longer be robust to changing the gate set, as the Solovay--Kitaev theorem requires $\Omega(\log 1/\eps)$ gates to compile an arbitrary quantum gate into a universal gate set.
Additionally, if $\eps$ scales with $n$ and cannot be scaled arbitrarily, then self-lowness fails again, and the complexity class is no longer closed under padding: problems become \textit{harder} when they are padded, an unintuitive outcome.

\cite[Section 3.6.3]{BEMPQY26-uhlmann} responds to these objections by contending that the inverse polynomial scaling allows their classes to capture algorithms like tomography and FPTASes.
However, if the goal is to build analogues to $\P$ and $\PSPACE$, the most fundamental and basic complexity classes, these choices are odd (for example, the knapsack problem is $\NP$-hard but also captured in $\uP$?), especially when they produce ill-behaved classes.
Though $\polylog(1/\eps)$ scaling is a strong guarantee, the most important unitary synthesis problems all meet it, including classical error reduction for decision problems, Hamiltonian simulation, and QSVT~\cite{gslw18}.

\paragraph{``quantum computer'':}

We model quantum algorithms as quantum circuits consisting of unitary gates.
In the case of $\uP$ and $\uPSPACE$, this is necessary for the definition to make sense, because non-unitary channels do not maintain a meaningful notion of global phase.
For $\puP$, this is without loss of generality, because non-unitary gates can be replaced by unitary dilations at the cost of at most a polynomial increase in the number of qubits.
However, the situation for $\puPSPACE$ is less obvious, because purifying the computation would blow up the space complexity by an exponential.
As it turns out, there is a nontrivial method to purify a non-unitary implementation of a unitary transformation while maintaining polynomial space (see \Cref{rmk:pspace_purification} for details).
Thus it is without loss of generality to make all of the gates unitary, which we do for aesthetic simplicity.

We further assume the gates have explicitly-given entries with a \emph{fully polynomial time exponential approximation scheme} (FPTEAS), roughly meaning that they can be any numbers whose digits are computable in polynomial time.
More specifically, gates should be approximable to error $\eps$ in time $\polylog(1 / \eps)$.
Examples of gates with an FPTEAS include all gates with algebraic entries.

Our choice borrows from Kuperberg~\cite{Kup15-jones}, who suggested using FPTEAS-computable gates after observing that the class $\class{PostBQP}$ is not robust when allowed gates that require larger time to evaluate to error $\eps$.
$\uPSPACE$ suffers from a similar issue as $\class{PostBQP}$; one way to see this is to observe that there exists a space-$\polylog(1/\eps)$ algorithm that queries an $O(1)$-qubit unitary $U$ and approximates the entries of $U$ to error $\eps$, by performing process tomography~\cite{HKOT23-diamond}.
So, if we allowed gates $U$ whose entries did \textit{not} have an FPTEAS, e.g.\ those requiring more than $\polylog(1/\eps)$ time and space, $\uPSPACE$ would again smuggle in the overly-hard problem of computing the entries of $U$.

On the other hand, our choice to allow \textit{any} FPTEAS is convenient because it allows us to abstract away the details of compilation into a fixed gate set, while still allowing access to gates with irrational algebraic entries such as the Hadamard gate.
For comparison, Zhandry's assumption of an arbitrary universal gate set leads to the annoyance of having to keep track of circuit compilation in every proof~\cite{Zha25-unitary-oracles}.
Under that definition, operations as simple as replacing a circuit by its complex conjugate require recompiling the entire circuit and accounting for the error incurred.
Nevertheless, we emphasize that allowing \textit{arbitrary} gates with an FPTEAS does not make our definition any more general than it would be with a fixed gate set, such as Clifford + T.
This is because the Solovay--Kitaev theorem can produce an error-$\eps$ compilation of any FPTEAS-computable gate in time $\polylog(1 / \eps)$~\cite[Theorem 2.4]{Kup15-jones}.

\paragraph{Oracles and reductions:}

Upon defining $\uP$ and $\uPSPACE$, the last step is to augment the classes to allow oracle access to a unitary language, in order to have a meaningful notion of composability.
A first try would be to simply allow circuits to apply elements $U$ of the language provided to it; this is how~\cite{BEMPQY26-uhlmann} handle oracles.
However, this is a fairly weak type of access: many kinds of subroutines, including QSVT~\cite{gslw18}, amplitude amplification~\cite{BHMT02-amp}, and phase estimation~\cite{kitaev95}, also demand access to $U^\dagger$, $\controlled U$, or $\controlled U^\dagger$.
It is reasonable to grant these demands, as any circuit to instantiate the oracle also immediately provides circuits for these alternative oracles.

Zhandry~\cite{Zha25-unitary-oracles} goes even further, arguing that one should provide access to the more obscure $U^T$ and $U^*$, since the same logic goes for them, and they are, in a formal sense, the \textit{complete} list of oracles which need to be granted following this logic.
In our definitions, we follow Zhandry's generality, with the exception that for $\puP$ and $\puPSPACE$, we do not (and indeed, cannot) give access to the controlled unitaries, as this would lift an undistinguished global phase to a distinguishable relative phase.

Deciding which oracles to choose for the definition was another topic of contention among the authors.
Ultimately, we decided it was best to opt for the most permissive definition of oracles.
As a counterpoint, there are settings where the oracle is not instantiated by a circuit, but instead by nature, or another party keeping their circuit secret, or some other manner where one does not have access to all possible modified oracles.
However, like many contentions, this one is fairly inconsequential: our results hold regardless of what one takes oracle access to mean, with the exception that both $U$ and $U^\dagger$ seem to be required.

Our notion of oracle leads to one more issue which must be addressed: it might appear that $\puPSPACE$ is not self-low.
After all, if an oracle is called exponentially many times, and we instantiate the oracle with a circuit that leaves garbage after every call, then the garbage would appear to pile up and eventually exceed the polynomial space of a unitary circuit!
We avoid this issue by showing that given forward and inverse access to a dilation of an $n$-qubit unitary $U$, one can simulate arbitrarily many calls to $U$ while only holding at most $n$ qubits of garbage at any stage (another consequence of \Cref{thm:garbage_iff_informal} further below).

\paragraph{Summary:} A comparison of the various definitions is summarized in \Cref{tab:unitary-definitions}.

\begin{table*}[t]
\centering
\small
\renewcommand{\arraystretch}{1.35}
\setlength{\tabcolsep}{5pt}

\rowcolors{3}{gray!7}{white}

\begin{tabularx}{\textwidth}{
    >{\raggedright\arraybackslash}p{0.16\textwidth}
    >{\raggedright\arraybackslash}X
    >{\raggedright\arraybackslash}X
    >{\raggedright\arraybackslash}X
    >{\raggedright\arraybackslash}X
    >{\centering\arraybackslash}p{0.09\textwidth}
}
\toprule
\rowcolor{white}
\textbf{Feature}
    & \multicolumn{2}{c}{\textbf{Previous definitions}}
    & \multicolumn{2}{c}{\textbf{Our definitions}}
    & \textbf{Relevance} \\
\cmidrule(lr){2-3}
\cmidrule(lr){4-5}

\rowcolor{white}
    & \textbf{\cite{BEMPQY26-uhlmann}}
    & \textbf{\cite{Zha25-unitary-oracles}}
    & \textbf{$\uP$}
    & \textbf{$\puP$}
    & \\
\midrule

Indexed by
    & Classical input
    & Classical input
    & Length only
    & Length only
    & Cosmetic \\

Allowed unitary gates
    & H, CNOT, T
    & Any universal gate set
    & Any with FPTEAS entries
    & Any with FPTEAS entries
    & Cosmetic \\

Error metric $\varepsilon$
    & Diamond distance
    & Operator distance (to extension)
    & Operator distance (to extension)
    & Operator distance (to dilation)
    & Cosmetic \\

$\varepsilon$ independent of $n$
    & Originally no; later yes
    & No
    & Yes
    & Yes
    & Essential \\

Scaling in $\varepsilon$
    & $\poly(1/\varepsilon)$
    & Various
    & $\polylog(1/\varepsilon)$
    & $\polylog(1/\varepsilon)$
    & Essential \\

Garbage
    & Yes
    & No
    & No
    & Yes
    & Material \\

Allowed oracle queries
    & $U$ only
    & $U$, $U^\dagger$, $U^*$, $U^\trans$ +~controlled
    & $U$, $U^\dagger$, $U^*$, $U^\trans$ +~controlled
    & $U$, $U^\dagger$, $U^*$, $U^\trans$
    & Material \\

Noisy oracle queries
    & Yes
    & No
    & No
    & No
    & Essential \\

\bottomrule
\end{tabularx}

\caption{Comparison of definitions of efficient unitary computation.
The features can be divided into choices that are essential to reach a reasonable definition, materially consequential but arbitrary, and purely cosmetic.}
\label{tab:unitary-definitions}
\end{table*}

\section{Technical overview and further results}

Here we outline the progression of the paper in rough chronological order, making note of several other noteworthy results that we obtain along the way to our main theorems.

\subsection{Comparing unitary and projective unitary classes}

It might seem unsatisfactory or even elusive that we supply two sets of definitions rather than settling upon a single ``correct'' one.
While a valid concern, it turns out that the two notions of efficient unitary computation (with and without garbage) are more closely related than they might seem.
In fact, both definitions can be viewed as special cases of each other!
Concretely, we show:

\begin{theorem}[\Cref{thm:u_garbage_iff_u_udag_tidy}, informal]
    \label{thm:garbage_iff_informal}
    $\calU \in \puP$ if and only if $\calU \otimes \calU^\dagger \in \uP$.
\end{theorem}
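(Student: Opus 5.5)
Both directions are explicit circuit constructions. The key point is that the global phase ambiguity in $\puP$ cancels in $\calU \otimes \calU^\dagger$: replacing $U_n$ by $e^{\ii\theta}U_n$ leaves $U_n \otimes U_n^\dagger$ unchanged. So it makes sense to ask for a clean, phase-exact implementation of the tensor product when we only have a projective implementation of $U$. Throughout, I use the equivalent error formulation mentioned in \Cref{sec:detour}.
\begin{itemize}
\item For $\puP$: the circuit $C_{n,\eps}$ is $\eps$-close in operator norm to a \emph{dilation} $\overline U$ of $U_n$, meaning $\overline U(\ket\psi\ket{0^a}) = (U_n\ket\psi)\ket{g}$.
\item For $\uP$: the circuit is close to an \emph{extension}.
\end{itemize}
Operator-norm closeness makes errors add under composition, even when the inputs are entangled with other registers. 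Uniformity in $(n,\eps)$ and the $\poly(n,\log 1/\eps)$ bounds are preserved in both directions, since each new circuit is a constant number of calls to the old ones.

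\textbf{($\Rightarrow$)} Suppose $C = C_{n,\eps}$ implements $U_n$ with garbage $\ket g$. The input is $\ket\psi\ket\phi$ on $2n$ qubits plus $a$ clean ancillas. The circuit is:
\begin{enumerate}
\item Apply $C$ to $(\ket\psi, \text{ancilla})$. This produces approximately $U_n\ket\psi \otimes \ket\phi \otimes \ket g$.
\item Apply $C^\dagger$ to $(\ket\phi, \text{ancilla})$.
\end{enumerate}
For the second step, the key observation is that an exact dilation satisfies
\[
\overline U^\dagger\bigl((U_n\ket\chi)\ket g\bigr) = \ket\chi\ket{0^a}\quad\text{for all }\ket\chi.
\]
Substituting $\ket\chi = U_n^\dagger\ket\phi$ gives $\overline U^\dagger(\ket\phi\ket g) = (U_n^\dagger\ket\phi)\ket{0^a}$. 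So the second step returns the ancilla to $\ket{0^a}$ and applies exactly $U_n^\dagger$ to the second register.

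By linearity this holds on all joint states of the $\psi$- and $\phi$-registers. The composed exact circuit is therefore an extension of $U_n\otimes U_n^\dagger$. Replacing $\overline U$ by $C$ costs at most $2\eps$ in operator norm, so running with $\eps/2$ gives the $\uP$ guarantee.

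\textbf{($\Leftarrow$)} Suppose $V$ cleanly implements $U_n\otimes U_n^\dagger$. To implement $U_n$ with garbage, adjoin $n$ extra ancillas initialized to $\ket{0^n}$ and feed $\ket\psi\ket{0^n}$ into $V$. The output is approximately
\[
(U_n\ket\psi)\otimes(U_n^\dagger\ket{0^n})\otimes\ket{0^a}.
\]
The state $(U_n^\dagger\ket{0^n})\ket{0^a}$ does not depend on $\ket\psi$, so it serves as the garbage $\ket g$. The error is inherited directly from $V$. Strictly, the statement concerns $\calU\otimes\calU^\dagger$ as a unitary language indexed by length, so odd lengths are handled by whatever convention the formal definition fixes; this does not affect the argument.

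\textbf{Main obstacle.} The step needing the most care is the ($\Rightarrow$) direction's use of $C^\dagger$ on a register entangled with the first one. One has to make sure the approximation guarantee for $C$ gives control of $C^\dagger$ on inputs of the form $\ket\phi\ket g$, not only on inputs with clean ancilla. This is exactly why I use the operator-norm-to-dilation formulation rather than a per-input-state guarantee: $\|C-\overline U\|\le\eps$ immediately gives $\|C^\dagger-\overline U^\dagger\|\le\eps$ on the whole space. If only the state-wise definition were available, I would first prove its equivalence to the dilation formulation by taking the isometry $\ket\psi\mapsto C(\ket\psi\ket{0^a})$ and extending it to a unitary close to $C$.
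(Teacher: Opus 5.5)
Your proposal is correct and matches the paper's proof. In the forward direction, you run $C$ on the first register with the ancilla, then $C^\dagger$ on the second register with that same garbage-holding ancilla, using that $\overline{U}^\dagger$ maps $\ket{\phi}\ket{g}$ to $(U_n^\dagger\ket{\phi})\ket{0^a}$ and paying $\eps/2+\eps/2$ in operator norm. In the reverse direction, you take $(U_n^\dagger\ket{0^n})\ket{0^a}$ as the garbage state, exactly as the paper does.
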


And conversely:

\begin{theorem}[\Cref{thm:u_tidy_iff_cu_garbage}, informal]
    \label{thm:control_iff_informal}
    $\calU \in \uP$ if and only if $\controlled\calU \in \puP$.
\end{theorem}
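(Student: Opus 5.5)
The forward direction is direct; the converse turns on showing that the garbage left by a $\puP$ implementation of $\controlled\calU$ is the same state whether or not the control fires. I write $\controlled U = \ket{0}\bra{0}\otimes I + \ket{1}\bra{1}\otimes U$.

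\textbf{Forward direction.} Suppose $\calU\in\uP$. Let $M(1^n,\eps)$ output a circuit $C_{n,\eps}$ with
\[
\bigl\|C_{n,\eps}-\overline{U_n}\bigr\|\le\eps,
\]
where $\overline{U_n}$ is some extension of $U_n$ and the norm is operator norm on inputs with clean ancillas. Replace every gate $G$ of $C_{n,\eps}$ by $\controlled G$. Each $\controlled G$ still has FPTEAS entries, so this costs only constant overhead per gate.

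The resulting circuit is within $\eps$ of $\controlled\overline{U_n}$. This operator maps $\ket{b}\ket{\psi}\ket{0^a}$ to $(\controlled U_n\ket{b}\ket{\psi})\ket{0^a}$, so it is an extension of $\controlled U_n$, and therefore in particular a dilation of it with $\ket{g}=\ket{0^a}$. Hence $\controlled\calU\in\puP$. One bookkeeping point: the language $\controlled\calU$ lives on $n+1$ qubits, so I index it accordingly, which is cosmetic.

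\textbf{Converse: setup.} Suppose $\controlled\calU\in\puP$, via circuits $D_{n,\eps}$ with
\[
D_{n,\eps}\bigl(\ket{b}\ket{\psi}\ket{0^a}\bigr)\approx_\eps \bigl(\controlled U\ket{b}\ket{\psi}\bigr)\ket{g}
\]
for a single garbage state $\ket{g}$. That state may carry an unknown phase, but it does not depend on the input.

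\textbf{Converse: key observation.} The same $\ket{g}$ serves both control values. On $\ket{0}\ket{\psi}$ the circuit acts as the identity on the system and outputs $\ket{g}$. On $\ket{1}\ket{\psi}$ it applies $U$ and outputs the same $\ket{g}$. So the garbage "anchors" the phase: relative to the control-$0$ branch, $U$ acquires a definite phase.

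\textbf{Converse: uncomputation.} I would implement $U$ cleanly as follows.
\begin{enumerate}
\item Prepare a fresh ancilla control qubit in $\ket{0}$.
\item Run $D$ on $\ket{0}\ket{0^n}\ket{0^a}$, using a separate $n$-qubit scratch register in $\ket{0^n}$ as the target. This yields approximately $\ket{0}\ket{0^n}\ket{g}$, i.e.\ it prepares $\ket{g}$ with no dependence on $\psi$.
\item Run $D$ again, this time on $\ket{1}\ket{\psi}\ket{0^{a}}$, giving approximately $\ket{1}(U\ket{\psi})\ket{g}$.
\item Uncompute this second copy of $\ket{g}$ by applying $D^\dagger$ to $\ket{0}\ket{0^n}\ket{g}$. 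In this call the scratch register plays the target role and the control is $0$, so the effect is to map $\ket{0}\ket{0^n}\ket{g}$ back to $\ket{0}\ket{0^n}\ket{0^a}$.
\end{enumerate}

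\textbf{Converse: the obstacle.} As written this is circular: step 4 consumes the copy of $\ket{g}$ created in step 2, while the copy produced in step 3 is still left over. So I need to organise it more carefully.

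I would first try "swap" tricks. Run $D$ on $(\ket{1}\ket{\psi})$ to get $(U\ket{\psi})\ket{g}$. Then reinterpret the garbage register as the garbage of a control-$0$ run on the scratch input $\ket{0}\ket{0^n}$. Applying $D^\dagger$ to $\ket{0}\ket{0^n}\ket{g}$ then returns $\ket{0}\ket{0^n}\ket{0^a}$, because $D$ maps $\ket{0}\ket{0^n}\ket{0^a}$ to approximately $\ket{0}\ket{0^n}\ket{g}$ with the same $\ket{g}$. So only one call of $D$ and one of $D^\dagger$ are needed, on different system registers but sharing the ancilla register.

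The net map is
\[
\ket{\psi}\ket{0^{n+1}}\ket{0^a}\;\mapsto\;(U\ket{\psi})\ket{0^{n+1}}\ket{0^a},
\]
with the global phase fixed exactly. Any phase of $\ket{g}$ cancels between $D$ and $D^\dagger$, while the relative phase of $U$ versus the identity is what $\controlled U$ pins down. The error is at most $2\eps$ by the triangle inequality, and uniformity of the construction is immediate.

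\textbf{Main risk.} I expect the main obstacle to be verifying that the approximation guarantee with one common $\ket{g}$ holds simultaneously across both control branches, and so across superpositions. This follows because the $\puP$ definition quantifies the garbage independently of the entire $(n+1)$-qubit input. I would check that the operator-norm dilation formulation gives this uniformly, so that the $D^\dagger$ step is justified within error $\eps$.
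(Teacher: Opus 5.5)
Your proof is correct and uses the same key idea as the paper's: the garbage $\ket{g}$ is common to both control branches, and $\controlled U$ acts trivially when the control is $\ket{0}$, so a $D^\dagger$ call on a control-$0$ input together with $\ket{g}$ erases the garbage. The difference is minor: the paper flips the control of the same register back to $\ket{0}$ and applies $C^\dagger$ to $(U\ket{\psi})\ket{0}\ket{g}$, which avoids your extra $n+1$ scratch qubits (your variant is exactly the ``run $C$ backwards on an all-zero input'' picture from the paper's overview). Also note that your first control qubit must be set to $\ket{1}$ by an $X$ gate before the $D$ call and reset afterwards, a step your stated net map omits.
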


Taken together, we see that one could use either $\uP$ or $\puP$ when proving some result, because any statement involving one class has an equivalent reformulation in terms of the other.
Thus, one should feel free to choose whichever definition is more convenient.
Typically, this will be $\uP$, because juggling garbage states is more cumbersome.

In the proof of \Cref{thm:garbage_iff_informal}, the more interesting direction is the forward one.
The basic idea is that if $C$ implements a dilation of $U$ with garbage state $\ket{g}$, then running $C$ \textit{backwards} with $\ket{g}$ as the ancilla both applies $U^\dagger$ and erases the garbage.
For \Cref{thm:control_iff_informal}, again the nontrivial direction is from $\puP$ to $\uP$.
Since $\controlled U$ always has the all-zeros state as an eigenstate, one can again run $C$ backwards on an all-zero input to erase the garbage.
Both theorems additionally hold with $\P$ replaced by $\PSPACE$, or relative to quantum oracles (or both); see \Cref{sec:garbage_tidy_relations} for details.

Interestingly, one can also show:

\begin{theorem}[\Cref{thm:u_garbage_iff_u_ustar_tidy}, informal]
    \label{thm:garbage_star_informal}
    $\calU \in \puP$ if and only if $\calU \otimes \calU^* \in \uP$.
\end{theorem}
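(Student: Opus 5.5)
The proof is a direct circuit construction in each direction; neither direction needs \Cref{thm:garbage_iff_informal}. The forward direction reuses its ``run the circuit backwards on the garbage'' idea, with the maximally entangled state supplying a fresh copy of the garbage. Throughout, write $\ket{\Phi}$ for the maximally entangled state $2^{-n/2}\sum_x \ket{x}\ket{x}$ on two $n$-qubit registers. It can be prepared exactly from $\ket{0^{2n}}$ with Hadamards and CNOTs, and it satisfies $(U\otimes U^*)\ket{\Phi}=\ket{\Phi}$ for every unitary $U$.

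\textbf{Reverse direction.} Suppose $D_{n,\eps}$ cleanly implements $U\otimes U^*$ on $2n$ qubits. To implement $U$ with garbage on input $\ket{\psi}$, append $n$ qubits in $\ket{0^n}$ together with $D$'s ancillas, and run $D_{n,\eps}$. The output is $\eps$-close to $(U\ket{\psi})\otimes (U^*\ket{0^n})\ket{0^a}$. The second factor $U^*\ket{0^n}$ does not depend on $\ket{\psi}$, so it qualifies as the garbage state $\ket{g}$. Uniformity, the time/space bounds, and the error are all inherited directly. This direction is trivial.

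\textbf{Forward direction.} Suppose $C=C_{n,\eps}$ implements a dilation of $U$, so $C\ket{\psi}\ket{0^a}\approx_\eps U\ket{\psi}\ket{g}$ with $\ket{g}$ independent of $\ket{\psi}$.
\begin{enumerate}
\item Form the complex-conjugate circuit $C^*$ by conjugating every gate. Conjugates of FPTEAS-computable gates are again FPTEAS-computable, so $C^*$ is produced by essentially the same Turing machine. It satisfies $C^*\ket{\phi}\ket{0^a}\approx_\eps U^*\ket{\phi}\ket{g^*}$.
\item On input registers $A,B$ with ancillas $G_A,G_B$, apply $C\otimes C^*$. By linearity, this holds for arbitrary, possibly entangled, inputs on $AB$, and yields $\approx_{2\eps} (U\otimes U^*)(\cdot)_{AB}\otimes\ket{g}\ket{g^*}_{G_AG_B}$.
\item Prepare fresh registers $A'B'$ in $\ket{\Phi}$. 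The key identity is
\[
(C\otimes C^*)\ket{\Phi}_{A'B'}\ket{0^{2a}} \approx_{2\eps} \ket{\Phi}_{A'B'}\ket{g}\ket{g^*}.
\]
Hence applying $(C\otimes C^*)^\dagger$ to $A'B'G_AG_B$ maps $\ket{\Phi}\ket{g}\ket{g^*}$ to approximately $\ket{\Phi}\ket{0^{2a}}$.
\item Unprepare $\ket{\Phi}$ on $A'B'$. All ancillas are now approximately back in the all-zeros state, and $AB$ holds $(U\otimes U^*)$ applied to the input.
\end{enumerate}

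\textbf{Error and resources.} The errors add by the triangle inequality, since unitaries preserve norms, for a total of $O(\eps)$. The operator-norm (extension) formulation follows from the state-wise guarantee up to polynomial overhead, as noted in \Cref{sec:detour}. Running with error $\eps/c$ keeps the cost $\poly(n,\log 1/\eps)$ in time, or in space for the $\PSPACE$ version. The construction makes only a constant number of uses of $C$, $C^*$ and their inverses, so it also relativizes to oracles in the same way as \Cref{thm:garbage_iff_informal}.

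\textbf{Main obstacle.} The one delicate point is step 3: making precise that the garbage on the data registers is exactly, up to $\eps$, the same state $\ket{g}\ket{g^*}$ produced by running on $\ket{\Phi}$. This relies on $\ket{g}$ being independent of the input and on linearity. It must be checked that the approximate dilation guarantee holds for entangled inputs with a reference system, including the $\ket{\Phi}$ input itself, losing at most a constant factor. If the definition only gives a product-input guarantee, I would first pass to the operator-norm-to-dilation formulation, which handles entangled inputs directly.
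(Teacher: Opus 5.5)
Your proposal is correct and is essentially the paper's proof. It uses the same circuit: apply $C\otimes C^*$ to the data registers, prepare $\ket{\Phi}$ on fresh registers, apply $C^\dagger\otimes C^\trans$ to those registers together with the garbage, then unprepare $\ket{\Phi}$; the reverse direction is the same trivial one with garbage $(U_n^*\ket{0^n})\ket{0^a}$. The entangled-input worry you flag does not arise, because the definition of $\puP$ already gives operator-norm closeness to a dilation $\overline{U}_n$. The paper just substitutes $\overline{U}_n, \overline{U}_n^*, \overline{U}_n^\dagger, \overline{U}_n^\trans$ for the four circuit copies to get an exact extension of $U_n\otimes U_n^*$, and bounds the error by the triangle inequality.
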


So \Cref{thm:garbage_star_informal} looks the same as \Cref{thm:garbage_iff_informal}, but with the conjugate transpose replaced by just the complex conjugate!
In hindsight, this possibility should not surprise us, as both $U \otimes U^\dagger$ and $U \otimes U^*$ are examples of canonical unitary transformations that ``forget'' the global phase information in $U$.
(By contrast, note that $U \otimes U^\trans$ is \textit{not} phase-independent.)
The proof idea once again is to erase garbage using a known eigenstate.
Specifically, we use the fact that the maximally-entangled state $\frac{1}{\sqrt{2^n}} \sum_{x \in \{0,1\}^n} \ket{x}\ket{x}$ is a $1$-eigenstate of $U \otimes U^*$ for any $n$-qubit $U$.

\subsection{Characterizing space-bounded unitary computation}

For polynomial-\textit{space} computation, we establish an even tighter connection between the unitary and projective-unitary definitions.
In particular, we prove that one can generically erase garbage:

\begin{theorem}[\Cref{thm:projunitarypspace_equiv}, informal]
    \label{thm:projunitarypspace_equiv_informal}
    $\calU \in \puPSPACE$ if and only if there exists a $\calV \in \uPSPACE$ that equals $\calU$ up to global phase.
\end{theorem}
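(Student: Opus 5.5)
The easy direction is immediate. If $\calV \in \uPSPACE$ equals $\calU$ up to global phase, then a clean implementation of $\calV$ is in particular an implementation of $\calU$ with garbage: take $\ket{g} = e^{-\ii\theta_n}\ket{0^a}$, which absorbs the phase. So $\calU \in \puPSPACE$. The real content is the converse, and I will route it through the entry characterization mentioned in the abstract: $\uPSPACE$ is exactly the set of unitary languages whose matrix entries $\bra{x}U_n\ket{y}$ can be computed to error $\eps$ in space $\poly(n,\log 1/\eps)$.

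For the converse, start from $\calU \in \puPSPACE$. By \Cref{thm:garbage_iff_informal} in its $\PSPACE$ form, $\calU\otimes\calU^\dagger \in \uPSPACE$. The phase-invariant quantities $u_{xy}\overline{u_{x'y'}} = \bra{x,y'}(U\otimes U^\dagger)\ket{y,x'}$ are entries of a $\uPSPACE$ unitary. I can therefore approximate them in polynomial space by simulating the clean circuit amplitude by amplitude. This is a Feynman path sum over an exponentially long circuit on polynomially many qubits, which uses space $\poly(n,\log 1/\eps)$ since each gate has an FPTEAS.

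Next I fix a phase convention computable in polynomial space. First find a pivot entry $(x_0,y_0)$ with $|u_{x_0y_0}|^2 \ge 2^{-n}$ and make the choice canonical: take the lexicographically first pair whose approximated $|u_{xy}|^2$ exceeds $2^{-n-1}$. Such a pair exists because column $y=0$ has total squared norm $1$. Then define
\[ v_{xy} = \frac{u_{xy}\overline{u_{x_0y_0}}}{|u_{x_0y_0}|}, \]
so that $V = e^{-\ii\arg u_{x_0y_0}}U$ is exactly $U$ times a global phase. Each entry of $V$ is computed from the phase-invariant products above, and the division by $|u_{x_0y_0}| \ge 2^{-(n+1)/2}$ costs only $O(n)$ extra bits of precision. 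Hence every entry of $V$ can be computed to error $\eps$ in space $\poly(n,\log 1/\eps)$.

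The remaining step, which I expect to be the main obstacle, is the other direction of the characterization: a unitary whose entries are polynomial-space computable lies in $\uPSPACE$. My plan is to decompose $V$ into $\exp(2^{O(n)})$ two-level Givens rotations, as in the standard column-by-column elimination. Each rotation angle is then computed on the fly in polynomial space. A Givens angle depends on the partially reduced matrix, whose entries are themselves polynomial-space computable because they are products of previously determined rotations. This recursion has exponential depth, however, so I would instead use a more robust route. I would implement $V$ via Hamiltonian simulation of $H = -\ii\log V$, or via block encodings: $V$ is a sparse-access-free but entry-computable matrix, and a space-efficient block encoding of $V/2^{n}$ can be obtained through a state-preparation oracle built from entry access. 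Polynomial-space amplitude amplification or QSVT (singular value amplification) then recovers $V$ with $\polylog(1/\eps)$ overhead and cleans the ancillas. The delicate points are, first, that exponential-depth circuits accumulate the entry-approximation errors only additively, so it suffices to compute entries to precision $\eps 2^{-\poly(n)}$; and second, that the output circuit is generated gate by gate by a polynomial-space machine. Together these give $\calV \in \uPSPACE$ with $\calV$ equal to $\calU$ up to global phase.
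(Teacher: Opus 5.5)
At the top level, your argument is the paper's proof. You pass to $\calU\otimes\calU^\dagger\in\uPSPACE$ and read the phase-invariant products $u_{xy}\overline{u_{x'y'}}$ off its entries. You then pick the lexicographically first pivot whose deterministic estimate of $\abs{u_{xy}}^2$ clears a threshold, divide out the pivot's phase, and finish with the entries-to-unitary direction of \Cref{thm:entrywise_characterization}.

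Two small fixes are needed. With threshold $2^{-n-1}$, the estimation error must be below, say, $2^{-n-2}$, and then you only get $\abs{u_{x_0y_0}}^2\ge 2^{-n-2}$. Also, amplitudes of an exponentially long circuit must be computed by the recursive halving of \Cref{prop:bqpspace}, since storing whole Feynman paths takes exponential space.

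The real divergence is that you re-derive the entries-to-unitary direction yourself. The paper gets it in three steps: it phase-encodes $\arcsin$ of the entries; it builds the column map $\ket{x}\ket{0^n}\mapsto\ket{x}U_n\ket{x}$ from a postselection gadget plus exact amplitude amplification (\Cref{lem:entry_to_column}); and it erases the index register by Rosenthal's reduction (\Cref{lem:column_to_u}). Because everything runs through phase-encoded oracles, self-lowness absorbs all approximation error and no explicit error bookkeeping is needed.

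Of your two options, drop $-i\log V$. The principal logarithm is discontinuous when $V$ has an eigenvalue near $-1$, so its entries are not stably computable from approximate entries of $V$, and simulating $H$ would need a block encoding anyway.

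The block-encoding option works once made precise. Use the dense-access encoding: Hadamards, then a reversible polynomial-space computation of the entry driving a controlled rotation, then a swap, then Hadamards again. It needs no state preparation beyond uniform superpositions, and for the ideal gadget its top-left block is exactly a known multiple of $V$, e.g.\ $V/N$.

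The point you leave implicit is why amplification works on every input. The block is a fixed scalar times a unitary, so the success amplitude does not depend on the input. Exact oblivious amplitude amplification then applies: tune the amplitude to $\sin(\pi/(4k+2))$ with an extra rotation qubit as in \Cref{lem:exact_amp}, and after $k=O(N)$ rounds you get $(-1)^kV$ with clean ancillas. No general QSVT is needed. Computing entries to precision $\delta$ perturbs each gadget by $O(\sqrt{\delta})$ in operator norm, so precision $\eps^2/\mathrm{poly}(N)$ suffices.

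Your route is a single self-contained construction paid for with explicit error tracking, while the paper's is modular and error-free at the oracle level. The two are close cousins: following the column map by $H^{\otimes n}$ on the index register and a swap already gives a block encoding of $V/\sqrt{N}$.
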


\Cref{thm:projunitarypspace_equiv_informal} is the strongest possible equivalence, because a unitary language $\calU \in \puPSPACE$ may encode uncomputable information in the phases, and therefore we cannot directly conclude $\calU \in \uPSPACE$.
Still, \Cref{thm:projunitarypspace_equiv_informal} is general enough to justify using only $\uPSPACE$ when proving statements about space-efficient unitary computation, as it says we can always assign \textit{some} global phase that makes the unitary computable cleanly.

Perhaps more useful than \Cref{thm:projunitarypspace_equiv_informal} itself is the proof by which we obtain it.
The key insight is that $\uPSPACE$ has an equivalent characterization in terms of the computability of its entries:

\begin{theorem}[\Cref{thm:entrywise_characterization}, informal]
    \label{thm:entrywise_characterization_informal}
    $\calU \in \uPSPACE$ if and only if each entry of the $n$-qubit unitary in $\calU$ can be approximated to error $\eps$ in space $\poly(n, \log 1/\eps)$.
\end{theorem}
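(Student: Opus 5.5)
Both directions reduce to simulating one kind of space-bounded computation with the other.

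\textbf{Forward direction.} Suppose $\calU \in \uPSPACE$ via a polynomial-space machine $M(1^n,\eps)$ that outputs a circuit $C_{n,\eps}$ on $n+a$ qubits, with $a = \poly(n,\log 1/\eps)$. The circuit may have exponentially many gates. We want to approximate $\bra{y}U\ket{x}$.

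We approximate it by $\bra{y}\bra{0^a} C \ket{x}\ket{0^a}$. The clean-implementation guarantee makes this within $O(\eps)$ of $\bra{y}U\ket{x}$. We evaluate the amplitude using the standard $\BQPSPACE \subseteq \PSPACE$ technique: write $C = G_T \cdots G_1$ with $T \le 2^{\poly}$. Multiplying a $2^{n+a}$-dimensional vector gate by gate is not affordable, so we use a Feynman path sum, or equivalently Savitch-style recursive splitting of the product.

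The recursion is $\bra{z}G_{T}\cdots G_1\ket{w} = \sum_{u}\bra{z}G_T\cdots G_{T/2+1}\ket{u}\bra{u}G_{T/2}\cdots G_1\ket{w}$. It has depth $\log T = \poly$, and each level stores one intermediate basis string $u$ plus a running sum.

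Each gate entry is obtained by regenerating the gate description from $M$ and evaluating it with its FPTEAS. The entries must be truncated to precision $\delta = \eps/ (2^{\poly}\cdot T)$, so that the accumulated error over the path sum is at most $\eps$. Since $\log 1/\delta$ is polynomial, everything fits in space $\poly(n,\log 1/\eps)$.

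We also check that the error guarantee is stated per entry. Converting the operator-norm bound on the circuit gives entrywise error at most $\eps$, which is enough.

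\textbf{Reverse direction (main obstacle).} Now suppose every entry $U_{yx}$ can be computed to precision $\eps$ in polynomial space. We must synthesize a clean polynomial-space circuit for $U$ with error $\eps$ in operator norm. The plan is to decompose $U$ into a product of exponentially many two-level unitaries (Givens rotations), as in the standard proof that any unitary decomposes into $O(4^n)$ two-level gates. Each two-level gate acts on two basis states and is implementable with polynomially many qubits, via a multi-controlled $2\times 2$ rotation using ancillas that are returned clean.

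The difficulty is that the Givens parameters at step $k$ depend on the partially reduced matrix $G_{k-1}\cdots G_1 U$, not on $U$ directly. A polynomial-space machine cannot store this matrix. It can, however, recompute any entry of it recursively: each entry of the reduced matrix is a function of entries of $U$ and the earlier rotation angles.

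The naive recursion depth is exponential, so this must be organized carefully. I would instead use a column-by-column construction, as in state preparation. Treat $U$ as a sequence of $2^n$ isometric steps: map $\ket{0}\mapsto U\ket{0}$, then handle the next column in the orthogonal complement, and so on. The alternative I would try first is more direct: express $U = \prod_x R_x$, where $R_x$ is a reflection $I - 2\ket{v_x}\bra{v_x}$ (the Householder decomposition). The vectors $v_x$ are built from $U\ket{x}$ and $\ket{x}$ via the Gram--Schmidt-free identity that $I-2\ket{w}\bra{w}$, with $w \propto \ket{x}-\ket{u_x}$, swaps $\ket{x}$ and $\ket{u_x}$, where $u_x$ is a column of the partially corrected matrix. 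Each reflection is then implementable given a polynomial-space state-preparation circuit for $\ket{v_x}$.

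Preparing a state whose amplitudes are polynomial-space computable is possible by the Grover--Rudolph method: compute marginal probabilities by summing squared amplitudes, which takes exponential time but polynomial space, and apply conditional rotations. This yields a clean polynomial-space circuit.

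\textbf{Error control.} The errors are controlled by requiring precision $\eps 2^{-\poly(n)}$ in every entry computation, which costs only polynomially more space. Thus the main obstacle is making the recursive dependence on previously applied reflections computable in polynomial space. I would aim to show that the $x$-th Householder vector is an explicit, entrywise-computable function of $U$ alone, for example via a triangular (QR/LU-type) factorization whose entries are ratios of minors. Minors are determinants, and determinants of exponential-size matrices with polynomial-space entries are computable in polynomial space by the $\NC$ determinant algorithm scaled up. This is the step I expect to require the most care, including conditioning and numerical stability of the ratios.
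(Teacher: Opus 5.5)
Your forward direction is fine, and it takes a different route from the paper. You evaluate $\bra{y}\bra{0^a}C_{n,\eps}\ket{x}\ket{0^a}$ directly by the divide-and-conquer used in the paper's proof of $\BQPSPACE^L \subseteq \PSPACE^L$. The paper instead uses a Hadamard test, amplitude estimation, and then $\BQPSPACE = \PSPACE$. (Each recursion level multiplies the error by about $2^{n+a+1}$, so gate entries are needed to precision $\eps\, 2^{-O((n+a)\log T)}$, which is still $2^{-\poly(n,\log 1/\eps)}$.)

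The reverse direction has a genuine gap, exactly at the step you flag. Householder or Givens reduction picks the $k$-th reflection from the first column of the trailing block of $H_{k-1}\cdots H_1 U$. This is a chain of $2^n$ sequential dependencies, which a polynomial-space recursion cannot unroll, and you give no way to compute these vectors from $U$ directly.

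The proposed remedy via minors is not supported:
\begin{enumerate}
\item For unitary $U$, the $R$ factor is a diagonal phase matrix, and every leading principal minor of the Gram matrix $U^\dagger U = I$ equals $1$. So the classical minor formulas for QR say nothing about the trailing blocks you need.
\item Any formula built from ratios of minors of $U$ must handle minors that vanish, e.g.\ $\bra{0^n}U\ket{0^n}=0$, which already breaks pivot-free LU-type formulas.
\item It must also handle minors that are doubly-exponentially small: the leading $2^{n-1}\times 2^{n-1}$ minor of $H^{\otimes n}$ has modulus $2^{-2^{n-2}}$. With $\poly$-bit fixed-point approximations of the entries, and only the additive-error guarantee that a scaled-up $\NC$ determinant computation gives, such minors are indistinguishable from $0$. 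Their ratios therefore cannot be obtained this way.
\end{enumerate}

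The missing idea is to spend one extra qubit so that the reflection vectors depend only on the columns of $U$, not on earlier steps. Let $\ket{w_x} \coloneqq (\ket{0}\ket{x} - \ket{1}U\ket{x})/\sqrt{2}$ for $x \in \{0,1\}^n$.
\begin{enumerate}
\item These vectors are orthonormal, so the reflections $I - 2\ketbra{w_x}{w_x}$ commute and multiply to $I - 2\sum_x \ketbra{w_x}{w_x}$.
\item That product sends $\ket{0}\ket{x} \mapsto \ket{1}U\ket{x}$. Following it with $X$ on the first qubit gives a one-qubit extension of $U$.
\item Each factor needs only a clean, controllable preparation of the column $U\ket{x}$ for a classically known $x$. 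You can iterate over $x$ with a counter, approximating each factor to error $\eps 2^{-n}$.
\item Your Grover--Rudolph step can supply these preparations. You must compute and uncompute the rotation angles reversibly in polynomial space. You must also check that the ill-conditioned ratios of tiny marginals only affect branches of negligible amplitude.
\end{enumerate}

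The paper's route rests on the same reflection idea, packaged differently. It first builds $\calU\Column$, the reflection about $(\ket{0}\ket{0^n} - \ket{1}U\ket{x})/\sqrt{2}$ controlled on $\ket{x}$. It does so by postselecting a circuit that queries the phase encoding of $\arcsin \circ \calU\Entry$. The success probability is exactly $2^{-(n+2)}$ regardless of $U$ and $x$, so exact amplitude amplification removes all numerical analysis. It then invokes Rosenthal's theorem that $\calU \in \uPSPACE^{\calU\Column}$ and composes the pieces using self-lowness of $\uPSPACE$.
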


\Cref{thm:entrywise_characterization_informal}
yields \Cref{thm:projunitarypspace_equiv_informal} by combining with the equivalence
\[
\calU \in \puPSPACE \iff \calU \otimes \calU^\dagger \in \uPSPACE, \qquad (\mathrm{\Cref{thm:garbage_iff_informal}})
\]
observing that we can use the entries of $\calU \otimes \calU^\dagger$ to compute the entries of $\calU$ up to a canonical global phase of our choice.

For the proof of \Cref{thm:entrywise_characterization_informal}, the forward direction is essentially a consequence of the equivalence $\BQPSPACE = \PSPACE$~\cite{Wat99-space,Wat03-space,FR21-space}.
That is, for \textit{decision} problems, quantum and classical polynomial space are equivalent.
So, it suffices to find a polynomial-space quantum algorithm to estimate the entries of $\calU$, which amplitude estimation accomplishes~\cite{BHMT02-amp}.
For the reverse direction, we first use \textit{state synthesis}~\cite{GR02-state-prep,Aar16-barbados,INNRY22-qma-search} to show that we can space-efficiently prepare any column of $\calU$, given access to the entries.
Then, we appeal to a proof by Rosenthal~\cite{Ros21-unitary} that turns a $\calU$-column constructor into an implementation of $\calU$, with polynomial space overhead.

Of note, when working toward the proof of \Cref{thm:entrywise_characterization_informal}, we find it helpful to work with certain unitary languages that encode real numbers in the entries.
In particular, we will frequently use the following definition of a \textit{phase encoding}:

\begin{definition}[Restatement of \Cref{def:fphase}]
    Let $f : \{0,1\}^* \to [-\pi/2, \pi/2]$.
    The \emph{phase encoding of $f$} is the unitary language $f\Phase = \{R_{f,n}\}_{n \in \N}$ defined by
    \[
    R_{f,n}\ket{x}
    =
    e^{i f(x)}\ket{x},
    \]
    for every $x \in \{0,1\}^n$.
\end{definition}

Phase estimation~\cite{kitaev95,NC10-book} almost immediately implies that the ability to implement $f\Phase$ is tightly connected to the approximability of $f$:

\begin{lemma}[\Cref{lem:rotation_to_fqpseas}, informal]
    $f\Phase \in \uPSPACE$ if and only if $f(x)$ can be approximated to error $\eps$ in space $\poly(|x|, \log 1/\eps)$.
\end{lemma}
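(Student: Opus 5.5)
We prove the two directions separately. Throughout, ``$f$ is approximable in polynomial space'' means there is a classical machine that on input $(x,\eps)$ outputs a number within $\eps$ of $f(x)$ using space $\poly(|x|,\log 1/\eps)$.

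\emph{Forward direction.} Suppose $f\Phase\in\uPSPACE$. Fix $x$ with $|x|=n$ and a target precision $\eps$, and set $\delta=\eps/\poly$. Take the circuit $C_{n,\delta}$ produced by the uniform space-bounded machine. It implements $R_{f,n}$ cleanly up to $\delta$ in operator norm.

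\begin{enumerate}
\item Since the implementation is clean, it preserves global phase. So we can control it with only constant-factor overhead: replace each gate $G$ by $\controlled G$. This gives a clean approximation of $\controlled R_{f,n}$.
\item The state $\ket{x}$ is an eigenvector of $R_{f,n}$ with eigenvalue $e^{\ii f(x)}$. Run phase estimation on input $\ket{x}$ with $m=O(\log 1/\eps)$ bits of precision, using the controlled circuit as the oracle.
\item Phase estimation needs controlled powers $R^{2^j}$ for $j<m$. We realize these by repeating the controlled circuit $2^j$ times. This costs exponential time but only polynomial space, which $\uPSPACE$ allows, since the generating machine writes the circuit gate by gate.
\item Errors add linearly over $2^m=\poly(1/\eps)$ applications. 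So we need $\delta\le \eps^{O(1)}$, which keeps $\log 1/\delta=O(\log 1/\eps)$.
\item We use the standard median trick over $O(\log 1/\eps)$ repetitions to make the failure probability small. Because $f(x)\in[-\pi/2,\pi/2]$, there is no wraparound ambiguity.
\end{enumerate}

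This gives a $\BQPSPACE$-type procedure that outputs each bit of an $\eps$-approximation of $f(x)$ with high probability. We convert it to a deterministic procedure using $\BQPSPACE=\PSPACE$ (Watrous; Fefferman--Remscrim) in its bounded-error, function-computing form. Concretely, we decide each output bit with the language ``the $i$-th bit of the majority estimate is $1$.'' Rounding boundaries can make this language ill-defined. We handle that by computing the estimate to a few extra bits and accepting any consistent output, or by invoking the version of the theorem that tolerates this kind of ambiguity.

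\emph{Reverse direction.} Suppose $f$ is approximable in polynomial space. We build a polynomial-space circuit as follows.

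\begin{enumerate}
\item Allocate an ancilla register of $k=O(\log 1/\eps)$ bits.
\item Coherently compute a $k$-bit approximation $\tilde f(x)$ into the ancilla. The classical machine uses space $s=\poly(n,\log 1/\eps)$. Making it reversible with Bennett's pebbling or Lange--McKenzie--Tapp gives a reversible simulation in space $\poly(s)$, at exponential time, which is acceptable. The reversible circuit consists of Toffoli-type gates, so it is exactly unitary.
\item Apply the phase $e^{\ii\tilde f(x)}$ with controlled rotations $e^{\ii 2^{-j}\pi}$ on each bit of the ancilla. These gates have an FPTEAS.
\item Uncompute the ancilla by running the reversible computation backwards, which returns it to $\ket{0}$.
\end{enumerate}

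The result is a diagonal unitary whose entries are within $O(\eps)$ of those of $R_{f,n}$, so it is $O(\eps)$-close in operator norm to an extension of $R_{f,n}$. The generating machine runs in space $\poly(n,\log 1/\eps)$.

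\emph{Main obstacle.} Two points need care: the classical machine in the reverse direction must be deterministic, and the deterministic output in the forward direction must be canonical. The reverse direction is fine because the hypothesis is about deterministic space. The forward direction needs a deterministic, well-defined output bit despite rounding. I expect this to be handled by first defining a deterministic spec, such as the majority of the phase-estimation outcomes truncated to the precision, and computing the probabilities involved exactly in $\PSPACE$ via Watrous's simulation.
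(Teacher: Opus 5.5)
Your quantum argument for the forward direction is correct and matches the paper's proof of \Cref{lem:rotation_to_fqpseas}: you control the clean circuit gate by gate and run phase estimation on the eigenstate $\ket{x}$ with $\log 1/\delta = O(\log 1/\eps)$. Your reverse direction is also correct: reversibly compute $\tilde f(x)$, apply the phase, then uncompute. Note that the paper states the lemma for polynomial \emph{quantum} space (an FPUSEAS). Your deterministic reverse direction is the special case of the paper's $A^\dagger P A$ construction in which $A$ is a permutation. The circuit is then exactly diagonal, and the $\sqrt{2^n}$ bookkeeping needed in the bounded-error case disappears.

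The gap is the step you flagged. Because you read ``space'' as classical deterministic space, you must convert the bounded-error estimator into a deterministic procedure, and the bit-by-bit route does not work.

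\textbf{Why bitwise fails.} Suppose $f(x)$ sits near a dyadic point such as $1/2$, so the estimator returns values just above ($0.1000\cdots$) and just below ($0.0111\cdots$) with comparable probability. Then no output bit is determined with bounded error, even after taking medians. These bits are therefore not languages. If you treat them as promise problems instead, the $\PSPACE$ simulation may answer each one arbitrarily. Answered independently, the bits can combine into $0.1111\cdots$ or $0.0000\cdots$, far from $f(x)$. Extra precision does not help, because a boundary exists at every precision. There is also no off-the-shelf ``function-computing'' form of $\BQPSPACE=\PSPACE$ for relations with many valid outputs.

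\textbf{Why your fallback needs care.} The most likely output of the estimator need not lie within $\eps$ of $f(x)$, since the good probability mass can be spread over many grid points. Moreover, output probabilities can only be approximated in $\PSPACE$, not computed exactly.

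\textbf{The missing idea.} Ask threshold questions instead of bit questions. The promise problem $f\Apx(x,\phi,\eps)$ is YES if $f(x)>\phi+\eps$ and NO if $f(x)<\phi-\eps$. Your estimator decides it in $\BQuPSPACE$, so it lies in $\PSPACE$ by the promise version of $\BQPSPACE=\PSPACE$ (\Cref{prop:bqpspace}). Now run a binary search that queries the midpoint of $[\ell,u]$ with slack $(u-\ell)/8$. Each step shrinks the interval by a factor $5/8$, and the invariant $f(x)\in[\ell,u]$ holds no matter how queries outside the promise are answered. This is exactly \Cref{prop:fapx_iff_fp} and \Cref{cor:fpuseas_in_pspace}.

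Equivalently, you could approximate the CDF of the estimator's output in $\PSPACE$ to accuracy $1/12$ and return the first grid point where it is at least $1/2$. This is an approximate median rather than a majority outcome, and it lies within $\eps$ (plus the grid spacing) of $f(x)$.
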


When $f\Phase \in \uPSPACE$, it is particularly convenient to treat $f\Phase$ as an oracle within another $\uPSPACE$ algorithm, because we do not need to account for approximation errors when querying it; the self-lowness proof of $\uPSPACE$ handles this directly!
We defer to \Cref{sec:characterization} for the full details, but emphasize that phase encodings are an abstraction whose utility and convenience only became apparent to us in light of the composability of $\uPSPACE$.
See \Cref{lem:entry_to_column} for a particularly illustrative example of the elegant proofs that our definitions afford.

We conclude our discussion of $\uPSPACE$ by remarking that \Cref{thm:projunitarypspace_equiv_informal} also holds relative to classical oracles, but provably \textit{fails} relative to quantum oracles!

\begin{theorem}[\Cref{thm:no_global_phase}, informal]
    \label{thm:no_global_phase_informal}
    There is a unitary language $\calU$ such that for every unitary language $\calV$ that equals $\calU$ up to global phase, we have $\calV \not\in \uPSPACE^{\calU \otimes \calU^\dagger}$.
\end{theorem}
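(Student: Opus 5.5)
The plan is to build $\calU$ by diagonalizing against every oracle machine, using a topological fact: there is no continuous way to choose a global phase for a unitary that is known only up to global phase. Let $\pi : U(2) \to PU(2)$ be the quotient by global phase. The oracle $U \otimes U^\dagger$, together with all the variants we grant access to ($\dagger$, $*$, $\trans$, controlled), depends only on $\pi(U)$, and it does so continuously.

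\textbf{Topological claim.} There is no continuous map $s : PU(2) \to U(2)$ such that $\pi \circ s$ is uniformly close to the identity, say within a small constant in the metric on $PU(2)$ induced by operator distance modulo phase. To see this, note that if $\pi\circ s$ is that close to the identity, it is homotopic to the identity, because nearby points of $PU(2) \cong SO(3)$ can be joined by unique short geodesics. Hence $(\pi\circ s)_*$ is the identity on $\pi_1(PU(2)) = \Z_2$. But $(\pi \circ s)_*$ factors through $\pi_1(U(2)) = \Z$, and every homomorphism $\Z_2 \to \Z$ is zero, which is a contradiction.

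\textbf{Reduction to the claim.} Fix an oracle machine $M$ with polynomial space bound $p$, fix an input length $n$, and fix error $\eps = 1/100$. We consider languages whose $n$-qubit member is $U_n = W \otimes I_{2^{n-1}}$ for a single-qubit $W \in U(2)$, and we freeze the oracle at every other length. The circuit output by $M(1^n, \eps)$ is a fixed sequence of gates interleaved with oracle calls. Its matrix is therefore a polynomial, hence continuous, function of the entries of the queried oracles, and so of $\pi(W)$. Suppose $M$ succeeded for every choice of $W$, producing something $\eps$-close to a clean extension of $e^{i\theta(W)} W \otimes I$. Then reading off the top-left $2 \times 2$ block of the circuit on the data qubit, with the ancillas and the padding qubits set to $\ket{0}$, gives a continuous map $PU(2) \to M_2(\C)$ that is $O(\eps)$-close to a lift of $\pi(W)$. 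Polar projection onto $U(2)$ is continuous near unitaries, so composing with it yields a continuous approximate section $s$, contradicting the topological claim. Hence some $W_{M,n}$ exists for which $M$ fails to implement any phase-shifted version of $U_n$.

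\textbf{Diagonalization.} Enumerate the machines $M_1, M_2, \dots$. Choose lengths $n_1 < n_2 < \cdots$ with $n_{k+1} > p_k(n_k, \log 100)$, and set $U_n = I$ at every length not of the form $n_k$. At stage $k$, all lengths $n_j$ with $j<k$ are already fixed, and $M_k(1^{n_k}, 1/100)$ cannot query oracle lengths exceeding its space bound, so the later stages never affect its behavior. We pick $U_{n_k} = W_{M_k,n_k} \otimes I$ as given by the previous paragraph. Now suppose some $\calV$ equal to $\calU$ up to phase were in $\uPSPACE^{\calU\otimes\calU^\dagger}$ via a machine $M_k$. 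Then $M_k$ would succeed at length $n_k$, contradicting the choice of $U_{n_k}$.

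\textbf{Expected main obstacle.} The delicate part is making the reduction to the topological claim rigorous. Continuity in $\pi(W)$ needs care because the oracle family includes controlled versions of $U\otimes U^\dagger$; these are still phase-independent, so continuity holds. The extracted block must stay bounded away from singular matrices so that polar projection is continuous, which follows from the $\eps$-closeness. Finally, the closeness guaranteed by $M$'s correctness has to translate into closeness in $PU(2)$ that is small enough for the homotopy step.
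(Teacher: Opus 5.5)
Your proposal is correct. Its diagonalization skeleton matches the paper's: freeze every other queried length, vary only $U_n$, note that the $\calU\otimes\calU^\dagger$-oracle circuit (including the controlled and conjugated variants) is a continuous function of the projective class, and pick a bad class. The two proofs differ only in the topological step.

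The paper works with the full group $PU(2^n)$ and proves a general lemma that no continuous $A : PU(N) \to \C^{N\times N}$ can be $\eps$-close to a phase-shifted representative, for any $\eps < 1$. It chooses arbitrary, possibly discontinuous, representatives $B(g)$ and sets $z(g) = \tr(B(g)^\dagger A(g))$, which satisfies $\abs{z(g)} \ge N(1-\eps) > 0$. It then observes that $\frac{z(g)}{\abs{z(g)}} B(g)$ does not depend on the choice of $B$. This gives an exact continuous section, which contradicts the cited lemma of Gavorov\'a--Seidel--Touati.

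You instead pad a single-qubit $W$ with identity, so only $N = 2$ matters. You round the approximate lift into $U(2)$ by polar projection. That yields only $\pi \circ s \approx \mathrm{id}$, not an exact section, so you close the gap with a geodesic homotopy in $PU(2) \cong SO(3)$ and the mismatch between $\pi_1 = \Z_2$ and $\pi_1(U(2)) = \Z$.

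What your route buys is self-containment: you prove the no-section fact rather than citing it. The cost is a homotopy argument and a smallness requirement on $\eps$, since the projective error must stay below the injectivity radius of $SO(3)$; your choice $\eps = 1/100$ satisfies this. The paper's trace-normalization trick is more elementary at this step and produces exact sections. It also works for every $\eps < 1$ and every dimension, so it needs no special form for the hard unitaries.

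Your fixed length schedule $n_{k+1} > p_k(n_k, \log 100)$, versus the paper's adaptive choice of a fresh length, is cosmetic. Make sure the enumeration pairs each machine with a polynomial bound on the number of qubits $n + a$, since that bound is what limits the oracle lengths a circuit can query.
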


On the other hand, we clearly have $\calU \in \puP^{\calU \otimes \calU^\dagger}$, because $\calU^\dagger\ket{0^n}$ can be taken as the garbage state $\ket{g}$.
The proof of \Cref{thm:no_global_phase_informal} hinges on an observation made by Gavorov\'{a}, Seidel, and Touati~\cite{gst24} that there is no continuous map from the projective unitary group $PU(N)$ to the unitary group $U(N)$ that assigns a global phase to every projective unitary.
Moreover, \Cref{thm:no_global_phase_informal} relies on very little about $\uPSPACE$ other than the fact that it must approximately implement $\calV$ with a finite number of queries to $\calU \otimes \calU^\dagger$ (see \Cref{rmk:topology}).
Thus, \Cref{thm:no_global_phase_informal} shows that relative to unitary oracles, erasing garbage is generally impossible, and $\puP$ can be strictly more powerful than \textit{any} ``clean'' unitary class!

\subsection{Relation to unitary synthesis}

At last, we may return to the results described in \Cref{sec:this_work}.
Recall our key lemma:

\lemintrokey*

The entrywise characterization of $\uPSPACE$ (\Cref{thm:entrywise_characterization_informal}) makes the proof of \Cref{lem:all-fnexp-informal} particularly simple.
Suppose that the first item fails, in which case there exists a classical language $L$ for which $\calU \in \uP^L$.
We construct an $\NEXP$ machine that expects a witness $w$ interpreted as an assignment to $L(x)$ on all strings $x$ of polynomial length.
Given $w$, the $\NEXP$ verifier simulates the $\uP^L$ algorithm in exponential time by brute-force matrix multiplication, replacing oracle queries according to $w$.
The verifier uses \Cref{thm:entrywise_characterization_informal} to compute $\calU$ entrywise and compares to the unitary implemented by the $\uP$ oracle machine, accepting if and only if they are sufficiently close.
Then, any witness $w$ that causes the $\NEXP$ machine to accept allows the $\uP$ machine to implement $\calU$, given oracle access to $w$.
This is to say that $\calU \in \uP^\FNEXP$, where $\FNEXP$ denotes $\NEXP$ witness search.

The remaining results about $\uPSPACE$ follow by combining \Cref{lem:all-fnexp-informal} with known facts from classical complexity theory.
For example, \Cref{cor:am_nexp_informal} follows immediately from the easy witness lemma of Impagliazzo, Kabanets, and Wigderson~\cite{IKW02-easy-witness}, which shows that if $\AM = \NEXP$, then bits of accepting $\NEXP$ witnesses can be computed in $\EXP = \PSPACE$.
\Cref{cor:nc_np_informal} follows from a similar statement that if $\NC = \NP$, then bits of accepting $\NEXP$ witnesses can be computed in $\EXP = \PSPACE$.
Indeed, this is because $\PSPACE$ and $\NEXP$ are essentially just ``exponentially scaled up'' versions of $\NC$ and $\NP$, respectively.
Though, what is noteworthy in \Cref{cor:nc_np_informal} is that the equivalence of $\NC$ and $\NP$ leads to a collapse of $\NEXP$ to $\PSPACE$ \textit{relative to any oracle}, as long as the $\NEXP$ queries are only on strings of polynomial length.
This is why \Cref{cor:am_nexp_informal,cor:nc_np_informal} relativize differently.

Finally, the connection to quantum commitments (\Cref{thm:main-informal}) results from proving that commitments cannot exist if $\uP = \uPSPACE$.
This was essentially established in \cite{BEMPQY26-uhlmann}, though we reprove it ourselves to ensure consistency with our definitions.
We use the fact that commitment schemes are equivalent to the existence of an \textit{EFI pair}: a pair of efficiently-generated density matrices $\rho_0$, $\rho_1$ that are far in trace distance, but computationally indistinguishable to polynomial-time quantum adversaries~\cite{BCQ23-efi,Yan22-commitments}.
Using the entrywise characterization of $\uPSPACE$, we show that $\uPSPACE$ can approximately implement the statistically optimal measurement to distinguish $\rho_0$ and $\rho_1$.
Consequently, $\uP = \uPSPACE$ would violate the security of the EFI pair against polynomial time adversaries.

\subsection{Discussion}

This work is heavy on conceptual contributions.
To name just a few, we show or argue that:

\begin{enumerate}
    \item Proving the existence of computationally-secure quantum cryptography is as hard as longstanding open problems in complexity theory.
    \item The unitary synthesis problem explicitly relates unitary complexity theory, classical complexity theory, and cryptography.
    \item Sensible definitions of unitary complexity classes require polylogarithmic error dependence.
    \item Unitary complexity theory does not have to be ugly; our definitions yield many elegant proofs.
    \item Notions of unitary implementation with and without garbage are each interesting and meaningfully related to one another.
    \item $\uPSPACE$ is a particularly robust class, having a simple entrywise characterization and generically allowing garbage erasure.
\end{enumerate}

In the era of AI-automated theory research, we believe these insights are comparatively valuable, as today's LLMs have superhuman capabilities in proving theorems but still struggle to extract big ideas or meaning.

The most concrete and natural problem left open by our work is whether polynomial-time unitary implementations can always erase garbage:

\begin{question}[Garbage erasure]
    \label{q}
    For every $\calU \in \puP$, is there a $\calV \in \uP$ that equals $\calU$ up to phase?
\end{question}

Versions of \Cref{q} have been raised before, implicitly by Aaronson~\cite[Question 3.3.2]{Aar16-barbados} and more explicitly by Rosenthal~\cite[Open Problem 1]{Ros23-thesis}, but to our knowledge there has been no meaningful progress toward answering \Cref{q}.
We showed a negative answer to \Cref{q} relative to a unitary oracle (\Cref{thm:no_global_phase_informal}), but we do not consider this particularly strong evidence in favor of a negative answer to the general question, because we are not even aware of a \textit{candidate} unitary language that separates the two classes in the unrelativized setting.

For a more open-ended direction, we consider it interesting to explore the other possible consequences of unitary synthesis for cryptography.
To give one example, we showed that a positive solution to unitary synthesis is sufficient for quantum cryptography to imply breakthrough classical lower bounds.
But could a \textit{negative} solution to unitary synthesis also cause quantum cryptography to yield classical circuit lower bounds?
In particular, suppose we had an EFI pair that were an ``explicit'' counterexample to the unitary synthesis conjecture, in the sense that any oracle-assisted quantum algorithm that breaks the cryptosystem must have superpolynomial complexity.
Would the existence of such an ``explicit'' hard EFI pair \textit{itself} have important classical consequences?

\section{Preliminaries}

\subsection{Quantum circuits}
We begin by defining a notion of approximation that will be used extensively in this work.
This definition borrows from Kuperberg~\cite{Kup15-jones}.

\begin{definition}[FPTEAS/FPSEAS]
    \label{def:fpteas}
    An encoding function $f: \{0,1\}^* \to \R$ has a \emph{fully polynomial time exponential approximation scheme}, or \textit{FPTEAS} for short, if there is a polynomial-time Turing machine $M$ whose output $\phi = M(x, \eps)$ satisfies $\abs{\phi - f(x)} \le \eps$, where $\phi$ and $\eps$ are encoded in fixed-point binary.\footnote{We will always assume $\eps = 2^{-\lambda}$ for some $\lambda \in \N$ without loss of generality, as for larger inputs one can run the algorithm with $\eps = 1$.}

    A \emph{fully polynomial space exponential approximation scheme} (\textit{FPSEAS}) is defined the same way, but where $M$ is a polynomial-\underline{space} Turing machine, and the output has polynomial length.
\end{definition}

Note that the time or space bound on $M$ is $\poly(|x|, \log 1/\eps)$.
Some examples of numbers that admit an FPTEAS encoding are all numbers that have a fixed-point binary representation (i.e., dyadic rationals) and all algebraic numbers.
If we are willing to modify the encoding slightly, it is also without loss of generality to take $M$ to be any universal Turing machine.

Also following Kuperberg~\cite{Kup15-jones}, we use FPTEAS encodings to specify gates in a quantum circuit.

\begin{definition}[Quantum circuit]
    \label{def:quantum_circuit}
    A \emph{quantum circuit} is a finite sequence of gates acting on qubits, in which each gate is specified by
    \begin{enumerate}
        \item the list of $k$ qubits it acts on, and
        \item a CPTP map on $2^k \times 2^k$ density matrices, described by an FPTEAS encoding $f$ of its matrix representation on the operator basis $\{\ketbra{i}{j}\}_{i, j \in [2^k]}$.
        That is, a channel $\Lambda$ is specified by the matrix $S$ with entries $S_{ij,k\ell} = \bra{i}\Lambda(\ketbra{k}{\ell})\ket{j}$, and each complex entry $a + bi$ of $S$ is described by a pair $x, y \in \{0,1\}^*$ for which $f(x) = a$ and $f(y) = b$.
    \end{enumerate}
\end{definition}

The description length of a circuit is the total length of the gate descriptions together with the qubit labels on which the gates act.
We treat each gate description as specifying an exact channel, even though its entries may only be accessed computationally through approximations.

Allowing entries with an FPTEAS means that we may exactly specify gates such as rotations by rational multiples of $2\pi$ and Fourier transforms.
Our definition abstracts away the details of compiling gates into a universal gate set and leaves it up to the algorithm to specify the precision of each gate, if applicable.

\begin{definition}[Unitary quantum circuit]
    \label{def:unitary_circuit}
    A quantum circuit is \textit{unitary} if each gate is a unitary superoperator $\rho \to U\rho U^\dagger$ specified by the $2^k \times 2^k$ unitary matrix $U$ (with explicit global phase).
\end{definition}

Every $n$-qubit unitary circuit defines a unitary matrix in $\C^{2^n \times 2^n}$ (with explicit global phase) obtained by taking the product of the unitaries.

We also define classical oracle-augmented quantum circuits in the standard way:

\begin{definition}[Oracle quantum circuit]
    For a language $L$, an \emph{$L$-oracle quantum circuit} is a sequence that may contain both explicitly specified $k$-qubit gates (as above) and $n$-qubit oracle gates $U_n$ for the language $L$, which act unitarily as
    \[
    U_n\ket{x, b} = \ket{x, b \oplus L(x)}
    \]
    for every $x \in \{0,1\}^{n-1}$ and $b \in \{0,1\}$.
    These oracle gates are specified only by the qubits acted upon.
\end{definition}

\subsection{Quantum complexity theory}

\begin{definition}[Uniform circuit family]
    A family $\{C_x\}_{x \in \{0,1\}^*}$ of quantum circuits is \textit{time-uniform} if there exists a polynomial-time Turing machine $M$ with a write-only output tape such that $M(x)$ outputs a description of $C_x$.

    A \textit{space-uniform} family is defined the same way, but with a polynomial-\underline{space} machine $M$, and with the restriction that each gate has a polynomial-length description.\footnote{For example, we do not want to allow the space-uniform machine to output a gate whose FPTEAS encoding is exponentially long.
    We want each gate in the circuit to be storable in polynomial memory.}
\end{definition}

When a circuit family is indexed by several parameters, such as $\{C_{n, \eps}\}_{n \in \N, \eps > 0}$, we regard it as a family indexed by a single binary string encoding those parameters.
In such circuit families, we will always encode the parameter labeled $n \in \N$ in unary (i.e., $1^n$), and we will always encode the parameter labeled $\eps > 0$ as a fixed-point binary number of length $O(\log 1/\eps)$.

\begin{definition}[$\BQPSPACE$/$\BQuPSPACE$]
    \label{def:bqpspace}
    A promise problem $\Pi$ is in $\BQPSPACE^{L}$ if there exists a space-uniform family of $L$-oracle quantum circuits $\{C_n\}_{n \in \N}$ on $n + a(n) \le \poly(n)$ qubits such that for every $x \in \{0,1\}^n$ that is in the promise of $\Pi$, measuring a designated output qubit of $C_n$ applied to $\ket{x}\ket{0^a}$ yields $\Pi(x)$ with probability at least $2/3$.

    $\BQuPSPACE^{L}$ is defined identically, except with a space-uniform family of \underline{unitary} quantum circuits.
\end{definition}

\begin{remark}
    Some definitions of space-bounded quantum computation use circuit families indexed by the input $x$ itself, rather than only by its length $n$, e.g.~\cite{FR21-space}.
    This does not change the resulting class because the circuit $C_n$ can simulate the computation of $M(x)$ internally (and in the case of $\BQuPSPACE$, reversibly~\cite{LMP00-rpspace}).
    We use the length-uniform formulation because it is better suited for treating algorithms as coherent subroutines that can be applied to a superposition of inputs.
\end{remark}

It is well-known that $\BQPSPACE = \BQuPSPACE = \PSPACE$~\cite{Wat99-space,Wat03-space,FR21-space}.
Importantly for us, this proof also relativizes, a fact which could have been observed at any point in the last 25 years but apparently never was.\footnote{As recently as last year, \cite{AK25-oracles} wrote ``the proof of this result might not relativize for classical oracles according to our understanding''.
To our knowledge, the only subtlety in whether the proof relativizes or not is that there are alternative definitions of $\PSPACE$ in which an exponentially-long oracle tape is allowed.}

\begin{proposition}
    \label{prop:bqpspace}
    For all languages $L$, $\BQPSPACE^L = \BQuPSPACE^L = \PSPACE^L$.
\end{proposition}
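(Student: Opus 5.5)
We prove the chain of inclusions $\PSPACE^L \subseteq \BQuPSPACE^L \subseteq \BQPSPACE^L \subseteq \PSPACE^L$.

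\textbf{First inclusion.} Take a $\PSPACE^L$ machine. By the reversible-simulation result of Lange, McKenzie, and Tapp~\cite{LMP00-rpspace}, it can be made reversible with only polynomial space overhead. Their simulation is an Euler-tour traversal of the configuration graph and never inspects how the oracle answers are produced. It therefore carries over verbatim to oracle machines, provided the machine's oracle-query transition is itself reversible. We arrange this by letting the machine write the query string $x$ of polynomial length on its query tape and then XOR $L(x)$ into a bit, which is exactly the oracle gate $\ket{x,b}\mapsto\ket{x,b\oplus L(x)}$. A reversible space-$s$ machine with such transitions is then compiled into a space-uniform unitary circuit on $\poly(n)$ qubits: each step is a fixed permutation gate or an oracle gate. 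The step count is exponential, but the circuit is emitted gate by gate by a polynomial-space enumerator.

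\textbf{Second inclusion.} $\BQuPSPACE^L \subseteq \BQPSPACE^L$ is trivial, since unitary circuits are special cases of general quantum circuits.

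\textbf{Third inclusion.} This is the substantive step, $\BQPSPACE^L \subseteq \PSPACE^L$. The plan is to re-run the known unrelativized proof and check that every use of the circuit touches the oracle only through gate-local computations. I would follow Fefferman--Remscrim~\cite{FR21-space} or Watrous~\cite{Wat03-space}. The acceptance probability is an entry (or trace) of a product of $T=2^{\poly(n)}$ superoperator matrices of dimension $2^{\poly(n)}$, each sparse and space-computable. In our setting each such matrix is either an explicit FPTEAS gate or an oracle gate, and an entry of an oracle gate $\ket{x,b}\mapsto\ket{x,b\oplus L(x)}$ is computable with a single $L$-query on a string of polynomial length. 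The classical algorithm then decides whether this acceptance probability is at least $2/3$ or at most $1/3$, using the space-efficient matrix inversion / iterated matrix product machinery (Csanky-type $\NC$ algorithms in the exponentially scaled-up form) on a succinctly described matrix. That machinery only requires entry-access to the matrix from a polynomial-space subroutine. So the resulting $\PSPACE$ algorithm becomes a $\PSPACE^L$ algorithm in which every query has polynomial length. The FPTEAS gate entries are handled as in the original proof, by truncating to $\poly(n)$ bits of precision so that the total error over $T$ gates stays below $1/6$.

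\textbf{Main obstacle.} The key point to verify is that the $\NC$-style linear algebra, when scaled up to polynomial space, only accesses matrix entries through an oracle subroutine. This holds because space-bounded composition (recomputing entries on demand) preserves oracle access, as long as oracle queries are written on a polynomially bounded query tape. That is exactly the definition of $\PSPACE^L$ we use, and this is where a definition allowing an exponentially long oracle tape would break the argument.
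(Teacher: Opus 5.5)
Your first two inclusions match the paper's (it also just cites \cite{LMP00-rpspace} and calls the second trivial). For the third you start from the same object: the acceptance probability as a linear functional of a product of at most $2^{\poly(n)}$ superoperator matrices, each FPTEAS-computable or obtainable from a single polynomial-length $L$-query. Your proof is correct in outline; the difference is in how you evaluate that product.

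You defer to Watrous or Fefferman--Remscrim and argue by inspection that their scaled-up $\NC$ linear algebra (Csanky-type inversion, iterated products) needs only entry access. The paper instead gives a self-contained Savitch-style recursion. It halves the circuit, computes $S_{\alpha\gamma}=\sum_\beta S^{(1)}_{\alpha\beta}S^{(2)}_{\beta\gamma}$ on demand, and uses the fact that every sub-product is a channel. So the true entries have modulus at most $1$, and entrywise error grows by only $O(4^m)$ per level. The $\log M$ levels then cost $O(m\log M)$ extra bits of precision, and the space recurrence solves to $\poly(n)$.

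No inversion is ever needed, because a space-uniform family has at most $2^{\poly(n)}$ gates. Inversion-type tools only become necessary when running time is not bounded in advance. Your route is more general in that sense, but what it costs:
\begin{itemize}
\item Its relativization rests on inspecting proofs you have not reproduced.
\item The precision step you leave to ``the original proof'' needs a telescoping bound that exploits boundedness of the true partial products (e.g.\ channels are contractive in the induced trace norm).
\end{itemize}
The paper's direct recursion makes both the oracle-access pattern and the error bookkeeping explicit in a few lines.
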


\begin{proof}
    The containment $\PSPACE^L \subseteq \BQuPSPACE^L$ follows from the simulation of $\PSPACE$ by \textit{reversible} polynomial space (and therefore unitary quantum space)~\cite{LMP00-rpspace}.
    The containment $\BQuPSPACE^L \subseteq \BQPSPACE^L$ is trivial.
    To complete the proof, we turn to showing $\BQPSPACE^L \subseteq \PSPACE^L$.

    Given $\Pi \in \BQPSPACE^L$, let $C_n$ be the $n + a \eqqcolon m$-qubit $L$-oracle quantum circuit that evaluates $\Pi$ in the sense of \Cref{def:bqpspace}.
    Let $\Lambda$ be the channel computed by $C$, and let $S$ be the matrix representation $S_{ij,k\ell} = \bra{i}\Lambda(\ketbra{k}{\ell})\ket{j}$.
    Notice that it suffices to devise a $\PSPACE^L$ routine to approximate entries of $S$, because the acceptance probability of $C_n$ on input $x$ is a sum of entries of $S$:
    \[
    \Pr[1] = \sum_{y \in \{0,1\}^{m-1}} \bra{1y}\Lambda(\ketbra{x0^a}{x0^a})\ket{1y}.
    \]
    
    The idea is to design a divide and conquer strategy to compute the entries of $S$.
    The key observation is that if $\Lambda = \Lambda^{(1)} \circ \Lambda^{(2)}$ is a composition of two channels, then the corresponding matrix representations satisfy $S = S^{(1)}S^{(2)}$.
    But multiplication of two implicitly-given matrices is a sum of products, which can be performed in $\PSPACE$:
    \[
    S_{\alpha,\gamma} = \sum_\beta S^{(1)}_{\alpha,\beta}S^{(2)}_{\beta,\gamma}
    \]
    In particular, if there is an algorithm that computes entries of $S^{(1)}$ and $S^{(2)}$ to error $\eps$ in space $T$, it yields an algorithm to compute entries of $S$ to error $O(4^m \eps)$ with space $T + O(m + \log 1/\eps)$.
    This is because the entries of $S$ have magnitude bounded by $1$, so the algorithm just needs to store a $2m$-bit counter corresponding to $\beta$ and record the cumulative sum to $O(\log 1/\eps)$ bits of precision.

    Note the entries of the oracle gates are accessible in $O(m)$ space by querying $L$.
    Also, the matrix representations of the non-oracle gates comprising $C_n$ are given explicitly with an FPTEAS, so their entries can be computed to error $\eps$ in space $\poly(n + \log 1/\eps)$.
    Assuming $C_n$ contains $M \le 2^{\poly(n)}$ gates, we recursively divide $C_n$ in half to compute entries of $S$ in the fashion described in the previous paragraph.
    The space required is the solution to the recurrence:
    \begin{align*}
    T(1, \eps) &= O(m)+  \poly(n, \log 1/\eps),\\
    T(M, \eps) &= T\left(\frac{M}{2}, \frac{\eps}{O(4^m)}\right) + O(m + \log 1/\eps).
    \end{align*}
    which satisfies
    \[
    T(M, \eps) = O\left(\log M \log 1/\eps + m (\log M)^2\right) + \poly\left(n + \log 1/\eps + m \log M\right).
    \]
    Set $\eps = 2^{-(m + 2)}$, so that the total error in summing over all $y \in \{0,1\}^{m-1}$ is at most $1/8$.
    This lets us distinguish whether $\Pr[1] \ge 2/3$ or $\Pr[1] \le 1/3$.
    The total space required is at most $\poly(n)$, which allows a $\PSPACE^L$ machine to decide $\Pi$.
\end{proof}

We conclude this section with our definition of unitary languages:

\begin{definition}[Unitary language] \label{def:u-language}
    A \emph{unitary language} is a collection of unitaries $\calU = \braces{U_n}_{n \in \N}$ with $U_n \in \C^{2^n \times 2^n}$.
\end{definition}

Throughout, $\uALL$ denotes the set of all unitary languages.
Unitary languages subsume ordinary languages via the standard XOR oracle encoding:

\begin{definition}[Unitary embedding]
    If $L: \{0,1\}^* \to \{0,1\}$ is a language, the \emph{unitary embedding} of $L$ is the unitary language $\calU = \braces{U_n}_{n \in \N}$ whose constituent unitaries act as
    \[
    U_n\ket{x, b} = \ket{x, b \oplus L(x)}
    \]
    for every $x \in \{0,1\}^{n-1}$ and $b \in \{0,1\}$.
\end{definition}

In a slight abuse of notation, we may sometimes overload $L$ to mean both the language and its unitary embedding, particularly when $L$ is accessed as an oracle.

\section{Unitary complexity definitions}

\subsection{Garbage-free unitary complexity}
\label{sec:tidy_definitions}

Our definition of unitary complexity classes with no garbage is based around Zhandry's notion of an \textit{extension}~\cite{Zha25-unitary-oracles}, which allows for implementing a unitary with ancilla that must be reset to zero.

\begin{definition}[Extension]
    \label{def:extension}
    If $U$ is an $n$-qubit unitary, we say that an $(n+a)$-qubit unitary $\overline{U}$ is an \emph{$a$-qubit extension of $U$} if, for every $n$-qubit $\ket{\psi}$, $\overline{U}\ket{\psi}\ket{0^a} = (U\ket{\psi})\ket{0^a}$.
\end{definition}

Because an extension $\overline{U}$ leaves the subspace of the form $\ket{\psi}\ket{0^a}$ invariant, it must decompose as a direct sum $\overline{U} = U \oplus V$ for some unitary $V$ of dimension $2^{n+a} - 2^n$ acting on the orthogonal complement (i.e., the span of states where the ancilla register is not all zero).
And equivalently, any unitary direct sum $U \oplus V$ across these two subspaces is an extension of $U$.

\begin{definition}[$\uP/\uPSPACE$]
\label{def:unitaryP_tidy}
    A unitary language $\calU = \{U_n\}_{n \in \N}$ is in $\uP$ if there is a time-uniform family of unitary quantum circuits $\{C_{n, \eps}\}_{n \in \N, \eps > 0}$ on $n + a(n, \eps) \le \poly(n, \log 1/\eps)$ qubits such that for every $n$ and $\eps$, there is an $a$-qubit extension $\overline{U}$ of $U_n$ satisfying $\opnorm{\overline{U} - C_{n, \eps}} \le \eps$.

    $\uPSPACE$ is defined the same way, but where $\{C_{n, \eps}\}_{n \in \N, \eps > 0}$ is a \underline{space}-uniform family.
\end{definition}

Since the circuit family is time-uniform (respectively, space-uniform), the length of $C_{n,\eps}$ and the time required to output it are at most $\poly(n, \log 1/\eps)$ (respectively, $2^{\poly(n, \log 1/\eps)}$).

On the surface, \Cref{def:unitaryP_tidy} might seem cumbersome or even unnatural because we only care about the correctness of $C_{n, \eps}$ on the subspace of the form $\ket{\psi}\ket{0^a}$, yet we have to argue the existence of this $\overline{U}$ that acts on the \textit{entire} $(n+a)$-qubit Hilbert space.
But there are two reasons to like this definition:
\begin{enumerate}
    \item The operator norm is useful for composability, because it behaves well under tensor product and multiplication.
    In particular, $\opnorm{A} = \opnorm{A \otimes I}$, and for unitary operators, $\opnorm{AB - A'B'} \le \opnorm{A - A'} + \opnorm{B - B'}$.
    These properties will often be useful when we have a circuit that uses $C_{n, \eps}$ as a subroutine, in which case we can argue that the overall action of the circuit does not change much if we replace $C_{n, \eps}$ by $\overline{U}$.
    \item In fact, we can deduce the existence of a close extension $\overline{U}$ solely by arguing that $C_{n, \eps}$ approximately implements $U$ on the zero ancilla subspace.
    The next lemma establishes this.
\end{enumerate}

\begin{lemma}[Approximate implementations are close to extensions]
\label{lem:extension_close_op_norm}
    Let $C$ be an $(n+a)$-qubit unitary matrix such that, for all $n$-qubit states $\ket{\psi}$, $\norm{C \ket{\psi} \ket{0^a} - (U \ket{\psi}) \ket{0^a}} \leq \eps / 2$.
    Then there exists an $a$-qubit extension $\overline{U}$ of $U$ satisfying $\opnorm{C - \overline{U}} \leq \eps$.
\end{lemma}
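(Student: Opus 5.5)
Write $P = I_n \otimes \ketbra{0^a}{0^a}$ for the projector onto the zero-ancilla subspace $S$, and $P^\perp = I - P$. Set $\widetilde U = U \otimes I_a$. The plan is to build the extension blockwise as $\overline U = \widetilde U P + V$, where $V$ is a unitary on $S^\perp$ chosen close to the compressed block $A = P^\perp C P^\perp$. Then we bound $\opnorm{C - \overline U}$ by splitting $C - \overline U$ into its block-diagonal and block-off-diagonal parts with respect to $S \oplus S^\perp$.

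\textbf{Trivial case.} If $\eps \ge 2$ there is nothing to do: any extension, e.g.\ $U \otimes I$, satisfies $\opnorm{C - \overline U} \le 2$.

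\textbf{Bounds on the $S$-column.} Assume $\eps < 2$. The hypothesis says $\opnorm{(C - \widetilde U)P} \le \eps/2$. Since $P^\perp \widetilde U P = 0$, this gives two bounds:
\[
\opnorm{PCP - \widetilde U P} \le \eps/2, \qquad \opnorm{P^\perp C P} \le \eps/2 .
\]

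\textbf{Bound on the other off-diagonal block.} Next we control $PCP^\perp$ using unitarity of $C$. From $PP = P = PCC^\dagger P$ we get
\[
(PCP^\perp)(PCP^\perp)^\dagger = P - (PCP)(PCP)^\dagger ,
\]
and from $C^\dagger C = I$ we get
\[
(P^\perp C P)^\dagger (P^\perp CP) = P - (PCP)^\dagger (PCP).
\]
Since $PCP$ and $(PCP)^\dagger$ have the same singular values, both off-diagonal blocks have norm $\sqrt{1 - \sigma_{\min}(PCP)^2}$, where $\sigma_{\min}$ is taken on $S$. Hence $\opnorm{PCP^\perp} = \opnorm{P^\perp C P} \le \eps/2$.

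\textbf{Choosing $V$.} Similarly, $A^\dagger A = P^\perp - P^\perp C^\dagger P C P^\perp$. So every singular value of $A$ on $S^\perp$ lies in $[\sqrt{1-\eps^2/4},\,1]$, and in particular $A$ is invertible on $S^\perp$. Let $V$ be the unitary factor of the polar decomposition of $A$ on $S^\perp$. Then
\[
\opnorm{A - V} \le 1 - \sqrt{1 - \eps^2/4} \le \eps^2/4 \le \eps/2 .
\]
Define $\overline U = \widetilde U P + V$. This is a direct sum $U \oplus V$ across $S \oplus S^\perp$, so it is an $a$-qubit extension of $U$.

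\textbf{Assembling the bound.} Write $C - \overline U = D + O$. The block-diagonal part is
\[
D = (PCP - \widetilde U P) + (A - V),
\]
so $\opnorm{D} = \max\bigl(\opnorm{PCP - \widetilde U P}, \opnorm{A - V}\bigr) \le \eps/2$. The block-off-diagonal part is
\[
O = PCP^\perp + P^\perp C P,
\]
so $\opnorm{O} = \max\bigl(\opnorm{PCP^\perp}, \opnorm{P^\perp CP}\bigr) \le \eps/2$. By the triangle inequality, $\opnorm{C - \overline U} \le \eps$.

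The main obstacle is the off-diagonal block $PCP^\perp$, which the hypothesis does not control directly. The CS-decomposition-style identity above, equating the norms of the two off-diagonal blocks of a unitary, is the step that needs care. The polar-decomposition step for $V$ is routine once we know the singular values of $A$ are close to $1$.
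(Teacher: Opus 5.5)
Your proof is correct and gives the same constant as the paper, but by a different and more elementary route.

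The paper invokes the cosine--sine decomposition of $C$ with respect to the split $S \oplus S^\perp$. It reads off $\opnorm{S_1 - V_1^\dagger U W_1} \le \eps/2$ and $\opnorm{S_2} \le \eps/2$. It then completes $U$ with the $U$-dependent block $V_2(-V_1^\dagger U W_1 \oplus I)W_2^\dagger$, so the error on the $S^\perp$ diagonal block is again $\opnorm{S_1 - V_1^\dagger U W_1} \le \eps/2$.

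You instead derive by hand only the piece of the cosine--sine decomposition that is actually needed. That piece is the equality of the two off-diagonal block norms, which follows from $CC^\dagger = C^\dagger C = I$ and the fact that $PCP$ is square on $S$. You then complete $U$ with the polar (nearest-unitary) factor of $P^\perp C P^\perp$, which does not depend on $U$ at all.

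The paper's route is brief once the decomposition is granted: every block is explicit and the final bound is a one-line computation. Your route has several advantages:
\begin{itemize}
\item It is self-contained, using only unitarity and polar decomposition.
\item It treats $a = 0$ uniformly. The paper must set that case aside because its decomposition form assumes the $S$-block is the smaller one.
\item It gives a canonical completion.
\item It shows that the $S^\perp$ diagonal error is only $1-\sqrt{1-\eps^2/4} = O(\eps^2)$ rather than $\eps/2$.
\end{itemize}

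Your separate case $\eps \ge 2$ is harmless. It is only there to make $\eps^2/4 \le \eps/2$ valid.
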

\begin{proof}
When there are no ancilla ($a = 0$), this statement is trivial.
Suppose that $a \geq 1$.
We use the cosine-sine decomposition~\cite{PW94-cosine-sine}.
We consider $C$ as having two blocks, corresponding to the rows $\bra{x}\bra{0^a}$ and the rest, $\bra{x}\bra{y}$ for $y \neq 0^a$.
Note that the first block is the smaller one.
Then the cosine-sine decomposition states that we can write
\begin{align*}
    C = \begin{bmatrix}
        V_1 \\ & V_2
    \end{bmatrix} \begin{bmatrix}
        S_1 & S_2^\dagger \\ S_2 & -S_1 \oplus I
    \end{bmatrix} \begin{bmatrix}
        W_1^\dagger \\ & W_2^\dagger
    \end{bmatrix}
\end{align*}
where $V_1$, $V_2$, $W_1$, and $W_2$ are unitary; and $S_1$ and $S_2$ are real diagonal, with $S_1$ being square and $S_2$ being rectangular; and $S_1 \oplus I$ denotes $S_1$, direct sum with an identity matrix with the size needed to make it match the expected size of the block.
The initial assumption corresponds to the left two blocks of $C$ being close to $U$ and $0$, respectively:
\begin{align*}
    \opnorm[\Big]{
    \begin{bmatrix}
        V_1 S_1 W_1^\dagger \\ V_2 S_2 W_1^\dagger
    \end{bmatrix} - \begin{bmatrix}
        U \\ 0
    \end{bmatrix}
    }
    = \opnorm[\Big]{
    \begin{bmatrix}
        S_1 - V_1^\dagger U W_1 \\ S_2
    \end{bmatrix}
    } \leq \eps / 2.
\end{align*}
In particular, this means that $\opnorm{S_1 - V_1^\dagger U W_1} \leq \eps / 2$ and $\opnorm{S_2} \leq \eps / 2$.
Now let
\[
\overline{U} \coloneqq U \oplus (V_2(-V_1^\dagger U W_1 \oplus I) W_2^\dagger).
\]
Then this choice of $\overline{U}$ satisfies the desired bound:
\begin{align*}
    \opnorm{C - \overline{U}}
    &= \opnorm[\Big]{\begin{bmatrix}
        V_1 \\ & V_2
    \end{bmatrix} \begin{bmatrix}
        S_1 & S_2^\dagger \\ S_2 & -S_1 \oplus I
    \end{bmatrix} \begin{bmatrix}
        W_1^\dagger \\ & W_2^\dagger
    \end{bmatrix}
    - \begin{bmatrix}
        U \\ & V_2 (-V_1^\dagger U W_1 \oplus I) W_2^\dagger
    \end{bmatrix}} \\
    &= \opnorm[\Big]{\begin{bmatrix}
        S_1 - V_1^\dagger U W_1 & S_2^\dagger \\ S_2 & (-S_1 + V_1^\dagger U W_1) \oplus 0
    \end{bmatrix}} \\
    &\leq 2 \max\Big(\opnorm{S_1 - V_1^\dagger U W_1}, \opnorm{S_2}\Big) \\
    &\leq \eps.
\end{align*}
The final step uses our assumption.
\end{proof}

Much like Zhandry~\cite{Zha25-unitary-oracles}, we observe that one can efficiently compute the conjugate, transpose, inverse, or controlled version of an efficiently-computable unitary.
Our definition makes the proof somewhat simpler, because we do not need to use the Solovay--Kitaev theorem to recompile gates; e.g.\ we can directly compute the inverse of a gate by acting on the FPTEASs of the entries.

\begin{proposition}[Unitary homomorphisms]
    \label{prop:homomorphisms_in_p}
    Let $\class{K} \in \{\P, \PSPACE\}$ and $\calU \in \class{unitaryK}$. Then
    \[
    \calU^*, \calU^\dagger, \calU^\trans, \controlled\calU, \controlled\calU^*, \controlled\calU^\dagger, \controlled\calU^\trans \in \class{unitaryK}.
    \]
\end{proposition}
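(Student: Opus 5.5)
The plan is to transform the circuit family $\{C_{n,\eps}\}$ for $\calU$ gate by gate, and then check that each transformed circuit is still $\eps$-close to an extension of the transformed unitary. The constructions are as follows.

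\emph{Inverse.} Output the gates of $C_{n,\eps}$ in reverse order, replacing each gate $G$ by $G^\dagger$.

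\emph{Conjugate.} Keep the gate order and replace each gate by $G^*$.

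\emph{Transpose.} Reverse the order and replace each gate by $G^\trans$.

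In every case the FPTEAS description of the new gate is obtained from the old one by permuting entries and/or negating imaginary parts. Both operations preserve FPTEAS-computability with no precision loss. No Solovay--Kitaev recompilation is needed, because the gate model allows any FPTEAS gate.

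The first point to check is uniformity, and the only subtlety is the order reversal. For time-uniformity, the machine writes all of $C_{n,\eps}$ (polynomial length) and then emits it backwards. For space-uniformity, the circuit may be exponentially long, so it cannot be stored. Instead the machine iterates an index $t$ from the last gate down to the first. For each $t$ it reruns the original generator and discards output until the $t$-th gate appears. This uses a polynomial-size counter plus the original machine's polynomial space, and each gate has a polynomial-length description by assumption.

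Correctness for these three transforms follows because the maps $A\mapsto A^\dagger, A^*, A^\trans$ are isometries for the operator norm. Moreover, if $\overline{U}=U\oplus V$ is an extension of $U$ with respect to the zero-ancilla subspace, then $\overline{U}^\dagger=U^\dagger\oplus V^\dagger$, and likewise $\overline{U}^*=U^*\oplus V^*$ and $\overline{U}^\trans=U^\trans\oplus V^\trans$. This uses that the block decomposition is along computational basis states, so conjugation and transposition respect it. Hence $\opnorm{\overline{U}^\dagger-C^\dagger_{n,\eps}}\le\eps$, and similarly for the other two.

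\emph{Controlled versions.} Given a circuit for $\calW\in\{\calU,\calU^*,\calU^\dagger,\calU^\trans\}$, replace each gate $G$ by its controlled version $\controlled G=\ketbra{0}{0}\otimes I+\ketbra{1}{1}\otimes G$, with the control on the new qubit. Again this is an explicit FPTEAS gate of one larger arity. The product of the controlled gates equals $\controlled C$. If $\overline{W}=W\oplus V$ is an extension, then $\controlled\overline{W}$, as an $(n+1+a)$-qubit operator, acts as $\controlled W$ on states with ancilla $0^a$. On the complementary subspace it acts as $I\oplus V$, suitably arranged, so it is an extension of $\controlled W$. Finally,
\[\opnorm{\controlled\overline{W}-\controlled C}=\opnorm{\overline{W}-C}\le\eps.\]
Composing this with the previous cases gives all seven languages.

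The main obstacle I expect is bookkeeping rather than anything deep. First, one must verify that controlling preserves the extension structure; the ancilla indexing in $\controlled n$-qubit versus $(n+1)$-qubit needs care. Second, the space-uniform reversal must be handled by recomputation as described above. If that verification gets messy, a fallback is \Cref{lem:extension_close_op_norm}: check closeness only on inputs with zero ancilla, at the cost of a factor of $2$, absorbed by running the original family with $\eps/2$.
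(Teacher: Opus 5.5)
Your proposal is correct and matches the paper's proof: you transform the circuit gate by gate (conjugate, inverse, transpose, or controlled version, reversing the gate order for the inverse and transpose), and the extension property and the $\eps$ operator-norm bound carry over directly. Your recomputation argument for emitting an exponentially long space-uniform circuit in reverse order is a detail the paper leaves implicit, and it is handled correctly.
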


\begin{proof}
    We can modify gate-by-gate the circuit $C_{n,\eps}$ that computes $U_n$ to error $\eps$ in the sense of \Cref{def:unitaryP_tidy}.
    For example, in the case of $\controlled\calU^*$, replace each gate with its controlled complex conjugate.
    This implements the controlled conjugate of $C_{n,\eps}$, which satisfies $\opnorm{\controlled\overline{U}^* - \controlled C_{n,\eps}^*} = \opnorm{\overline{U} - C_{n,\eps}} \le \eps$ for the desired extension $\overline{U}$ of $U_n$.
    The same argument works for the other transformations by taking the gatewise conjugate, inverse, transpose, or controlled version thereof, and reversing the order in the case of the inverse or transpose.
\end{proof}

The ability to compute such homomorphisms efficiently motivates giving access to all of the homomorphisms when accessing a unitary language as an oracle.

\begin{definition}[Unitary oracle circuit]
    \label{def:oracle_circuit}
    Let $\calO = \braces{O_n}_{n \in \N}$ be a unitary language.
    An \emph{$\calO$-oracle unitary quantum circuit} is a sequence that may contain both explicitly specified $k$-qubit unitary gates (as in \Cref{def:unitary_circuit})  and oracle gates from the set
    \[
    \braces{O_n, O^*_n, O^\dagger_n, O^\trans_n, \controlled O_n, \controlled O_n^*, \controlled O_n^\dagger, \controlled O_n^\trans}_{n \in \N}.
    \]
    These oracle gates are specified only by the label and the qubits acted upon.
\end{definition}

\begin{definition}[Oracle $\uP/\uPSPACE$]
    \label{def:unitaryP_tidy_oracle}
    A unitary language $\calU = \{U_n\}_{n \in \N}$ is in $\uP^\calO$ if there is a time-uniform family of \underline{$\calO$-oracle} unitary quantum circuits $\{C_{n, \eps}\}_{n \in \N, \eps > 0}$ on $n + a(n, \eps) \le \poly(n, \log 1/\eps)$ qubits such that for every $n$ and $\eps$, there is an $a$-qubit extension $\overline{U}$ of $U_n$ satisfying $\opnorm{\overline{U} - C_{n, \eps}} \le \eps$.

    $\uPSPACE^\calO$ is defined the same way, but where $\{C_{n, \eps}\}_{n \in \N, \eps > 0}$ is a \underline{space}-uniform family.
\end{definition}

Here we prove that our definition of unitary complexity satisfies some basic desiderata.
For the first proof, that unitary languages subsume ordinary quantum-computable languages, we require the following lemma:

\begin{lemma}[Unitary column approximation yields global approximation]
    \label{lem:basis_to_op_norm}
    Suppose that for every $x \in \{0,1\}^n$,
    \[
    \norm{C\ket{x}\ket{0^a} - (U\ket{x})\ket{0^a}} \le \frac{\eps}{\sqrt{2^n}}.
    \]
    Then for every $n$-qubit $\ket{\psi}$,
    \[
    \norm{C\ket{\psi}\ket{0^a} - (U\ket{\psi})\ket{0^a}} \le \eps.
    \]
\end{lemma}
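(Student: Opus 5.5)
The plan is to use linearity and the triangle inequality, followed by Cauchy--Schwarz. Define the linear operator
\[
E \coloneqq C(I \otimes \ket{0^a}) - U \otimes \ket{0^a},
\]
which maps $n$-qubit states to $(n+a)$-qubit states. The hypothesis then says that $\norm{E\ket{x}} \le \eps/\sqrt{2^n}$ for every computational basis state $\ket{x}$. The goal is to bound $\norm{E\ket{\psi}}$ for an arbitrary unit vector $\ket{\psi}$.

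First, expand $\ket{\psi} = \sum_x \alpha_x \ket{x}$ with $\sum_x \abs{\alpha_x}^2 = 1$. By linearity of $E$ and the triangle inequality,
\[
\norm{E\ket{\psi}} \le \sum_x \abs{\alpha_x}\,\norm{E\ket{x}} \le \frac{\eps}{\sqrt{2^n}} \sum_x \abs{\alpha_x}.
\]
Next, apply Cauchy--Schwarz to the vector of amplitudes: $\sum_x \abs{\alpha_x} \le \sqrt{2^n}\,\bigl(\sum_x \abs{\alpha_x}^2\bigr)^{1/2} = \sqrt{2^n}$. Substituting this bound gives $\norm{E\ket{\psi}} \le \eps$, which is the claim.

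An alternative route goes through the Frobenius norm: $\opnorm{E} \le \norm{E}_F = \bigl(\sum_x \norm{E\ket{x}}^2\bigr)^{1/2} \le \sqrt{2^n \cdot \eps^2/2^n} = \eps$. This gives the same constant, so either argument suffices.

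There is no real obstacle here. The only point to be careful about is treating $C(\cdot\otimes\ket{0^a})$ and $U(\cdot)\otimes\ket{0^a}$ as linear maps on the $n$-qubit space, so that the error decomposes additively over the basis. Note also that unitarity of $C$ is not even needed for this argument.
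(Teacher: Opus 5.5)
Your proof is correct and matches the paper's argument exactly: expand $\ket{\psi}$ in the computational basis, apply the triangle inequality, and finish with Cauchy--Schwarz on $\sum_x \abs{\alpha_x}$. Your Frobenius-norm variant is also valid and gives the same constant. You are also right that the unitarity of $C$ plays no role in this step.
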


\begin{proof}
    Write $\ket{\psi} = \sum_{x \in \{0,1\}^n} \alpha_x \ket{x}$.
    Then
    \begin{align*}
    \norm{C\ket{\psi}\ket{0^a} - (U\ket{\psi})\ket{0^a}}
    &\le \sum_{x \in \{0,1\}^n} \norm{\alpha_x C\ket{x}\ket{0^a} - \alpha_x(U\ket{x})\ket{0^a}}\\
    &\le \frac{\eps}{\sqrt{2^n}} \sum_{x \in \{0,1\}^n} \abs{\alpha_x}\\
    &\le \eps,
    \end{align*}
    where the inequalities hold by the triangle inequality, the assumption of the lemma, and Cauchy-Schwarz, respectively.
\end{proof}

\begin{proposition}[Unitary classes subsume languages]
    \label{prop:tidy_contains_bbqp}
    Fix a unitary oracle $\calO$ and language $L$, and let $\calU$ be its unitary embedding.
    Then
    \[
    L \in \BQP^\calO \implies \calU \in \uP^\calO; \qquad L \in \BQuPSPACE^\calO \implies \calU \in \uPSPACE^\calO,
    \]
    where the unitary oracle model for the language classes allows the same gates as in \Cref{def:oracle_circuit}.
\end{proposition}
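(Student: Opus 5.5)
We prove both implications with the same construction: amplify the decision procedure, compute the answer bit, XOR it into the output qubit, and uncompute. Let $\{D_n\}$ be the uniform family deciding $L$ with success probability $2/3$. In the $\BQP$ case these are unitary oracle circuits after the standard deferral of measurements; in the $\BQuPSPACE$ case they are unitary by definition. Here $D_n$ acts on an $n$-bit input plus $a(n)$ ancillas.

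\textbf{Amplification.} Given $n$ and $\eps$, set $k = O(n + \log 1/\eps)$. Run $k$ independent copies of $D_{n-1}$ on the input register $\ket{x}$, each copy using fresh ancillas and reading $x$ coherently; this is possible because the circuits are length-indexed. Then compute the majority of the $k$ output qubits into a fresh qubit $m$. By the Chernoff bound, for every $x$ the final state has the form $\ket{x}\ket{L(x)}_m\ket{\phi_x} + \ket{\text{err}_x}$, where $\norm{\ket{\text{err}_x}}^2 \le 2^{-\Omega(k)}$. Only this basis-state guarantee is used below.

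\textbf{Circuit.} Apply a CNOT from $m$ onto the target qubit $b$, then run the amplification circuit in reverse. Call the whole circuit $C_{n,\eps}$.

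\textbf{Error on basis inputs.} The standard compute–copy–uncompute argument controls each basis input. Let $A$ be the amplification unitary. Write $A\ket{x,0} = \ket{\text{good}} + \ket{\text{err}}$, where $\ket{\text{good}}$ has $m = L(x)$. The CNOT acts on $\ket{\text{good}}$ exactly as $X_b^{L(x)}$, which commutes with $A^\dagger$. Therefore
\[
\norm{C_{n,\eps}\ket{x,b}\ket{0^{a'}} - \ket{x, b\oplus L(x)}\ket{0^{a'}}} \le 2\norm{\ket{\text{err}}} \le 2^{-\Omega(k)}.
\]
Choose $k$ so that this is at most $\eps/(2\sqrt{2^n})$.

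\textbf{From basis inputs to an extension.} \Cref{lem:basis_to_op_norm} gives error $\eps/2$ on every input state. \Cref{lem:extension_close_op_norm} then produces an extension $\overline{U}$ with $\opnorm{\overline{U} - C_{n,\eps}} \le \eps$, as \Cref{def:unitaryP_tidy_oracle} requires.

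\textbf{Resources.} Oracle gates are copied and inverted. Inverting an oracle gate replaces $O$ by $O^\dagger$, $O^\trans$ by $O^*$, and so on, all of which are allowed gates. Non-oracle gates are inverted by conjugate-transposing their FPTEAS descriptions. Uniformity is preserved: a polynomial-time machine, or a polynomial-space machine with write-only output, can emit $k$ copies, the majority circuit, the CNOT, and the reversed sequence. For the reversal, the space-bounded generator re-derives gate $i$ by rerunning $M$ and counting. The qubit count is $n + k\,a(n) + O(k) \le \poly(n, \log 1/\eps)$.

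The main obstacle is the uncomputation bookkeeping: showing that reversing a circuit whose output is only approximately classical really returns the ancillas to $\ket{0}$. The two-term decomposition above handles this, and it needs only per-basis-state guarantees, which is why \Cref{lem:basis_to_op_norm} is used. A further subtlety is that $\BQuPSPACE$ circuits may be exponentially long; this does not matter, since the bound depends only on the acceptance probability and not on circuit length.
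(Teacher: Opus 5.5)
Your proposal is correct and follows the paper's proof essentially step for step. You amplify by parallel repetition with $O(n+\log 1/\eps)$ copies, compute the majority, CNOT it into $b$, uncompute, bound the error on each basis input $\ket{x,b}\ket{0^{a'}}$ by $\eps/(2\sqrt{2^n})$, and finish with \Cref{lem:basis_to_op_norm} and \Cref{lem:extension_close_op_norm}.

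Your unnormalized good/error decomposition is a slightly cleaner substitute for the paper's auxiliary distance claim, giving $2\sqrt{p}$ rather than $2\sqrt{2p}$. The one detail to tighten is the repetition step: first CNOT-fan out $x$ into a separate input register for each of the $k$ copies, rather than running every copy on the same input register, because $D_{n-1}$ need not leave its input register unchanged.
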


\begin{proof}
    Let $\class{K} \in \{\P, \class{_UPSPACE}\}$.
    Since $L \in \mathsf{BQK}^\calO$, there is a time or space (depending on $\class{K}$) uniform family of $\calO$-oracle unitary quantum circuits $\{A_n\}_{n \in \N}$ such that for any $x \in \{0,1\}^n$, measuring the first qubit of $A_n\ket{x}\ket{0^a}$ yields $L(x)$ with probability at least $2/3$.
    By parallel repetition, we can repeat $A_n$ $O(\log 1/\eps)$ times and majority vote to produce a circuit $A_{n, \eps}$ that succeeds with probability $1 - \eps$.
    Now let $\delta = \frac{\eps^2}{2^{n+5}}$ and let $C_{n, \eps}$ be the circuit that applies $A_{n-1,\delta}$, CNOTs the output bit into $\ket{b}$, and then applies $A_{n-1, \delta}^\dagger$.
    To analyze this construction, we first establish the following claim:

    \begin{claim}
        \label{claim:dist_bound}
        Suppose $\braket{\psi_0}{\psi_1} = 0$.
        Let $\ket{\phi} = \sqrt{1 - p}\ket{\psi_0} + \sqrt{p}\ket{\psi_1}$ for some $p \in [0, 1]$.
        Then
        \[
        \norm{\ket{\psi_0} - \ket{\phi}} \le \sqrt{2p}.
        \]
    \end{claim}

    \begin{proof}[Proof of claim]
    We have
    \begin{align*}
    \norm{\ket{\psi_0} - \ket{\phi}}^2
    &=
    \norm{
    (1 - \sqrt{1-p})\ket{\psi_0}
    -
    \sqrt{p}\ket{\psi_1}
    }^2 \\
    &=
    (1 - \sqrt{1-p})^2 + p\\
    &= 2 - 2\sqrt{1 - p}\\
    &\le 2p,
    \end{align*}
    where in the last line we apply the inequality $1 - p \le \sqrt{1 - p}$.
    Taking square roots completes the proof.
    \end{proof}

    Now consider any $x \in \{0,1\}^{n-1}$ and $b \in \{0,1\}$.
    There is some $p_{x,b} \le \delta$ for which applying $C_{n, \eps}$ to
    $\ket{x}\ket{b}\ket{0^a}$ yields
    \begin{align*}
    \ket{x}\ket{b}\ket{0^a} &\mapsto \sqrt{1 - p_{x,b}}\ket{x}\ket{b}\ket{L(x)}\ket{\phi_{x,b, L(x)}} + \sqrt{p_{x,b}}\ket{x}\ket{b}\ket{\lnot L(x)}\ket{\phi_{x,b, \lnot L(x)}}\\
    &\approx \ket{x}\ket{b}\ket{L(x)}\ket{\phi_{x,b, L(x)}}\\
    &\mapsto \ket{x}\ket{b \oplus L(x)}\ket{L(x)}\ket{\phi_{x,b, L(x)}}\\
    &\mapsto \sqrt{1 - p_{x,b}}\ket{x}\ket{b \oplus L(x)}\ket{0^a} + \sqrt{p_{x,b}} \ket{\bot}\\
    &\approx \ket{x}\ket{b \oplus L(x)}\ket{0^a},
    \end{align*}
    where both of the $\approx$ hold to within $\sqrt{2\delta}$ in Euclidean distance by the claim above.
    So, by the triangle inequality,
    \[
    \norm{C_{n, \eps}\ket{x}\ket{b}\ket{0^a} - \ket{x}\ket{b \oplus L(x)}\ket{0^a}} \le 2\sqrt{2\delta} = \frac{\eps}{2\sqrt{2^n}}.
    \]
    Appealing to \Cref{lem:basis_to_op_norm} and \Cref{lem:extension_close_op_norm} completes the proof, as $C_{n,\eps}$ is clearly time/space-uniform (depending on $\class{K}$).
\end{proof}

We next observe that our unitary complexity classes compose as expected:

\begin{proposition}[Unitary classes are self-low]
    \label{prop:tidy_low}
    Let $\class{K} \in \{\P, \PSPACE\}$. Then $\class{unitaryK}^{\class{unitaryK}^\calO} = \class{unitaryK}^\calO$.
\end{proposition}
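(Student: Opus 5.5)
The inclusion $\class{unitaryK}^\calO \subseteq \class{unitaryK}^{\class{unitaryK}^\calO}$ is immediate: an $\calO$-oracle circuit family for $\calU$ is also a $\calV$-oracle circuit family that simply never queries $\calV$, for any $\calV \in \class{unitaryK}^\calO$. For the converse, fix $\calV = \{V_m\}_m \in \class{unitaryK}^\calO$ with circuit family $\{D_{m,\delta}\}$, and suppose $\calU \in \class{unitaryK}^\calV$ via $\{C_{n,\eps}\}$. My plan is to substitute a circuit for every oracle gate. The target is a family $\{E_{n,\eps}\}$ obtained from $C_{n,\eps/2}$ by replacing each $\calV$-oracle gate $G \in \{V_m, V_m^*, V_m^\dagger, V_m^\trans, \controlled V_m, \dots\}$ with the corresponding homomorphic image of $D_{m,\delta}$, taken gate by gate exactly as in \Cref{prop:homomorphisms_in_p}. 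Each such image acts on its own fresh block of ancillas initialized to $\ket{0}$. One point needs care: the homomorphisms must also be applied to the $\calO$-oracle gates inside $D_{m,\delta}$. This is allowed because \Cref{def:oracle_circuit} closes the oracle gate set under conjugate, transpose, inverse, and control. In particular, the controlled version of $\controlled O$ is a doubly-controlled $O$, which I implement with an extra ancilla computing the AND of the two controls.

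For the error analysis, let $T$ be the number of oracle gates in $C_{n,\eps/2}$. By \Cref{prop:homomorphisms_in_p} and \Cref{def:unitaryP_tidy_oracle}, the replacement circuit for $G$ is $\delta$-close in operator norm to the corresponding image $\overline{G}$ of an extension $\overline{V}_m$. The images of an extension under these homomorphisms are again extensions of the image of $V_m$, with the caveat that control must act only when the ancilla is zero. Since the ancilla block is private to each replaced gate and is returned to $\ket{0}$ exactly by $\overline{G}$, I will argue via the hybrid argument. Replace the substituted circuits one at a time by their ideal extensions $\overline{G}$; each replacement costs at most $\delta$, using $\opnorm{A\otimes I} = \opnorm{A}$ and subadditivity of $\opnorm{AB - A'B'}$. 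Once all of them are ideal, the circuit restricted to the all-zero fresh ancillas acts exactly as $C_{n,\eps/2}$. Hence, restricted to inputs $\ket{\psi}\ket{0^{a+a'}}$, the circuit $E_{n,\eps}$ is within $T\delta + \eps/2$ of $(U_n\ket{\psi})\ket{0}$ in Euclidean norm. Taking $\delta = \eps/(4T)$ and applying \Cref{lem:extension_close_op_norm} (after halving the constants) yields an extension $\overline{U}$ with $\opnorm{E_{n,\eps} - \overline{U}} \le \eps$.

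The step I expect to be the main obstacle is the resource accounting, especially for $\PSPACE$. In the time case, $T \le \poly(n,\log 1/\eps)$ and $m \le \poly(n,\log 1/\eps)$, so $\log 1/\delta = \poly$, and each $D_{m,\delta}$ has polynomial size; the total size is therefore polynomial. In the space case, $T$ may be $2^{\poly(n,\log 1/\eps)}$, but $\log(1/\delta) = \log T + \log(4/\eps)$ is still polynomial. That keeps the width of each $D_{m,\delta}$ polynomial. Moreover, since all ancillas are (ideally) returned to zero, I will reuse a single shared ancilla block of size $\max_m a'(m,\delta)$ for all substitutions rather than allocating fresh blocks. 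The hybrid argument still goes through, because the ideal extension leaves that block at $\ket{0}$ between calls. Uniformity follows because the space-bounded generator for $E$ can run the generator for $C_{n,\eps/2}$. Whenever it emits an oracle gate, it pauses and runs the generator for $D_{m,\delta}$, transforming gates on the fly, which takes polynomial space. The gate descriptions remain polynomial length, because conjugating, transposing, or controlling an FPTEAS-described gate changes its description only polynomially.
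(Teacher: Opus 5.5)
Your proposal is correct and follows essentially the same route as the paper. Both substitute each $\calV$-oracle gate of $C_{n,\eps/2}$ by the appropriate homomorphic image of an error-$\eps/\Theta(T)$ implementation (reusing one ancilla block in the space case) and then run a hybrid argument. The remaining differences are minor. You conclude through \Cref{lem:extension_close_op_norm} on the zero-ancilla subspace, which is why you need the constant adjustment, whereas the paper directly exhibits the close extension $U \oplus V \oplus W$ as an extension of an extension. You also explicitly justify the relativized homomorphisms, such as doubly-controlled $\calO$-gates via an AND ancilla; the paper leaves this implicit in its citation of \Cref{prop:homomorphisms_in_p}.
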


\begin{proof}
    Suppose $\calU \in \class{unitaryK}^\calV$ for some $\calV \in \class{unitaryK}^\calO$.
    Let $C_{n,\eps/2}$ be the $\calV$-oracle unitary quantum circuit that solves $\calU$ to error $\eps/2$ on length-$n$ inputs.
    Let $m$ be an upper bound on the size of $C_{n,\eps/2}$, which is at most $\poly(n, \log 1/\eps)$ if $\class{K} = \P$ or $2^{\poly(n, \log 1/\eps)}$ if $\class{K} = \PSPACE$.
    To solve $\calU$ by a family of $\calO$-oracle unitary quantum circuits, define $C'_{n, \eps}$ to be the circuit obtained from $C_{n, \eps/2}$ by replacing each $\calV$-oracle gate with its error-$\eps/2m$ implementation, possibly at the cost of appending additional ancilla qubits initialized to $\ket{0}$ that will be reused at each oracle call.
    These implementations exist because of \Cref{prop:homomorphisms_in_p}.

    First recall from the definition of $\uP^\calV$ that $C_{n, \eps/2}$ is $\eps/2$-close in operator norm to an extension of $U$.
    That is, there exists a unitary $V$ such that
    \begin{equation}\label{eq:C_close_extension}
    \opnorm{C_{n, \eps/2} - U \oplus V} \le \eps / 2.
    \end{equation}

    Next, we claim that $C'_{n, \eps}$ is $\eps/2$ close to an extension of $C_{n, \eps/2}$.
    Or in symbols, there exists a unitary $W$ such that
    \begin{equation}\label{eq:C'_close_extension}
    \opnorm{C'_{n, \eps} - C_{n, \eps/2} \oplus W} \le \eps / 2.
    \end{equation}
    To see this, consider the circuit $\overline{C_{n, \eps/2}}$ that replaces each oracle query $Q_i$ in $C_{n, \eps/2}$ with the corresponding extension $\overline{Q}_i$ that is $\eps/2m$-close in operator norm to its implementation $Q'_i$ in $C'_{n, \eps}$.
    Then $\overline{C_{n, \eps/2}}$ is an extension of $C_{n, \eps/2}$, because a product of extensions is an extension of the product (and an extension tensored with identity is still an extension).
    Additionally, since $\opnorm{\overline{Q}_i - Q'_i} \le \eps/2m$ , replacing each $Q'_i$ in $C'_{n, \eps}$ with $\overline{Q}_i$ one at a time yields
    \[
    \opnorm{C'_{n, \eps} - \overline{C_{n, \eps/2}}} \le \sum_{i=1}^{m} \frac{\eps}{2m} = \frac{\eps}{2}.
    \]
    by the triangle inequality (and the preservation of operator norm under tensor product with identity).
    So, we pick $W$ such that $\overline{C_{n, \eps/2}} = C_{n, \eps/2} \oplus W$.

    Combining \Cref{eq:C_close_extension,eq:C'_close_extension} by the triangle inequality yields
    \begin{align*}
    \opnorm{C'_{n, \eps} - U \oplus V \oplus W}
    &\le \opnorm{C'_{n, \eps} - C_{n, \eps/2} \oplus W} + \opnorm{C_{n, \eps/2} \oplus 0 - U \oplus V \oplus 0}\\
    &= \opnorm{C'_{n, \eps} - C_{n, \eps/2} \oplus W} + \opnorm{C_{n, \eps/2} - U \oplus V}\\
    &\le \eps,
    \end{align*}
    which shows that $C'_{n, \eps}$ is $\eps$-close to an extension of $U$.
    Thus, we have established that $C'_{n,\eps}$ satisfies the distance criterion for $\calU$ in \Cref{def:unitaryP_tidy_oracle}.
    It is also clear that $C'_{n,\eps}$ is time/space-uniform (depending on $\class{K}$), because the logarithm of the inverse error of each $\calV$ gate is a polynomial: $\log(2m / \eps) \le \poly(n, \log 1/\eps)$.
    So, $\calU \in \class{unitaryK}^\calO$.
\end{proof}

\subsection{Unitary complexity allowing garbage}

Here we again build a series of unitary complexity class definitions, but this time we allow for generating garbage.
Compared to \Cref{sec:tidy_definitions}, the only difference is that \textit{extensions} will be replaced by \textit{dilations}, meaning: the implementation does not need to reset the ancilla register back to zero.

\begin{definition}[Dilation]
    \label{def:extension_garbage}
    If $U$ is an $n$-qubit unitary, we say that an $(n+a)$-qubit unitary $\overline{U}$ is an \emph{$a$-qubit dilation of $U$} if there exists an $a$-qubit garbage state $\ket{g}$ such that for every $n$-qubit $\ket{\psi}$, $\overline{U}\ket{\psi}\ket{0^a} = (U\ket{\psi})\ket{g}$.
\end{definition}

Dilations in this sense are precisely the unitaries which implement Stinespring dilations of the channel $\rho \mapsto U \rho U^\dagger$.
In subsequent definitions within this section, we will underline the key differences with the corresponding definitions in \Cref{sec:tidy_definitions}.

\begin{definition}[$\puP/\puPSPACE$]
\label{def:unitaryP_garbage}
    A unitary language $\calU = \{U_n\}_{n \in \N}$ is in $\puP$ if there is a time-uniform family of unitary quantum circuits $\{C_{n, \eps}\}_{n \in \N, \eps > 0}$ on $n + a(n, \eps) \le \poly(n, \log 1/\eps)$ qubits such that for every $n$ and $\eps$, there is an $a$-qubit \underline{dilation} $\overline{U}$ of $U_n$ satisfying $\opnorm{\overline{U} - C_{n, \eps}} \le \eps$.

    $\puPSPACE$ is defined the same way, but where $\{C_{n, \eps}\}_{n \in \N, \eps > 0}$ is a \underline{space}-uniform family.
\end{definition}

The prefix $\class{proj}$ stands for ``projective'' in the sense of the projective unitary group.
The name is justified by the observation that an implementation of a unitary $U$ with garbage state $\ket{g}$ is also an implementation of $\phi U$ with garbage state $\phi^*\ket{g}$, for any global phase $\phi$.
Put another way, dilations generically ``forget'' the global phase information.

One question arises in the definition of $\puP$ and $\puPSPACE$ that was irrelevant to $\uP$ and $\uPSPACE$, namely: why should we require implementation by a unitary quantum circuit if the algorithm is allowed to produce garbage?
It previously made sense to demand unitary implementation when the goal was to assign a global phase to every unitary, because information about global phase is lost under non-unitary maps.
But if we don't care about global phase and allow producing garbage, then why not consider a more general definition in which the implementation is a quantum circuit consisting of arbitrary channels?
In the case of $\puP$, we have an easy answer, because it is without loss of generality to choose a unitary implementation by dilating every gate into a unitary.
In particular, we can always replace each channel with a unitary purification at the cost of adding a number of qubits proportional to the number of gates in the circuit.
However, this strategy cannot work for $\puPSPACE$, because the circuit could contain exponentially many gates, and thus purifying the computation would require superpolynomial space!

We defer a formal resolution to this conundrum until later; see \Cref{rmk:pspace_purification}.
The gist is: it turns out that we can use the equivalence $\BQPSPACE = \BQuPSPACE = \PSPACE$ (\Cref{prop:bqpspace}) to purify any given implementation of a unitary, using at most a \textit{polynomial} overhead in space.
Thus, \Cref{def:unitaryP_garbage} would not become any more general if we were to define it allowing general channel gates.

By similar reasoning as before, we deduce that approximate implementations are close to dilations:

\begin{lemma}[Approximate implementations with garbage are close to dilations]
\label{lem:garbage_extension_close_op_norm}
    Let $C$ be an $(n+a)$-qubit unitary matrix and $\ket{g}$ an $a$-qubit state such that, for all $n$-qubit states $\ket{\psi}$, $\norm{C \ket{\psi} \ket{0^a} - (U \ket{\psi}) \ket{g}} \leq \eps / 2$.
    Then there exists an $a$-qubit dilation $\overline{U}$ of $U$ satisfying $\opnorm{C - \overline{U}} \leq \eps$.
\end{lemma}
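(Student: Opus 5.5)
The plan is to reduce \Cref{lem:garbage_extension_close_op_norm} to \Cref{lem:extension_close_op_norm} by a change of basis on the ancilla register that moves $\ket{g}$ to $\ket{0^a}$. Fix an $a$-qubit unitary $G$ with $G\ket{0^a} = \ket{g}$; one exists because $\ket{g}$ can be completed to an orthonormal basis. Put
\[
C' \coloneqq (I \otimes G^\dagger)\, C.
\]
Since $I \otimes G^\dagger$ is unitary and preserves Euclidean norm, for every $n$-qubit $\ket{\psi}$ we get
\[
\norm{C'\ket{\psi}\ket{0^a} - (U\ket{\psi})\ket{0^a}} = \norm{C\ket{\psi}\ket{0^a} - (U\ket{\psi})\ket{g}} \le \eps/2.
\]
This is exactly the hypothesis of \Cref{lem:extension_close_op_norm} for $C'$.

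That lemma therefore gives an $a$-qubit extension $\overline{U}'$ of $U$ with $\opnorm{C' - \overline{U}'} \le \eps$. Set
\[
\overline{U} \coloneqq (I \otimes G)\, \overline{U}'.
\]
Two checks remain, and both are immediate. First, $\overline{U}$ is a dilation of $U$ with garbage state $\ket{g}$: we have $\overline{U}\ket{\psi}\ket{0^a} = (I\otimes G)(U\ket{\psi})\ket{0^a} = (U\ket{\psi})\ket{g}$. Second, multiplying on the left by the unitary $I \otimes G$ preserves operator norm, so
\[
\opnorm{C - \overline{U}} = \opnorm{(I\otimes G)(C' - \overline{U}')} = \opnorm{C' - \overline{U}'} \le \eps.
\]

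There is no real obstacle here. All the substantive work, namely the cosine--sine decomposition, is already done inside \Cref{lem:extension_close_op_norm}. The only point needing care is that the same fixed $\ket{g}$ must serve for all $\ket{\psi}$, and the hypothesis guarantees this. The case $a = 0$ is also covered: there $\ket{g}$ is a unit scalar (a global phase), $G$ is that scalar, and the argument goes through unchanged.
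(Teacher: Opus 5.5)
Your proposal is correct and follows essentially the same proof as the paper. The paper rotates $\ket{g}$ to $\ket{0^a}$ with a unitary $V$ (your $G^\dagger$), applies \Cref{lem:extension_close_op_norm} to $(I\otimes V)C$, and then takes $\overline{U} = (I\otimes V^\dagger)\widetilde{U}$, using unitary invariance of the operator norm just as you do.
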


\begin{proof}
    Let $V$ be a unitary satisfying $V\ket{g} = \ket{0^a}$.
    Then for every $n$-qubit $\ket{\psi}$,
    \[
    \norm{(I \otimes V)C \ket{\psi} \ket{0^a} - (U \ket{\psi}) \ket{0^a}} \leq \eps / 2,
    \]
    and thus by \Cref{lem:extension_close_op_norm}, there is an extension $\widetilde{U}$ of $U$ satisfying $\opnorm{(I \otimes V)C - \widetilde{U}} \le \eps$.
    Equivalently, we have $\opnorm{C - (I \otimes V^\dagger)\widetilde{U}} \le \eps$.
    Noticing that $\overline{U} \coloneqq (I \otimes V^\dagger)\widetilde{U}$ is a dilation of $U$, we are done.
\end{proof}

We also deduce that approximation in the diamond norm suffices to obtain a close dilation:

\begin{lemma}[Approximation in diamond norm yields close dilations]
    For an $n$-qubit unitary $U$, let $\Phi: \rho \mapsto U \rho U^\dagger$ and let $C$ be an $(n + a)$-qubit unitary such that the map $\calC: \rho \mapsto \tr_2(C (\rho \otimes \proj{0^a}) C^\dagger)$ is $\eps$-close to $\Phi$ in diamond distance.
    Then there exists an $a$-qubit dilation $\overline{U}$ of $U$ such that $\opnorm{\overline{U} - C} \leq 4 \sqrt{\eps}$.
\end{lemma}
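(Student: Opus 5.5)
The plan is to reduce to \Cref{lem:garbage_extension_close_op_norm}: exhibit a single $a$-qubit state $\ket{g}$ such that $\norm{C\ket{\psi}\ket{0^a} - (U\ket{\psi})\ket{g}} \le 2\sqrt{\eps}$ for every $n$-qubit $\ket{\psi}$. That lemma then gives a dilation $\overline{U}$ with $\opnorm{\overline{U} - C} \le 4\sqrt{\eps}$. Replacing $C$ by $(U^\dagger \otimes I)C$ changes neither the diamond distance (after composing with $\Phi^{-1}$) nor operator-norm distances to dilations, so we may assume $U = I$. Then $\calC$ is $\eps$-close to the identity channel in diamond norm, and we need a fixed $\ket{g}$ with $C\ket{\psi}\ket{0^a} \approx \ket{\psi}\ket{g}$ uniformly in $\ket{\psi}$.

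\textbf{Step 1 (choose the garbage via the maximally entangled input).} Let $R$ be an $n$-qubit reference register and $\ket{\Omega} = 2^{-n/2}\sum_x \ket{x}_R\ket{x}$. By the diamond-norm assumption, $(\mathrm{id}_R \otimes \calC)(\proj{\Omega})$ is $\eps$-close in trace distance to $\proj{\Omega}$. Its purification is $\ket{\Phi'} = (I_R \otimes C)\ket{\Omega}\ket{0^a}$. By Fuchs--van de Graaf, the fidelity $F = \bra{\Omega}\tr_{\text{anc}}(\proj{\Phi'})\ket{\Omega} \ge 1 - \eps$. Let $\ket{\tilde g} = (\bra{\Omega} \otimes I)\ket{\Phi'}$, so $\norm{\tilde g}^2 = F$, and set $\ket{g} = \ket{\tilde g}/\norm{\tilde g}$. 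Then $\abs{\braket{\Phi'}{\Omega, g}}^2 = F \ge 1-\eps$. After fixing the phase of $\ket{g}$, this gives $\norm{\ket{\Phi'} - \ket{\Omega}\ket{g}} \le \sqrt{2\eps}$.

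\textbf{Step 2 (from the entangled state to all inputs).} The entangled-state bound alone only controls the average over inputs, not the worst case. So instead apply the same argument to the input $\ket{\Omega_\psi} = \tfrac{1}{\sqrt2}(\ket{0}_{R'}\ket{\psi} + \ket{1}_{R'}\ket{\psi_0})$ for a fixed reference state $\ket{\psi_0}$. Diamond closeness gives a garbage state $\ket{g_\psi}$ with $C\ket{\psi}\ket{0^a} \approx \ket{\psi}\ket{g_\psi}$ and $C\ket{\psi_0}\ket{0^a} \approx \ket{\psi_0}\ket{g_\psi}$, each to within $O(\sqrt{\eps})$ in Euclidean norm, because the reference coherence forces a common garbage on both branches. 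The second estimate shows $\ket{g_\psi} \approx \ket{g_{\psi_0}}$ for every $\psi$, so $\ket{g} \coloneqq \ket{g_{\psi_0}}$ works uniformly. A cleaner alternative is to write $C\ket{\psi}\ket{0^a} = \sum_i K_i\ket{\psi}\ket{i}$ in terms of the Kraus operators $K_i$. Diamond closeness of $\sum_i K_i\cdot K_i^\dagger$ to the identity means that the Kraus operators are nearly proportional to $I$, i.e.\ $K_i \approx c_i I$ with $\sum\abs{c_i}^2 \approx 1$, with worst-case error controlled by $\eps$. Then take $\ket{g} \propto \sum_i c_i\ket{i}$.

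\textbf{Main obstacle.} The hard part is getting worst-case Euclidean error $2\sqrt{\eps}$ with the stated constant, rather than just an average over inputs. I would first try the Kraus route. Write $C = \sum_i K_i \otimes \ket{i}\bra{0}$ on the relevant subspace and take $c_i = \tr(K_i)/2^n$. The residuals $K_i - c_i I$ must then be bounded in operator norm on $\ket{\psi}$ using the diamond-norm hypothesis evaluated on $\ket{\Omega_\psi}$. Concretely, for a unit vector $\ket{\psi}$, the squared distance $\sum_i\norm{(K_i - c_iI)\ket{\psi}}^2$ should equal $1 - \abs{\langle\text{overlap}\rangle}^2$, which Fuchs--van de Graaf bounds by $2\eps$ up to constants, giving $\le 2\sqrt{\eps}$ after the square root. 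Careful constant tracking, including the phase choice of $\ket{g}$, is where I expect the most fiddling.
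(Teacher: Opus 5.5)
Your reduction is the same as the paper's last step: produce one garbage state $\ket{g}$ with $\norm{C\ket{\psi}\ket{0^a} - (U\ket{\psi})\ket{g}} \le 2\sqrt{\eps}$ for every $\ket{\psi}$, then invoke \Cref{lem:garbage_extension_close_op_norm}. Your Step 1 and the reduction to $U = I$ are also fine. The gap is the uniform, worst-case bound. The paper gets it in one line from the continuity theorem for Stinespring isometries. You leave it as your ``main obstacle,'' and neither route you sketch reaches the stated constant.

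Route 1 (the input $\ket{\Omega_\psi}$) is sound but lossy. Take $\dnorm{\calC - \Phi} \le \eps$. The two branch errors $a, b$ satisfy $a^2 + b^2 \le 2\eps$, and the error $e$ of $\ket{g_{\psi_0}}$ on $\ket{\psi_0}$ satisfies $e \le \sqrt{\eps}$. Your triangle-inequality bookkeeping therefore gives $a + b + e \le 3\sqrt{\eps}$, which becomes only $6\sqrt{\eps}$ after \Cref{lem:garbage_extension_close_op_norm}.

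Route 2 rests on an identity that is false for a \emph{fixed} $c$. Write $d_i(\rho) \coloneqq \tr(\rho K_i)$ and $d(\psi) \coloneqq d(\proj{\psi})$, with the inner product on $\C^{2^a}$ antilinear in its first slot. Then $\sum_i \norm{(K_i - c_i I)\ket{\psi}}^2 = 1 - \norm{d(\psi)}^2 + \norm{d(\psi) - c}^2$. Only the first part is a single-input infidelity. The second term, the drift of the garbage across inputs, is the whole difficulty, and you leave it unbounded. Your claim that diamond closeness makes the $K_i$ nearly proportional to $I$ in the worst case is the statement to be proved, not a step.

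The missing idea is convexity, and it works with your own Step 1 garbage.

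1. For any $n$-qubit density matrix $\rho$ with purification $\ket{\chi}$, the hypothesis gives $1 \ge \bra{\chi}(\mathrm{id} \otimes \calC)(\proj{\chi})\ket{\chi} = \norm{d(\rho)}^2 \ge 1 - \eps/2$, and $d$ is linear in $\rho$.
2. Your Step 1 garbage is exactly $c = d(I/2^n)$ before normalization, so $0 < \norm{c} \le 1$.
3. Apply the bound at the midpoint $\rho = \frac{1}{2}(\proj{\psi} + I/2^n)$. The parallelogram law gives $\Re\langle c, d(\psi)\rangle = 2\norm{d(\rho)}^2 - \frac{1}{2}(\norm{c}^2 + \norm{d(\psi)}^2) \ge 1 - \eps$.
4. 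Assume $\eps \le 1/4$; otherwise $4\sqrt{\eps} \ge 2$ and the claim is trivial. With $\ket{g} = c/\norm{c}$ you get $\norm{C\ket{\psi}\ket{0^a} - \ket{\psi}\ket{g}}^2 = 2 - 2\Re\langle g, d(\psi)\rangle \le 2 - 2\Re\langle c, d(\psi)\rangle \le 2\eps$ for every $\ket{\psi}$.
5. \Cref{lem:garbage_extension_close_op_norm} then yields $\opnorm{\overline{U} - C} \le 2\sqrt{2\eps} \le 4\sqrt{\eps}$.

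With this inserted, your argument becomes a self-contained, elementary replacement for the paper's appeal to Stinespring continuity, with a slightly better constant. As written, though, the proposal only establishes an $O(\sqrt{\eps})$ bound with constant $6$ rather than the claimed $4$.
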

\begin{proof}
By the continuity of Stinespring isometries~\cite[Theorem 1]{vomEnde2023}, if $\dnorm{\Phi - \calC} \leq \eps$, then there is an $a$-qubit dilation $\overline{V}$ of $\Phi$ and an $a$-qubit dilation $\overline{W}$ of $\calC$ such that their ``isometry'' parts are close, $\opnorm{\overline{V}(I_n \otimes \ket{0^a}) - \overline{W}(I_n \otimes \ket{0^a})} \leq 2 \sqrt{\eps}$.
Any two dilations of a channel are related by a unitary on the dilating register, and $C$ is a dilation of $\calC$, so there is an $a$-qubit unitary $M$ such that $(I_n \otimes M) \overline{W} = C$.
Consequently, for $\overline{U} = (I_n \otimes M) \overline{V}$, we have that $\overline{U}\ket{\psi} = (U \ket{\psi})\ket{g}$ for some $\ket{g}$ (as all dilations of $\Phi$ take this form), and further, $\norm{\overline{U}\ket{\psi}\ket{0^a} - C\ket{\psi} \ket{0^a}} \leq 2 \sqrt{\eps}$.
So, using \cref{lem:garbage_extension_close_op_norm}, we can extend the isometry $\overline{U}$ into a unitary $\overline{\overline{U}}$ such that $\opnorm{\overline{\overline{U}} - C} \leq 4 \sqrt{\eps}$ as desired.
\end{proof}

As before, $\puP$ is closed under certain homomorphisms and anti-homomorphisms. 

\begin{proposition}[Projective unitary homomorphisms]
    \label{prop:homomorphisms_in_p_g}
    Let $\class{K} \in \{\P, \PSPACE\}$ and $\calU \in \class{projUnitaryK}$.
    Then $\calU^*, \calU^\dagger, \calU^\trans \in \class{projUnitaryK}$.
\end{proposition}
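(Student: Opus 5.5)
We follow the gate-by-gate strategy of \Cref{prop:homomorphisms_in_p}, taking care of how the garbage state transforms. Fix $n,\eps$ and let $C \coloneqq C_{n,\eps}$ be the circuit from \Cref{def:unitaryP_garbage}. There is an $a$-qubit dilation $\overline{U}$ of $U_n$, with garbage state $\ket{g}$, such that $\opnorm{\overline{U} - C} \le \eps$. Since every gate is given by FPTEAS entries, we can form $C^*$, $C^\trans$, and $C^\dagger$ gatewise: conjugate each gate, transpose each gate, or invert each gate. For the transpose and inverse we also reverse the gate order. The time or space uniformity is clearly preserved, as is the polynomial-length description of each gate. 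Each of these operations is an isometry of operator norm, so $\opnorm{\overline{U}^\# - C^\#} \le \eps$ for $\# \in \{*,\trans,\dagger\}$. It then remains to show that $\overline{U}^\#$ is (close to) a dilation of $U_n^\#$.

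\textbf{Conjugate.} Conjugating the identity $\overline{U}\ket{\psi}\ket{0^a} = (U\ket{\psi})\ket{g}$ with $\ket{\psi}$ replaced by $\ket{\psi}^*$ gives $\overline{U}^*\ket{\psi}\ket{0^a} = (U^*\ket{\psi})\ket{g}^*$. So $\overline{U}^*$ is exactly a dilation of $U^*$, with garbage $\ket{g}^*$.

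\textbf{Inverse.} This is the nontrivial case, because $\overline{U}^\dagger$ maps $(U\ket{\psi})\ket{g}$ to $\ket{\psi}\ket{0^a}$, so it expects the ancilla to start in $\ket{g}$ rather than $\ket{0^a}$. The plan is to prepend a circuit that prepares the garbage. A natural candidate is $G \coloneqq$ the $a$-qubit map $\ket{0^a} \mapsto$ garbage, obtained by running $C$ on a fixed input such as $\ket{0^n}\ket{0^a}$. This yields $(U\ket{0^n})\ket{g}$ approximately, but it also consumes and entangles an extra $n$-qubit register holding $U\ket{0^n}$.

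So the circuit for $\calU^\dagger$ on input $\ket{\psi}$ works as follows. It uses fresh ancillas $\ket{0^n}\ket{0^a}$ and applies $C$ to them, obtaining $\approx (U\ket{0^n})\ket{g}$. It then applies $C^\dagger$ to $\ket{\psi}$ together with the $\ket{g}$ register, giving $\approx (U^\dagger\ket{\psi})\ket{0^a}$. The register holding $U\ket{0^n}$ is left over, independent of $\psi$, and serves as the garbage. Each approximation step replaces $C$ by $\overline{U}$ at cost $\eps$ in operator norm, so the total error is $2\eps$. This gives the bound $\norm{C'\ket{\psi}\ket{0^{n+a}} - (U^\dagger\ket{\psi})\ket{g'}} \le 2\eps$ with $\ket{g'} = (U\ket{0^n})\ket{0^a}$. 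Running everything at precision $\eps/4$ and invoking \Cref{lem:garbage_extension_close_op_norm} yields an operator-norm-close dilation, using $n+a$ ancillas, which is polynomial.

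\textbf{Transpose.} Write $U^\trans = (U^\dagger)^*$ and compose the two previous constructions: apply gatewise conjugation to the circuit built for $\calU^\dagger$. Conjugation preserves being a dilation, and it preserves operator-norm closeness.

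The main obstacle is the inverse case, specifically making sure the garbage is actually reachable. The idea is the one behind \Cref{thm:garbage_iff_informal}: running the circuit forwards on a dummy input manufactures the needed garbage state. The remaining work is routine bookkeeping of the errors and of the uniformity.
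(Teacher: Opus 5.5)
Your proposal is correct and takes essentially the same route as the paper: gatewise conjugation for $\calU^*$; for $\calU^\dagger$, running the circuit forward on $\ket{0^n}\ket{0^a}$ to manufacture the garbage state and then applying $C^\dagger$; and $\calU^\trans = (\calU^\dagger)^*$ for the transpose. The only difference is cosmetic. The paper notes that the circuit with $C$ replaced by $\overline{U}$ is already an exact dilation of $U_n^\dagger$, so operator-norm closeness follows directly from the triangle inequality, and your detour through \Cref{lem:garbage_extension_close_op_norm} is unnecessary though harmless.
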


\begin{proof}
    For $\calU^*$, the proof is identical to \Cref{prop:homomorphisms_in_p}.
    
    The proof is subtler for $\calU^\dagger$, because the inverse of a dilation need not be a dilation of the inverse.
    The solution is to first run $C_{n, \eps/2}\ket{0^n}\ket{0^a}$ to obtain an approximate copy of the garbage state $\ket{g}$, and then run $C_{n, \eps/2}^\dagger \ket{\psi}\ket{g}$ to approximately obtain $(U_n^\dagger\ket{\psi})\ket{0^a}$.
    If we consider this operation with $C_{n, \eps/2}$ replaced by its $\eps/2$-close dilation, the resulting unitary maps $\ket{\psi}\ket{0^n}\ket{0^a}$ to $(U^\dagger\ket{\psi})(U\ket{0^n})\ket{0^a}$ under an appropriate ordering of the registers, and is thus a dilation of $U_n^\dagger$ with $n + a$ ancilla.
    Substituting $C_{n, \eps/2}$ twice introduces at most $\eps$ error in operator norm.

    Lastly, the case of $\calU^\trans$ follows from the previous two because $\calU^\trans = (\calU^\dagger)^*$.
\end{proof}

Note that we cannot deduce $\controlled \calU \in \class{projUnitaryK}$.
Intuitively, this is because the global phase of an implementation is undetermined, as the phase can be absorbed into the garbage state.
In fact, this is inherent: consider
\[
\calU \coloneqq \{(-1)^{HALT(n)}I^{\otimes n}\}_{n \in \N},
\]
where $HALT(n)$ is $1$ if the $n$th Turing machine halts and $0$ otherwise.
Then $\calU \in \puP$, by taking the identity circuit and letting $\ket{g} = (-1)^{HALT(n)}$ (a global phase, or $0$-qubit garbage state).
On the other hand, an efficient implementation of $\controlled\calU$ cannot exist because it would give an algorithm to decide the halting problem.

Analogous to \Cref{sec:tidy_definitions}, we define the oracle access mechanism for projective unitary complexity classes on the basis of the homomorphisms available to them.

\begin{definition}[Projective unitary oracle circuit]
    \label{def:projective_oracle_circuit}
    Let $\calO = \braces{O_n}_{n \in \N}$ be a unitary language.
    An \emph{projective $\calO$-oracle unitary quantum circuit} is a sequence that may contain both explicitly specified $k$-qubit unitary gates (as in \Cref{def:unitary_circuit})  and oracle gates from the set
    \[
    \braces{O_n, O^*_n, O^\dagger_n, O^\trans_n}_{n \in \N}.
    \]
    These oracle gates are specified only by the label and the qubits acted upon.
\end{definition}

\begin{definition}[Oracle $\puP/\puPSPACE$]
    A unitary language $\calU = \{U_n\}_{n \in \N}$ is in $\puP^\calO$ if there is a time-uniform family of \underline{projective $\calO$-oracle} unitary quantum circuits $\{C_{n, \eps}\}_{n \in \N, \eps > 0}$ on $n + a(n, \eps) \le \poly(n, \log 1/\eps)$ qubits such that for every $n$ and $\eps$, there is an $a$-qubit \underline{dilation} $\overline{U}$ of $U_n$ satisfying $\opnorm{\overline{U} - C_{n, \eps}} \le \eps$.

    $\puPSPACE^\calO$ is defined the same way, but where $\{C_{n, \eps}\}_{n \in \N, \eps > 0}$ is a \underline{space}-uniform family.
\end{definition}

Despite the differences in definition, we prove that our definition of unitary complexity with garbage satisfies the same desiderata as before.

\begin{proposition}[Projective unitary classes subsume languages]
    \label{prop:garbage_contains_bbqp}
    Fix a unitary oracle $\calO$ and language $L$, and let $\calU$ be its unitary embedding.
    Then
    \[
    L \in \BQP^\calO \implies \calU \in \puP^\calO; \qquad L \in \BQuPSPACE^\calO \implies \calU \in \puPSPACE^\calO,
    \]
    where the unitary oracle model for the language classes allows the same gates as in \Cref{def:projective_oracle_circuit}.
\end{proposition}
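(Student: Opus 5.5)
The plan is to reuse the compute--copy--uncompute construction from the proof of \Cref{prop:tidy_contains_bbqp}, observing that it never uses controlled oracle gates. A clean implementation with zero ancilla left over is a special case of a dilation, with garbage state $\ket{g} = \ket{0^a}$. So it suffices to check that the circuit built there is a valid projective $\calO$-oracle circuit and to invoke \Cref{lem:garbage_extension_close_op_norm} in place of \Cref{lem:extension_close_op_norm}.

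First, take a time- or space-uniform family $\{A_n\}$ of $\calO$-oracle unitary circuits deciding $L$ with bounded error. Here the oracle model for $\BQP^\calO$ and $\BQuPSPACE^\calO$ allows only the gates $O_n, O_n^*, O_n^\dagger, O_n^\trans$ of \Cref{def:projective_oracle_circuit}. Amplify by $O(\log 1/\delta)$-fold parallel repetition and majority vote to obtain $A_{n,\delta}$ with error at most $\delta$. Set $\delta = \eps^2/2^{n+5}$, and define $C_{n,\eps} = A_{n-1,\delta}^\dagger \cdot \mathrm{CNOT} \cdot A_{n-1,\delta}$.

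The one point needing care is that $A_{n-1,\delta}^\dagger$ must again be a projective oracle circuit. It is: reversing the gate order and daggering each gate maps $O_n \leftrightarrow O_n^\dagger$ and $O_n^* \leftrightarrow O_n^\trans$, all of which are allowed, and no controlled oracle gates are introduced. This is the only place where the restricted oracle set could cause trouble, and it does not. Uniformity is preserved exactly as before.

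For correctness, the same calculation using \Cref{claim:dist_bound} gives
\[
\norm{C_{n,\eps}\ket{x}\ket{b}\ket{0^a} - \ket{x}\ket{b\oplus L(x)}\ket{0^a}} \le \frac{\eps}{2\sqrt{2^n}}
\]
for every basis input. \Cref{lem:basis_to_op_norm} then extends this bound to arbitrary $\ket{\psi}$ with error $\eps/2$. Finally, \Cref{lem:garbage_extension_close_op_norm} with $\ket{g}=\ket{0^a}$ (or directly \Cref{lem:extension_close_op_norm}, since an extension is a dilation) yields a dilation within $\eps$ in operator norm.
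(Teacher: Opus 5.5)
Your proposal is correct and matches the paper's proof, which likewise reuses the construction from \Cref{prop:tidy_contains_bbqp} unchanged, noting only that no controlled oracle queries are used and that the garbage state is $\ket{0^a}$. Your explicit check that daggering a projective oracle circuit keeps it projective (swapping $O_n \leftrightarrow O_n^\dagger$ and $O_n^* \leftrightarrow O_n^\trans$) is a detail the paper leaves implicit.
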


\begin{proof}
    Identical to \Cref{prop:tidy_contains_bbqp}, noting only two implicit differences that do not change the proof: projective oracle circuits do not allow controlled queries to $\calO$, and the garbage state is $\ket{g} = \ket{0^a}$.
\end{proof}

\begin{restatable}[Projective unitary classes are self-low]{proposition}{propgarbagelow}
    \label{prop:garbage_low}
    Let $\class{K} \in \{\P, \PSPACE\}$. Then $\class{projUnitaryK}^{\class{projUnitaryK}^\calO} = \class{projUnitaryK}^\calO$.
\end{restatable}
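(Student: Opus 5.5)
The inclusion $\class{projUnitaryK}^\calO \subseteq \class{projUnitaryK}^{\class{projUnitaryK}^\calO}$ is immediate. Every projective $\calO$-oracle circuit can be viewed as a circuit with an oracle $\calV$ that it never queries, and $\class{projUnitaryK}^\calO$ is nonempty; for example, the identity language lies in it.

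For the nontrivial inclusion, suppose $\calU \in \class{projUnitaryK}^\calV$ with $\calV \in \class{projUnitaryK}^\calO$. Let $C_{n,\eps/2}$ be the projective $\calV$-oracle circuit for $\calU$, and let $m$ bound its number of gates. The plan follows \Cref{prop:tidy_low}: replace each oracle gate $V_k, V_k^*, V_k^\dagger, V_k^\trans$ by an error-$\eps/2m$ implementation. These implementations exist by \Cref{prop:homomorphisms_in_p_g}, which closes $\class{projUnitaryK}^\calO$ under $*,\dagger,\trans$. That proposition relativizes, since the inverse construction only uses forward and inverse calls to the circuit, and those calls are available through projective oracle gates on $\calO$.

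The new difficulty is garbage. Each substituted query $Q'_i$ is close to a dilation $\overline{Q}_i$ with its own garbage $\ket{g_i}$. We cannot reuse one ancilla register across queries, because it no longer returns to $\ket{0^a}$. In the time-bounded case this is harmless. We give each of the $m\le\poly$ queries a fresh ancilla block. Then the ideal circuit $\overline{C}$, obtained by using $\overline{Q}_i$ in place of each query, maps $\ket{\phi}\ket{0\cdots0}$ to $(C_{n,\eps/2}\ket{\phi})\bigotimes_i\ket{g_i}$. Here $\phi$ is the input together with the outer ancillas, and the garbage is independent of $\phi$. So $\overline{C}$ is a dilation of $C_{n,\eps/2}$. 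Composing with the $\eps/2$-close dilation of $U_n$ gives a dilation of $U_n$ whose garbage is $\ket{g}\otimes\bigotimes_i\ket{g_i}$. Operator-norm errors add by the triangle inequality exactly as in \Cref{prop:tidy_low}, for a total of $\eps$.

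The main obstacle is $\class{K}=\PSPACE$. Here $m$ may be exponential, so fresh blocks are unaffordable. I would first reduce to clean oracles using \Cref{thm:garbage_iff_informal}, whose relativized version is \Cref{thm:u_garbage_iff_u_udag_tidy}. That is, $\calV\in\puPSPACE^\calO$ implies $\calV\otimes\calV^\dagger\in\uPSPACE^\calO$. Concretely, let $D$ be the approximate dilation of $V_k$. We run $D$ once on $\ket{0^k}\ket{0^a}$ to prepare $(V_k\ket{0^k})\ket{g}$ and keep this as a persistent resource. To apply $V_k$ to a state $\ket{\psi}$, we apply $D^\dagger$ to the pair $\ket{\psi'}\ket{g}$, where $\ket{\psi'}$ comes from a swap. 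This returns us to the clean register, and the $k$-qubit $V_k\ket{0^k}$ plays the role of fixed garbage. The idea is that, given $\ket{g}$ in hand, $D$ and $D^\dagger$ convert between the clean and garbage forms.

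The cleaner route is a sandwich. Once at the start, we prepare the garbage for each distinct query size and type; there are only polynomially many sizes. Each individual query $V_k$ is then implemented cleanly as follows. Run $D$ on $\ket{\psi}\ket{0^a}$ to get $(V_k\ket{\psi})\ket{g}$. Swap $\ket{g}$ with a stored copy obtained from $D\ket{0^k}\ket{0^a}$. Then uncompute that stored copy with $D^\dagger$ to return it to $\ket{0^k}\ket{0^a}$. The net effect is a clean $V_k$, up to the global phase absorbed consistently.

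For these swaps to work, each garbage state must be independent of the input. This holds by the definition of a dilation. I would verify this step carefully in the ideal (dilation) picture first and only then bound the errors. With this construction, each query uses $O(k+a)$ reusable ancillas, and the total error is again $m\cdot\eps/2m$, with $\log(m/\eps)$ polynomial. Space-uniformity is clear.
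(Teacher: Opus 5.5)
Your $\class{K}=\P$ argument is sound: with a fresh ancilla block per query, the ideal circuit is a dilation of $C_{n,\eps/2}$, and you then compose with the close dilation of $U_n$. There is one slip in the easy inclusion. A circuit that queries $\calO$ is not a $\calV$-oracle circuit for an unrelated $\calV$ such as the identity language. Take $\calV=\calO$ instead; it lies in $\class{projUnitaryK}^\calO$ via a single query.

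The genuine gap is in the $\PSPACE$ case, where your garbage-erasure step does not actually erase anything. Work in the ideal picture. The two runs $D\ket{0^k}\ket{0^a}=(V_k\ket{0^k})\ket{g}$ and $D\ket{\psi}\ket{0^a}=(V_k\ket{\psi})\ket{g}$ leave two copies of $\ket{g}$, so swapping them is a no-op. Then $D^\dagger$ on the stored pair removes one copy only by consuming $V_k\ket{0^k}$. The query's ancilla is still left holding $\ket{g}$, so garbage grows by $a$ qubits per call. Your first variant fails too: it applies $V_k^\dagger$ rather than $V_k$, and it exhausts $\ket{g}$ after one call.

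No rearrangement of these pieces can work. Suppose a black-box scheme built from $D$, $D^\dagger$ and fixed gates restored a fixed resource after every call. Take $\calO=\calU\otimes\calU^\dagger$ and let $D$ be a single $\calO$-query. The scheme would then give a clean, uncontrolled-query implementation of some $\{\phi_nU_n\}$, contradicting \Cref{thm:no_global_phase}.

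The missing idea is to let the persistent register move rather than stay fixed:
\begin{itemize}
\item Implement each call to $V_k$ as the clean circuit for $V_k\otimes V_k^\dagger$. This is the forward construction behind \Cref{thm:u_garbage_iff_u_udag_tidy_oracle}, and it needs only uncontrolled $\calO$-queries.
\item Let it act on the query register and a persistent $k$-qubit register $\mathsf{T}_k$. Handle the other query types the same way.
\item $\mathsf{T}_k$ accumulates powers of $V_k^\dagger$, but it stays in a product state that does not depend on the input.
\end{itemize}
The ideal circuit is therefore still a dilation of $C_{n,\eps/2}$, using only polynomially many extra qubits, and your error accounting then goes through.

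The paper packages exactly this mechanism. It passes to $\calU\otimes\calU^\dagger\in\class{unitaryK}^{\calV\otimes\calV^\dagger}$ by \Cref{thm:u_garbage_iff_u_udag_tidy_oracle}, where the second half of the circuit (the part implementing $\calU^\dagger$) runs the moving register back to its initial state. It then applies the garbage-free self-lowness \Cref{prop:tidy_low} and pulls back. Note that this relativized theorem uses the oracle $\calO\otimes\calO^\dagger$ rather than $\calO$. The reason is that controlled queries created by the tidy composition must be simulable with uncontrolled ones when you return to $\class{projUnitaryK}^\calO$.
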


One could prove \Cref{prop:garbage_low} in a similar fashion to \Cref{prop:tidy_low} by replacing each of the oracle queries with their implementations and arguing that closeness to a dilation is preserved.
Such a direct proof becomes significantly more cumbersome than \Cref{prop:tidy_low}, however, because one must track the accumulation of garbage from each query.
Instead of trudging through such an ugly proof, we defer the proof of \Cref{prop:garbage_low} until later, as it will follow as a simple corollary of later results.

\subsection{Relations between definitions with and without garbage}

\label{sec:garbage_tidy_relations}

Clearly, the projective classes are strictly more general: $\uP \subseteq \puP$ and $\uPSPACE \subseteq \puPSPACE$, by setting $\ket{g} = \ket{0^a}$.

At first glance, one annoyance with the garbage definition (\Cref{def:unitaryP_garbage}) is that when querying an implementation of some $\calU \in \puP$, making many calls will produce garbage states that accumulate linearly with the number of queries.
This problem is particularly troubling when considering $\puPSPACE^\calU$, because a polynomial-space algorithm may query an oracle exponentially many times.
Thus, instantiating the oracle with a garbage implementation would seem to produce an exponential amount of garbage and therefore violate the space bound!

To avoid this issue, we first make use of a simple trick that reduces the garbage to at most $n$ qubits, without loss of generality.

\begin{theorem}
    \label{thm:u_garbage_iff_u_udag_tidy}
    Let $\class{K} \in \{\P, \PSPACE\}$.
    Then $\calU \in \class{projUnitaryK}$ if and only if $\calU \otimes \calU^\dagger \in \class{unitaryK}$.
\end{theorem}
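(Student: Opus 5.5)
We prove the two directions separately. Throughout, write $\calU \otimes \calU^\dagger$ for the unitary language whose $2n$-qubit member is $U_n \otimes U_n^\dagger$ (with the usual padding convention for odd lengths, which we treat as cosmetic).

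\textbf{Forward direction.} Let $C = C_{n,\delta}$ be a $\puP$ (resp.\ $\puPSPACE$) circuit for $U_n$, and let $\overline{U}$ be a dilation with $\opnorm{\overline{U} - C} \le \delta$ and $\overline{U}\ket{\psi}\ket{0^a} = (U\ket{\psi})\ket{g}$. On input $\ket{\psi}\ket{\phi}$ with an $a$-qubit zero ancilla, the circuit $D$ proceeds as follows:
\begin{enumerate}
    \item apply $C$ to $\ket{\psi}\ket{0^a}$, giving approximately $(U\ket{\psi})\ket{g}$;
    \item swap the ancilla (now holding $\ket{g}$) so that $C^\dagger$ acts on $\ket{\phi}\ket{g}$;
    \item apply $C^\dagger$ there.
\end{enumerate}
With $C$ replaced by $\overline{U}$, the first step yields $(U\ket{\psi})\ket{\phi}\ket{g}$. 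For the third step, note that $\overline{U}(U^\dagger\ket{\phi})\ket{0^a} = \ket{\phi}\ket{g}$, so $\overline{U}^\dagger \ket{\phi}\ket{g} = (U^\dagger\ket{\phi})\ket{0^a}$. The resulting exact unitary $\overline{U}\cdot\mathrm{SWAP}\cdot\overline{U}^\dagger$ (with registers suitably ordered) therefore maps $\ket{\psi}\ket{\phi}\ket{0^a} \mapsto (U\ket{\psi})(U^\dagger\ket{\phi})\ket{0^a}$ for product inputs, and hence by linearity for all inputs. So it is an \emph{extension} of $U\otimes U^\dagger$. Substituting $C$ for $\overline{U}$ twice costs at most $2\delta$ in operator norm, so taking $\delta = \eps/2$ gives $\calU\otimes\calU^\dagger \in \class{unitaryK}$. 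Uniformity and the size bounds are immediate.

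\textbf{Reverse direction.} Let $C'$ be a $\class{unitaryK}$ circuit for $U\otimes U^\dagger$ on $2n$ qubits, and let $\overline{W}$ be an extension of $U\otimes U^\dagger$ within $\eps$ of it. To implement $U$ with garbage, run $C'$ on $\ket{\psi}\ket{0^n}\ket{0^{a}}$, treating the second $n$-qubit register as ancilla. The exact extension sends this to $(U\ket{\psi})(U^\dagger\ket{0^n})\ket{0^a}$, so $\overline{W}$ is itself a dilation of $U$ with garbage $\ket{g} = (U^\dagger\ket{0^n})\ket{0^a}$ and $n+a$ ancilla qubits. Hence $\opnorm{C' - \overline{W}} \le \eps$ is exactly the $\puP$ (resp.\ $\puPSPACE$) condition.

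\textbf{Main obstacle.} The delicate point is confirming that the forward construction yields an extension on the full input space rather than merely on product states. This is settled by the linearity argument in the forward direction, after which \Cref{lem:extension_close_op_norm} is not even needed, since the exact object is already an extension. The remaining routine checks are that the $\ket{\phi}$ register is never disturbed between steps and that the ancilla count stays polynomial in $n$ and $\log 1/\eps$.
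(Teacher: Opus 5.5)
Your proposal is correct and matches the paper's proof essentially step for step. In the forward direction you run $C_{n,\eps/2}$ on the first register to deposit $\ket{g}$ in the ancilla, then run $C_{n,\eps/2}^\dagger$ on the second register with that same ancilla; replacing both by $\overline{U}$ and $\overline{U}^\dagger$ gives an exact extension of $U_n \otimes U_n^\dagger$ at total cost $\eps$. In the reverse direction you read the extension of $U_n \otimes U_n^\dagger$, evaluated on $\ket{\psi}\ket{0^n}\ket{0^a}$, as a dilation of $U_n$ with garbage $(U_n^\dagger\ket{0^n})\ket{0^a}$.

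The only blemish is notational: $\overline{U}$ is applied first, so it should be the rightmost factor in your composite operator rather than the leftmost.
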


\begin{proof}
    For the forward direction, fix $n$ and target error $\eps>0$.
    Let $C_{n,\eps/2}$ be the unitary quantum circuit on $n + a$ qubits that implements $U_n$ to error $\eps/2$ in the sense of \Cref{def:unitaryP_garbage}, and let $\overline{U}_n$ be the dilation of $U_n$ that is $\eps/2$-close to $C_{n,\eps/2}$ in operator norm.

    Consider the following circuit $C'_{2n, \eps}$ that takes in two $n$-qubit registers $\mathsf{X}$, $\mathsf{Y}$ and an $a$-qubit register $\mathsf{A}$:
    \[
    \Qcircuit @C=1em @R=1em {
    \lstick{\mathsf{X}} & \multigate{1}{C_{n, \eps/2}} & \qw & \qw\\
    \lstick{\mathsf{A}} & \ghost{C_{n, \eps/2}} & \multigate{1}{C^\dagger_{n, \eps/2}} & \qw\\
    \lstick{\mathsf{Y}} & \qw & \ghost{C^\dagger_{n, \eps/2}} & \qw
    }
    \]
    Now let $\overline{U}'_n$ be the unitary on $\mathsf{XYA}$ obtained by replacing the gates $C_{n, \eps/2}$ and $C^\dagger_{n, \eps/2}$ with $\overline{U}_n$ and $\overline{U}^\dagger_n$, respectively.
    Then for every $2n$-qubit $\ket{\psi}_\mathsf{XY}$ one has $\overline{U}'_n\ket{\psi}_\mathsf{XY}\ket{0^a}_\mathsf{A} = (U_{n}\otimes U^\dagger_{n})\ket{\psi}_\mathsf{XY}\ket{0^a}_\mathsf{A}$.
    Thus, $\overline{U}'_n$ is an extension of $U_n \otimes U_n^\dagger$. 
    Moreover, $\opnorm{\overline{U}'_n - C'_{2n, \eps}} \le \eps$, by the triangle inequality.
    Since $C'_{2n, \eps}$ is time/space-uniform (depending on $\class{K}$), we conclude that $\calU \otimes \calU^\dagger \in \class{unitaryK}$.
    
    The reverse direction is trivial, by setting $\ket{g} = (U_n^\dagger\ket{0^n})\ket{0^a}$.
\end{proof}

Interestingly, we can also show that the $\calU^\dagger$ in the previous theorem can be replaced by $\calU^*$:

\begin{theorem}
    \label{thm:u_garbage_iff_u_ustar_tidy}
    Let $\class{K} \in \{\P, \PSPACE\}$.
    Then $\calU \in \class{projUnitaryK}$ if and only if $\calU \otimes \calU^* \in \class{unitaryK}$.
\end{theorem}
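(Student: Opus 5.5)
For the forward direction, I would mirror the proof of \Cref{thm:u_garbage_iff_u_udag_tidy}, erasing the garbage with a known eigenstate instead of an inverse circuit. Fix $n$ and $\eps$. Let $C = C_{n,\eps/4}$ implement $U_n$ with garbage, and let $\overline{U}_n$ be the $\eps/4$-close dilation, so that $\overline{U}_n\ket{\psi}\ket{0^a} = (U_n\ket{\psi})\ket{g}$. Gatewise conjugation gives $C^*$, which is $\eps/4$-close to $\overline{U}_n^*$. This is a dilation of $U_n^*$ with garbage $\ket{g^*}$.

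On registers $\mathsf{X},\mathsf{Y}$ (each $n$ qubits) and ancillas $\mathsf{A},\mathsf{B}$ (each $a$ qubits), the circuit proceeds in three steps:
\begin{enumerate}
    \item apply $C$ on $\mathsf{XA}$ and $C^*$ on $\mathsf{YB}$, which ideally yields $(U_n\otimes U_n^*)\ket{\psi}_{\mathsf{XY}}\ket{g}_{\mathsf{A}}\ket{g^*}_{\mathsf{B}}$;
    \item erase the product garbage $\ket{g}\ket{g^*}$ on $\mathsf{AB}$ using a fresh $2n$-qubit register $\mathsf{Z}\mathsf{Z}'$ prepared in the maximally entangled state $\ket{\Phi}$;
    \item map $\ket{\Phi}$ back to $\ket{0^{2n}}$, which is done by Hadamards plus CNOTs.
\end{enumerate}

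The key identity for step 2 is that $(V\otimes V^*)\ket{\Phi}=\ket{\Phi}$ for any unitary $V$. Applying $\overline{U}_n\otimes\overline{U}_n^*$ to $\ket{\Phi}_{\mathsf{ZZ'}}\ket{0^a}\ket{0^a}$ therefore gives $\ket{\Phi}\ket{g}\ket{g^*}$. Concretely, I would apply $C$ to $\mathsf{ZA}$, $C^*$ to $\mathsf{Z'B}$, and their inverses appropriately. Running $(C\otimes C^*)^\dagger$ on $\mathsf{ZA},\mathsf{Z'B}$ with input $\ket{\Phi}\ket{g}\ket{g^*}$ returns $\ket{\Phi}\ket{0^a}\ket{0^a}$. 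So the full circuit is: prepare $\ket{\Phi}$ on $\mathsf{ZZ'}$, apply $C\otimes C^*$ on $\mathsf{XA},\mathsf{YB}$, apply $(C\otimes C^*)^\dagger$ on $\mathsf{ZA},\mathsf{Z'B}$, then unprepare $\ket{\Phi}$.

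With the ideal dilations substituted, this circuit is an exact extension of $U_n\otimes U_n^*$ with $2a+2n$ ancillas. There are four substitutions, each within $\eps/4$ in operator norm, so by the triangle inequality the circuit is $\eps$-close to that extension. Uniformity is clear in both the time and space settings.

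The step that needs the most care is checking that the ideal circuit really is an extension on \emph{all} inputs $\ket{\psi}_{\mathsf{XY}}$, including entangled ones, and not just on product inputs. This holds because the ideal dilation acts as $\ket{\psi}\ket{0^a}\mapsto (U\ket{\psi})\ket{g}$ linearly with a fixed $\ket{g}$. The garbage after step 1 is therefore always $\ket{g}\ket{g^*}$, independent of the input, and the eigenstate argument applies.

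The reverse direction is easy. Given a clean implementation of $U\otimes U^*$, feed in $\ket{\psi}_{\mathsf{X}}\ket{0^n}_{\mathsf{Y}}$. The output is $(U\ket{\psi})(U^*\ket{0^n})\ket{0^a}$, which is a dilation of $U$ with garbage $(U^*\ket{0^n})\ket{0^a}$, exactly as in \Cref{thm:u_garbage_iff_u_udag_tidy}.
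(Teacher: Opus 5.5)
Your proposal is correct and matches the paper's proof. The paper uses the same circuit: apply $C$ and $C^*$ to the inputs, prepare a $2n$-qubit maximally entangled state, apply $C^\dagger \otimes C^\trans$ with the Bell halves as data registers and the garbage registers as ancillas, then unprepare, relying on the invariance of that state under $U \otimes U^*$ and four $\eps/4$ substitutions; your reverse direction is the paper's argument with garbage $(U_n^*\ket{0^n})\ket{0^a}$.
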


\begin{proof}
    For the forward direction, fix $n$ and target error $\eps>0$.
    Let $C_{n,\eps/4}$ be the unitary quantum circuit on $n + a$ qubits that implements $U_n$ to error $\eps/4$ in the sense of  \Cref{def:unitaryP_garbage}, and let $\overline{U}_n$ be the dilation of $U_n$ that is $\eps/4$-close to $C_{n,\eps/4}$ in operator norm.
    
    Consider the following circuit $C'_{2n, \eps}$ that takes in $n$-qubit registers $\mathsf{X}$, $\mathsf{Y}$, $\mathsf{A}$, $\mathsf{B}$, and $a$-qubit registers $\mathsf{C}$, $\mathsf{D}$:

    \[
    \Qcircuit @C=1em @R=1em {
    \lstick{\mathsf{X}} & \multigate{1}{C} & \qw & \qw & \qw & \qw & \qw\\
    \lstick{\mathsf{C}} & \ghost{C} & \qw & \multigate{1}{C^\dagger} & \qw & \qw & \qw\\
    \lstick{\mathsf{A}} & \gate{H^{\otimes n}} & \ctrl{1} & \ghost{C^\dagger} & \ctrl{1} & \gate{H^{\otimes n}} & \qw \\
    \lstick{\mathsf{B}} & \qw & \targ & \multigate{1}{C^\trans} & \targ & \qw & \qw\\
    \lstick{\mathsf{D}} & \multigate{1}{C^*} & \qw & \ghost{C^\trans} & \qw & \qw & \qw\\
    \lstick{\mathsf{Y}} & \ghost{C^*} & \qw & \qw & \qw & \qw & \qw\\
    }
    \]
    Now let $\overline{U}'_n$ be the unitary on $\mathsf{XYABCD}$ obtained by replacing the gates $C_{n, \eps/4}$, $C^\dagger_{n, \eps/4}$, $C^*_{n, \eps/4}$, and $C^\trans_{n, \eps/4}$ with $\overline{U}_n$, $\overline{U}^\dagger_n$, $\overline{U}^*_n$, and $\overline{U}^\trans_n$, respectively.
    Then for every $2n$-qubit $\ket{\psi}_\mathsf{XY}$ one has $\overline{U}'_n\ket{\psi}_\mathsf{XY}\ket{0^{2n + 2a}}_\mathsf{ABCD} = (U_{n} \otimes U^*_{n})\ket{\psi}_\mathsf{XY}\ket{0^{2n + 2a}}_\mathsf{ABCD}$ (the idea is that the garbage gets written into $\mathsf{C}$ and $\mathsf{D}$ and later erased, crucially using the fact that the $2n$-qubit Bell state on $\mathsf{AB}$ is invariant under $U \otimes U^*$ for every unitary $U$).
    Thus, $\overline{U}'_n$ is an extension of $U_n \otimes U_n^*$.
    Moreover, $\opnorm{\overline{U}'_n - C'_{2n, \eps}} \le \eps$, by the triangle inequality.
    Since $C'_{2n,\eps}$ is time/space-uniform (depending on $\class{K}$), we conclude that $\calU \otimes \calU^* \in \class{unitaryK}$.

    The reverse direction is trivial, by setting $\ket{g} = (U_n^*\ket{0^n})\ket{0^a}$.
\end{proof}

We also have a reverse containment that characterizes $\class{unitaryK}$ in terms of $\class{projUnitaryK}$:

\begin{theorem}
    \label{thm:u_tidy_iff_cu_garbage}
    Let $\class{K} \in \{\P, \PSPACE\}$. 
    Then $\calU \in \class{unitaryK}$ if and only if $\controlled\calU \in \class{projUnitaryK}$.
\end{theorem}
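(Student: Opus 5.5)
The forward direction is short: if $\calU \in \class{unitaryK}$, then \Cref{prop:homomorphisms_in_p} gives $\controlled\calU \in \class{unitaryK}$, and $\class{unitaryK} \subseteq \class{projUnitaryK}$ by taking $\ket{g} = \ket{0^a}$. Here $\controlled\calU$ is read as the language whose $(n+1)$-qubit member is $\controlled U_n$.

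For the reverse direction, fix $n$ and $\eps$. Let $C \coloneqq C_{n+1,\eps/4}$ be the circuit implementing $\controlled U_n$ with garbage, and let $\overline{V}$ be a dilation of $\controlled U_n$ with $\opnorm{\overline{V} - C} \le \eps/4$ and garbage state $\ket{g}$. The idea is that $\controlled U_n$ fixes the state $\ket{0}\ket{0^n}$, because the control is off. Hence $\overline{V}\ket{0}\ket{0^n}\ket{0^a} = \ket{0}\ket{0^n}\ket{g}$, and $\overline{V}^\dagger$ applied to $\ket{0}\ket{0^n}\ket{g}$ erases the garbage.

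The new circuit $C'$ acts on the input register $\mathsf{X}$ ($n$ qubits) and ancilla registers: a control qubit $\mathsf{B}$, an $n$-qubit register $\mathsf{Z}$, and an $a$-qubit register $\mathsf{A}$, all initialized to zero. It proceeds in four steps.
\begin{enumerate}
    \item Apply $X$ to $\mathsf{B}$, setting the control to $1$.
    \item Apply $C$ to $\mathsf{B}\mathsf{X}\mathsf{A}$. On the dilation this gives $\ket{1}(U_n\ket{\psi})\ket{g}$.
    \item Apply $C^\dagger$ to $\mathsf{B}'\mathsf{Z}\mathsf{A}$, where $\mathsf{B}'$ is a fresh qubit in state $\ket{0}$. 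Alternatively, first apply $X$ to $\mathsf{B}$ to turn the control off and reuse it; the fresh-qubit version is cleaner. Since $\overline{V}^\dagger \ket{0}\ket{0^n}\ket{g} = \ket{0}\ket{0^n}\ket{0^a}$, this step on the dilation returns $\mathsf{A}$ to $\ket{0^a}$.
    \item Apply $X$ to $\mathsf{B}$.
\end{enumerate}
The resulting ideal unitary $\overline{W}$, obtained by substituting $\overline{V}$ and $\overline{V}^\dagger$ for $C$ and $C^\dagger$, maps $\ket{\psi}\ket{0\cdots0} \mapsto (U_n\ket{\psi})\ket{0\cdots0}$ exactly. So $\overline{W}$ is an extension of $U_n$, and this holds with the \emph{same} global phase as $U_n$.

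The phase is the delicate point, and the step I expect to need the most care. A dilation of $\controlled U_n$ is only determined up to the phase it shares with $\ket{g}$. That phase is fixed, however, by the requirement that $\overline{V}$ act as the identity on the control-$0$ branch. The same $\ket{g}$ therefore appears on both branches, and the relative phase of $U_n$ is genuine. This is exactly why controlling the unitary fixes the phase, and the computation above confirms that the erasure step uses the control-$0$ branch consistently.

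Finally, replacing $\overline{V}$ and $\overline{V}^\dagger$ by $C$ and $C^\dagger$ changes the operator by at most $\eps/2 \le \eps$, by the triangle inequality and because the operator norm is invariant under tensoring with identity. Uniformity is immediate, so $\calU \in \class{unitaryK}$.
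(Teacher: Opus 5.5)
Your proof is correct and follows essentially the same approach as the paper: turn the control on, apply the garbage implementation of $\controlled U_n$, then erase $\ket{g}$ by running $C^\dagger$ on a control-$0$ input, where the dilation acts as the identity tensored with $\ket{g}$. The only difference is that the paper flips the control back to $\ket{0}$ and applies $C^\dagger$ directly to the register holding $U_n\ket{\psi}$ (using that $\controlled U_n$ fixes the whole control-$0$ subspace). You instead apply $C^\dagger$ to a fresh all-zero register, which costs $n+1$ extra ancilla qubits but only needs that the single state $\ket{0}\ket{0^n}$ is fixed.
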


\begin{proof}
    The forward direction follows from the fact that $\controlled\calU \in \class{unitaryK}$ (\Cref{prop:homomorphisms_in_p}) as well as the trivial containment $\class{unitaryK} \subseteq \class{projUnitaryK}$ (by setting $\ket{g} = \ket{0^a}$).

    For the reverse direction, fix $n$ and target error $\eps>0$.
    Let $C_{n,\eps/2}$ be the unitary quantum circuit on $n + 1 + a$ qubits that implements $\controlled U_n$ to error $\eps/2$ in the sense of  \Cref{def:unitaryP_garbage}, and let $\overline{U}_n$ be the dilation of $\controlled U_n$ that is $\eps/2$-close to $C_{n,\eps/2}$ in operator norm.
    
    Consider the following circuit $C'_{n, \eps}$ that takes in an $n$-qubit register $\mathsf{X}$, $1$-qubit register $\mathsf{Y}$, and $a$-qubit register $\mathsf{A}$:

    \[
    \Qcircuit @C=1em @R=1em {
    \lstick{\mathsf{X}} & \qw & \multigate{2}{C_{n, \eps/2}} & \qw & \multigate{2}{C_{n, \eps/2}^\dagger} & \qw\\
    \lstick{\mathsf{Y}} & \gate{X} & \ghost{C_{n, \eps/2}} & \gate{X} & \ghost{C_{n, \eps/2}^\dagger} & \qw\\
    \lstick{\mathsf{A}} & \qw & \ghost{C_{n, \eps/2}} & \qw & \ghost{C_{n, \eps/2}^\dagger} & \qw\\
    }
    \]
    Now let $\overline{U}'_n$ be the unitary on $\mathsf{XYA}$ obtained by replacing the gates $C_{n, \eps/2}$ and $C^\dagger_{n, \eps/2}$ with $\overline{U}_n$ and $\overline{U}^\dagger_n$, respectively.
    Then for every $n$-qubit $\ket{\psi}_\mathsf{X}$ one has $\overline{U}'_n\ket{\psi}_\mathsf{X}\ket{0^{a + 1}}_\mathsf{YA} = U_{n}\ket{\psi}_\mathsf{X}\ket{0^{a+1}}_\mathsf{YA}$ (the idea is that the garbage gets written into $\mathsf{A}$ and later erased, because when $\mathsf{Y}$ is $\ket{0}$, $\overline{U}^\dagger_n$ acts as the identity on $\mathsf{X}$).
    Thus, $\overline{U}'_n$ is an extension of $U_n$.
    Moreover, $\opnorm{\overline{U}'_n - C'_{n, \eps}} \le \eps$, by the triangle inequality.
    Since $C'_{n, \eps}$ is time/space-uniform (depending on $\class{K}$), we conclude that $\calU \in \class{unitaryK}$.
\end{proof}

Combining \Cref{thm:u_garbage_iff_u_udag_tidy,thm:u_garbage_iff_u_ustar_tidy,thm:u_tidy_iff_cu_garbage}, we see that $\class{unitaryK}$ and $\class{projUnitaryK}$ are both isomorphic to subclasses of each other.
This is not to say that the definitions are \textit{equivalent}; rather, any statement about one class has an equivalent reformulation in terms of the other.
In that sense, neither definition is more ``right'' than the other.

We can also prove suitable relativizations of the above statements:

\begin{theorem}
    \label{thm:u_garbage_iff_u_udag_tidy_oracle}
    Let $\class{K} \in \{\P, \PSPACE\}$.
    Then $\calU \in \class{projUnitaryK}^{\calO}$ if and only if $\calU \otimes \calU^\dagger \in \class{unitaryK}^{\calO \otimes \calO^\dagger}$.
\end{theorem}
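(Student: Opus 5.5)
Both directions relativize the unrelativized argument of \Cref{thm:u_garbage_iff_u_udag_tidy}. The new work is translating between oracle models. The projective side may query $O_n, O_n^*, O_n^\dagger, O_n^\trans$ (no controls). The clean side may query $O_n\otimes O_n^\dagger$ together with its conjugate, inverse, transpose and controlled versions.

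\textbf{Forward direction.} Let $C_{n,\eps/2}$ be a projective $\calO$-oracle circuit that is $\eps/2$-close to a dilation $\overline{U}_n$ of $U_n$. As before, form $C'_{2n,\eps}$ by applying $C_{n,\eps/2}$ on $\mathsf{XA}$ and then $C_{n,\eps/2}^\dagger$ on $\mathsf{AY}$. The error analysis is word-for-word the same as in \Cref{thm:u_garbage_iff_u_udag_tidy}: it gives an extension of $U_n\otimes U_n^\dagger$ within $\eps$. What remains is to express $C'$ as a circuit over the oracle $\calO\otimes\calO^\dagger$, whose gates act on $2m$ qubits as $O_m\otimes O_m^\dagger$. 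Here is the approach.
\begin{itemize}
\item Every oracle gate in $C_{n,\eps/2}$ is one of $O_m, O_m^*, O_m^\dagger, O_m^\trans$. I replace each such gate $Q$ by $Q\otimes I$.
\item To do this without the controlled query, I use the clean trick from the paper. Adjoin a fresh $m$-qubit register $\mathsf{B}$ prepared in a Bell-like state with a second register $\mathsf{B}'$, so that the unwanted $O_m^\dagger$ factor acts on half of a maximally entangled state.
\item Alternatively and more simply, I pair up the queries. In $C'$, the $k$-th query $Q$ made by $C$ on $\mathsf{XA}$ has a partner in $C^\dagger$: the query $Q^\dagger$ at the mirrored position, acting on $\mathsf{AY}$.
\end{itemize}
Pairing is delicate because the two copies run in opposite time orders. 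So instead of running $C$ then $C^\dagger$ sequentially, I run them ``in lockstep'' via a different but equivalent circuit. Run $C$ on $\mathsf{XA}$ and $C^{*}$ (the gatewise conjugate, which is also a valid projective circuit) on $\mathsf{A'Y}$ simultaneously. Each pair of corresponding queries $(Q,Q^*)$ then becomes one query to $O\otimes O^*$-type gates. Note that $(O\otimes O^\dagger)^{\trans\text{ on second factor}}$ is not directly available.

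This suggests the cleanest plan. Observe that $Q\otimes Q^\dagger$ for $Q=O_m$ is exactly an oracle gate. For $Q=O_m^*$ one uses $(O_m\otimes O_m^\dagger)^*=O_m^*\otimes O_m^\trans$. The cases $Q=O_m^\dagger$ and $Q=O_m^\trans$ come from the inverse and transpose, after swapping the two tensor factors. So I implement $C\otimes (C^{\dagger})^{\text{reversed-order}}$ in lockstep.

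The main obstacle is this order reversal: $C^\dagger$ applies its queries in reverse order. My fix is to run $C$ on $\mathsf{XA}$ alongside the circuit $\widetilde C$ on $\mathsf{YA'}$ obtained from $C$ by replacing every gate $G$ by $G^{\dagger\,*}=G^\trans$ while keeping the order. Then $\widetilde C=(C^\trans$ reversed$)$, and $\widetilde C$ equals $(C^\dagger)^{\trans}$-conjugated, i.e.\ $\widetilde C = ((C^\dagger)^\trans)$ with respect to a transpose.

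I then convert this into an implementation of $U^\dagger$ using the Bell-state identity from \Cref{thm:u_garbage_iff_u_ustar_tidy}: $(I\otimes M^\trans)\ket{\Phi}=(M\otimes I)\ket{\Phi}$. Concretely:
\begin{enumerate}
\item teleport-free ``transpose trick'' on $\mathsf{Y}$;
\item apply the lockstep product, whose corresponding query pairs are $Q\otimes Q^\trans$, and these are available from conjugating or transposing oracle gates;
\item undo.
\end{enumerate}
I expect checking this bookkeeping, and the garbage cancellation on $\mathsf{A},\mathsf{A'}$, to be the hardest step. Error accounting again follows from the triangle inequality, with per-query replacement errors absorbed as in \Cref{prop:tidy_low}.

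\textbf{Reverse direction.} Given a clean $\calO\otimes\calO^\dagger$-oracle circuit for $\calU\otimes\calU^\dagger$, I show $\calU\in\class{projUnitaryK}^{\calO}$. By \Cref{thm:u_garbage_iff_u_udag_tidy} applied relative to $\calO$, $\calO\otimes\calO^\dagger$ is in $\class{unitaryK}^{\calO}$ in the projective-oracle sense: the circuit $O$ then $O^\dagger$ on separate registers is exact, and it uses no controls. The controlled and conjugate variants of this oracle need separate handling.
\begin{itemize}
\item The controlled variants are implementable without control via the eigenstate trick of \Cref{thm:u_tidy_iff_cu_garbage}: $\controlled(O\otimes O^\dagger)$ is realized by conjugating the first register by $O$ controlled-swapping. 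Concretely, swap the target into a register holding $\ket{0^m}$ under control, apply $O\otimes O^\dagger$ unconditionally, and swap back, where the idle branch sees $O\ket{0}$ followed by $O^\dagger$ undoing it. This yields an exact clean implementation.
\item Substituting these into the circuit by \Cref{prop:tidy_low}-style composition gives $\calU\otimes\calU^\dagger\in\class{projUnitaryK}^{\calO}$.
\item Finally, feed $\ket{\psi}\ket{0^n}$: the output $(U\ket\psi)(U^\dagger\ket{0^n})$ is a dilation of $U$ with garbage $U^\dagger\ket{0^n}$.
\end{itemize}
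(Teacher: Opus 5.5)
Your forward direction has a genuine gap. The plan you commit to is to run $C$ in lockstep with a partner circuit on separate ancillas and merge each pair of corresponding queries into one call to $\calO\otimes\calO^\dagger$. This cannot work, for three reasons.

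\emph{First}, among the gatewise operations $*,\dagger,\trans$ applied in the original gate order, only complex conjugation yields a function of $C$. Your $\widetilde C$ (each $G_j$ replaced by $G_j^\trans$, same order) equals $G_k^\trans\cdots G_1^\trans=(G_1\cdots G_k)^\trans$. That is the transpose of the reversed-order product, and it is neither $C^\trans$ nor $(C^\dagger)^\trans=C^*$.

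\emph{Second}, the only two-factor gates obtainable from $O\otimes O^\dagger$ via $*$, $\dagger$, $\trans$ (controlled or not) are $O\otimes O^\dagger$, $O^\dagger\otimes O$, $O^*\otimes O^\trans$ and $O^\trans\otimes O^*$. The lockstep pairs you need, $Q\otimes Q^*$ or $Q\otimes Q^\trans$ (e.g.\ $O\otimes O^*$ or $O\otimes O^\trans$), are not among them, so your step 2 is false.

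\emph{Third}, splitting the work across $\mathsf A$ and $\mathsf A'$ discards the mechanism that makes $C'$ clean: $C^\dagger$ runs \emph{after} $C$ on the same register $\mathsf A$ that holds the garbage. The Bell-state identity only relates the two halves of a maximally entangled state and does not by itself act on an arbitrary input in $\mathsf Y$. So the garbage cancellation you defer as ``the hardest step'' is exactly the missing piece.

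The missing idea is that paired queries need not act simultaneously, nor on the data. Your ``mirrored position'' bullet was the right observation, and you abandoned it too early. Keep $C'=(C^\dagger)_{\mathsf{AY}}\,C_{\mathsf{XA}}$ exactly as in the unrelativized proof, and adjoin a scratch register $\mathsf Z$ of $\poly(n,\log 1/\eps)$ qubits. Replace each oracle gate by an available gate carrying the complementary factor on $\mathsf Z$: $O\mapsto O\otimes O^\dagger_{\mathsf Z}$, $O^\dagger\mapsto O^\dagger\otimes O_{\mathsf Z}$, $O^*\mapsto O^*\otimes O^\trans_{\mathsf Z}$, and $O^\trans\mapsto O^\trans\otimes O^*_{\mathsf Z}$. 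The queries of $C'$ occur in the order $Q_1,\dots,Q_k,Q_k^\dagger,\dots,Q_1^\dagger$. If the replacement of $Q_j$ carries $Z_j$, then the replacement of $Q_j^\dagger$ carries $Z_j^\dagger$ on the same qubits. Hence the $\mathsf Z$-factors multiply to $Z_1^\dagger\cdots Z_k^\dagger Z_k\cdots Z_1=I$. The new circuit is exactly $C'\otimes I_{\mathsf Z}$, and no new error analysis is needed.

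Your reverse direction has the right idea, but as written it does not work. Controlled-swapping the targets into $\ket{0^m}$ registers and then applying $O\otimes O^\dagger$ would leave $O\ket{0^m}\otimes O^\dagger\ket{0^m}$ entangled with the control. What works, as in the paper, is this sequence: $O$ on $\mathsf X$; swap $\mathsf X,\mathsf Y$ conditioned on the control being $0$; $O^\dagger$ on $\mathsf Y$; the same controlled swap again. In the idle branch $O^\dagger$ then undoes $O$ on the same register, and no zero register is needed.
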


\begin{proof}
    For the forward direction, start by following the proof of \Cref{thm:u_garbage_iff_u_udag_tidy}, but with projective oracle circuits instead of ordinary circuits.
    Augment $C'_{2n,\eps}$ with an additional register $\mathsf{Z}$ on $\poly(n, \log 1/\eps)$ qubits, and replace each $\calO_\mathsf{XAY}$-oracle gate in $C'_{2n,\eps}$ with a $\calO_\mathsf{XAY} \otimes \calO^\dagger_\mathsf{Z}$ gate.
    The new circuit is a projective ($\calO \otimes \calO^\dagger$)-oracle unitary quantum circuit, and it implements exactly $C'_{2n,\eps\mathsf{XAY}} \otimes I_\mathsf{Z}$ because the action on $\mathsf{Z}$ from $C_{n, \eps/2}$ will perfectly cancel with the action on $\mathsf{Z}$ from $C^\dagger_{n, \eps/2}$.
    Therefore it still implements $\calU \otimes \calU^\dagger$ but is now in $\class{unitaryK}^{\calO \otimes \calO^\dagger}$.

    The reverse direction is almost as trivial as in \Cref{thm:u_garbage_iff_u_udag_tidy}, with the one caveat that $(\calO \otimes \calO^\dagger)$-oracle circuits are not (obviously) \textit{projective} $\calO$-oracle circuits.
    In particular, we must show how to simulate controlled queries such as $\controlled(O \otimes O^\dagger)$ and $\controlled(O^* \otimes O^\trans)$ with ordinary queries to $O$, $O^\dagger$, $O^*$, and $O^\trans$.
    This is accomplished with the circuit below that implements $\controlled_\mathsf{A}(O_\mathsf{X} \otimes O^\dagger_\mathsf{Y})$:
    \[
    \Qcircuit @C=1em @R=1em {
    \lstick{\mathsf{A}} & \qw & \ctrlo{2} & \qw & \ctrlo{2} & \qw \\
    \lstick{\mathsf{X}} & \gate{O} & \qswap & \qw & \qswap & \qw\\
    \lstick{\mathsf{Y}} & \qw & \qswap & \gate{O^\dagger} & \qswap & \qw\\
    }
    \]
\end{proof}

By similarly extending the previous theorems to the oracle setting, one can establish the following:

\begin{theorem}
    Let $\class{K} \in \{\P, \PSPACE\}$.
    Then $\calU \in \class{projUnitaryK}^{\calO}$ if and only if $\calU \otimes \calU^* \in \class{unitaryK}^{\calO \otimes \calO^*}$.
\end{theorem}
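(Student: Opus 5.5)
We follow the template of the proof of \Cref{thm:u_garbage_iff_u_udag_tidy_oracle}, applied to the circuit from \Cref{thm:u_garbage_iff_u_ustar_tidy}.

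\textbf{Forward direction.} Start from a projective $\calO$-oracle circuit $C = C_{n,\eps/4}$ implementing $U_n$ with garbage. Form the circuit $C'_{2n,\eps}$ of \Cref{thm:u_garbage_iff_u_ustar_tidy}, which contains one copy each of $C$, $C^\dagger$, $C^*$ and $C^\trans$. Gate by gate, each of these copies is a projective oracle circuit whose oracle gates lie in $\{O, O^*, O^\dagger, O^\trans\}$. The correctness argument of \Cref{thm:u_garbage_iff_u_ustar_tidy} goes through unchanged once we replace each copy by the corresponding (conjugated, transposed, or inverted) close dilation: the garbage is written into $\mathsf{C},\mathsf{D}$ and erased using the Bell state on $\mathsf{AB}$. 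So the real work is to simulate each oracle gate by queries to $\calO\otimes\calO^*$ and its allowed homomorphisms, without changing the overall unitary.

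The plan is to pair gates across the conjugate copies. The circuit $C^*$ acting on $\mathsf{DY}$ is the entrywise conjugate of $C$, so each query $O$ in $C$ has a partner $O^*$ in $C^*$. I would introduce a fresh dummy register $\mathsf{Z}$ of $\poly(n,\log 1/\eps)$ qubits, initialized arbitrarily, and replace each $O$-gate in $C$ by the gate $O\otimes O^*$, with the $O^*$ half acting on $\mathsf{Z}$. Similarly:
\begin{itemize}
\item each $O^*$-gate becomes $(O\otimes O^*)^*=O^*\otimes O$, with the $O$ half on $\mathsf{Z}$;
\item each $O^\dagger$-gate becomes $(O\otimes O^*)^\dagger=O^\dagger\otimes O^\trans$;
\item each $O^\trans$-gate becomes $(O\otimes O^*)^\trans=O^\trans\otimes O^\dagger$.
\end{itemize}
Within the circuit $C$ (on $\mathsf{XC}$), the dummy-register actions compose to $\Gamma$, the entrywise conjugate of $C$'s oracle part, restricted to $\mathsf{Z}$ with the non-oracle gates omitted. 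This alone does not cancel. The fix is to order the four blocks $C$, $C^*$, $C^\dagger$, $C^\trans$ so that the $\mathsf{Z}$-actions telescope. $C^\dagger$ contributes $\Gamma^\dagger$, but $C^*$ and $C^\trans$ come in between, contributing $\Gamma^*$-type factors, which would break the cancellation.

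It is cleaner to use \emph{separate} dummy registers, one per pair (C, C$^\dagger$) and one per pair (C$^*$, C$^\trans$). In the block $C$, place the dummy halves on $\mathsf{Z}_1$. In $C^\dagger$, place the partner halves on $\mathsf{Z}_1$ as well: they are exactly the inverses in reverse order, so the net action on $\mathsf{Z}_1$ is $\Gamma^\dagger\Gamma = I$. This works provided nothing else touches $\mathsf{Z}_1$ in between; the intermediate CNOT and $C^\trans$ gates do not. Likewise, $C^*$ and $C^\trans$ cancel on $\mathsf{Z}_2$. The non-oracle gates are absent from the $\mathsf{Z}$ registers, and by construction the dummy action is a product of oracle gates only, so this cancellation is exact.

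The resulting circuit therefore implements exactly $C'_{2n,\eps}\otimes I_{\mathsf{Z}_1\mathsf{Z}_2}$. The $\mathsf{Z}$ registers can be absorbed into the ancilla starting in $\ket{0}$, and the circuit remains an extension, as in the proof of \Cref{prop:tidy_low}. Since all oracle gates used are $\calO\otimes\calO^*$ together with its conjugate, inverse, and transpose, uncontrolled, this places $\calU\otimes\calU^*$ in $\class{unitaryK}^{\calO\otimes\calO^*}$. Uniformity is clear.

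\textbf{Reverse direction.} Take the $\class{unitaryK}^{\calO\otimes\calO^*}$ circuit for $\calU\otimes\calU^*$. Apply it to $\ket{\psi}\ket{0^n}$, yielding $U\ket{\psi}\otimes U^*\ket{0^n}$; the second factor is the garbage. This circuit is also in $\class{projUnitaryK}^{\calO\otimes\calO^*}$. The main obstacle is that it may use \emph{controlled} queries such as $\controlled(O\otimes O^*)$, which a projective $\calO$-oracle circuit cannot make directly. As in \Cref{thm:u_garbage_iff_u_udag_tidy_oracle}, we simulate them using swaps: controlled on $\mathsf{A}$, swap $\mathsf{X}$ and $\mathsf{Y}$ into fresh registers $\mathsf{X}'$ and $\mathsf{Y}'$; apply $O$ on $\mathsf{X}$ and $O^*$ on $\mathsf{Y}$ uncontrolled; then swap back. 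The issue is that the controlled-off branch then applies $O\otimes O^*$ to fresh registers, in which case the two branches differ by a state that is not a global phase.

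I would therefore try the trick from the $\calO\otimes\calO^\dagger$ proof: the phase of $O\otimes O^*$ is canonical, with $\det$-free invariance. We can load the idle branch into the maximally entangled state on $\mathsf{X}'\mathsf{Y}'$. Since $(O\otimes O^*)\ket{\Phi}=\ket{\Phi}$ exactly, this realizes the controlled gate exactly, and the Bell pair is returned intact for reuse. The same identity handles $\controlled(O^\dagger\otimes O^\trans)$, and also $\controlled(O^*\otimes O)$ and $\controlled(O^\trans\otimes O^\dagger)$ with the Bell state on swapped factors. This is the step I expect to need care, but the Bell-state invariance makes it exact.
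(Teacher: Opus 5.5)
Your proposal is correct and follows the route the paper indicates for this omitted proof. The forward direction is the dummy-register cancellation of \Cref{thm:u_garbage_iff_u_udag_tidy_oracle} applied to the circuit of \Cref{thm:u_garbage_iff_u_ustar_tidy}; the reverse direction simulates controlled queries to $\calO\otimes\calO^*$ by controlled swaps into a reusable maximally entangled catalyst, which takes the place of the $O^\dagger O = I$ cancellation used for $\calO\otimes\calO^\dagger$. (Incidentally, a single dummy register would also suffice if you order the blocks as $C, C^*, C^\trans, C^\dagger$, which is legal since the last two act on disjoint registers; the dummy actions then nest as $\Gamma^\dagger(\Gamma^*)^\dagger\Gamma^*\Gamma = I$.)
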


\begin{theorem}
    Let $\class{K} \in \{\P, \PSPACE\}$.
    Then $\calU \in \class{unitaryK}^{\calO}$ if and only if $\controlled\calU \in \class{projUnitaryK}^{\controlled \calO}$.
\end{theorem}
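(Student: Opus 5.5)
The plan is to relativize the proof of \Cref{thm:u_tidy_iff_cu_garbage} directly. Throughout, projective $\controlled\calO$-oracle circuits may query $\controlled O_n$, $(\controlled O_n)^*=\controlled O_n^*$, $(\controlled O_n)^\dagger=\controlled O_n^\dagger$ and $(\controlled O_n)^\trans=\controlled O_n^\trans$.

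\textbf{Forward direction.} Suppose $\calU\in\class{unitaryK}^\calO$. The relativized analogue of \Cref{prop:homomorphisms_in_p} gives $\controlled\calU\in\class{unitaryK}^\calO$: we control every explicit gate, and replace every oracle gate by its controlled version. The only subtlety is that an oracle gate that is already controlled, e.g.\ $\controlled O_n$, would become doubly controlled. We realize the doubly controlled gate as follows: compute the AND of the two control qubits into a fresh ancilla with a Toffoli gate, apply $\controlled O_n$ controlled on that ancilla, and uncompute the ancilla. This step is exact and costs one clean ancilla.

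Next we must simulate each allowed $\calO$-query by queries to $\controlled\calO$ without controls on them. An uncontrolled query $O_n$ is obtained by applying $\controlled O_n$ with its control qubit set to $\ket1$ (a fresh ancilla flipped by an $X$ gate and later restored). Controlled queries $\controlled O_n^{(\cdot)}$ are themselves available. The resulting circuit therefore lies in $\class{unitaryK}^{\controlled\calO}$. Finally, $\class{unitaryK}^{\controlled\calO}\subseteq\class{projUnitaryK}^{\controlled\calO}$ by taking $\ket g=\ket{0^a}$, and every query used is uncontrolled, so the circuit is a legitimate projective oracle circuit.

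\textbf{Reverse direction.} Suppose $\controlled\calU\in\class{projUnitaryK}^{\controlled\calO}$, witnessed by projective $\controlled\calO$-oracle circuits $C_{n,\eps/2}$. We build exactly the circuit of \Cref{thm:u_tidy_iff_cu_garbage}: apply $X$ on $\mathsf Y$, then $C_{n,\eps/2}$, then $X$ on $\mathsf Y$, then $C^\dagger_{n,\eps/2}$. The inverse $C^\dagger_{n,\eps/2}$ is again a projective $\controlled\calO$-oracle circuit, since the set $\{\controlled O_n,\controlled O_n^*,\controlled O_n^\dagger,\controlled O_n^\trans\}$ is closed under daggering. Every gate of the form $\controlled O_n^{(\cdot)}$ is one of the allowed gates of a $\calO$-oracle circuit in the sense of \Cref{def:oracle_circuit}, so the new circuit is an $\calO$-oracle unitary circuit.

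The error analysis is unchanged from the unrelativized proof. Replace $C$ by its $\eps/2$-close dilation $\overline U_n$ of $\controlled U_n$. When $\mathsf Y=\ket1$, $\overline U_n$ applies $U_n$ and writes garbage $\ket g$ into $\mathsf A$. Then $\mathsf Y$ is reset to $\ket0$. On the $\mathsf Y=\ket0$ branch, $\overline U_n^\dagger$ maps $\ket\phi\ket0\ket g$ to $\ket\phi\ket0\ket{0^a}$, since $\overline U_n$ acts as identity on $\mathsf X$ when $\mathsf Y=\ket0$ and produces that same $\ket g$. Thus the composed unitary extends $U_n$ on $\ket\psi\ket{0^{a+1}}$, and the triangle inequality bounds the operator-norm error by $\eps$.

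\textbf{Main obstacle.} The only place needing care is the oracle bookkeeping in the forward direction: making sure that every query the $\class{unitaryK}^\calO$ circuit makes, including its already-controlled, conjugated, and transposed queries, is exactly simulated by uncontrolled queries to $\controlled\calO$ and its homomorphic images. The Toffoli-ancilla trick for double control and the fixed-$\ket1$ control for plain queries handle all eight gate types. The reverse direction has no new difficulty.
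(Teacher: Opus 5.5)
Your proposal is correct and follows the route the paper intends. The paper omits this proof, saying only that it follows by extending \Cref{thm:u_tidy_iff_cu_garbage} to the oracle setting; your relativized homomorphism step (Toffoli-ancilla gadget for doubly controlled queries, fixed-$\ket{1}$ control for plain ones) together with the unchanged $X$--$C$--$X$--$C^\dagger$ erasure circuit is exactly that extension. One detail worth stating explicitly: each of your gadgets is itself an exact extension, block diagonal with respect to its extra ancilla, so the full circuit is an extension of $\controlled C_{n,\eps}$, and the operator-norm closeness to an extension of $\controlled U_n$ then holds on the whole space, not just on the clean-ancilla subspace.
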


We omit the proofs.
We conclude this section by noting that \Cref{thm:u_garbage_iff_u_udag_tidy_oracle} yields a particularly simple proof of \Cref{prop:garbage_low}:

\propgarbagelow*

\begin{proof}
    Suppose $\calU \in \class{projUnitaryK}^\calV$ for some $\calV \in \class{projUnitaryK}^\calO$.
    By \Cref{thm:u_garbage_iff_u_udag_tidy_oracle}, $\calU \otimes \calU^\dagger \in \class{unitaryK}^{\calV \otimes \calV^\dagger}$.
    Again by \Cref{thm:u_garbage_iff_u_udag_tidy_oracle}, $\calV \otimes \calV^\dagger \in \class{unitaryK}^{\calO \otimes \calO^\dagger}$, and thus $\calU \otimes \calU^\dagger \in \class{unitaryK}^{\class{unitaryK}^{\calO \otimes \calO^\dagger}}$.
    Appealing to \Cref{prop:tidy_low} puts $\calU \otimes \calU^\dagger \in \class{unitaryK}^{\calO \otimes \calO^\dagger}$.
    Finally, \Cref{thm:u_garbage_iff_u_udag_tidy_oracle} once more implies $\calU \in \class{projUnitaryK}^\calO$.
\end{proof}

\section{Characterization of \texorpdfstring{$\uPSPACE$}{unitaryPSPACE}}
\label{sec:characterization}

In this section, we give several different characterizations of the unitary languages in $\uPSPACE$, including one based on the computability of its entries, and another relating more closely to $\puPSPACE$.

\subsection{Approximation in polynomial space}

We first develop the theory of the real-valued functions that can be approximated by space-efficient quantum algorithms, and how such functions can be encoded into unitary languages.
This will serve as the basis for our later results that characterize $\uPSPACE$.
We first generalize the notion of an FPSEAS to functions that are approximable in polynomial \textit{quantum} space, perhaps with an oracle.

\begin{definition}[FPUSEAS]
    \label{def:FPUSEAS}
    An encoding function $f: \{0,1\}^* \to \R$ has a \emph{fully polynomial unitary space exponential approximation scheme}, or \textit{FPUSEAS} for short, if there is a family of space-uniform unitary quantum circuits $\{C_{n,\eps}\}_{n \in \N, \eps > 0}$ on $n + a(n, \eps) \le \poly(n, \log 1/\eps)$ qubits such that for every $x \in \{0,1\}^n$, measuring a designated output register of $C_{n,\eps}\ket{x}\ket{0^a}$ produces a fixed-point binary number $\phi$ satisfying $\abs{\phi - f(x)} \le \eps$ with probability at least $2/3$.

    If $\calO$ is a unitary language, we say that $f$ has an \emph{FPUSEAS relative to $\calO$} if $\{C_{n,\eps}\}_{n \in \N, \eps > 0}$ is instead a family of $\calO$-oracle circuits.
\end{definition}

\begin{remark}
By a Chernoff bound, the success probability in \Cref{def:FPUSEAS} can be amplified from $2/3$ to $1 - \delta$ by coherently repeating $O(\log 1/\delta)$ times and taking the median.
\end{remark}

We first show an equivalent characterization of FPSEAS approximation in terms of a related promise problem.

\begin{definition}[$f\Apx$]
    \label{def:f-apx}
    If $f: \{0,1\}^* \to \R$ is a function, then $f\Apx$ is the following promise problem defined on inputs $x \in \{0,1\}^*$, $\phi \in \R$, and $\eps > 0$, with the latter two encoded in fixed-point binary:
    \begin{itemize}
        \item (YES): $f\Apx(x, \phi, \eps) = 1$ if $f(x) > \phi + \eps$.
        \item (NO): $f\Apx(x, \phi, \eps) = 0$ if $f(x) < \phi - \eps$.
        \item Otherwise, $f\Apx(x, \phi, \eps)$ is undefined.
    \end{itemize}
\end{definition}

\begin{proposition}
    \label{prop:fapx_iff_fp}
    $f: \{0,1\}^* \to \R$ has an FPUSEAS relative to a unitary language $\calO$ if and only if $\abs{f(x)} \le 2^{\poly(|x|)}$ and $f\Apx \in \BQuPSPACE^\calO$.\footnote{$\BQuPSPACE^\calO$ is defined the same as in \Cref{def:bqpspace}, but with a family of $\calO$-oracle unitary quantum circuits.}
\end{proposition}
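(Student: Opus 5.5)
We prove the two directions separately. In both we use the Chernoff amplification remark after \Cref{def:FPUSEAS}.

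\textbf{Forward direction.} Suppose $f$ has an FPUSEAS $\{C_{n,\eps}\}$ relative to $\calO$.

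\emph{Magnitude bound.} Fix $x$ of length $n$ and run $C_{n,1}$. With probability at least $2/3$ its output register holds a fixed-point number $\phi$ with $\abs{\phi - f(x)} \le 1$. That register has at most $n + a(n,1) \le \poly(n)$ qubits. So $\abs{\phi} \le 2^{\poly(n)}$, and hence $\abs{f(x)} \le 2^{\poly(n)} + 1$.

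\emph{Membership of $f\Apx$.} Given an input $(x,\phi,\eps)$, first apply the amplified version of $C_{n,\eps/2}$. Its error parameter is $\eps/2$, so the circuit uses $\poly(n, \log 1/\eps)$ qubits, which is polynomial in the input length, since $\eps$ is written with $O(\log 1/\eps)$ bits. With probability at least $2/3$ it outputs $\tilde\phi$ with $\abs{\tilde\phi - f(x)} \le \eps/2$. Then reversibly compare $\tilde\phi$ to $\phi$ and place the bit $[\tilde\phi > \phi]$ on the output qubit.
\begin{itemize}
\item On a YES instance, $f(x) > \phi + \eps$, so $\tilde\phi \ge f(x) - \eps/2 > \phi$.
\item On a NO instance, $f(x) < \phi - \eps$, so $\tilde\phi \le f(x) + \eps/2 < \phi$.
\end{itemize}
One technical point: the family must be uniform in the input length $N = \abs{(x,\phi,\eps)}$, while $C_{n,\eps}$ depends on $n$ and $\eps$. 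We handle this by a controlled dispatch. For each of the $\poly(N)$ possible pairs $(n, \log 1/\eps)$ consistent with length $N$, we apply the corresponding circuit controlled on parsing the input that way, reusing a single ancilla workspace. A polynomial-space machine can write all of these circuits out in sequence, so the family stays space-uniform.

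\textbf{Reverse direction.} Suppose $\abs{f(x)} \le 2^{p(\abs{x})}$ and $f\Apx \in \BQuPSPACE^\calO$. We find $f(x)$ to precision $\eps$ by a noisy binary search over $[-2^{p(n)}, 2^{p(n)}]$, using $k = p(n) + \log 1/\eps + O(1)$ rounds. Each round maintains an interval $[\ell, r]$ known to contain $f(x)$:
\begin{itemize}
\item Take the midpoint $m$ and set $\delta = (r-\ell)/8$.
\item Query $f\Apx(x, m, \delta)$, amplified to error $1/(10k)$. Crucially, $\delta$ shrinks with the interval, never falling below about $\eps/8$. This keeps the query size at $\poly(n, \log 1/\eps)$.
\item If the answer is $1$, move to $[m - \delta, r]$; if $0$, move to $[\ell, m+\delta]$.
\end{itemize}
\emph{Why the interval stays correct.} If $\abs{f(x) - m} > \delta$, the query is within the promise, and a correct answer keeps $f(x)$ in the new interval. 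If $\abs{f(x) - m} \le \delta$, the answer is arbitrary, but $f(x)$ lies in both candidate intervals, so either choice is safe.

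\emph{Convergence.} Each round multiplies the interval length by $5/8$, so $k$ rounds shrink it to width at most $\eps$. A union bound over the rounds gives success probability at least $2/3$.

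\emph{Space.} Each query is run coherently, its answer bit is copied out, and then the query is uncomputed. This leaves only small residual garbage, which is fine here: \Cref{def:FPUSEAS} only asks for a measurement outcome, not a clean unitary. To keep space polynomial, each round writes its answer into a fresh polynomial-size register, the next interval is computed reversibly from the answers so far, and the workspace is reused. With $k = \poly$ rounds, this yields $\poly(n, \log 1/\eps)$ qubits. Space uniformity is inherited from the $\BQuPSPACE^\calO$ family, again using the controlled dispatch over the possible query lengths.

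The step I expect to need the most care is this reverse-direction bookkeeping: choosing the shrinking margins $\delta$ so that ambiguous promise violations are harmless, and keeping everything unitary and space-uniform across polynomially many query lengths.
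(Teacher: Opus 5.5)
Your proposal is correct and follows essentially the paper's route. The forward direction compares the FPUSEAS estimate against $\phi$ and uses a dispatch over parsings for uniformity; you also make explicit the bound $\abs{f(x)} \le 2^{\poly(\abs{x})}$, read off from the size of the output register, which the paper leaves implicit. The reverse direction is the same noisy binary search, with margin $\delta = (r-\ell)/8$, contraction factor $5/8$, and a union bound over amplified queries.

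Two small corrections:
\begin{enumerate}
\item Shrinking the width from $2^{p(n)+1}$ to $\eps$ needs $k = \lceil (p(n)+1+\log_2(1/\eps))/\log_2(8/5) \rceil$ rounds, not $p(n)+\log 1/\eps+O(1)$.
\item Your claim that uncomputation leaves only small garbage fails for out-of-promise queries, whose answer bit can be genuinely random. So the query workspace must not be reused across rounds. Give each round a fresh workspace instead; polynomially many rounds times polynomial space is still polynomial, and this is what a plain deferred-measurement implementation does.
\end{enumerate}
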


\begin{proof}
    For the forward implication, to solve an instance $(x, \phi, \eps)$ of $f\Apx$, run the FPUSEAS on $(x, \eps)$ to obtain $\phi'$ satisfying $\abs{\phi' - f(x)} \le \eps$.
    Output YES if $\phi' > \phi$ and NO otherwise.
    Then this is a solution to $f\Apx$ in $\BQuPSPACE^\calO$.
    Technically, the definition of $\BQuPSPACE^\calO$ requires that the space-uniform unitary circuit $C_m$ solving $f\Apx$ depends only on the length $m$ of the input $(x, \phi, \eps)$.
    However, the FPUSEAS specifies a circuit $C_{n, \eps}$ that depends on both the length of $x$ and the value of $\eps$.
    One solution is to let $C_m$ contain a copy of $C_{n, 2^{-m}}$ for each $n \le m$ after a decoding gadget that feeds $x$ into the correct sub-circuit.
    This works because fixed-point binary encoding requires $\eps \ge 2^{-m}$ and the output remains correct if we decrease the error parameter of the FPUSEAS.

    For the reverse implication, suppose $\abs{f(x)} \le 2^{\poly(|x|)}$ and $f\Apx \in \BQuPSPACE^\calO$.
    Then given input $x \in \{0,1\}^*$ and a desired precision $\eps > 0$, one can obtain an estimate $\phi$ satisfying $\abs{\phi - f(x)} \le \eps$ by binary search, using the $f\Apx$ algorithm as a subroutine.
    Specifically, we initialize the bounds $f(x) \in [\ell, u]$, where $\ell = -2^{\poly(|x|)}$ and $u = 2^{\poly(|x|)}$.
    At each step, we let $\mu = \frac{u + \ell}{2}$ and $\delta = \frac{u - \ell}{8}$ and query $f\Apx(x, \mu, \delta)$.
    If the output is YES, we ascertain from the promise that $f(x) \ge \mu - \delta$, while if the output is NO, we deduce $f(x) \le \mu + \delta$.
    In either case, we update the bounds $[\ell, u]$ on $f(x)$ accordingly.
    We repeat until $u - \ell \le 2\eps$ and finally output $\frac{u + \ell}{2}$.
    We implement this adaptive computation coherently, with deferred measurements.
    
    The length of the interval is multiplied by a factor of $5/8$ at each iteration, so the algorithm runs for at most $\poly(|x|) + O(\log 1/\eps)$ iterations before terminating.
    We set the failure probability of the $f\Apx$ subroutine to be inverse-linear in the number of iterations so that with probability at least $2/3$, the output of the subroutine is correct at all iterations.
    Assuming this happens, the algorithm is a valid FPUSEAS because it maintains the invariant $f(x) \in [\ell, u]$ whenever the $f\Apx$ subroutine responds appropriately.
    The total runtime overhead beyond the queries is at most $\poly(|x|, \log 1/\eps)$, so the overall algorithm has a space-uniform unitary family $\{C_{n, \eps}\}$.
\end{proof}

\begin{corollary}
    \label{cor:fpuseas_in_pspace}
    For any language $L$, $f: \{0,1\}^* \to \R$ has an FPUSEAS relative to $L$ if and only if $f$ has an FPSEAS relative to $L$.\footnote{An FPSEAS relative to $L$ takes the same definition as \Cref{def:fpteas}, except that $M$ may be an $L$-oracle Turing machine.}
\end{corollary}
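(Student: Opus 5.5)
The plan is to route both directions through \Cref{prop:fapx_iff_fp} and \Cref{prop:bqpspace}, which reduces the corollary to a statement about the promise problem $f\Apx$. A classical language $L$ is identified with its unitary embedding, which is what the oracle definitions use. \Cref{prop:fapx_iff_fp} says that $f$ has an FPUSEAS relative to $L$ if and only if $\abs{f(x)} \le 2^{\poly(|x|)}$ and $f\Apx \in \BQuPSPACE^L$. \Cref{prop:bqpspace} gives $\BQuPSPACE^L = \PSPACE^L$ for every language $L$, so the FPUSEAS condition is equivalent to
\[
\abs{f(x)} \le 2^{\poly(|x|)} \quad\text{and}\quad f\Apx \in \PSPACE^L .
\]
It then remains to show that this condition is equivalent to $f$ having an FPSEAS relative to $L$. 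This is the classical analogue of \Cref{prop:fapx_iff_fp}, and I would prove it by repeating that proof with deterministic machines.

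For the forward direction (FPSEAS implies the condition), the boundedness $\abs{f(x)} \le 2^{\poly(|x|)}$ follows because the FPSEAS on input $(x, 1)$ outputs a polynomial-length fixed-point number $\phi$ with $\abs{\phi - f(x)} \le 1$, so $\abs{f(x)} \le 2^{\poly(|x|)}$. To decide $f\Apx(x,\phi,\eps)$ in $\PSPACE^L$, run the FPSEAS with precision $\eps$ and compare its output $\phi'$ to $\phi$. The approximation scheme takes $\eps$ as part of its input, so no gadget indexed by input length is needed here.

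For the reverse direction, I would use the same binary search as in \Cref{prop:fapx_iff_fp}, now run deterministically. The search uses $\poly(|x|) + O(\log 1/\eps)$ iterations, and each iteration makes one $\PSPACE^L$ call. Space can be reused across calls, and the current bounds $\ell, u$ have polynomial bit length. The output therefore has polynomial length, and the total space is $\poly(|x|, \log 1/\eps)$.

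I expect the only real obstacle to be confirming that \Cref{prop:bqpspace} is applicable. The oracle in \Cref{prop:fapx_iff_fp} is the unitary embedding of $L$ accessed through the gates of \Cref{def:oracle_circuit}, and this includes conjugated, transposed, inverted and controlled queries. I would check that each such gate is either equal to the standard XOR query or simulable by standard queries with constant overhead. The XOR oracle is real, symmetric and self-inverse, so $O = O^* = O^\dagger = O^\trans$. Its controlled version is a standard query on a string extended by a control bit, implementable with a Toffoli and one ancilla qubit. With this check done, \Cref{prop:bqpspace} applies directly.
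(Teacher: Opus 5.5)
Your proposal is correct and essentially the paper's proof. The direction from an FPUSEAS to an FPSEAS is exactly the paper's chain (\Cref{prop:fapx_iff_fp}, then \Cref{prop:bqpspace}, then deterministic binary search). For the converse, the paper skips $f\Apx$ and simply notes that the output bits of the FPSEAS form a $\PSPACE^L$ language simulable in $\BQuPSPACE^L$, which is slightly shorter than your route back through \Cref{prop:fapx_iff_fp}.

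Your reconciliation of the two oracle models is a useful detail that the paper leaves implicit. One small correction: the controlled oracle gate is not a single query on an extended string. It takes two standard queries: compute $L(x)$ into the ancilla, apply the Toffoli, then uncompute. This is the same gadget used at the end of the proof of \Cref{thm:projunitarypspace_equiv}.
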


\begin{proof}
    For the forward direction, an FPUSEAS for $f$ implies that $f\Apx \in \BQuPSPACE^L$ (\Cref{prop:fapx_iff_fp}), and therefore $f\Apx \in \PSPACE^L$ (\Cref{prop:bqpspace}).
    We can utilize the same binary search routine as in the proof of \Cref{prop:fapx_iff_fp} to obtain a classical polynomial-space approximation scheme with an oracle for $f\Apx$.

    The reverse direction follows from the observation that the output bits of the FPSEAS define a language in $\PSPACE^L$, which can be simulated in $\BQuPSPACE^L$ by \Cref{prop:bqpspace}.
\end{proof}

In the context of unitary complexity, it becomes natural to consider real-valued functions that are encoded into the entries of a unitary.

\begin{definition}[Phase encoding, $f\Phase$]
    \label{def:fphase}
    Let $f : \{0,1\}^* \to [-\pi/2, \pi/2]$.
    The \emph{phase encoding of $f$} is the unitary language $f\Phase = \{R_{f,n}\}_{n \in \N}$ defined by
    \[
    R_{f,n}\ket{x}
    =
    e^{i f(x)}\ket{x},
    \]
    for every $x \in \{0,1\}^n$.
\end{definition}

The restriction to $[-\pi/2, \pi/2]$ is merely to ensure the encoding is unique; one could straightforwardly generalize the definition to encode arbitrary elements of the circle group $\R / 2\pi\Z$.

\begin{lemma}
    \label{lem:rotation_to_fqpseas}
    $f\Phase \in \uPSPACE^{\calO}$ if and only if $f: \{0,1\}^* \to [-\pi/2, \pi/2]$ has an FPUSEAS relative to $\calO$.
\end{lemma}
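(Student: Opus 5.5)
We prove the two directions separately, using phase estimation for the forward direction and phase kickback for the reverse direction.

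\textbf{Forward direction.} Suppose $f\Phase \in \uPSPACE^\calO$. The plan is to estimate $f(x)$ by phase estimation on $R_{f,n}$, which has $\ket{x}$ as an eigenvector with eigenvalue $e^{i f(x)}$.

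To get $t = O(\log 1/\eps)$ bits of precision, we need controlled powers $\controlled R_{f,n}^{2^j}$ for $j < t$, i.e.\ up to $2^t = \poly(1/\eps)$ applications of $R_{f,n}$. Since $\log(1/\eps)$ is polynomial in the input length, this many queries is an exponential-length computation with only polynomially many qubits. That is acceptable for a space-uniform circuit.

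We proceed in two steps.
\begin{enumerate}
\item Build an $\calO$-oracle circuit for standard phase estimation, treating $\controlled f\Phase$ as an oracle. Proposition~\ref{prop:homomorphisms_in_p} guarantees that controlled queries to $f\Phase$ are available.
\item Instantiate each $f\Phase$ query with its implementation at error $\eps' = \eps/(\text{number of queries})$. Then $\log 1/\eps'$ is still $\poly(n,\log 1/\eps)$. The composition argument of Proposition~\ref{prop:tidy_low} (the self-lowness argument, operator-norm error additivity) shows that the overall state is $O(\eps)$-close to ideal phase estimation.
\end{enumerate}
Ideal phase estimation outputs an estimate within $\eps$ of $f(x)$ with probability at least $8/10$. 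Because $f(x) \in [-\pi/2,\pi/2]$, there is no wraparound ambiguity modulo $2\pi$. The approximation error costs at most $O(\eps)$ in success probability. This gives an FPUSEAS relative to $\calO$.

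\textbf{Reverse direction.} Suppose $f$ has an FPUSEAS relative to $\calO$. We use a compute--phase--uncompute construction.
\begin{enumerate}
\item Amplify the FPUSEAS (by coherent median, per the Remark) to failure probability $\delta$, with precision $\eps/4$.
\item Run it on $\ket{x}\ket{0^a}$, obtaining an output register $\ket{\phi}$.
\item Apply the diagonal gate $\ket{\phi} \mapsto e^{i\phi}\ket{\phi}$ bit by bit. Each bit $\phi_k$ of the fixed-point number gets a single-qubit phase gate $\mathrm{diag}(1, e^{i 2^{k}})$, whose entries have an FPTEAS. This is where allowing FPTEAS gates is used.
\item Uncompute with the inverse of the FPUSEAS circuit.
\end{enumerate}

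The main obstacle is the error analysis, because the output of step 2 is a superposition over estimates $\phi$ with junk registers, not a clean basis state. We write
\[
C\ket{x}\ket{0^a} = \sum_\phi \alpha_\phi \ket{\phi}\ket{j_\phi},
\]
where the weight on estimates with $\abs{\phi - f(x)} > \eps/4$ is at most $\delta$. We compare the phase step with multiplication by the global phase $e^{i f(x)}$:
\begin{itemize}
\item on good branches, the two differ by at most $\eps/4$ in norm;
\item bad branches contribute at most $2\sqrt{\delta}$.
\end{itemize}
So the pre-uncompute state is within $\eps/4 + 2\sqrt\delta$ of $e^{if(x)} C\ket{x}\ket{0^a}$. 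Uncomputing then yields $e^{if(x)}\ket{x}\ket{0^a}$ up to the same error.

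Choosing $\delta = \eps^2/2^{2n+6}$ (so $\log 1/\delta = \poly(n,\log 1/\eps)$) and shrinking the precision to $\eps/2^{n+3}$ gives per-basis-state error $\le \eps/(2\sqrt{2^n})$. Lemma~\ref{lem:basis_to_op_norm} and Lemma~\ref{lem:extension_close_op_norm} then give an $\eps$-close extension of $R_{f,n}$.

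Space-uniformity is inherited from the FPUSEAS family. All parameters are polynomial in $n$ and $\log 1/\eps$, so $f\Phase \in \uPSPACE^\calO$.
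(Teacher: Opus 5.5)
Your proposal is correct and takes essentially the same route as the paper. In the forward direction you run phase estimation on the eigenstate $\ket{x}$ and instantiate the $f\Phase$ queries via the self-lowness substitution of \Cref{prop:tidy_low}; in the reverse direction you use the compute--phase--uncompute circuit $A^\dagger P A$, bound the error by $\gamma + 2\sqrt{\delta}$ through the good/bad-branch split, and finish with \Cref{lem:basis_to_op_norm,lem:extension_close_op_norm}.

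One small fix is needed in the forward direction. Set the total substitution error to a fixed small constant rather than $O(\eps)$; the paper amplifies to $3/4$ and allows total error $1/12$. Otherwise the loss in success probability need not stay below $8/10 - 2/3$ when $\eps$ is close to $1$.
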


\begin{proof}
    For the forward direction, suppose $f\Phase \in \uPSPACE^{\calO}$.
    We estimate $f(x)$ by applying phase estimation~\cite{kitaev95,NC10-book} to $R_{f,n}$ on input the eigenstate $\ket{x}$.
    Specifically, we choose the parameters of phase estimation so that the estimate $\phi$ satisfies $\abs{\phi - f(x)} \le \eps$ with probability at least $2/3$.
    Outputting $\phi$ gives an FPUSEAS \underline{relative to $f\Phase$}, as phase estimation uses a number of qubits polynomial in $n$ and $\log 1/\eps$.
    Now amplify the success probability of the FPUSEAS to at least $3/4$ (say) by parallel repetition.
    As $f\Phase \in \uPSPACE^{\calO}$, we can substitute the $f\Phase$ oracle gates with polynomial-space $\calO$-oracle unitary quantum circuits that approximate $f\Phase$ to small error, much like we did in \Cref{prop:tidy_low}.
    We choose the errors to be a small enough inverse-exponential in $n$ and $\log 1/\eps$ so that the overall error introduced is at most $1/12$.
    Then the resulting circuit gives an FPUSEAS relative to $\calO$, because the success probability is at least $3/4 - 1/12 = 2/3$.

    Conversely, suppose $f$ has an FPUSEAS relative to $\calO$.
    Fix $n$ and target error $\eps>0$.
    We may assume that there is a space-uniform family of $\calO$-oracle unitary circuits $\{A_{n, \eps}\}_{n \in \N, \eps > 0}$ such that, for every $x \in \{0,1\}^n$, measuring a designated output register of $A_{n, \eps}\ket{x}\ket{0^a}$ produces a number $\phi$ satisfying $\abs{\phi - f(x)} \le \gamma$ with probability at least $1-\delta$, where we choose
    \begin{equation}
    \label{eq:def_gamma_delta}
    \gamma = \frac{\eps}{4\sqrt{2^n}},
    \qquad
    \delta = \frac{\eps^2}{64\cdot 2^n}.
    \end{equation}

    Let $P$ be the unitary that applies the phase encoded by the output register.
    That is, we define $P$ by
    \[
    P\ket{x}\ket{\phi}\ket{w}
    =
    e^{i\phi}\ket{x}\ket{\phi}\ket{w},
    \]
    where $\ket{x}$ represents the input register of $A_{n, \eps}$, $\ket{\phi}$ represents the output, and $\ket{w}$ is the remainder of the workspace.

    Define $C_{n,\eps} \coloneqq A^\dagger_{n, \eps} P A_{n, \eps}$.
    We claim that $C_{n,\eps}$ approximately implements $R_{f,n}$ when applied to any fixed $x \in \{0,1\}^n$.
    Write
    \[
    \ket{\Gamma_x} \coloneqq A_{n, \eps}\ket{x}\ket{0^a}.
    \]
    Let $\Pi_x$ be the projector onto the ``good'' subspace in which the output register contains a value $\phi$ satisfying $\abs{\phi-f(x)} \le \gamma$.
    By assumption,
    \begin{equation}
    \label{eq:proj_diff_delta}
    \norm{(I-\Pi_x)\ket{\Gamma_x}} \le \sqrt{\delta}.
    \end{equation}
    Additionally, since $P$ is diagonal in the computational basis, we have the operator norm bound
    \begin{equation}
    \label{eq:opnorm_proj}
    \opnorm{(P - e^{if(x)}I)\Pi_x}
    =
    \sup_{\phi:\,\abs{\phi-f(x)}\le \gamma}
    \abs{e^{i\phi} - e^{if(x)}}
    \le \gamma.
    \end{equation}
    Therefore
    \begin{align*}
    \norm{P\ket{\Gamma_x} - e^{if(x)}\ket{\Gamma_x}}
    &\le
    \norm{(P-e^{if(x)}I)\Pi_x\ket{\Gamma_x}}
    +
    \norm{(P-e^{if(x)}I)(I-\Pi_x)\ket{\Gamma_x}}\\
    &\le
    \gamma\norm{\Pi_x\ket{\Gamma_x}}
    +
    2\norm{(I-\Pi_x)\ket{\Gamma_x}}\\
    &\le
    \gamma + 2\sqrt{\delta}\\
    &=
    \frac{\eps}{2\sqrt{2^n}},
    \end{align*}
    by the triangle inequality, \Cref{eq:opnorm_proj}, \Cref{eq:proj_diff_delta}, and \Cref{eq:def_gamma_delta}, respectively.
    
    Applying $A^\dagger_{n, \eps}$ preserves Euclidean distance, so
    \[
    \norm{
    C_{n,\eps}\ket{x}\ket{0^a}
    -
    R_{f,n}\ket{x}\ket{0^a}
    }
    \le
    \frac{\eps}{2\sqrt{2^n}}.
    \]
    Appealing to \Cref{lem:basis_to_op_norm} and \Cref{lem:extension_close_op_norm} completes the proof, as $C_{n,\eps}$ is clearly space-uniform.
\end{proof}

The previous lemma is particularly useful in the context of unitary complexity because when modeling $f\Phase$ as a perfect unitary oracle, we will not have to keep track of the approximation error $\eps$, thus simplifying subsequent proofs.

Observe that certain operations on encoding functions become very easy in the phase encoding.
For example, addition of two functions $f + g$ corresponds to the unitary multiplication $f\Phase \cdot g\Phase$ (as long as the sum remains bounded in $[-\pi/2, \pi/2]$).
But one can perform more general transformations as well, including composition with any function of bounded \textit{modulus of continuity}:

\begin{definition}[Modulus of continuity]
    \label{def:modulus_continuity}
    A function $g: \R \to \R$ has \textit{polylogarithmic modulus of continuity} if there is a polynomial $p$ such that for every $x, y$ and $0 < \eps < 1$, $\abs{x - y} \le 2^{-p(\log 1/\eps)}$ implies $\abs{g(x) - g(y)} \le \eps$.
\end{definition}

As a special case, note that all Lipschitz-continuous functions have polylogarithmic modulus of continuity.

\begin{lemma}
    \label{lem:continuity_composition}
    Suppose $f : \{0,1\}^* \to [-\pi/2, \pi/2]$ satisfies $f\Phase \in \uPSPACE^\calO$.
    Let $g : [-\pi/2, \pi/2] \to [-\pi/2, \pi/2]$ be a function with polylogarithmic modulus of continuity that also has an FPTEAS on inputs encoded in fixed-point binary.
    Then $(g \circ f)\Phase \in \uPSPACE^\calO$.
\end{lemma}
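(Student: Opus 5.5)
By \Cref{lem:rotation_to_fqpseas}, it suffices to show that $g \circ f$ has an FPUSEAS relative to $\calO$. The same lemma, applied in the forward direction to the hypothesis $f\Phase \in \uPSPACE^\calO$, gives an FPUSEAS for $f$ relative to $\calO$. The problem therefore reduces to composing approximation schemes. Applying the reverse direction of \Cref{lem:rotation_to_fqpseas} then converts the FPUSEAS for $g\circ f$ back into the statement $(g\circ f)\Phase \in \uPSPACE^\calO$.

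The construction of the FPUSEAS for $g \circ f$ goes as follows. Given $x \in \{0,1\}^n$ and a target error $\eps$, let $p$ be the polynomial witnessing polylogarithmic modulus of continuity for $g$, and set $\eta = 2^{-p(\log(2/\eps))}$. Run the FPUSEAS for $f$ with precision $\eta$, obtaining $\phi$ with $\abs{\phi - f(x)} \le \eta$ with probability at least $2/3$. Clamp $\phi$ to $[-\pi/2, \pi/2]$; since $f(x)$ lies in this interval, clamping only decreases the distance to $f(x)$, and it ensures $\phi$ lies in the domain of $g$. Next, coherently run the FPTEAS for $g$ on input $\phi$ with precision $\eps/2$, which outputs $\psi$ with $\abs{\psi - g(\phi)} \le \eps/2$. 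The triangle inequality together with the modulus-of-continuity bound $\abs{g(\phi) - g(f(x))} \le \eps/2$ gives $\abs{\psi - g(f(x))} \le \eps$ whenever the first stage succeeds, so the success probability is at least $2/3$. The FPTEAS runs in classical polynomial time, so it can be made reversible with polynomial overhead (e.g.\ by Bennett's or \cite{LMP00-rpspace}'s simulation) and implemented as a unitary subcircuit acting on the output register of the first stage.

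The main thing to verify is that the space bounds remain polynomial in $n$ and $\log 1/\eps$. The first stage uses space $\poly(n, \log 1/\eta) = \poly(n, p(\log 1/\eps)) = \poly(n, \log 1/\eps)$. The estimate $\phi$ has $O(\log 1/\eta)$ bits after the binary point, with integer part bounded by a constant after clamping, so the FPTEAS for $g$ runs in time, and hence space, $\poly(\log 1/\eta, \log 1/\eps) = \poly(\log 1/\eps)$. Space-uniformity is also preserved, since the composed circuit is produced by concatenating the space-uniform description of the first stage with the time-uniform reversible circuit for $g$, whose parameters are computable from $n$ and $\eps$. A small technicality is that the FPTEAS for $g$ takes fixed-point inputs of arbitrary length. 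We handle this by truncating $\phi$ to $O(\log 1/\eta)$ bits, which costs at most an additional $\eta$ in the input error; this extra error is absorbed by halving $\eta$.
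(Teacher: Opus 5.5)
Your proof is correct and follows the paper's argument exactly: reduce via \Cref{lem:rotation_to_fqpseas} to constructing an FPUSEAS, run the FPUSEAS for $f$ at precision $2^{-p(\log(2/\eps))}$, apply the FPTEAS for $g$ at precision $\eps/2$, and conclude by the triangle inequality. Your extra handling of clamping, truncation of $\phi$, and reversible coherent evaluation of $g$'s FPTEAS fills in details the paper leaves implicit; note only that the clamp should go to a dyadic point just inside $[-\pi/2,\pi/2]$, since $\pm\pi/2$ is not a fixed-point number.
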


\begin{proof}
    By \Cref{lem:rotation_to_fqpseas}, it suffices to show that $g \circ f$ has an FPUSEAS relative to $\calO$.
    Let $p$ be the polynomial from \Cref{def:modulus_continuity} that controls the continuity of $g$.
    Given a desired error parameter $\eps$, begin by running the FPUSEAS for $f$ that is implied by \Cref{lem:rotation_to_fqpseas} with error parameter $\gamma = 2^{-p(\log(2/\eps))}$.
    This produces an output $\phi$ that with probability at least $2/3$ satisfies $\abs{\phi - f(x)} \le \gamma$, and in turn that $\abs{g(\phi) - g(f(x))} \le \eps / 2$.
    Finally, output the result of the FPTEAS $M(\phi, \eps/2)$ for $g$ which satisfies $\abs{M(\phi, \eps/2) - g(\phi)} \le \eps/2$.
    By the triangle inequality, the output is within $\eps$ of $g(f(x))$ with probability at least $2/3$, giving an FPUSEAS for $g \circ f$.
\end{proof}

\subsection{Entrywise characterization}

We now turn to characterizing $\uPSPACE$ in terms of the computability of its entries.
To do so, we use phase encodings of the entries.

\begin{definition}[$\calU\Entry$]
    \label{def:u_entry}
    Let $\calU = \{U_n\}_{n \in \N}$ be a unitary language.
    Then $\calU\Entry: \{0,1\}^{2n + 1} \to [-1,1]$ is the encoding function
    \begin{align*}
    \calU\Entry(x, y, 0) &\coloneqq \Re\;\braket{y|U_n}{x},\\
    \calU\Entry(x, y, 1) &\coloneqq \Im\;\braket{y|U_n}{x}.
    \end{align*}
\end{definition}

Standard amplitude estimation techniques allow us to space-efficiently implement $\calU\Entry\Phase$ with oracle access to $\calU$.

\begin{lemma}[Entries computable in PSPACE]
    \label{lem:u_to_entry}
    $\calU\Entry\Phase \in \uPSPACE^{\calU}$.
\end{lemma}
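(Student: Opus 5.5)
By \Cref{lem:rotation_to_fqpseas} (applied with oracle $\calU$), it suffices to show that $\calU\Entry$ has an FPUSEAS relative to $\calU$. The plan is to build a polynomial-space $\calU$-oracle unitary circuit that, on input $(x,y,b)$ with $x,y \in \{0,1\}^n$, outputs an estimate of $\Re\braket{y|U_n}{x}$ or $\Im\braket{y|U_n}{x}$ to additive error $\eps$ with probability at least $2/3$. We then appeal to the lemma to obtain the phase encoding directly.

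The estimator is a Hadamard test combined with amplitude estimation. Controlled queries $\controlled U_n$ are allowed by \Cref{def:oracle_circuit}. First prepare the control qubit in $\ket{+}$, or in $(\ket{0} - \ii\ket{1})/\sqrt2$ when $b=1$. Next, using the classical input $x$ held in the register, prepare $\ket{x}$ on a work register by CNOTs. Then do the following controlled on the ancilla: apply $U_n$ to the work register, and then XOR $y$ into it. In the branch where $U_n$ is not applied, XOR $x$ into it instead. The effect is that the relevant amplitudes become those of $\ket{0^n}$. Finally, apply a Hadamard to the control qubit. The probability of measuring the control as $0$ with the work register equal to $0^n$ is an affine function of $\Re\braket{y|U_n}{x}$ (respectively $\Im$). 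Concretely, it is $\tfrac14\abs{\braket{y|U_n}{x} + \delta_{xy}\cdots}^2$-type expressions. To avoid cross terms, I would instead compare the branch $U_n\ket{x}$ against the reference state $\ket{y}$ prepared directly. Then the acceptance probability is $p = \tfrac14\abs{\braket{y|U_n}{x}\cdot e^{\ii\theta} + 1}^2$-like, from which $\Re$ (or $\Im$) is recovered as $2p - \tfrac12(1 + \abs{\cdot}^2)$. The squared magnitude $\abs{\braket{y|U_n}{x}}^2$ is itself estimable from the probability of measuring $y$ after $U_n\ket{x}$.

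The accuracy step is to run amplitude estimation~\cite{BHMT02-amp} on this state-preparation circuit (which uses $U$ and $U^\dagger$ queries, both allowed). Amplitude estimation estimates each acceptance probability to additive error $\eps$ using $O(1/\eps)$ controlled reflections. That is exponential in $\log 1/\eps$ in time, but it needs only $O(\log 1/\eps) + O(n)$ qubits, so the circuit is space-uniform with $\poly(n,\log 1/\eps)$ space. Compute the final affine or arithmetic combination reversibly in fixed-point. An error propagation check (Lipschitz with constant $O(1)$ since all quantities are in $[0,1]$) then shows that error $\eps/10$ in each probability suffices. Success probability is boosted by the median trick (the remark after \Cref{def:FPUSEAS}).

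The main obstacle is bookkeeping rather than conceptual difficulty. One point is getting the Hadamard test to isolate $\braket{y|U_n}{x}$ cleanly, since preparing $\ket{y}$ and $U_n\ket{x}$ in superposition requires the controlled oracle. The other is confirming that the amplitude estimation procedure, with its exponentially many but space-cheap iterations, fits \Cref{def:FPUSEAS} as a space-uniform unitary circuit family. The latter point holds because the controlled Grover iterate can be emitted gate by gate by a polynomial-space machine.
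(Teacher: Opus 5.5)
Your proposal is correct and is essentially the paper's proof. Like the paper, you reduce via \Cref{lem:rotation_to_fqpseas} to an FPUSEAS for $\calU\Entry$ relative to $\calU$, run a Hadamard test between $\ket{y}$ and $U_n\ket{x}$ using controlled oracle queries (with a $-i$ phase on the control for the imaginary part), and estimate the acceptance probability by amplitude estimation on $\poly(n, \log 1/\eps)$ qubits.

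The detour through $\abs{\bra{y}U_n\ket{x}}^2$ is unnecessary. If you measure only the control qubit, without also projecting the work register onto $\ket{0^n}$ (or $\ket{y}$), the probability of outcome $0$ is already $\frac{1}{2}\bigl(1+\Re\bra{y}U_n\ket{x}\bigr)$. The joint probability you first wrote down is instead $\frac{1}{4}\abs{1+\bra{y}U_n\ket{x}}^2$, so your initial claim that it is affine in $\Re\bra{y}U_n\ket{x}$ is false as stated. Your corrected recovery formula, with a separate estimate of the squared magnitude, does work, but it costs extra bookkeeping.
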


\begin{proof}
    Since $\calU\Entry$ takes values in $[-1,1]$, by \Cref{lem:rotation_to_fqpseas} it suffices to show that $\calU\Entry$ has an FPUSEAS relative to $\calU$.

    Fix $x,y \in \{0,1\}^n$, $b \in \{0,1\}$, and error parameter $\eps>0$.
    First consider the following Hadamard-test circuit:
    \[
    \Qcircuit @C=1em @R=1em {
    \lstick{\ket{x}} &  \qw & \ctrl{4} & \qw & \qw & \qw & \qw & \qw\\
    \lstick{\ket{y}} &  \qw & \qw & \ctrl{2} & \qw & \qw & \qw & \qw\\
    \lstick{\ket{b}} & \qw & \qw & \qw & \qw & \ctrl{1} & \qw & \qw\\
    \lstick{\ket{0}} & \gate{H} & \ctrl{1} & \ctrlo{1} & \ctrl{1} & \gate{S^\dagger} & \gate{H} & \meter\\
    \lstick{\ket{0^n}} &  \qw & \targ & \targ & \gate{U_n} & \qw & \qw & \qw\\
    }
    \]
    To be clear, the two doubly-controlled gates each represent $n$ parallel gates on the corresponding wires of the $n$-qubit registers.
    Ignoring the unchanged input registers, the state immediately before measurement is
    \[
    \frac{\ket{+}\ket{y} + (i^{-b})\ket{-}U_n\ket{x}}{\sqrt{2}}.
    \]
    Therefore, if $b=0$, the probability of obtaining outcome $\ket{0}$ is
    \[
    p_0
    \coloneqq
    \frac{1}{4}\norm{\ket{y}+U_n\ket{x}}^2
    =
    \frac{1+\Re\bra{y}U_n\ket{x}}{2}.
    \]
    If $b=1$, then the $S^\dagger$ gate gives
    \[
    p_1
    \coloneqq
    \frac{1}{4}\norm{\ket{y}-iU_n\ket{x}}^2
    =
    \frac{1+\Im\bra{y}U_n\ket{x}}{2}.
    \]
    Thus, in both cases, $\calU\Entry(x,y,b)=2p_b-1$.

    We now estimate $p_b$ using amplitude estimation~\cite{BHMT02-amp}.
    Choose the parameters so that it returns a number $\widetilde p_b$ satisfying $\abs{\widetilde p_b-p_b}\le \eps/2$ with probability at least $2/3$, then output $\phi \coloneqq 2\widetilde p_b-1$.
    With probability at least $2/3$,
    \[
    \abs{\phi-\calU\Entry(x,y,b)}
    =
    2\abs{\widetilde p_b-p_b}
    \le \eps.
    \]
    The amplitude-estimation circuit uses $O(\log 1/\eps)$ additional qubits beyond the $3n + 2$ required for the circuit above, so the overall space complexity remains $\poly(n,\log 1/\eps)$.
    Therefore, $\calU\Entry$ has an FPUSEAS relative to $\calU$.
\end{proof}

Next, we will show that the ability to compute entries of $\calU$ allows us to prepare any \textit{column} of $\calU$ with polynomial space overhead.
We will express this result in terms of a canonical transformation that encodes the columns of a given $\calU$:

\begin{definition}[$\calU\Column$]
    \label{def:column}
    Let $\calU = \{U_n\}_{n \in \N}$ be a unitary language.
    Denote by
    \[
    \ket{\psi_x} \coloneqq \frac{\ket{0}\ket{0^n} - \ket{1}U_n\ket{x}}{\sqrt{2}}.
    \]
    Then $\calU\Column = \{V_{2n + 1}\}_{n \in \N}$ is the unitary language defined by
    \[
    V_{2n+1} = \sum_{x \in \{0,1\}^n} \ketbra{x}{x} \otimes (I - 2\ketbra{\psi_x}{\psi_x}).
    \]
\end{definition}

\begin{remark}
$V_{2n+1}$ allows one to prepare any column of $U_n$ in the sense that it maps $\ket{x}\ket{0}\ket{0^n}$ to $\ket{x}\ket{1}U_n\ket{x}$ for every $x \in \{0,1\}^n$.
Note, however, that the initial $\ket{x}$ remains as garbage that would need to be erased, and so $\calU\Column$ does not immediately allow for implementing $\calU$.
\end{remark}

Our next proof will use the following result as a black box:

\begin{lemma}[{Exact amplitude amplification~\cite{MGB26-preparation}}]
    \label{lem:exact_amp}
    Let $U$ be an $(n+1)$-qubit unitary which satisfies
    \[
    (I \otimes \ketbra{0}{0}) U\ket{0^n}\ket{0} = a\ket{\psi}\ket{0}
    \]
    for some $n$-qubit $\ket{\psi}$ and $a > 0$.
    Define
    \[
    k \coloneqq \left\lceil \frac{\pi}{4\arcsin a} - \frac{1}{2}\right\rceil, \qquad \theta \coloneqq \frac{\pi}{4k + 2},
    \]
    and let $R$ be any single-qubit unitary for which $\braket{0|R}{0} = \frac{\sin \theta}{a}$.
    Then the circuit below outputs $(-1)^k\ket{\psi}\ket{0}\ket{0}$:
    \[
    \Qcircuit @C=1em @R=1em {
     \lstick{\ket{0^n}} & \multigate{1}{U} & \qw &\qw & \multigate{1}{U^\dagger} & \ctrlo{1} & \multigate{1}{U} & \qw & \qw\\
     \lstick{\ket{0}} & \ghost{U} & \qw & \ctrlo{1} & \ghost{U^\dagger} & \ctrlo{1} & \ghost{U} & \qw & \qw\\
     \lstick{\ket{0}} & \gate{R} & \qw & \ctrlo{-1} & \gate{R^\dagger} & \ctrlo{-1} & \gate{R} & \qw & \qw
     \gategroup{1}{4}{3}{7}{2em}{.}\\
     &&&&& \raisebox{-2em}{\normalfont \hspace{-2em} Repeat sequentially $k$ times}\\
     \vspace{2em}
    }
    \]
\end{lemma}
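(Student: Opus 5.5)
The plan is to reduce to the standard two-dimensional picture of amplitude amplification and check that each iteration of the dotted block is a product of two reflections, so that after $k$ iterations the state sits exactly on the good vector.

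\textbf{Step 1: the modified state preparation.} Write $W \coloneqq U \otimes R$, acting on the $n$ data qubits, the flag qubit, and the extra qubit. Then
\[
W\ket{0^n}\ket{0}\ket{0} = b\ket{\psi}\ket{0}\ket{0} + \sqrt{1-b^2}\,\ket{\mathrm{bad}},
\]
where $b = a\cdot \braket{0|R}{0} = \sin\theta$ and $\ket{\mathrm{bad}}$ is orthogonal to the good subspace $\Pi = I \otimes \ketbra{00}{00}$. A global phase on $R$ can be absorbed into $\ket{\mathrm{bad}}$, so I will assume the relevant entry is real, as in the hypothesis. The point of $R$ is to dampen the success amplitude from $a$ to exactly $\sin\theta$.

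\textbf{Step 2: reading the circuit as Grover iterations.} The doubly-open-controlled gate on the flag and extra qubits is a $Z$-type gate, namely $I - 2\Pi$ up to a sign convention. The open-controlled gate spanning all $n+2$ qubits is $I - 2\ketbra{0^{n+2}}{0^{n+2}}$, again up to sign. Conjugating this second gate by $W$ gives $I - 2W\ketbra{0}{0}W^\dagger$. So the dotted block equals $W(I-2\ketbra{0}{0})W^\dagger(I-2\Pi)$, possibly times $-1$, and each iteration is the Grover operator $G = -S_0 S_\Pi$ in the usual notation.

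\textbf{Step 3: the rotation count.} The standard analysis of \cite{BHMT02-amp} shows that $G$ rotates by $2\theta$ in the plane spanned by $\ket{\psi}\ket{00}$ and $\ket{\mathrm{bad}}$. Starting from angle $\theta$, after $k$ iterations the angle is $(2k+1)\theta = \pi/2$, by the choice $\theta = \pi/(4k+2)$. The state is therefore exactly $\pm\ket{\psi}\ket{0}\ket{0}$. Tracking the $-1$ factor per iteration gives the sign $(-1)^k$.

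\textbf{Consistency checks.} Two conditions must hold:
\begin{enumerate}
\item Since $k \ge \pi/(4\arcsin a) - 1/2$, we get $\theta \le \arcsin a$, so $\sin\theta/a \le 1$ and a valid $R$ exists.
\item $k \ge 0$ is an integer, so the rotation count is well defined.
\end{enumerate}

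\textbf{Main obstacle.} The only fiddly step is bookkeeping the signs so that the final phase is exactly $(-1)^k$. The open controls implement $-(I-2P)$ rather than $I-2P$ in some conventions, and these signs must be checked against the figure. If the sign comes out wrong, it is a global phase, which I would correct by tracking it explicitly in the two-dimensional invariant subspace computation.
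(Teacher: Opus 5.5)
Your proposal is correct. The paper does not prove this lemma; it imports it as a black box from \cite{MGB26-preparation}, so there is no in-paper argument to compare against. Your plan is the standard proof of exact amplitude amplification: use the extra qubit to damp the success amplitude from $a$ to exactly $\sin\theta$, then run the two-reflection analysis of \cite{BHMT02-amp}. Your check that $\theta \le \arcsin a$, so that a valid $R$ exists, is also right.

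There are two things to tighten. First, the sign hedging is unnecessary, and ``if the sign comes out wrong I would correct it'' is not an available move, because the lemma asserts a specific phase. The open-controlled phase gates in the figure are exactly $I - 2\Pi$ and $I - 2\ketbra{0^{n+2}}{0^{n+2}}$. Write $\ket{g} = \ket{\psi}\ket{0}\ket{0}$ and $\ket{s} = W\ket{0^{n+2}} = \sin\theta\,\ket{g} + \cos\theta\,\ket{b}$. Then each repeated block is exactly $G = (I - 2\ketbra{s}{s})(I - 2\Pi)$, which is $-Q$ in the notation of \cite{BHMT02-amp}. A one-line computation gives $G\ket{s} = -(\sin 3\theta\,\ket{g} + \cos 3\theta\,\ket{b})$. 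So $G$ acts on $\mathrm{span}\{\ket{g},\ket{b}\}$ as minus the rotation by $2\theta$, and $G^k\ket{s} = (-1)^k\ket{g}$ because $(2k+1)\theta = \pi/2$. That settles the sign outright.

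Second, your remark that a global phase on $R$ ``can be absorbed into $\ket{\mathrm{bad}}$'' is wrong. Replacing $R$ by $e^{i\varphi}R$ leaves every block unchanged, since $W$ and $W^\dagger$ both appear in it. However, the initial $W$ still multiplies the output by $e^{i\varphi}$. This is harmless only because the hypothesis fixes $\braket{0|R}{0}$ to be the positive real $\sin\theta/a$, and that is exactly what pins the output phase to $(-1)^k$.
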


Below, we show that oracle access to $\calU\Entry\Phase$ allows us to prepare the columns of $\calU$ in polynomial space.

\begin{lemma}[Columns computable from entries]
    \label{lem:entry_to_column}
    $\calU\Column \in \uPSPACE^{\calU\Entry\Phase}$.
\end{lemma}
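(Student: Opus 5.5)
The plan is to build a space-efficient \emph{exact} state-preparation circuit for the column $U_n\ket{x}$, coherently controlled on $x$, using only exact queries to phase encodings derived from $\calU\Entry\Phase$. $\calU\Column$ is then a reflection conjugated by this preparation circuit. Write $u_y \coloneqq \braket{y|U_n}{x}$.

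First, I will produce the right phase encodings. Let $g(v) \coloneqq -\arcsin(v)$ on $[-1,1]$, which takes values in $[-\pi/2,\pi/2]$. It has an FPTEAS, and since $\arcsin$ is $\tfrac12$-Hölder it has polylogarithmic modulus of continuity. Because $\calU\Entry\Phase$ is trivially in $\uPSPACE^{\calU\Entry\Phase}$, \Cref{lem:continuity_composition} gives $(g\circ \calU\Entry)\Phase \in \uPSPACE^{\calU\Entry\Phase}$. By \Cref{prop:tidy_low}, it then suffices to build a polynomial-space circuit that uses $(g\circ\calU\Entry)\Phase$ (with controlled versions and inverses, which oracle circuits may use) as an exact oracle. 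Note that $\cos(\pi/2 + g(v)) = v$. So a Hadamard-test/LCU gadget produces amplitude exactly $\Re u_y$ on a flag outcome: put an ancilla in $\ket{+}$, apply the phase $e^{\pm i(\pi/2 + g)}$ controlled on the ancilla (the fixed $e^{i\pi/2}$ is an FPTEAS gate, and the $\pm$ comes from the oracle and its inverse), and apply $H$. Doing the same for $\Im u_y$, and combining the two with one more ancilla carrying weights $1/\sqrt2$ and $i/\sqrt2$, yields a unitary $B$ with $(\bra{y}\otimes\bra{0})B\ket{y}\ket{0} = u_y/2$ in the flag-$0$ block. Here $\ket{0}$ abbreviates the $O(1)$ flag ancillas, and $B$ is diagonal in $y$ and controlled on $x$.

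Next, I will amplify exactly. Let $W \coloneqq B(H^{\otimes n}\otimes I)$ acting on $\ket{0^n}\ket{0}$. Then $(I\otimes\ketbra{0}{0})W\ket{0^n}\ket{0} = 2^{-n/2}\cdot\tfrac12 \sum_y u_y\ket{y}\ket{0} = a\,(U_n\ket{x})\ket{0}$ with $a = 2^{-n/2-1}$. This holds because $\sum_y\abs{u_y}^2 = 1$, and crucially $a$ is known exactly and independent of $x$. Applying \Cref{lem:exact_amp} with this $a$ prepares $(-1)^k (U_n\ket{x})\ket{0}\ket{0}$ exactly, coherently in $x$. The constant $k = O(2^{n/2})$ is computable in polynomial space, and the single-qubit gate $R$ with $\braket{0|R}{0} = \sin\theta/a$ has an FPTEAS entry. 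The resulting circuit has exponentially many gates but polynomially many qubits, so it is space-uniform. The global sign $(-1)^k$ is independent of $x$ and can be removed by a phase gate. Call the resulting controlled preparation $P$, so that $P\ket{x}\ket{0^n}\ket{0} = \ket{x}(U_n\ket{x})\ket{0}$.

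Finally, I will assemble the column reflection. Let $S$ act on the register $\ket{b}\ket{\cdot}$ by applying $H$ to $b$, then $P$ controlled on $b=1$ (with $x$ as the other control), then a $Z$ on $b$. This maps $\ket{x}\ket{0}\ket{0^n}\ket{0}$ to $\ket{x}\ket{\psi_x}\ket{0}$. Then $S(I - 2\ketbra{0}{0})S^\dagger$, where the reflection is about $\ket{0}\ket{0^n}\ket{0}$ on all non-$x$ registers, equals $\sum_x\ketbra{x}{x}\otimes(I-2\ketbra{\psi_x}{\psi_x})$ on the zero-ancilla subspace and leaves the ancillas at zero. It is therefore an exact extension of $V_{2n+1}$, so no error analysis is needed beyond \Cref{prop:tidy_low}.

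The step I expect to be the main obstacle is the first two together: getting an amplitude \emph{exactly} proportional to $u_y$ out of phase encodings, with a success amplitude that is exactly known and independent of $x$. Exactness is what lets \Cref{lem:exact_amp} work without error accumulation over $2^{n/2}$ rounds. The issues are checking the range and continuity conditions for \Cref{lem:continuity_composition}, keeping the LCU phases consistent, and making sure that only oracle gates, their inverses and controlled versions, and FPTEAS gates appear.
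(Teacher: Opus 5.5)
Your proposal is correct and follows essentially the same route as the paper. Both proofs obtain an $\arcsin$ phase encoding via \Cref{lem:continuity_composition}, use an interference gadget so that the flagged branch of a uniform superposition over $y$ is exactly $2^{-(n+2)/2}\ket{x}U_n\ket{x}$ (independent of $x$), then apply exact amplitude amplification (\Cref{lem:exact_amp}) and self-lowness (\Cref{prop:tidy_low}). The differences are cosmetic: you use a Hadamard test where the paper uses the rotation identity $SH\,\mathrm{diag}(e^{-i\theta},e^{i\theta})\,HS^\dagger$, and you conjugate a reflection by the preparation of $\ket{\psi_x}$ where the paper uses a controlled-$C_n^\dagger$, conditional-flip, controlled-$C_n$ circuit; your explicit removal of the $(-1)^k$ sign is a detail the paper's proof leaves implicit.
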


In some sense, this lemma is not new; it is already known that states can be prepared in polynomial space with access to the entries~\cite{GR02-state-prep,Aar16-barbados}.
However, these algorithms generally require access to the \textit{ratios} of sums involving entries, which cannot be computed stably from approximate entries.
This does not substantially affect these algorithms' overall correctness, because the numerically unstable computations involve negligible fractions of the superposition.
Still, accounting for the error in each step is somewhat cumbersome.
We take a different approach based on amplitude amplification, which has the benefit that we do not have to directly account for error anywhere in the proof.
Indeed, the proof that follows illustrates how simple our definitions make it to compose unitary algorithms.

\begin{proof}[Proof of \Cref{lem:entry_to_column}]
    We claim that it suffices to implement any unitary $C_n$ which satisfies
    \begin{equation}
    \label{eq:qram}
    C_n\ket{x}\ket{0^n}\ket{0^a} = \ket{x}U_n\ket{x}\ket{0^a}
    \end{equation}
    on every $x \in \{0,1\}^n$.
    This is because the circuit
    \begin{equation}
    \label{eq:qram_canonical_ckt}
    \Qcircuit @C=1em @R=1em {
     & \multigate{2}{C_n^\dagger} & \qw & \multigate{2}{C_n} & \qw\\
     & \ghost{C_n^\dagger} & \ctrlo{1} & \ghost{C_n} & \qw\\
     & \ghost{C_n^\dagger} & \ctrlo{1} & \ghost{C_n} & \qw\\
     & \ctrl{-1} & \targ & \ctrl{-1} & \qw
    }
    \end{equation}
    is an $a$-qubit extension of $V_{2n + 1}$ (with the third register acting as the ancilla).
    To build such a $C_n$, our strategy will be to first design a circuit that uses postselection to prepare $U_n\ket{x}$ with an inverse-exponential success probability.
    Then, we will use amplitude amplification to increase the success probability to $1$, which will crucially rely on having a success probability that is independent of $U_n$ or $\ket{x}$.

    Let $g \coloneqq \arcsin \circ\; \calU\Entry$, and consider the circuit below:
    \[
    \Qcircuit @C=1em @R=1em {
     \lstick{\ket{x}} & \qw & \qw & \multigate{2}{g\Phase} & \multigate{2}{g\Phase^\dagger} & \qw & \qw & \qw\\
     \lstick{\ket{0^n}} & \gate{H^{\otimes n}} & \qw &  \ghost{g\Phase} & \ghost{g\Phase^\dagger} & \qw & \qw & \qw\\
     \lstick{\ket{0}} & \gate{H} & \qw & \ghost{g\Phase} & \ghost{g\Phase^\dagger} & \qw & \qw & \qw\\
     \lstick{\ket{0}} & \gate{S^\dagger} & \gate{H} & \ctrl{-1} & \ctrlo{-1} & \gate{H} & \gate{S} & \qw\\
    }
    \]
    The rough idea is that the circuit applies a rotation on the $\ket{0}/\ket{1}$ plane of the last qubit conditioned on the value that $\calU\Entry$ takes on the first three registers.
    This exploits the circuit identity:
    \[
    \begin{bmatrix}
        \cos \theta & -\sin \theta\\
        \sin \theta & \cos \theta
    \end{bmatrix}
    =
    SH \begin{bmatrix}
        e^{-i \theta} & 0\\
        0 & e^{i \theta}
    \end{bmatrix}
    HS^\dagger.
    \]
    Consequently, if we define $a_{xy}$ and $b_{xy}$ to be the real and imaginary parts of $\braket{y}{U_n|x}$, the output state of the above circuit is exactly:
    \[
    \ket{x} \frac{1}{\sqrt{2^{n + 1}}}\sum_{y \in \{0,1\}^n} \ket{y} \left(\sqrt{1 - a_{xy}^2}\ket{00} + a_{xy}\ket{01} + \sqrt{1 - b_{xy}^2}\ket{10} + b_{xy}\ket{11}\right).
    \]
    Now consider what happens when we append a few gates and measure:
    \[
    \Qcircuit @C=1em @R=1em {
     \lstick{\ket{x}} & \qw & \qw & \multigate{2}{g\Phase} & \multigate{2}{g\Phase^\dagger} & \qw & \qw & \qw\\
     \lstick{\ket{0^n}} & \gate{H^{\otimes n}} & \qw &  \ghost{g\Phase} & \ghost{g\Phase^\dagger} & \qw & \qw & \qw\\
     \lstick{\ket{0}} & \gate{H} & \qw & \ghost{g\Phase} & \ghost{g\Phase^\dagger} & \gate{S} & \gate{H} & \qw & \meter\\
     \lstick{\ket{0}} & \gate{S^\dagger} & \gate{H} & \ctrl{-1} & \ctrlo{-1} & \gate{H} & \gate{S} & \gate{X} & \meter\\
    }
    \]
    Conditioned on obtaining $\ket{0}$ from both measurements, the unnormalized state on the remaining registers is exactly the state that we want $C_n$ to output:
    \[
    \ket{x} \frac{1}{\sqrt{2^{n + 2}}}\sum_{y \in \{0,1\}^n} (a_{xy} + i b_{xy})\ket{y} = \frac{\ket{x}U_n\ket{x}}{\sqrt{2^{n  + 2}}}.
    \]
    Therefore, we can use exact amplitude amplification (\Cref{lem:exact_amp}) to obtain a space-uniform $g\Phase$-oracle unitary circuit $C_n$ that behaves as desired (\Cref{eq:qram}), as $k = O(\sqrt{2^n})$.
    Note that there is no error dependence: the resulting circuit (\Cref{eq:qram_canonical_ckt}) extends $V_{2n+1}$ exactly, and can be the same circuit for all $\eps$.
    This establishes that $\calU\Column \in \uPSPACE^{g\Phase}$.

    To complete the proof, we note that \Cref{lem:continuity_composition} puts $g\Phase \in \uPSPACE^{\calU\Entry\Phase}$, because the $\arcsin$ function satisfies $\abs{\arcsin x - \arcsin y} \le \pi \sqrt{\abs{x - y}}$.
    Finally, because of \Cref{prop:tidy_low}, we have
    \[
    \calU\Column \in \uPSPACE^{g\Phase} \subseteq \uPSPACE^{\uPSPACE^{\calU\Entry\Phase}} = \uPSPACE^{\calU\Entry\Phase}.\qedhere
    \]
\end{proof}

Finally, we note that Rosenthal~\cite{Ros21-unitary} showed how to space-efficiently simulate a unitary with access to an oracle that prepares its columns.
In particular:

\begin{lemma}[Unitaries computable from columns]
    \label{lem:column_to_u}
    $\calU \in \uPSPACE^{\calU\Column}$.
\end{lemma}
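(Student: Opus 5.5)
The plan is to turn oracle access to $\calU\Column$ into an exact circuit for $U_n$. The circuit will have exponential length but use only polynomially many qubits, which is allowed in $\uPSPACE$. As in the proof of \Cref{lem:entry_to_column}, no approximation error should arise, so a single circuit will serve for every $\eps$.

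The central object is the family of $(n+1)$-qubit states
\[
\ket{\phi_x} \coloneqq \frac{\ket{0}\ket{x} - \ket{1}U_n\ket{x}}{\sqrt{2}}, \qquad x \in \{0,1\}^n .
\]
These states are orthonormal, because the vectors $\ket{0}\ket{x}$ are orthonormal and so are the vectors $\ket{1}U_n\ket{x}$. Hence the reflection $R \coloneqq I - 2\sum_x \ketbra{\phi_x}{\phi_x}$ is unitary.

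I would first check that $R$ maps $\ket{0}\ket{x}$ to $\ket{1}U_n\ket{x}$ for every $x$. On each plane $\mathrm{span}\{\ket{0}\ket{x}, \ket{1}U_n\ket{x}\}$, the map $R$ reflects about the line orthogonal to the difference of the two basis vectors, so it swaps them. By linearity, $R\ket{0}\ket{\psi} = \ket{1}U_n\ket{\psi}$ for every $\ket{\psi}$. Applying $X$ to the flag qubit afterward gives an extension of $U_n$ that uses a single ancilla qubit, initialized to $\ket{0}$ and returned to $\ket{0}$.

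Because the $\ket{\phi_x}$ are orthonormal, the reflections $I - 2\ketbra{\phi_x}{\phi_x}$ pairwise commute, and their product over all $x$ equals $R$. So it suffices to implement each individual reflection space-efficiently and apply them in sequence, with a classical $n$-bit counter ranging over $x$. This product has length $2^n$, which is permitted in polynomial space because the space-uniform machine can emit the gates one at a time while looping over the counter.

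For a fixed $x$, I write the counter value $\ket{x}$ into a fresh $n$-qubit register $\mathsf{X}$. Next I build a circuit $G$ that maps $\ket{x}_\mathsf{X}\ket{0}\ket{0^n} \mapsto \ket{x}_\mathsf{X}\ket{\phi_x}$:
\begin{enumerate}
\item Put the flag qubit into $\ket{-}$.
\item Controlled on the flag being $0$, CNOT the contents of $\mathsf{X}$ into the data register.
\item Controlled on the flag being $1$, apply $V_{2n+1}$ to $\mathsf{X}$ together with an auxiliary qubit and the data register. By the remark after \Cref{def:column}, this produces $\ket{1}U_n\ket{x}$ in the auxiliary qubit and data register, and the auxiliary qubit is then returned to $\ket{0}$ by a CNOT from the flag.
\end{enumerate}
The controlled applications of $V_{2n+1}$ are legitimate because \Cref{def:oracle_circuit} grants controlled oracle queries.

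The conjugation $G\,(I - 2\ketbra{0^{n+1}}{0^{n+1}})\,G^\dagger$, acting on the data registers while $\mathsf{X}$ holds the classical value $x$, implements exactly $I - 2\ketbra{\phi_x}{\phi_x}$. After this, I erase the classical value $x$ from $\mathsf{X}$ and move to the next value of the counter.

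The main obstacle, and the reason for this design, is that $\ket{\phi_x}$ appears entangled with $\ket{x}$ in the register $\mathsf{X}$. One cannot erase $x$ coherently from $\ket{1}U_n\ket{x}$ without already having $U_n^\dagger$. Looping over $x$ classically, instead of putting $\mathsf{X}$ in superposition, avoids this obstacle entirely. The only remaining point to verify carefully is that the $G$ built from $V_{2n+1}$ acts correctly on the zero-ancilla inputs that actually arise, which follows from the remark after \Cref{def:column}. The total space is $O(n)$ qubits plus an $n$-bit counter, and the gate sequence is generated uniformly by a polynomial-space machine. Therefore $\calU \in \uPSPACE^{\calU\Column}$, and indeed the circuit extends $U_n$ with zero error.
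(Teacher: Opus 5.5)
Your construction is correct up to one repairable detail, and it takes a genuinely different route from the paper.

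\textbf{The detail to repair.} The reflection you conjugate by $G$ must be about the all-zero state of every register $G$ acts on other than $\mathsf{X}$. That means flag, auxiliary qubit, and data, i.e.\ $n+2$ qubits, not only the $n+1$ flag and data qubits.

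The problem is that your $G$ returns the auxiliary qubit to $\ket{0}$ only on the particular input $\ket{0}\ket{0^n}$. Consider the input $\ket{0}_{\mathrm{flag}}\ket{0}_{\mathrm{aux}}\ket{d}$ with $d \neq 0^n$. The query leaves the flag-$1$ branch unchanged, and the final CNOT then sets the auxiliary qubit to $\ket{1}$.

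Hence $G(I - 2\ketbra{0^{n+1}}{0^{n+1}} \otimes I_{\mathrm{aux}})G^\dagger$ carries an extra term $-2\ketbra{\chi}{\chi}$, where $\ket{\chi} = G\ket{0}_{\mathrm{flag}}\ket{1}_{\mathrm{aux}}\ket{0^n}$. The auxiliary-zero component of $\ket{\chi}$ has squared norm $\frac{1}{2}(1 - \abs{\bra{0^n}U_n\ket{x}}^2)$, which is generally nonzero. So the conjugation does not act as $I - 2\ketbra{\phi_x}{\phi_x}$ on the auxiliary-zero subspace, and the product goes wrong as soon as flag-$1$ amplitude is present.

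Reflecting about $\ket{0^{n+2}}$ instead gives exactly $I - 2\ketbra{\phi_x}{\phi_x} \otimes \ketbra{0}{0}_{\mathrm{aux}}$. This works because the conjugated reflection then depends only on $G\ket{0^{n+2}} = \ket{\phi_x}\ket{0}_{\mathrm{aux}}$; this is the precise form of your remark about zero-ancilla inputs. The product over $x$ becomes $R \otimes \ketbra{0}{0}_{\mathrm{aux}} + I \otimes \ketbra{1}{1}_{\mathrm{aux}}$. The rest of your argument then yields an exact extension of $U_n$ with $n+2$ ancilla qubits.

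\textbf{Comparison with the paper.} The paper defers to Rosenthal's Theorem~2.2. That argument first prepares $\ket{x}U_n\ket{x}$ coherently on a superposition of $x$. It then erases the first register by exact amplitude amplification, using $\calU\Column$ as a reflection oracle. The key point there is that the starting success amplitude $2^{-n/2}$ is independent of $x$, so the errorless amplification carries through superpositions over $x$.

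You sidestep the erasure problem entirely. Because the uniformity machine enumerates $x$ classically, you only ever prepare $\ket{\phi_x}$ for a known $x$. The identity $I - 2\sum_x \ketbra{\phi_x}{\phi_x} = \ketbra{1}{0}\otimes U_n + \ketbra{0}{1}\otimes U_n^\dagger$ then turns $2^n$ commuting reflections directly into $U_n$.

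What each approach buys:
\begin{itemize}
\item Your route is self-contained and more elementary: it needs nothing beyond reflecting about a prepared state, with no amplitude amplification.
\item It costs $\Theta(2^n)$ queries instead of $O(2^{n/2})$. That is harmless in polynomial space, but it is why Rosenthal's version is the one that matters for time and query bounds.
\item Your route uses controlled queries to $V_{2n+1}$, which \Cref{def:oracle_circuit} allows. The Grover-style erasure needs only uncontrolled queries, since $V_{2n+1}$ conjugating a reflection about $\ket{0}\ket{0^n}$ already reflects about $\ket{x}\ket{1}U_n\ket{x}$ in each $x$-block.
\end{itemize}
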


\begin{proof}
    This is exactly \cite[Theorem 2.2]{Ros21-unitary}, so we will not reproduce the proof entirely.
    The strategy can be summarized roughly as follows: first, use the $\calU\Column$ oracle to map basis states $\ket{x}\ket{0^n}$ to $\ket{x}U_n\ket{x}$.
    Then, use $\calU\Column$ as a reflection oracle to erase the $\ket{x}$ on the first register conditioned on the second register being $U_n\ket{x}$.
    The erasure step uses exact amplitude amplification (\Cref{lem:exact_amp}) and is therefore errorless.
    One can also easily verify that Rosenthal's circuit satisfies \Cref{def:unitaryP_tidy_oracle}.
\end{proof}

Combining the above lemmas shows that there are many equivalent characterizations of $\uPSPACE$ in terms of unitary entries:

\begin{theorem}[Characterization of unitary polynomial space]
    \label{thm:entrywise_characterization}
    The following are equivalent for any unitary language $\calU$ and unitary oracle $\calO$:
    \begin{enumerate}
        \item $\calU \in \uPSPACE^{\calO}$.\label{item:U}
        \item $\calU\Entry\Phase \in \uPSPACE^{\calO}$.\label{item:UEntry}
        \item $\calU\Column \in \uPSPACE^{\calO}$.\label{item:UColumn}
        \item $\calU\Entry$ has an FPUSEAS relative to $\calO$.\label{item:FPUSEAS}
        \item $\calU\Entry\Apx \in \BQuPSPACE^\calO$.\label{item:Apx}
    \end{enumerate}
    If $\calO$ is additionally a (classical) language $L$, then the above are equivalent to:
    \begin{enumerate}
        \setcounter{enumi}{5}
        \item $\calU\Entry$ has an FPSEAS relative to $L$.\label{item:FPSEAS}
        \item $\calU\Entry\Apx \in \PSPACE^L$.\label{item:PSPACE}
    \end{enumerate}
\end{theorem}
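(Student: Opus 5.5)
We establish the equivalences as a cycle of implications among items~\ref{item:U}--\ref{item:UColumn}, and then attach the remaining items using results already proved about encoding functions. Self-lowness (\Cref{prop:tidy_low}) does most of the work: whenever $\calA \in \uPSPACE^{\calB}$ and $\calB \in \uPSPACE^{\calO}$, it gives $\calA \in \uPSPACE^{\calO}$.

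The core cycle runs as follows.
\begin{itemize}
\item \textbf{\ref{item:U} $\Rightarrow$ \ref{item:UEntry}.} \Cref{lem:u_to_entry} gives $\calU\Entry\Phase \in \uPSPACE^{\calU}$. Since $\calU \in \uPSPACE^\calO$, self-lowness yields $\calU\Entry\Phase \in \uPSPACE^{\calO}$.
\item \textbf{\ref{item:UEntry} $\Rightarrow$ \ref{item:UColumn}.} \Cref{lem:entry_to_column} gives $\calU\Column \in \uPSPACE^{\calU\Entry\Phase}$. Self-lowness again puts $\calU\Column \in \uPSPACE^\calO$.
\item \textbf{\ref{item:UColumn} $\Rightarrow$ \ref{item:U}.} \Cref{lem:column_to_u} gives $\calU \in \uPSPACE^{\calU\Column}$. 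One more application of self-lowness closes the cycle.
\end{itemize}

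Next we attach the remaining unrelativized items.
\begin{itemize}
\item \textbf{\ref{item:UEntry} $\Leftrightarrow$ \ref{item:FPUSEAS}.} This is \Cref{lem:rotation_to_fqpseas} applied to $f = \calU\Entry$. That lemma needs $f$ to take values in $[-\pi/2,\pi/2]$, and $\calU\Entry$ takes values in $[-1,1] \subset [-\pi/2,\pi/2]$.
\item \textbf{\ref{item:FPUSEAS} $\Leftrightarrow$ \ref{item:Apx}.} This is \Cref{prop:fapx_iff_fp}. Its boundedness hypothesis $\abs{f(x)} \le 2^{\poly(|x|)}$ holds automatically because the entries of a unitary have modulus at most $1$.
\end{itemize}

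For a classical language $L$, two further equivalences hold.
\begin{itemize}
\item \textbf{\ref{item:FPUSEAS} $\Leftrightarrow$ \ref{item:FPSEAS}.} This is \Cref{cor:fpuseas_in_pspace}.
\item \textbf{\ref{item:Apx} $\Leftrightarrow$ \ref{item:PSPACE}.} This follows from \Cref{prop:bqpspace}, which gives $\BQuPSPACE^L = \PSPACE^L$.
\end{itemize}
One point needs care here. The oracle model in \Cref{prop:fapx_iff_fp} and item~\ref{item:Apx} allows the full set of unitary-oracle gates, including conjugates, transposes, and controlled versions of $L$. For the unitary embedding of a classical language, however, these all reduce to ordinary queries. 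The embedding is a real symmetric permutation matrix, so $O^* = O^\trans = O^\dagger = O$. The controlled versions can be simulated with one query and Toffoli gates.

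The main obstacle is making sure each lemma is invoked at the right level of relativization. \Cref{lem:u_to_entry,lem:entry_to_column,lem:column_to_u} are stated relative to the specific oracles $\calU$, $\calU\Entry\Phase$, and $\calU\Column$, not relative to $\calO$. Transferring them to $\calO$ therefore depends on \Cref{prop:tidy_low} holding for arbitrary unitary oracles $\calO$, which it does as stated. No error accounting is needed at any step, because the intermediate languages are treated as exact oracles and self-lowness handles the approximation error.
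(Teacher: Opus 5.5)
Your proposal is correct and follows the paper's proof exactly: the cycle among items 1--3 via the three lemmas plus self-lowness, then the phase-estimation lemma, the $f\Apx$ proposition, the FPUSEAS/FPSEAS corollary, and $\BQuPSPACE^L = \PSPACE^L$ for the remaining items. Your remark about the oracle gate set for a classical $L$ is a detail the paper leaves implicit: its conjugate, transpose and inverse queries equal the plain query, and its controlled queries can be simulated with ordinary ones.
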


\begin{proof}
    In light of \Cref{prop:tidy_low}, the equivalence between \Cref{item:U,item:UEntry,item:UColumn} follows from   \Cref{lem:u_to_entry,lem:entry_to_column,lem:column_to_u}.
    \Cref{lem:rotation_to_fqpseas} implies the equivalence between \Cref{item:UEntry,item:FPUSEAS}, and \Cref{prop:fapx_iff_fp} implies the equivalence between \Cref{item:FPUSEAS,item:Apx}.
    The equivalence between \Cref{item:FPUSEAS,item:FPSEAS} is exactly \Cref{cor:fpuseas_in_pspace}.
    Lastly, \Cref{prop:bqpspace} establishes an equivalence between \Cref{item:Apx,item:PSPACE}.
\end{proof}

\begin{remark}
    \label{rmk:pspace_purification}
    We can now see why the requirement of a unitary implementation in the definition of $\puPSPACE$ (\Cref{def:unitaryP_garbage}) is without loss of generality.
    Consider the following chain of implications:
    \begin{align*}
        \calU \in \puPSPACE
        &\iff \calU \otimes \calU^* \in \uPSPACE && (\mathrm{\Cref{thm:u_garbage_iff_u_ustar_tidy}})\\
        &\iff (\calU \otimes \calU^*)\Entry\Apx \in \BQuPSPACE && (\mathrm{\Cref{thm:entrywise_characterization}})\\
        &\iff (\calU \otimes \calU^*)\Entry\Apx \in \BQPSPACE && (\mathrm{\Cref{prop:bqpspace}}).
    \end{align*}
    Note that $U_n \otimes U_n^*$ is the matrix representation on the operator basis for the channel $\Lambda_n(\rho) = U_n\rho U_n^\dagger$, and any algorithm implementing $\Lambda_n$ to arbitrarily small error can be used to approximate the entries of its matrix representation with polynomial space overhead.
    Therefore, alternative definitions of $\puPSPACE$ that allow implementation by general nonunitary circuits would coincide with \Cref{def:unitaryP_garbage}.
\end{remark}

\subsection{Erasing garbage}

We conclude this section by using \Cref{thm:entrywise_characterization} to prove a (near) equivalence between $\uPSPACE$ and $\puPSPACE$.
As we saw before, the example
\[
\calU \coloneqq \{(-1)^{HALT(n)}I^{\otimes n}\}_{n \in \N},
\]
establishes that $\uPSPACE \neq \puPSPACE$: there is simply no way for $\uPSPACE$ to ``know'' the right global phase.
The next theorem establishes that issues of global phase are the \textit{only} obstruction, in the sense that every unitary in $\puPSPACE$ is contained in $\uPSPACE$ up to a global phase.
Consequently, we show that one can generically erase the garbage produced when implementing a unitary in polynomial \textit{space}, even though we do not know whether the same is possible in polynomial \textit{time}.
In our view, this justifies exclusively using the tidy definition of unitary polynomial space in any subsequent investigation on the subject, because it is without loss of generality to pick a garbage-free implementation.

\begin{theorem}
    \label{thm:projunitarypspace_equiv}
    For every language $L$, $\calU \in \puPSPACE^L$ if and only if there exists a $\calV \in \uPSPACE^L$ and a sequence $\{\phi_n \in \C\}_{n \in \N}$ such that $U_n = \phi_n V_n$.
\end{theorem}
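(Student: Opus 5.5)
The reverse direction is immediate: if $\calV \in \uPSPACE^L$ with circuits $C_{n,\eps}$ close to extensions $\overline{V}$ of $V_n$, then $\overline{V}$ is also a dilation of $U_n = \phi_n V_n$ with the $0$-qubit phase absorbed into the garbage. Concretely, $\overline{V}\ket{\psi}\ket{0^a} = (U_n\ket{\psi})(\phi_n^*\ket{0^a})$. So the same circuits witness $\calU \in \puPSPACE^L$.

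For the forward direction, I will use the entrywise characterization. By the relativized version of \Cref{thm:u_garbage_iff_u_udag_tidy}, $\calU \in \puPSPACE^L$ gives $\calU \otimes \calU^\dagger \in \uPSPACE^L$. Here $L$ is a classical oracle, so $L \otimes L^\dagger$ is simulable with $L$ queries and no extra relativization is needed; alternatively the unrelativized proof goes through verbatim with $L$-oracle circuits, since classical queries are unaffected by the conjugations. Then \Cref{thm:entrywise_characterization} says the entries
\[
\bra{y}U_n\ket{x}\,\overline{\bra{y'}U_n\ket{x'}}
\]
have an FPSEAS relative to $L$. These products are invariant under $U_n \mapsto \phi U_n$, and they determine $U_n$ up to phase. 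I will define $V_n$ canonically: let $x^*$ (paired with some $y^*$) be the lexicographically first pair for which $\abs{\bra{y^*}U_n\ket{x^*}}$ is nonzero, and choose the phase $\phi_n$ so that $\bra{y^*}V_n\ket{x^*}$ is real and positive. Then
\[
\bra{y}V_n\ket{x} = \frac{\bra{y}U_n\ket{x}\,\overline{\bra{y^*}U_n\ket{x^*}}}{\abs{\bra{y^*}U_n\ket{x^*}}}.
\]
It then suffices to give an FPSEAS relative to $L$ for $\calV\Entry$ and invoke \Cref{thm:entrywise_characterization} again to get $\calV \in \uPSPACE^L$.

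The main obstacle is numerical stability: the denominator $\abs{\bra{y^*}U_n\ket{x^*}}$ may be tiny, and deciding exactly which entry is first nonzero is not computable from approximations. I would avoid this by choosing the reference entry robustly. Some entry in column $0^n$ has modulus at least $2^{-n/2}$, since the column is a unit vector. So I fix the column $x^* = 0^n$ and pick $y^*$ to be the first $y$ whose \emph{approximate} squared modulus $\abs{\bra{y}U_n\ket{0^n}}^2$ (itself a diagonal entry of $U_n\otimes U_n^\dagger$-type products) exceeds a fixed threshold such as $2^{-n-1}$. I compute this approximation to precision $2^{-n-3}$, deterministically, in polynomial space relative to $L$ (an FPSEAS is a deterministic algorithm). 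The selection must be made deterministically and independently of the target precision $\eps$, so that $V_n$ is a single well-defined unitary. I would enforce this by fixing the selection precision at $2^{-n-3}$ for all $\eps$, so the same $y^*$ is always chosen and $\phi_n$ is well defined. Any chosen $y^*$ then satisfies $\abs{\bra{y^*}U_n\ket{0^n}} \ge 2^{-n/2-1}$, so the denominator is bounded below by $2^{-\poly(n)}$.

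With the denominator bounded below by $2^{-n/2-1}$, the ratio above is Lipschitz with constant $2^{O(n)}$ in the approximated quantities. Computing the numerator products and the denominator to precision $\eps 2^{-O(n)}$ therefore yields $\bra{y}V_n\ket{x}$ to precision $\eps$ in space $\poly(n,\log 1/\eps)$, which completes the argument.
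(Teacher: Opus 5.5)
Your forward direction is the paper's argument. You pass to $\calU\otimes\calU^\dagger \in \uPSPACE^L$ and use \Cref{thm:entrywise_characterization} to get an FPSEAS for the phase-invariant products $\bra{y}U_n\ket{x}\overline{\bra{y'}U_n\ket{x'}}$. You then pick a reference entry by thresholding its \emph{approximate} squared modulus, so that it is bounded below by $2^{-O(n)}$, normalize the phase by it, and apply the characterization again. The differences are cosmetic: you search only column $0^n$, while the paper searches all pairs with threshold $2/(3N)$ at precision $1/(3N)$. Your remark that the selection precision must be fixed independently of $\eps$, so that $V_n$ is a single well-defined unitary, makes explicit something the paper leaves implicit.

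The reverse direction has a small but genuine gap: under the paper's definitions, the circuits witnessing $\calV \in \uPSPACE^L$ are not automatically legal witnesses for $\calU \in \puPSPACE^L$. A $\uPSPACE^L$ family may use controlled oracle gates $\controlled L$ (\Cref{def:oracle_circuit}). A projective $L$-oracle circuit may only use $L, L^*, L^\dagger, L^\trans$ (\Cref{def:projective_oracle_circuit}).

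So you must simulate each controlled query with uncontrolled ones, which works because $L$ is classical. On input $\ket{c}\ket{x}\ket{b}$, take one extra ancilla in $\ket{0}$ and query $L$ to write $L(x)$ into the ancilla. Apply a Toffoli controlled on $c$ and the ancilla, targeting $b$, then query $L$ again to uncompute. The ancilla returns exactly to $\ket{0}$, so the modified circuit is still $\eps$-close to an extension of $V_n$, with one more reused ancilla qubit. Only then does your phase-absorption step give a dilation of $U_n$ with garbage $\phi_n^*\ket{0^{a+1}}$. The uncontrolled variants are harmless, since $L^* = L^\dagger = L^\trans = L$ for a classical oracle. This is exactly the gadget the paper adds at the end of its proof.
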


\begin{proof}
    Suppose $\calU \in \puPSPACE^L$.
    By \Cref{thm:u_garbage_iff_u_udag_tidy_oracle} we have
    \[
    \calU \otimes \calU^\dagger \in \uPSPACE^{L \otimes L} \subseteq \uPSPACE^L.
    \]
    Thus, by the preceding \Cref{thm:entrywise_characterization}, $(\calU \otimes \calU^\dagger)\Entry$ has an FPSEAS relative to $L$.

    We now choose a canonical phase for each $U_n$.
    Let $N=2^n$.
    For $i,j\in\{0,1\}^n$, write
    \[
    q_{ij} \coloneqq \abs{\bra{j}U_n\ket{i}}^2.
    \]
    Notice that $q_{ij}$ is an entry of $U_n \otimes U_n^\dagger$.
    Using the FPSEAS for entries of $U_n \otimes U_n^\dagger$, approximate each $q_{ij}$ to additive error at most $1/(3N)$, and let $(i_n,j_n)$ be the lexicographically first pair for which the approximation is at least $2/(3N)$.
    Such a pair exists because the columns of $U_n$ are normalized to $1$, so there exists $(i,j)$ with $q_{ij} \ge 1/N$.
    For the selected pair $(i_n,j_n)$ we have
    \[
    q_{i_nj_n} \ge \frac{1}{3N}.
    \]
    Define
    \[
    \phi_n \coloneqq \frac{\bra{j_n}U_n\ket{i_n}}{\abs{\bra{j_n}U_n\ket{i_n}}},
    \qquad
    V_n \coloneqq \phi_n^* U_n.
    \]
    Then $V_n$ is unitary and $U_n = \phi_n V_n$.
    It remains to show that $\calV=\{V_n\}_{n\in\N}$ lies in $\uPSPACE^L$.

    By \Cref{thm:entrywise_characterization}, it suffices to show that $\calV\Entry$ has an FPSEAS relative to $L$.
    Fix $x,y\in\{0,1\}^n$.
    We have
    \[
    \bra{y}V_n\ket{x}
    =
    \phi_n^*\bra{y}U_n\ket{x}
    =
    \frac{\bra{i_n}U_n^\dagger \ket{j_n}\bra{y}U_n\ket{x}}{\abs{\bra{j_n} U_n \ket{i_n}}}.
    \]
    Notice that the numerator is an entry of $U_n \otimes U_n^\dagger$, and the denominator is the square root of the entry $q_{i_nj_n}$.
    Given target error parameter $\eps > 0$, let
    \[
    \eta \coloneqq \frac{\eps}{2\sqrt{3N}}, \qquad \delta \coloneqq \frac{\eps}{2\sqrt{2}(3N)^{3/2}}.
    \]
    Use the FPSEAS to estimate the numerator to error $\eta$ and $q_{i_nj_n}$ to error $\delta$.
    Assume without loss of generality that we truncate the estimate of the numerator to have absolute value at most $1$.
    Then the following claim establishes that the overall estimate is within $\eps$ of $\bra{y}V_n\ket{x}$:

    \begin{claim}
        Suppose $b > 0$, $\abs{a - a'} \le \eta$, and $\abs{b - b'} \le \delta \le b / 2$.
        Then
        \[
        \abs[\Big]{\frac{a}{\sqrt{b}} - \frac{a'}{\sqrt{b'}}} \le \frac{\eta}{\sqrt{b}} + \frac{\sqrt{2} \delta}{b^{3/2}}\abs{a'}.
        \]
    \end{claim}

    \begin{proof}[Proof of claim]
        \begin{align*}
            \abs[\Big]{\frac{a}{\sqrt{b}} - \frac{a'}{\sqrt{b'}}}
            &= \left\vert\left(\frac{a}{\sqrt{b}} - \frac{a'}{\sqrt{b}} \right) + \left(\frac{a'}{\sqrt{b}} - \frac{a'}{\sqrt{b'}} \right)\right\vert\\
            &\le \abs[\Big]{\frac{a - a'}{\sqrt{b}}} + \abs{a'}\abs[\Big]{\frac{1}{\sqrt{b}} - \frac{1}{\sqrt{b'}}}\\
            &\le \frac{\eta}{\sqrt{b}} + \frac{\sqrt{2} \delta}{b^{3/2}}\abs{a'}
        \end{align*}
        The last line follows from the mean value theorem: we know that $b' \ge b/2 > 0$, and therefore the function $t \to t^{-1/2}$ has derivative bounded by
        \[
        \frac{1}{2}(b/2)^{-3/2}
        =
        \frac{\sqrt{2}}{b^{3/2}}
        \]
        on the interval between $b$ and $b'$.
    \end{proof}

    The index pair $(i_n,j_n)$ itself can be found in space $\poly(n)$ by the lexicographic search above, and running the FPSEAS for $(\calU\otimes\calU^\dagger)\Entry$ and performing the resulting arithmetic takes space $\poly(n, \log 1/\eps)$.
    Thus we have exhibited an FPSEAS for $\calV\Entry$, and applying \Cref{thm:entrywise_characterization} once more gives $\calV \in \uPSPACE^L$.

    The converse is immediate: the $\uPSPACE^L$ implementation yields a $\puPSPACE^L$ implementation in which we incorporate the phase $\phi_n$ as a $0$-qubit garbage state.
    And even though $\uPSPACE^L$ allows for controlled queries to $L$ and $\puPSPACE^L$ does not, we can simulate controlled queries to $L$ with projective queries:
    \[
    \Qcircuit @C=1em @R=1em {
    \lstick{\ket{c}} & \qw & \ctrl{3} & \qw & \qw & \ket{c}\\
    \lstick{\ket{x}} & \multigate{1}{L} & \qw & \multigate{1}{L} & \qw & \ket{x}\\
    \lstick{\ket{0}} & \ghost{L} & \ctrl{1} & \ghost{L} &  \qw & \ket{0}\\
    \lstick{\ket{b}} & \qw & \targ & \qw & \qw & \rstick{\hspace{-1.1em}\ket{b \oplus c \cdot L(x)}}
    }
    \]
    Ignoring the ancilla qubit that resets to $\ket{0}$, the circuit above applies controlled-$L$ on the first, second, and fourth registers.
\end{proof}

One might wonder why \Cref{thm:projunitarypspace_equiv} requires the oracle to be a classical language rather than a more general unitary oracle.
In fact, we show that our theorem provably \textit{fails} with unitary oracles.
The proof is closely related to work by Gavorov\'{a}, Seidel, and Touati~\cite{gst24}, which uses topology to show that one cannot generically lift uncontrolled queries of a unitary into controlled queries.
Their key lemma (stated below) shows that there is no continuous way to assign a global phase to every projective unitary.

\begin{lemma}[{\cite[Lemma 3]{gst24}}]
    \label{lem:no_global_phase}
    Fix $N \ge 2$ and let $U(N)$ be the $N$-dimensional unitary group, and $PU(N)$ be the $N$-dimensional projective unitary group.
    Let $\pi : U(N) \to PU(N)$ be the quotient map.
    There is no continuous map $\sigma: PU(N) \to U(N)$ such that $\pi(\sigma(g)) = g$ for every $g \in PU(N)$.
\end{lemma}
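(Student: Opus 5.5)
We argue by contradiction using the fundamental group. Suppose a continuous section $\sigma : PU(N) \to U(N)$ exists. Compose it with the determinant to get a continuous map $D \coloneqq \det \circ \sigma : PU(N) \to U(1)$. Our plan is to find a loop in $PU(N)$ whose image under $D$ has winding number not divisible by $N$. We then contradict the fact that $\pi_1(PU(N))$ is torsion.

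\textbf{Step 1 (topological input).} The restriction $\pi|_{SU(N)} : SU(N) \to PU(N)$ is surjective, since every unitary is a scalar times a special unitary. Its fibers are the cosets of the center $\{\omega^k I\}$, where $\omega = e^{2\pi i/N}$, so it is an $N$-fold covering map. Because $SU(N)$ is simply connected, $\pi_1(PU(N)) \cong \Z/N\Z$. In particular, every loop $\ell$ in $PU(N)$ satisfies that the $N$-fold concatenation $\ell^N$ is null-homotopic. Hence $D_*: \pi_1(PU(N)) \to \pi_1(U(1)) \cong \Z$ is a homomorphism from a finite group to $\Z$, and so it is zero. Concretely, $D \circ \ell$ has winding number $0$ for every loop $\ell$.

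\textbf{Step 2 (a scalar discrepancy function).} For $g \in SU(N)$, the unitaries $\sigma(\pi(g))$ and $g$ have the same image in $PU(N)$. So there is a unique scalar $h(g) \in U(1)$ with $\sigma(\pi(g)) = h(g)\, g$, and $h(g) = \sigma(\pi(g)) g^{-1}$ depends continuously on $g$. Since $\det g = 1$, we get $D(\pi(g)) = h(g)^N$. Since $\pi(\omega g) = \pi(g)$, we also get $h(\omega g)\,\omega = h(g)$, that is, $h(\omega g) = \omega^{-1} h(g)$.

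\textbf{Step 3 (the witnessing loop).} Define the path
\[
p(t) \coloneqq \mathrm{diag}\bigl(e^{2\pi i t/N}, \dots, e^{2\pi i t/N}, e^{2\pi i t(1/N - 1)}\bigr), \qquad t \in [0,1].
\]
This path lies in $SU(N)$ and runs from $I$ to $\omega I$, so $\ell \coloneqq \pi \circ p$ is a loop in $PU(N)$. Lift $h \circ p$ to a continuous angle $\theta(t)$ with $h(p(t)) = e^{i\theta(t)}$. The relation $h(\omega I) = \omega^{-1} h(I)$ forces
\[
\theta(1) - \theta(0) = -\tfrac{2\pi}{N} + 2\pi k
\]
for some integer $k$. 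Hence $D \circ \ell = (h\circ p)^N$ has total angle change $-2\pi + 2\pi N k$, i.e.\ winding number $Nk - 1 \ne 0$, contradicting Step 1.

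The only step needing real care is Step 1: identifying $SU(N) \to PU(N)$ as a covering with simply connected total space, so that $\pi_1(PU(N))$ is finite. This is standard Lie-group topology and we would cite it. Steps 2 and 3 are elementary once $h$ is defined, and the case $N \ge 2$ is used exactly in making $Nk - 1$ nonzero, which is automatic for every integer $k$.
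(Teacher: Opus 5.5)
The paper does not prove this lemma; it quotes it from \cite[Lemma 3]{gst24} and uses it only as a black box for \Cref{lem:no_approximate_projective_section}. Your proposal therefore supplies a self-contained proof where the paper has only a citation, and the proof is correct. The fibers of $SU(N)\to PU(N)$ are the cosets of $\{\omega^k I\}$, and $h$ is continuous with $h(\omega g)=\omega^{-1}h(g)$ and $D(\pi(g))=h(g)^N$. The path $p$ has determinant $1$ and runs from $I$ to $\omega I$, and $Nk-1\neq 0$ exactly because $N\ge 2$.

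The standard functorial proof goes differently. Since $\pi\circ\sigma=\mathrm{id}$, the map $\sigma_*$ would have to embed the nontrivial finite group $\pi_1(PU(N))\cong\Z/N\Z$ into the torsion-free group $\pi_1(U(N))\cong\Z$, which is impossible. Your version replaces the identification of $\pi_1(U(N))$ with an explicit winding computation, so the only nontrivial topological input is that $SU(N)$ is simply connected.

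You can streamline further. Step~1 needs only that $\ell^N$ is null-homotopic, and this is immediate: $\ell^N$ is the image of the closed loop in $SU(N)$ formed by concatenating $p,\omega p,\dots,\omega^{N-1}p$. Along that same loop, the equivariance of $h$ makes each segment contribute angle change $-2\pi/N+2\pi k$. So $h$ itself winds $Nk-1\neq 0$ times around a null-homotopic loop, which gives the contradiction without using $\det$ at all.
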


More generally, we will need the non-existence of any continuous map that \textit{approximately} assigns a global phase to every projective unitary.

\begin{lemma}
\label{lem:no_approximate_projective_section}
Fix $N \ge 2$ and let $\pi : U(N) \to PU(N)$ be the quotient map.
Let $0 \le \eps < 1$.
There is no continuous map $A : PU(N) \to \C^{N \times N}$ such that, for every $g \in PU(N)$, there exists a representative $U \in U(N)$ of $g$ and a phase $\phi \in \C$ with $\abs{\phi} = 1$ for which
\[
\opnorm{A(g) - \phi U} \le \eps.
\]
\end{lemma}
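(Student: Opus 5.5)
We reduce to \Cref{lem:no_global_phase} by turning an approximate continuous phase-assignment $A$ into an exact continuous section. Suppose for contradiction that such an $A$ exists. For each $g \in PU(N)$, the hypothesis says $A(g)$ lies within $\eps$ of the circle $\{\phi U : \abs{\phi} = 1\} = \pi^{-1}(g)$, which is the fiber over $g$. The plan is to retract $A(g)$ back onto its fiber in a way that depends continuously on $g$. That gives a continuous $\sigma$ with $\pi \circ \sigma = \mathrm{id}$, which is impossible.

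\textbf{Key step: projecting onto the fiber.} Fix any representative $U$ of $g$. The natural candidate is
\[
\sigma(g) \coloneqq \frac{\tr(U^\dagger A(g))}{\abs{\tr(U^\dagger A(g))}}\, U.
\]
This expression does not depend on the choice of representative: replacing $U$ by $\lambda U$ with $\abs{\lambda}=1$ multiplies the trace by $\lambda^*$, and the resulting phase factor cancels. It is well defined as long as $\tr(U^\dagger A(g)) \neq 0$. To see that, write $A(g) = \phi U + E$ with $\opnorm{E} \le \eps$. Then
\[
\tr(U^\dagger A(g)) = N\phi + \tr(U^\dagger E),
\]
and $\abs{\tr(U^\dagger E)} \le N\opnorm{U^\dagger E} \le N\eps < N$. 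So the trace is nonzero, and in fact lies within $N\eps$ of $N\phi$. Clearly $\pi(\sigma(g)) = g$.

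\textbf{Continuity.} This is the step needing the most care, since $\sigma$ was defined using a representative $U$, which cannot be chosen continuously on all of $PU(N)$. I would argue locally. $\pi : U(N) \to PU(N)$ is a principal $U(1)$-bundle (the quotient by a compact group acting freely), so it admits continuous local sections $s$ near any point. On such a neighborhood, $\sigma(g)$ is given by the formula above with $U = s(g)$. That formula is continuous because it is built from continuous operations (trace, multiplication, and normalization of a nonzero complex number). Continuity is local, so $\sigma$ is continuous everywhere.

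Alternatively, one can avoid local sections. Define $\tilde\sigma : U(N) \to U(N)$ by the same formula in terms of $U$ and $A(\pi(U))$. This map is continuous and constant on fibers, so it descends to a continuous map on the quotient $PU(N)$ by the universal property of the quotient topology. This gives a continuous section of $\pi$, contradicting \Cref{lem:no_global_phase}.

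The only substantive points are that the trace overlap is nonvanishing, which is exactly where $\eps < 1$ is used, and that the construction is independent of the representative.
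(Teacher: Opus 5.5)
Your proposal is correct and follows essentially the same route as the paper. You normalize the trace overlap $\mathrm{tr}(U^\dagger A(g))$ to a phase, use $\epsilon<1$ to bound it below by $N(1-\epsilon)$, note that it does not depend on the choice of representative, and obtain continuity locally, which contradicts the no-section lemma. Your alternative of defining the map on $U(N)$ and descending through the universal property of the quotient topology is a neat way to make the paper's brief ``checked on any local choice of representative'' step fully rigorous.
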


\begin{proof}
    Towards a contradiction, suppose such an $A$ exists.
    Let $B: PU(N) \to U(N)$ be any map which chooses a unitary representative, meaning $\pi(B(g)) = g$ for every $g$.
    (Note: \Cref{lem:no_global_phase} shows that $B$ cannot be \textit{continuous}, but such a $B$ certainly exists; for example, by choosing the phase so that the first nonzero entry is real and positive.)

    Define $z: PU(N) \to \C$ by
    \[
    z(g) \coloneqq \tr(B(g)^\dagger A(g)).
    \]
    We first prove that $z(g)$ is never zero.
    By assumption, there exists a phase $\phi$ such that
    \[
        \opnorm{A(g)-\phi B(g)} \le \eps.
    \]
    Hence
    \[
        \abs{\tr(B(g)^\dagger A(g)) - \phi N}
        \le
        N \opnorm{B(g)^\dagger A(g)-\phi I}
        \le
        N\eps,
    \]
    so
    \[
        \abs{z(g)} \ge N(1-\eps) > 0.
    \]

    Therefore, the following function $\sigma: PU(N) \to U(N)$ is well-defined:
    \[
        \sigma(g) \coloneqq \frac{z(g)}{\abs{z(g)}} B(g).
    \]
    First observe that $\sigma$ is independent of the choice of $B$: replacing $B(g)$ by $\omega B(g)$ replaces $z(g)$ by $\overline{\omega}z(g)$, and therefore leaves $\sigma(g)$ unchanged.
    Consequently, $\sigma$ is continuous, as can be checked on any local choice of representative $B$.
    Finally, notice that $\pi(\sigma(g))=g$ for every $g \in PU(N)$, which contradicts \Cref{lem:no_global_phase}.
\end{proof}

Our last theorem in this section follows by combining \Cref{lem:no_approximate_projective_section} with a standard diagonalization.

\begin{theorem}
    \label{thm:no_global_phase}
    There is a unitary language $\calU$ such that for every unitary language $\calV = \{\phi_n U_n\}_{n \in \N}$ that equals $\calU$ up to a sequence of global phases $\{\phi_n\}_{n \in \N}$, we have $\calV \not\in \uPSPACE^{\calU \otimes \calU^\dagger}$.
\end{theorem}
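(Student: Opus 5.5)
We will prove \Cref{thm:no_global_phase} by diagonalizing against all candidate space-uniform oracle circuit families, choosing each $U_n$ using the topological obstruction of \Cref{lem:no_approximate_projective_section}. The key observation is that a circuit $C$ making queries to $O \otimes O^\dagger$ (and its conjugates, transposes, and controlled versions) computes a matrix $C[O]$ that depends continuously on $O$. Moreover, $C[O]$ is invariant under $O \mapsto \omega O$ for any phase $\omega$, because every oracle gate built from $O \otimes O^\dagger$, $O^* \otimes O^\trans$, and so on is phase-independent, including the controlled versions. Hence $C[\cdot]$ descends to a continuous map on $PU(N)$. Taking $\eps = 1/2$, the extension condition requires the top-left $2^n \times 2^n$ block of $C$, namely $(I \otimes \bra{0^a})\, C[O]\, (I \otimes \ket{0^a})$, to be within $1/2$ of $\phi_n U_n$ in operator norm.

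First, we enumerate the countably many space-uniform oracle circuit families, say $M_1, M_2, \dots$ (Turing machines producing $C_{n,1/2}$). We assign to each machine $M_k$ its own input length $n_k$, with the $n_k$ distinct and increasing, for example $n_k = k+1$ so that $N = 2^{n_k} \ge 2$. We set $U_n = I$ at the unused lengths.

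At length $n = n_k$, the circuit $C_{n,1/2}$ output by $M_k$ may query the oracle at many lengths $m$. We fix the oracle at all lengths $m \neq n$ to the already-chosen (or default identity) values. The circuit then becomes a function of $U_n$ alone, and we define $A(g) \coloneqq (I \otimes \bra{0^a})\, C[U]\, (I \otimes \ket{0^a})$ for any representative $U$ of $g \in PU(N)$. This is well defined by phase invariance and continuous, since it is polynomial in the entries of $U$ and $\bar U$, with finitely many gates. If $M_k$ succeeded at length $n$ for every choice of $U_n$, then for each $g$ there would be a phase $\phi$ with $\opnorm{A(g) - \phi U} \le 1/2 < 1$. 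This is because a block of an operator within $1/2$ of an extension $\phi U \oplus W$ is within $1/2$ of $\phi U$. That contradicts \Cref{lem:no_approximate_projective_section}. So some $U_n$ defeats $M_k$, and we fix it.

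The main obstacle is the interaction between lengths: $M_k$'s circuit at length $n_k$ may query lengths $n_j$ for $j > k$ that have not yet been fixed, and later choices could then undo the diagonalization. We handle this by noting that for a fixed $k$ the circuit is a finite object, so we can choose all of $U_{n_k}$ jointly with the remaining lengths. Concretely, we fix every other length to the identity in advance, and let each $M_k$ consider only its own length $n_k$ as variable. Since the $n_k$ are distinct and every other length is already fixed, the choice of $U_{n_k}$ never changes the oracle that any other $M_j$ sees at its own length, except through length $n_k$ itself. To make this rigorous, we diagonalize one coordinate at a time by a limit argument: choosing $U_{n_k}$ can spoil an earlier $M_j$ only if $M_j$ queries length $n_k$. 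We avoid this by choosing $n_k$ larger than the space bound of $M_1, \dots, M_{k-1}$ at their lengths, since a circuit on $s$ qubits cannot query lengths beyond $s$. Finally, the failure is witnessed for every phase sequence simultaneously, because \Cref{lem:no_approximate_projective_section} already quantifies over all $\phi$. Thus no $\calV = \{\phi_n U_n\}$ lies in $\uPSPACE^{\calU \otimes \calU^\dagger}$.
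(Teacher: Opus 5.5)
Your proof is correct and is essentially the paper's argument. You diagonalize one length per circuit family at $\eps = 1/2$ and fix all other queried lengths. Phase-invariance of the $\calU\otimes\calU^\dagger$ queries then makes the zero-ancilla block a continuous map $PU(2^n)\to\C^{2^n\times 2^n}$, and \Cref{lem:no_approximate_projective_section} finishes the job. Your final choice of $n_k$ exceeding the qubit counts of the earlier circuits does the work of the paper's ``freeze every queried length and pick $n$ above all frozen lengths.'' Your tentative $n_k = k+1$ would not prevent interference, and no limit argument is needed.
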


\begin{remark}
    Note that $\calV \in \puPSPACE^{\calU \otimes \calU^\dagger}$, as $\puPSPACE$ can leave the garbage generated on the $\calU^\dagger$ register.
    Thus, \Cref{thm:no_global_phase} shows that \Cref{thm:projunitarypspace_equiv} cannot relativize in the presence of unitary oracles.
\end{remark}

\begin{proof}[Proof of \Cref{thm:no_global_phase}]
    We construct $\calU=\{U_n\}_{n\in\N}$ in stages by diagonalizing against all space-uniform $(\calU\otimes\calU^\dagger)$-oracle unitary circuit families.
    During the construction, some lengths $n \in \N$ will be declared ``frozen'', meaning that their corresponding unitaries have already been fixed and will never again be changed.
    Initially, no lengths are frozen.
    
    At stage $t$, denote the $t$th space-uniform $(\calU\otimes\calU^\dagger)$-oracle unitary circuit family by $\{C^{(t)}_{n,\eps}\}_{n \in \N, \eps > 0}$.
    Choose an input length $n$ larger than every length frozen in the previous stages, and consider the circuit $C^{(t)}_{n,\eps}$, where $\eps$ is any fixed constant (say, $1/2$).
    All oracle lengths $m$ other than $n$ that appear in this circuit are now fixed as follows: if the corresponding unitary $U_m$ has already been defined, keep its old value; otherwise set it equal to the identity.
    We will freeze all of these lengths $m$ at the end of the stage.

    Now vary only the still-undefined unitary $U_{n}$.
    Since the oracle available to the circuit is $U_{n}\otimes U_{n}^\dagger$, its action depends only on the projective class of $U_{n}$.
    After fixing all other queried oracle lengths, the $2^n \times 2^n$ zero-ancilla block of $C^{(t)}_{n,\eps}$ defines a map
    \[
        A : PU(2^{n}) \to \C^{2^{n} \times 2^{n}}
    \]
    by instantiating the circuit with an oracle $U_{n}\otimes U_{n}^\dagger$ that agrees with the input projective unitary $g \in PU(2^n)$.
    $A$ is continuous because the circuit contains only finitely many oracle gates, each of which depends continuously on $U_n\otimes U_n^\dagger$ and hence on $g \in PU(2^n)$.

    By \Cref{lem:no_approximate_projective_section}, there exists some $g \in PU(2^n)$ with corresponding unitary $U_{n}\in U(2^n)$ such that
    \[
        \min_{\abs{\phi}=1}
        \opnorm{A(g)-\phi U_{n}}
        >
        \eps .
    \]
    Define $U_{n}$ to be this unitary.
    Finally, freeze $n$ and every oracle length queried by
    $C^{(t)}_{n,\eps}$.
    This completes stage $t$.
    After all stages have been completed, define $U_n=I$ for every length $n$ that was never assigned a value during the construction.

    We can now see that $\calU$ has the desired property.
    Suppose, toward a contradiction, that there is a sequence of phases
    $\{\phi_n\}_{n\in\N}$ such that
    \[
        \calV=\{\phi_n U_n\}_{n\in\N}
        \in
        \uPSPACE^{\calU\otimes\calU^\dagger}.
    \]
    Then some space-uniform oracle circuit family $\{C^{(t)}_{n, \eps}\}_{n \in \N, \eps > 0}$ in our enumeration witnesses this containment.
    Since the circuit family is assumed to implement $\calV$, the circuit $C^{(t)}_{n,\eps}$ must be $\eps$-close in operator norm to an extension $\phi_n U_n \oplus W$ of $\phi_n U_n$.
    Taking the zero-ancilla block of this extension gives exactly $\phi_{n} U_{n}$, so the zero-ancilla block $A$ of
    $C^{(t)}_{n,\eps}$ must satisfy
    \[
        \opnorm{A - \phi_{n}U_{n}} \le \eps.
    \]
    But this contradicts the way $U_{n}$ was chosen at stage $t$.
    Therefore no such $\calV$ lies in
    $\uPSPACE^{\calU\otimes\calU^\dagger}$.
\end{proof}

\begin{remark}
\label{rmk:topology}
The proof of \Cref{thm:no_global_phase} is significantly more general than its statement, as the proof uses only two properties of $\uPSPACE$: one, that the algorithm evaluates the target unitary to some nontrivial precision, and two, that the algorithm's behavior is a continuous function of $\calU \otimes \calU^\dagger$.
\end{remark}

\section{Unitary complexity and the unitary synthesis problem}
\label{sec:synthesis}

In this section, we consider the relationship between $\uP$ and $\uPSPACE$ vis-a-vis different possible resolutions to the unitary synthesis problem.
We formulate unitary synthesis as follows:

\begin{problem}[Unitary synthesis problem~\cite{AK07-qcma-qma}]
    Does $\uALL = \uP^\ALL$?
\end{problem}

In other words, the unitary synthesis problem asks whether every unitary language can be implemented efficiently relative to \textit{some} classical oracle.

\subsection{\texorpdfstring{$\NEXP$}{NEXP} relation problems}
\label{sec:nexp_relations}

We will establish a close relation between the oracles needed to efficiently synthesize $\uPSPACE$ and $\NEXP$ witnesses.
To express our results succinctly, we first fix some notation for $\NEXP$ relation problems.

\begin{definition}
    A relation $R \subseteq \{0,1\}^* \times \{0,1\}^*$ is in $\FNEXP^{L}$ if there is an $L$-oracle Turing machine $M(x, y)$ that runs in time $2^{\poly(|x|)}$ and whose output indicates whether $(x, y) \in R$, and $|y| \le 2^{\poly(|x|)}$ for every $(x, y) \in R$.
    $R$ is in $\FNEXP^{L[\poly]}$ if there exists such an $M$ that is additionally restricted to only query $L$ on strings of length at most $\poly(|x|)$.
\end{definition}

\begin{definition}
    A language $W$ is a \emph{witness oracle} for a relation $R$ if, for every $x$ for which $\{(x, \cdot) \in R\} \neq \emptyset$, the string $y$ defined by
    \[
    y_i \coloneqq W(x,i)
    \]
    satisfies $(x, y) \in R$.
    On inputs $x$ for which no valid $y$ exists, $W(x,i)$ is unconstrained.
\end{definition}

In a slight abuse of notation, when $\class{K}$ is a complexity class that can be augmented with an oracle and $R$ is a relation, we use $\class{K}^R$ to mean $\class{K}$ with access to an arbitrary witness oracle $W$ for $R$, with a fixed $\class{K}$ base machine that is correct for any choice of $W$.
When $\class{R}$ is a class of relations, $\class{K}^\class{R}$ is the union thereof, i.e.
\[
\class{K}^\class{R} \coloneqq \bigcup_{R \in \class{R}} \class{K}^R.
\]

\subsection{Bounds on unitary synthesis oracles}

This section's key lemma proves a dichotomy: every unitary language in $\uPSPACE$ either cannot be synthesized efficiently relative to \textit{any} oracle, or else it can, in which case an $\FNEXP$ witness oracle suffices.

\begin{lemma}
    \label{lem:unitarypspace_dichotomy_witnexp}
    For every language $L$ and every $\calU \in \uPSPACE^L$, either
    \begin{enumerate}
        \item $\calU \notin \uP^{\ALL}$, or
        \item $\calU \in \uP^{\FNEXP^{L[\poly]}}$.
    \end{enumerate}
\end{lemma}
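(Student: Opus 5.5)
We will assume the first alternative fails and derive the second. So suppose $\calU \in \uP^{K}$ for some language $K$. Fix a time-uniform family of $K$-oracle unitary circuits $\{C_{n,\eps}\}$ witnessing this, with size at most $q(n,\log 1/\eps)$ for a polynomial $q$. Then every oracle query made by $C_{n,\eps}$ is on a string of length at most $q(n,\log 1/\eps)$.

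The plan is to define a relation $R \in \FNEXP^{L[\poly]}$ whose instances are pairs $(1^n,\eps)$; we may take $\eps = 2^{-\lambda}$, so the instance has length $\poly(n,\lambda)$. A candidate witness $y$ is the truth table of a language $K'$ restricted to strings of length at most $q(n,\lambda)$; this table has length $2^{\poly(n,\lambda)}$. The verifier $M$ first writes out the circuit $C_{n,\eps/2}$ in polynomial time, with oracle gates answered by lookups into $y$. It then multiplies out the gate matrices to obtain the $2^{n+a}\times 2^{n+a}$ matrix of this circuit, taking care that the overall error in the computed entries is at most $2^{-\poly}$; this takes exponential time. Next, for every pair $x,z\in\{0,1\}^n$, it computes $\bra{z}U_n\ket{x}$ to precision $2^{-\poly(n,\lambda)}$ using the FPSEAS relative to $L$ that \Cref{thm:entrywise_characterization} provides (the item equivalent to $\calU\in\uPSPACE^L$). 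A polynomial-space $L$-oracle machine runs in exponential time and queries $L$ only on strings of polynomial length, so this step is also admissible.

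Finally, $M$ accepts if and only if the zero-ancilla block of the circuit, i.e.\ the map $\ket{x}\mapsto C\ket{x}\ket{0^a}$, is entrywise within about $\eps/(4\cdot 2^{2n})$ of $U_n\otimes\ket{0^a}$. This test controls the operator norm of the block difference by a Frobenius-norm bound. Combining that bound with \Cref{lem:extension_close_op_norm}, acceptance then guarantees that the circuit with $y$ plugged in is $\eps$-close to an extension of $U_n$. Thresholds have to be chosen with slack so that completeness holds. The true $K$ yields a circuit that is $\eps/2$-close to an extension, so its zero-ancilla columns differ from $U_n\ket{x}\ket{0^a}$ by at most $\eps/2$ entrywise. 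Testing against a threshold between $\eps/2$ and what soundness needs requires care. The simplest fix is to test the verifier's circuit at parameter $\eps' = \eps\cdot 2^{-3n}/8$ while demanding only closeness $\eps/2^{2n+3}$ entrywise; $\log 1/\eps'$ is still polynomial. Completeness then holds with the true $K$ as witness, so $(1^n,\eps)$ always has a valid witness.

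The $\uP$ algorithm with a witness oracle $W$ for $R$ works as follows. On input $(n,\eps)$ it outputs the circuit $C_{n,\eps'}$, replacing each query to $K$ on a string $s$ by a query to $W((1^n,\eps), \mathrm{index}(s))$. Because any accepted witness yields an $\eps$-good circuit, this family is correct for every choice of $W$, which places $\calU$ in $\uP^{\FNEXP^{L[\poly]}}$.

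I expect the main obstacle to be bookkeeping rather than ideas. One issue is precision: the exact circuit matrix involves FPTEAS gates, so the verifier only obtains approximations, and the threshold gaps must absorb $2^{m}$-fold error accumulation. The other is making the acceptance test simultaneously complete (for the true $K$) and sound (implying operator-norm closeness to an extension). Checking that $M$ respects the $L[\poly]$ query restriction also needs a short argument.
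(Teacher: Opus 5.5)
Your architecture matches the paper's. You guess the truth table of $K$ on polynomial-length strings as an $\FNEXP^{L[\poly]}$ witness, have the verifier multiply out the circuit and compare it with $U_n$ computed via the $L$-oracle FPSEAS of \Cref{thm:entrywise_characterization}, and then answer $K$-queries with witness-oracle queries. The gap is in the acceptance test, which is exactly the step you flagged, and your proposed parameters do not make it sound.

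The difference $C\ket{x}\ket{0^a}-(U_n\ket{x})\ket{0^a}$ lives in a $2^{n+a}$-dimensional space. An entrywise bound $\theta$ on the $2^{n+a}\times 2^n$ matrix therefore gives a Frobenius bound of only $\theta\,2^{n+a/2}$, where $a=a(n,\eps')$ is an arbitrary polynomial in $n$ and $\log(1/\eps')$. Your threshold is calibrated to $2^{2n}$ entries and silently drops the factor $2^a$. Shrinking the threshold to $\theta\approx\eps\,2^{-n-a/2}$ does not help, because it is circular: the honest witness only guarantees entrywise error $\eps'$, and $a$ grows with $\log(1/\eps')$.

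What does survive is a unitarity argument. If the $2^n\times 2^n$ block $(I\otimes\bra{0^a})C(I\otimes\ket{0^a})$ is within $\mu$ of $U_n$ in operator norm, then the amplitude leaking out of the zero-ancilla subspace is at most $\sqrt{2\mu}$. That loss is a square root, though. With $\theta=\eps/2^{2n+3}$ you have $\mu\le\eps/2^{n+3}$, so you only get $O(\sqrt{\eps}\,2^{-n/2})$-closeness. This is tight. Consider a circuit that, on a wrong oracle answer, rotates amplitude $\sqrt{\theta}$ onto a uniform superposition of nonzero ancilla strings. Once $a$ is large it passes your test, yet for small $\eps$ it is much farther than $\eps$ from every extension. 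An arbitrary witness oracle is allowed to supply exactly that table.

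The paper avoids this by testing each column in Euclidean norm rather than entrywise.
\begin{itemize}
\item It sets $\delta=\eps/(6\sqrt{2^n})$ and estimates the entries $a_{xy}$ of $C^w_{n,\delta}\ket{y}\ket{0^a}$ and $b_{xy}$ of $(U_n\ket{y})\ket{0^a}$ to precision $\gamma=\delta/(2\sqrt{2^{n+a}})$.
\item It accepts iff $\sum_x\abs{a_{xy}-b_{xy}}^2\le 4\delta^2$ for every $y$.
\item The true $K$ has column error at most $\delta$, so it passes.
\item Acceptance forces column error at most $3\delta$. That gives error $\eps/2$ on all states by \Cref{lem:basis_to_op_norm}, and then an $\eps$-close extension by \Cref{lem:extension_close_op_norm}.
\end{itemize}
The factor $2^{a}$ now enters only the numerical precision $\gamma$. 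This costs the exponential-time verifier polynomially many extra bits, so no circularity arises. Alternatively, you could keep a block-only entrywise test, invoke unitarity of $C$, and take $\theta$ of order $\eps^2/2^n$ with $\eps'$ below that.

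The rest of your proposal is fine: the precision of the brute-force multiplication, the $L[\poly]$ query bound, the existence of a witness for every $(1^n,\eps)$, and correctness for every witness oracle.
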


\begin{proof}
    The two items are clearly mutually exclusive, so it suffices to show that the negation of the first implies the second.
    Suppose the first item fails.
    Then there exists a language $K$ such that $\calU \in \uP^K$.
    We will construct an $\FNEXP^{L[\poly]}$ relation whose witness oracles allow a polynomial-time unitary algorithm to implement $\calU$.

    Let $\{C_{n,\eps}\}_{n \in \N,\eps>0}$ be a time-uniform $K$-oracle unitary circuit family witnessing $\calU \in \uP^K$.
    Fix $n$ and $\eps>0$, and set
    \[
    \delta \coloneqq \frac{\eps}{6\sqrt{2^n}}.
    \]
    We now define an $\FNEXP^{L[\poly]}$ relation $R$.
    The input to $R$ is $z = (1^n, \eps)$.
    A witness $w$ for $z$ is interpreted as a truth table for $K$ on all strings of length at most $q(n,\log 1/\eps)$, where $q$ is a sufficiently large polynomial upper bound on the lengths of all $K$-queries appearing in $C_{n, \delta}$.
    Let $C^w_{n,\delta}$ denote the unitary circuit obtained from $C_{n, \delta}$ by answering each $K$-query according to the table $w$.

    The verifier for $R$ will check that $C^w_{n,\delta}$ approximately implements $U_n$ on the zero-ancilla subspace.
    More precisely, let $n+a$ be the number of qubits on which $C_{n,\delta}$ acts, and set
    \[
    \gamma \coloneqq \frac{\delta}{2\sqrt{2^{n+a}}}.
    \]
    For every $x \in \{0,1\}^{n+a}$ and $y \in \{0,1\}^n$, it first computes fixed-point binary estimates $a_{xy}$ and $b_{xy}$ satisfying
    \begin{equation}
    \abs{a_{xy} - \bra{x}C^w_{n,\delta}\ket{y}\ket{0^a}} \le \gamma,
    \qquad
    \abs{b_{xy} - \bra{x}(U_n\ket{y})\ket{0^a}} \le \gamma.
    \label{eq:axy_bxy_delta}
    \end{equation}
    It accepts $w$ if for every $y \in \{0,1\}^n$,
    \[
    \sum_{x \in \{0,1\}^{n+a}} \abs{a_{xy} - b_{xy}}^2 \le 4\delta^2.
    \]

    Note that the estimates $a_{xy}$ can be computed in time $2^{\poly(n, \log 1/\eps)} \le 2^{\poly(|z|)}$ by sequential matrix multiplication of the gates comprising the circuit $C^w_{n,\delta}$, so long as an FPTEAS is used to evaluate entries of each non-oracle gate to sufficiently small error.
    Also, the estimates $b_{xy}$ are computable in space $\poly(n, \log 1/\eps)$ with the $L$-oracle FPSEAS for $\calU\Entry$, which exists by \Cref{thm:entrywise_characterization} (and when the last $a$ bits of $x$ are not all zero, we can simply take $b_{xy}=0$).
    Hence, each $b_{xy}$ is computable in time $2^{\poly(|z|)}$ with $\poly(|z|)$-length queries to $L$.
    So, the relation $R$ is certainly in $\FNEXP^{L[\poly]}$.

    Now let $W$ be any witness oracle for $R$.
    We construct a time-uniform $W$-oracle unitary circuit family for $\calU$.
    On input $(1^n,\eps)$, the uniformity machine first constructs the circuit $C_{n, \delta}$.
    It then replaces each query to $K$ on a string $s$ by a query to $W$ of the form
    \[
    (1^n,\eps, i_s),
    \]
    where $i_s$ is the index of $s$ in the truth-table ordering used for witnesses of $R$.
    This index is computable by a polynomial-time reversible classical computation, since $|s|\le \poly(n,\log 1/\eps)$.

    To prove that this circuit family witnesses $\calU \in \uP^R$, we first observe that the relation $R$ has at least one accepting witness for every $z=(1^n,\eps)$.
    Namely, take $w$ to be the restriction of the original oracle $K$ to all strings of length at most $q(n,\log 1/\eps)$.
    Then $C^w_{n,\delta} = C_{n,\delta}$, so by the assumption that $C_{n,\delta}$ implements $U_n$ to operator norm error $\delta$, we have, for every $y \in \{0,1\}^n$,
    \[
    \norm{C^w_{n,\delta}\ket{y}\ket{0^a} - (U_n\ket{y})\ket{0^a}} \le \delta.
    \]
    Thus by \Cref{eq:axy_bxy_delta} and the triangle inequality for Euclidean norm, we certainly have
    \[
    \left(\sum_{x \in \{0,1\}^{n+a}} \abs{a_{xy} - b_{xy}}^2\right)^{1/2}
    \le \delta + 2\gamma\sqrt{2^{n+a}} = 2\delta,
    \]
    and thus $w$ is accepted.

    Since $W$ is a witness oracle for $R$, queries of the form $(z, \cdot)$ to $W$ report the bits of a single witness $w$ for which $(z, w) \in R$.
    Therefore the resulting circuit is identical to $C_{n,\delta}^w$.
    By the definition of $R$, we know that for every $y \in \{0,1\}^n$,
    \[
    \left(\sum_{x \in \{0,1\}^{n+a}} \abs{a_{xy} - b_{xy}}^2\right)^{1/2} \le 2\delta,
    \]
    and therefore, by \Cref{eq:axy_bxy_delta} and the triangle inequality for Euclidean norm,
    \[
    \norm{C^w_{n,\delta}\ket{y}\ket{0^a} - (U_n\ket{y})\ket{0^a}}
    \le 2\delta + 2\gamma\sqrt{2^{n+a}} = 3\delta.
    \]
    Thus, by \Cref{lem:basis_to_op_norm}, it follows that for every $n$-qubit state $\ket{\psi}$,
    \[
    \norm{
        C^w_{n,\delta}\ket{\psi}\ket{0^a}
        -
        (U_n\ket{\psi})\ket{0^a}
    }
    \le
    3\delta\sqrt{2^n} = \eps/2.
    \]
    By \Cref{lem:extension_close_op_norm}, there is an extension $\overline{U}_n$ of $U_n$ such that
    \[
    \opnorm{C^w_{n,\delta}-\overline{U}_n}
    \le 6\delta\sqrt{2^n} = \eps,
    \]
    which establishes that $\calU \in \uP^W$.
\end{proof}

The following is immediate.

\begin{corollary}
    \label{cor:u-all-up-all}
    If $\uALL = \uP^\ALL$, then for any language $L$, $\uPSPACE^L \subseteq \uP^{\FNEXP^{L[\poly]}}$.
\end{corollary}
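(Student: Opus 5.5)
The plan is to apply \Cref{lem:unitarypspace_dichotomy_witnexp} directly, using the hypothesis to rule out its first alternative. Fix a language $L$ and a unitary language $\calU \in \uPSPACE^L$. Every unitary language lies in $\uALL$, so the assumption $\uALL = \uP^\ALL$ gives $\calU \in \uP^\ALL$. This says precisely that the first item of \Cref{lem:unitarypspace_dichotomy_witnexp} fails for $\calU$: there is some language $K$ with $\calU \in \uP^K$.

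The dichotomy then forces the second item, $\calU \in \uP^{\FNEXP^{L[\poly]}}$. Since $\calU$ was an arbitrary member of $\uPSPACE^L$, we conclude $\uPSPACE^L \subseteq \uP^{\FNEXP^{L[\poly]}}$ for every $L$, as the corollary claims.

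I do not expect a real obstacle here. The only point to check is that $\uP^\ALL$ means $\bigcup_K \uP^K$ over all classical languages $K$, which is how the proof of \Cref{lem:unitarypspace_dichotomy_witnexp} uses it when it negates the first item. It is also worth observing that the hypothesis is only used for the particular $\calU$ in question. So the argument actually shows the sharper statement $\uPSPACE^L \cap \uP^\ALL \subseteq \uP^{\FNEXP^{L[\poly]}}$, and the corollary follows from it.
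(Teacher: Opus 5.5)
Your proposal is correct and matches the paper's argument: the paper calls the corollary immediate from \Cref{lem:unitarypspace_dichotomy_witnexp}, since the hypothesis $\uALL = \uP^\ALL$ rules out the first alternative for every $\calU \in \uPSPACE^L$. Your sharper statement $\uPSPACE^L \cap \uP^\ALL \subseteq \uP^{\FNEXP^{L[\poly]}}$ is also right, and is really just a restatement of the lemma itself.
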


Now, we move on to exploring the classical complexity collapses that would allow for more efficient computation of $\NEXP$ witness oracles.
This next lemma is noteworthy because it says that a collapse of complexity classes with \textit{no} oracles would imply a space-efficient algorithm for $\NEXP$ witness finding relative to \textit{every} oracle.

\begin{lemma}
    \label{lem:fnp_fnc}
    Suppose $\NC = \NP$.
    Then for every language $L$, every relation $R \in \FNEXP^{L[\poly]}$ has a witness oracle in $\PSPACE^L$.
\end{lemma}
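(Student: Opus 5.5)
The plan is to \emph{scale up} the collapse $\NC = \NP$ by padding. The key point is that the oracle $L$ enters only through its truth table on strings of length $\poly(|x|)$, and that table can be placed into the \emph{input} of an unrelativized problem. The collapse then applies to that problem with no oracle at all.

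Fix $R \in \FNEXP^{L[\poly]}$ with verifier $M$ that runs in time $2^{p(|x|)}$, queries $L$ only on strings of length at most $p(|x|)$, and has witnesses of length at most $2^{p(|x|)}$. Let $T_x$ be the truth table of $L$ on all strings of length at most $p(|x|)$; it has length $N \coloneqq 2^{\poly(|x|)}$. Consider the unrelativized language
\[
D = \{(x, T, \pi, 1^{N}) : \exists y \text{ extending } \pi,\ |y| \le 2^{p(|x|)},\ M^{T}(x,y) \text{ accepts}\},
\]
where $M^T$ answers queries from the table $T$. Since $M$ runs in time $2^{p(|x|)} \le \poly(N)$ and $y$ has length at most $N$, we have $D \in \NP$.

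Define $W(x,i)$ to be the $i$th bit of the lexicographically first accepting witness for $x$ (and $0$ if none exists). Computing this bit from the padded input $(x, T_x, i, 1^N)$ is in $\P^{D}$: we fix the prefix bit by bit with $D$-queries, which takes $\poly(N)$ steps. Since $\NC = \NP$, we have $\P = \NP = \NC$, so $\P^{\NP} = \P = \NC$. Hence the padded bit-language
\[
W' = \{(x,T,i,1^N) : \text{bit } i \text{ of the lex-first witness is } 1\}
\]
is decided by a logspace-uniform $\NC$ circuit family. On inputs of length $\poly(N)$, these circuits have size $\poly(N) = 2^{\poly(|x|)}$ and depth $\polylog(N) = \poly(|x|)$, and uniformity means gate descriptions are computable in space $O(\log N) = \poly(|x|)$. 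By construction, $W$ is a valid witness oracle for $R$.

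It remains to show $W \in \PSPACE^L$. Given $(x,i)$, the $\PSPACE^L$ machine evaluates the depth-$\poly(|x|)$ circuit for $W'$ on the implicit input $(x, T_x, i, 1^N)$ by the standard recursive depth-first evaluation. This uses space $O(\text{depth}\cdot \log \text{size}) = \poly(|x|)$, recomputing gate descriptions with the uniformity machine as needed. Whenever the evaluation needs an input bit, it computes that bit on the fly. Bits of $x$, $i$, and the padding are trivial. A bit of $T_x$ at a given position is obtained by decoding the position into a string $s$ of length at most $p(|x|)$ and querying $L(s)$. Every query has polynomial length, as $\PSPACE^L$ requires.

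The step I expect to need the most care is the complexity bookkeeping at the collapse. We must confirm that $\NC = \NP$ really gives $\P^{\NP}$ functions in \emph{uniform} $\NC$ whose depth is polylogarithmic in the \emph{padded} length. We must also confirm that the witness-length and running-time bounds stay polynomial in $N$, so that $D \in \NP$ holds honestly. One more point: the lexicographically-first choice makes $W$ a single consistent witness across all $i$, which is what the definition of a witness oracle requires. Once these points are checked, the space-efficient circuit evaluation is routine.
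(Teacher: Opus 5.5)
Your proposal is correct and follows essentially the same route as the paper: you place the truth table of $L$ on strings of length at most $p(|x|)$ into the input of an unrelativized $\NP$ relation, use $\NC = \NP$ to conclude $\NC = \P^{\NP}$ and obtain logspace-uniform polylog-depth witness-finding circuits, and evaluate those circuits recursively in $\PSPACE^L$, answering each input-bit request with a polynomial-length $L$-query. The only difference is cosmetic: you make the search-to-decision step explicit through a prefix-extension language and the lexicographically first witness (the $1^N$ pad is redundant), whereas the paper lets a single $\NC$ circuit output an entire witness $C(x, A_x)$, which gives consistency across $i$ for free.
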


\begin{proof}
    Let $M^L(x,y)$ decide membership in $R$ in time $2^{\poly(|x|)}$.
    Let $p(|x|)$ be a sufficiently large polynomial such that $M^L(x,y)$ makes queries to $L$ of length at most $p(|x|)$, and for which the runtime of $M^L(x, y)$ and the witness length $|y|$ are at most $2^{p(|x|)}$.

    Define a language $R' \in \NP$ as follows.
    An input to $R'$ consists of $z=(x,A)$,
    where $A$ is interpreted as a truth table on all strings of length at most $p(|x|)$.
    A witness $y$ for $z$ is accepted if $M(x,y)$ accepts when its oracle queries are answered according to $A$.
    The language $R'$ is in $\NP$ because the simulation of $M$ and all lookups in $A$ take time polynomial in the input length, which is $|z| = |x| + 2^{p(|x|) + 1} - 1$.

    The assumption $\NP \subseteq \NC$ implies that $\mathsf{NC} = \P = \NP = \P^\NP$, and therefore $\mathsf{NC}$ circuits can \textit{find} $\mathsf{NP}$ witnesses (when they exist) by invoking the standard polynomial-time search-to-decision reduction for any $\NP$-complete problem.
    Hence, there is a logspace-uniform Boolean circuit family $\{C_n\}_{n \in \N}$ of size $\poly(n)$ and depth $\polylog(n)$ such that, whenever $z \in \{0,1\}^n$ is a yes-instance of $R'$, the output $C(z)$ is a valid witness $y$.
    We now define the witness oracle $W$ by feeding $L$ into $C$.
    Concretely, for each $x$, let $A_x$ be the truth table of $L$ on all strings of length at most $p(|x|)$, and define
    \[
    W(x,i) \coloneqq C(x, A_x)_i.
    \]
    It is clear that $W$ is a witness oracle for $R$, because if $x$ has a valid witness then $y = C(x, A_x)$ is one such witness.

    To conclude the proof, we show that $W\in\PSPACE^L$.
    To compute an output bit of $C(x, A_x)$, recursively evaluate the circuit from the corresponding output gate.
    Logspace uniformity of the circuit family allows the predecessors and type of each gate to be computed in space $O(\log(2^{p(|x|)})) \le \poly(|x|)$.
    Additionally, input bits belonging to $x$ are immediate, and input bits belonging to $A_x$ can be computed by a single query to $L$ on a string of length at most $p(|x|)$.
    As the recursion depth is $\polylog(2^{p(|x|)}) \le \poly(|x|)$, an output bit $C(x, A_x)_i$ can be evaluated in $\poly(|x|)$ space given $x$ and $i$, which puts $W\in\PSPACE^L$.
\end{proof}

The next lemma is formally incomparable to the last, but gives a different sufficient condition under which $\NEXP$ has more efficient witness oracles.

\begin{lemma}[{Easy witness lemma~\cite[Theorem 29]{IKW02-easy-witness}}]
    \label{lem:ikw_easy_witness}
    Suppose $L$ is a language for which $\AM^L = \NEXP^L$.
    Then every relation $R \in \FNEXP^{L}$ has a witness oracle in $\EXP^L$.
\end{lemma}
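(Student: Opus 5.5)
This is the relativized easy witness lemma of Impagliazzo, Kabanets, and Wigderson, Theorem 29 of \cite{IKW02-easy-witness}. The plan is to check that their argument goes through relative to $L$, rather than to reprove it from scratch.

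The argument runs by contradiction. Suppose some $R \in \FNEXP^L$ has no witness oracle in $\EXP^L$. Then for every fixed exponential-time $L$-oracle machine, there are infinitely many inputs $x$ that have valid witnesses but none describable by that machine. In particular, infinitely many such $x$ have no witness whose truth table has $L$-oracle circuit complexity at most a fixed polynomial in $|x|$. IKW use this as a ``hardness from nondeterminism'' source in three steps.

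\begin{enumerate}
\item A $\NEXP^L$ machine guesses the $\poly(|x|)$-bit advice $x$ and a witness $y$, and verifies $(x,y) \in R$. The witness $y$ is then a truth table of high $L$-oracle circuit complexity.
\item Plug $y$ into a relativizing hardness-to-randomness generator (Nisan--Wigderson / Babai--Fortnow--Nisan--Wigderson). This derandomizes $\MA^L$, and hence $\AM^L$ after amplification, into $\NTIME^L[2^{n^c}]$ with polynomial advice, infinitely often.
\item Combine this with the hypothesis $\AM^L = \NEXP^L$. Since $\AM^L \subseteq \PSPACE^L \subseteq \EXP^L$, this gives the collapse $\NEXP^L = \EXP^L$. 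Together with the Kannan-style diagonalization showing that $\EXP^L$ (indeed $\NEXP^L$) is not in $\NTIME^L[2^{n^c}]/n^{c}$, this yields a contradiction.
\end{enumerate}

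For the details I will defer to \cite[Theorem 29]{IKW02-easy-witness}. The one thing I must do myself is verify that every ingredient relativizes, and I expect this to be the main obstacle: in the unrelativized proof IKW also use $\EXP \subseteq \P/\poly \Rightarrow \EXP = \MA$, and that step is via interactive proofs, which do not relativize. I would try to circumvent this by using $\AM^L = \NEXP^L$ directly in place of $\MA$, so that no interactive-proof characterization is needed. Concretely, the $\AM^L$ protocol for a $\NEXP^L$ language is derandomized with the hard truth table $y$, and the resulting nondeterministic simulation is then diagonalized against.

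Finally, the witness oracle must be computed in $\EXP^L$: enumerate candidate witnesses $y$ of length $2^{\poly(|x|)}$ in lexicographic order and output the first one accepted by $M^L$. This is doubly exponential. The easy-witness conclusion reduces the search to witnesses given by small $L$-oracle circuits of size $\poly(|x|)$. There are only $2^{\poly(|x|)}$ such circuits, and each can be evaluated with $L$-queries, so enumerating them and taking the lexicographically first accepted one gives $W(x,i) \in \EXP^L$.
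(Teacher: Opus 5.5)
The paper gives no proof of its own here; it only cites \cite[Theorem 29]{IKW02-easy-witness} relativized to $L$. So deferring to IKW is consistent with the paper, and your outline is the right easy-witness method: an easy/hard dichotomy for witnesses, hard witnesses used as advice for derandomization, and the easy case giving an $\EXP^L$ enumeration. However, the sketch you propose to check for relativization fails at its central step.

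A truth table that is hard for ordinary $L$-oracle circuits gives a generator that fools deterministic tests. That derandomizes $\mathsf{MA}^L$, but not $\AM^L$. In an Arthur--Merlin protocol, the acceptance predicate on Arthur's coins is ``some Merlin reply makes Arthur accept,'' which is a nondeterministic test, and amplification does not remove that quantifier. You (correctly) drop the interactive-proof step, so nothing places $\EXP^L$ inside $\mathsf{MA}^L$, and derandomizing $\mathsf{MA}^L$ yields no contradiction.

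The fix is to define easy witnesses as truth tables of $\poly(|x|)$-size nondeterministic (or single-valued nondeterministic) $L$-oracle circuits. You then use a generator that fools such circuits, as in Klivans--van Melkebeek or Miltersen--Vinodchandran; its black-box reconstruction relativizes. Your final enumeration still lands in $\EXP^L$: such a circuit can be evaluated by brute force over its nondeterministic inputs in time $2^{\poly(|x|)}$ with polynomial-length queries to $L$, and finitely many exceptional $x$ are hard-coded. These ingredients are what actually need a relativization check. The interactive-proof step you flag never enters once $\AM^L = \NEXP^L$ is assumed.

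Step 3 also does not close as written. The derandomization holds only infinitely often, so $\EXP^L \not\subseteq \mathsf{NTIME}^L[2^{n^c}]/n^c$ does not contradict it. You need that $\EXP^L$ is not even infinitely-often contained in $\mathsf{NTIME}^L[2^{n^c}]/n^c$. That is not an unconditional Kannan-style fact, because a deterministic exponential-time machine cannot directly simulate $\mathsf{NTIME}^L[2^{n^c}]$ machines.

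It does follow from the collapse $\NEXP^L = \EXP^L$. Applying the collapse to a complete set and padding gives $\mathsf{NTIME}^L[2^{n^c}] \subseteq \mathsf{DTIME}^L[2^{n^{ck}}]$ for a fixed $k$. So on each length $n$, a $2^{O(n^{ck})}$-time machine can simulate, on the first $n^c + \log n + 1$ inputs, every machine of index at most $\log n$ with every advice string of length $n^c$, and output the lexicographically first pattern realized by none of them.

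Finally, keep the parameters non-circular. The diagonalizing language lies in $\AM^L$ with some running time $n^j$ that depends on $c$. So choose the advice length $n^{\eps}$, with $\eps$ depending only on $R$, and the hardness threshold, depending on $j$, so that every $\AM^L$ language lands infinitely often in $\mathsf{NTIME}^L[2^{n^c}]/n^c$ for a single $c$ fixed in advance.
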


Of course, the $\EXP^L$ in the last lemma can be replaced by any complexity class between $\AM^L$ and $\NEXP^L$, including (say) $\PSPACE^L$.

\begin{remark}
    It is tempting to conclude that under the weaker assumption $\EXP = \NEXP$, $\NEXP$ witnesses should be findable in exponential time, e.g.\ by scaling up the standard polynomial-time search-to-decision reduction for $\NP$.
    This argument fails because the exponential-time $\NEXP$ decision algorithm would eventually have to be queried on strings that are \textit{themselves} exponentially-long, therefore taking doubly exponential time.
    More succinctly, the difficulty can be summarized as: $\P$ is self-low, but $\EXP$ is not.
    Thus, \Cref{lem:ikw_easy_witness} is substantially less trivial than it might seem.
\end{remark}

Combining the last two lemmas, we obtain:

\begin{corollary}
    \label{cor:nc_or_am}
    Let $L$ be any language, and suppose either $\NC = \NP$ or $\AM^L = \NEXP^L$.
    Then every relation $R \in \FNEXP^{L[\poly]}$ has a witness oracle in $\PSPACE^L$.
\end{corollary}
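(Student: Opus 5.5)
The plan is to split into the two hypotheses and, in each case, invoke the matching lemma proved just above; the only work is checking that each lemma's hypotheses and conclusion line up with the corollary's relativized form.

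\emph{Case $\NC = \NP$.} This is exactly the hypothesis of \Cref{lem:fnp_fnc}, which holds for every language $L$ and directly yields, for every $R \in \FNEXP^{L[\poly]}$, a witness oracle in $\PSPACE^L$. There is nothing further to do, since that lemma's conclusion is verbatim the desired one.

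\emph{Case $\AM^L = \NEXP^L$.} First observe $\FNEXP^{L[\poly]} \subseteq \FNEXP^{L}$: the former only adds a restriction on query lengths. Then apply \Cref{lem:ikw_easy_witness} to $R$, which gives a witness oracle $W \in \EXP^L$. It remains to place $W$ in $\PSPACE^L$, following the remark after that lemma that $\EXP^L$ can be replaced by any class between $\AM^L$ and $\NEXP^L$. Concretely, I would argue
\[
\EXP^L \subseteq \NEXP^L = \AM^L \subseteq \PSPACE^L,
\]
where the last inclusion is the standard relativizing containment of $\AM$ in $\PSPACE$ (evaluate the Arthur--Merlin game tree by brute force in polynomial space, answering oracle queries via $L$). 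Thus $W \in \PSPACE^L$.

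The step needing the most care is checking that \cite[Theorem 29]{IKW02-easy-witness} indeed relativizes to an arbitrary language $L$ in the stated form. I take this as given, since \Cref{lem:ikw_easy_witness} is stated earlier in the paper. I also need the space-bounded model's oracle convention to be the polynomial-length one, which matches the convention already used in \Cref{prop:bqpspace}. Note that the corollary places the $L$-relativization only on the $\AM/\NEXP$ hypothesis, while the $\NC = \NP$ case is unrelativized, and this matches the two lemmas exactly.
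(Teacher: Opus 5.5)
Your proposal is correct and matches the paper, which obtains the corollary simply by combining \Cref{lem:fnp_fnc} with \Cref{lem:ikw_easy_witness} and then invoking the remark that $\EXP^L$ may be replaced by $\PSPACE^L$. Your explicit chain $\EXP^L \subseteq \NEXP^L = \AM^L \subseteq \PSPACE^L$, together with the observation $\FNEXP^{L[\poly]} \subseteq \FNEXP^{L}$, supplies exactly the justification that the paper's remark leaves implicit.
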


Further combining with our previous corollary on unitary synthesis, we see that \textit{even establishing an oracle separation} between $\uP^{\PSPACE}$ and $\uPSPACE$ would yield breakthrough lower bounds.

\begin{corollary}
    \label{cor:nc_np_separation}
    If there is a language $L$ relative to which $\uP^{\PSPACE^L} \neq \uPSPACE^{L}$, then at least one of the following holds:
    \begin{enumerate}
        \item $\uP^{\ALL} \neq \uALL$, or
        \item $\NC \neq \NP$ and $\AM^L \neq \NEXP^L$.
    \end{enumerate}
\end{corollary}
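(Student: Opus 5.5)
We argue by contrapositive. Assume $\uP^{\ALL} = \uALL$, and assume that either $\NC = \NP$ or $\AM^L = \NEXP^L$. We will show that $\uPSPACE^L \subseteq \uP^{\PSPACE^L}$ and, conversely, that $\uP^{\PSPACE^L} \subseteq \uPSPACE^L$. Together these give $\uP^{\PSPACE^L} = \uPSPACE^L$, which is what we need: under the assumption $\uP^\ALL = \uALL$, the hypothesis of the corollary forces both $\NC \neq \NP$ and $\AM^L \neq \NEXP^L$.

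\textbf{Step 1: $\uPSPACE^L \subseteq \uP^{\PSPACE^L}$.} Fix $\calU \in \uPSPACE^L$. Since $\uALL = \uP^\ALL$, \Cref{cor:u-all-up-all} (equivalently, the second branch of \Cref{lem:unitarypspace_dichotomy_witnexp}) gives $\calU \in \uP^{R}$ for some $R \in \FNEXP^{L[\poly]}$. Recall what this means: a single base machine works with \emph{any} witness oracle $W$ for $R$. By \Cref{cor:nc_or_am}, under either of our two hypotheses, $R$ has a witness oracle $W \in \PSPACE^L$. Fixing this particular $W$, we get $\calU \in \uP^W$.

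\textbf{Step 2: removing $W$ in favour of a $\PSPACE^L$ oracle.} Because $W \in \PSPACE^L$, every query to $W$ can be replaced by a query to a fixed language in $\PSPACE^L$, namely $W$ itself. So $\calU \in \uP^{\PSPACE^L}$. This is a renaming of the oracle rather than a simulation, so no error analysis is needed. This completes the forward inclusion.

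\textbf{Step 3: $\uP^{\PSPACE^L} \subseteq \uPSPACE^L$.} Take any $K \in \PSPACE^L$. By \Cref{prop:bqpspace}, $K \in \BQuPSPACE^L$. Then \Cref{prop:tidy_contains_bbqp}, relativized to the classical oracle $L$, shows that the unitary embedding of $K$ lies in $\uPSPACE^L$. Since $\uP^{\calO} \subseteq \uPSPACE^{\calO}$ trivially, \Cref{prop:tidy_low} yields
\[
\uP^{K} \subseteq \uPSPACE^{\uPSPACE^L} = \uPSPACE^L .
\]

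\textbf{Where the care is needed.} Steps 2 and 3 are routine. The delicate point is the relativization bookkeeping in Step 1. \Cref{cor:nc_or_am} uses the \emph{unrelativized} hypothesis $\NC = \NP$ together with the $L[\poly]$ query restriction; that restriction is exactly what \Cref{lem:unitarypspace_dichotomy_witnexp} supplies. In the other case, $\AM^L = \NEXP^L$, we are invoking \Cref{lem:ikw_easy_witness} on $\FNEXP^{L[\poly]} \subseteq \FNEXP^L$, which is also covered. Hence both cases go through.
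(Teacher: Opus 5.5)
Your proof is correct and follows the paper's argument. As in the paper, you take the contrapositive, use \Cref{cor:u-all-up-all} to get $\uPSPACE^L \subseteq \uP^{\FNEXP^{L[\poly]}}$, and then use \Cref{cor:nc_or_am} to instantiate the witness oracle in $\PSPACE^L$. The one difference is your Step 3, which spells out the unconditional inclusion $\uP^{\PSPACE^L} \subseteq \uPSPACE^L$ via \Cref{prop:bqpspace,prop:tidy_contains_bbqp,prop:tidy_low}; the paper needs this inclusion to conclude equality but leaves it implicit.
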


\begin{proof}
    We prove the contrapositive.
    Suppose both items fail, in which case $\uALL = \uP^\ALL$, and either $\NC = \NP$ or $\AM^L = \NEXP^L$.
    The first assumption implies $\uPSPACE^L \subseteq \uP^{\FNEXP^{L[\poly]}}$, by \Cref{cor:u-all-up-all}.
    The second assumption implies that every $\FNEXP^{L[\poly]}$ relation has a witness oracle in $\PSPACE^L$, by \Cref{cor:nc_or_am}.
    We conclude that
    \[
    \uPSPACE^L \subseteq \uP^{\FNEXP^{L[\poly]}} = \uP^{\PSPACE^L},
    \]
    which holds for any language $L$.
\end{proof}

Weakening the assumption still yields a powerful consequence:

\begin{corollary}
    \label{cor:bpp_nexp}
    If there is a language $L$ relative to which $\uP^{L} \neq \uPSPACE^{L}$, then at least one of the following holds:
    \begin{enumerate}
        \item $\uP^{\ALL} \neq \uALL$, or
        \item $\BPP^L \neq \NEXP^L$.
    \end{enumerate}
\end{corollary}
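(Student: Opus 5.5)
We prove the contrapositive, mirroring \Cref{cor:nc_np_separation}. Suppose $\uALL = \uP^\ALL$ and $\BPP^L = \NEXP^L$; we show $\uPSPACE^L \subseteq \uP^L$. The converse containment $\uP^L \subseteq \uPSPACE^L$ is immediate from the definitions, since a time-uniform family is space-uniform. By \Cref{cor:u-all-up-all}, $\uPSPACE^L \subseteq \uP^{\FNEXP^{L[\poly]}}$. The remaining task is to replace every $\FNEXP^{L[\poly]}$ witness oracle by something a $\uP^L$ machine can simulate.

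The first step is a collapse. Since $\BPP^L \subseteq \PSPACE^L \subseteq \EXP^L \subseteq \NEXP^L$, the hypothesis $\BPP^L = \NEXP^L$ gives $\BPP^L = \PSPACE^L = \EXP^L = \NEXP^L$. In particular $\AM^L = \NEXP^L$, because $\BPP^L \subseteq \AM^L \subseteq \NEXP^L$ relativizes. Then \Cref{lem:ikw_easy_witness}, via \Cref{cor:nc_or_am}, gives every $R \in \FNEXP^{L[\poly]} \subseteq \FNEXP^L$ a witness oracle $W \in \EXP^L = \BPP^L$.

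The second step puts this witness oracle inside the unitary algorithm. Since $W \in \BPP^L \subseteq \BQP^L$, \Cref{prop:tidy_contains_bbqp} places the unitary embedding of $W$ in $\uP^L$. Here a classical language $L$ is viewed as a unitary oracle whose controlled, conjugate, transpose and inverse queries are all trivially simulable from plain queries (its gates are real, self-inverse permutations, and the control can be added by the construction at the end of \Cref{thm:projunitarypspace_equiv}). Now let $\calU \in \uPSPACE^L$. We have $\calU \in \uP^W$ for this particular $W$, because membership in $\uP^R$ is defined to hold for an arbitrary witness oracle of $R$ with a fixed base machine. Self-lowness (\Cref{prop:tidy_low}) then gives $\calU \in \uP^{\uP^L} = \uP^L$. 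Self-lowness also absorbs the oracle queries to $W$ and its homomorphisms, which are all $\uP^L$ since $W$'s embedding is a real self-inverse permutation.

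The main obstacle is the first step. The easy witness lemma requires $\AM^L = \NEXP^L$ as a statement about the full relativized classes, while our relation lies in $\FNEXP^{L[\poly]}$. This is handled by the containment $\FNEXP^{L[\poly]} \subseteq \FNEXP^L$. The other check is that the witness oracle produced is a genuine language in $\EXP^L$, so that $\EXP^L = \BPP^L$ applies to it, rather than merely an $\EXP^L$-computable function with a promise. This is how \Cref{lem:ikw_easy_witness} is stated, so I would cite it directly. Once $W \in \BPP^L$, the amplification to error $2^{-\poly}$ needed to reach $\polylog(1/\eps)$ cost is already built into \Cref{prop:tidy_contains_bbqp}.
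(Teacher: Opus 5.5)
Your proof is correct and follows essentially the paper's route. The paper likewise argues by contrapositive and collapses $\BPP^L = \AM^L = \PSPACE^L = \NEXP^L$. It then invokes \Cref{cor:nc_np_separation} (that is, \Cref{cor:u-all-up-all} plus the easy witness lemma) to get $\uPSPACE^L \subseteq \uP^{\PSPACE^L} = \uP^{\BPP^L}$, and finishes with \Cref{prop:tidy_contains_bbqp,prop:tidy_low}. The only difference is cosmetic: you unpack that corollary and place the witness oracle in $\EXP^L = \BPP^L$ directly, rather than passing through $\PSPACE^L$.
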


\begin{proof}
    We prove the contrapositive.
    Suppose both items fail, in which case $\uALL = \uP^\ALL$ and $\BPP^L = \AM^L = \PSPACE^L = \NEXP^L$.
    Thus, the contrapositive of \Cref{cor:nc_np_separation} tells us that $\uPSPACE^L = \uP^{\PSPACE^L} = \uP^{\BPP^L}$.
    The latter is equal to $\uP^L$ by \Cref{prop:tidy_contains_bbqp,prop:tidy_low}.
\end{proof}

\begin{remark}
    The conditional consequences $\AM^L \neq \NEXP^L$ and $\BPP^L \neq \NEXP^L$ in the above corollaries only represent breakthrough separations when $L$ is the empty oracle, because there \textit{do} exist languages $L$ relative to which $\BPP^L = \AM^L \neq \NEXP^L$ (for example, any $\PSPACE$-complete language).
\end{remark}

\subsection{Implications for cryptography}

We conclude by investigating the cryptographic consequences of these separations, with the particular goal of understanding whether the existence of any form of quantum computational cryptography is subject to a complexity-theoretic barrier.
Specifically, we consider EFI pairs, which are equivalent to quantum commitment schemes and are implied by the vast majority of computationally-secure quantum cryptography~\cite{Yan22-commitments,BCQ23-efi}.

\begin{definition}[EFI pair]
    An \emph{EFI pair} is a pair of time-uniform quantum circuit families $\{C_{0,n}\}_{n \in \N}$ and $\{C_{1,n}\}_{n \in \N}$ whose outputs on security parameter $n$
    are mixed states $\rho_{0,n}$ and $\rho_{1,n}$, respectively, such
    that the following hold:
    \begin{enumerate}
        \item There exists a polynomial $p$ such that, for every
        sufficiently large $n$,
        \[
            \frac{1}{2}
            \norm{\rho_{0,n}-\rho_{1,n}}_1
            \geq
            \frac{1}{p(n)}.
        \]
        \item For every polynomial-time quantum circuit family $\{D_n\}_{n \in \N}$,\footnote{The circuit family need not be uniform, although this distinction will be irrelevant to our results.}
        \[
            \abs{
                \Pr[D_n(\rho_{0,n})=1]
                -
                \Pr[D_n(\rho_{1,n})=1]
            }
            \le \negl(n).
        \]
    \end{enumerate}
    For a language $L$, an \textit{EFI pair relative to $L$} is defined identically, but with $\{C_{0,n}\}_{n \in \N}$, $\{C_{1,n}\}_{n \in \N}$, and $\{D_n\}_{n \in \N}$ all allowed to be $L$-oracle circuit families.
\end{definition}

The key fact we need is that EFI pairs do not exist if $\uP = \uPSPACE$.
A version of this was derived by~\cite{BEMPQY26-uhlmann}, but for a different definition of unitary complexity classes and a more general set of cryptographic primitives.

\begin{theorem}
    \label{prop:efi_separates_up_upspace}
    If EFI pairs exist relative to a language $L$, then $\uP^L \neq \uPSPACE^L$.
\end{theorem}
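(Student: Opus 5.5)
We will prove the contrapositive: assuming $\uP^L = \uPSPACE^L$, we exhibit a polynomial-time $L$-oracle distinguisher that breaks any candidate EFI pair relative to $L$. The plan is to build a unitary language in $\uPSPACE^L$ that implements the Helstrom measurement for $\rho_{0,n}$ versus $\rho_{1,n}$. We then use the collapse to get a polynomial-time implementation, and feed it to the distinguisher $D_n$.

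\textbf{Step 1: the measurement as a unitary.}
Let $m(n)$ be the number of output qubits of the EFI circuits. Define $\Delta_n \coloneqq \rho_{0,n}-\rho_{1,n}$, and let $P_n$ be the projector onto the eigenspaces of $\Delta_n$ with positive eigenvalues. Helstrom's theorem gives $\tr(P_n\Delta_n) = \tfrac12\norm{\Delta_n}_1 \ge 1/p(n)$.

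A sharp cutoff at eigenvalue $0$ is not continuous in the entries of $\Delta_n$, so instead we use a threshold that tolerates error. Define the smoothed projector
\[
P'_n \coloneqq \mathbf{1}[\Delta_n > \tau],
\]
where $\tau \coloneqq 1/(4p(n)2^{m})$. Discarding eigenvalues in $(0,\tau]$ costs at most $2^m\tau$ in $\tr(P\Delta)$, so $\tr(P'_n\Delta_n)\ge 1/(2p(n))$ still holds.

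We want to avoid the issue of eigenvalues sitting exactly at the threshold. To do so, we pick $\tau$ from polynomially many candidate values so that no eigenvalue lies near it; pigeonhole over the $2^m$ eigenvalues works if the spacing is chosen finely enough. Alternatively, we use the smooth filter $f(\Delta)$ with $f$ Lipschitz and equal to $0$ below $0$ and $1$ above $\tau$. This yields a POVM rather than a projector, but the loss is the same.

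Let $\calU$ be the unitary language acting on $m(n)+1$ qubits by $\ket{\psi}\ket{b}\mapsto$ ``XOR $\mathbf{1}[\psi\in P'_n]$ into $b$''. For the smooth-filter version, $\calU$ instead rotates the flag qubit by angle $\arcsin\sqrt{f(\Delta_n)}$. Padding $n$ to the appropriate length gives a unitary language indexed as in \Cref{def:u-language}.

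\textbf{Step 2: $\calU \in \uPSPACE^L$.}
By \Cref{thm:entrywise_characterization} (items for a classical language $L$), it suffices to approximate the entries of $\calU$ to error $\eps$ in space $\poly(n,\log 1/\eps)$ with an $L$ oracle. This splits into two sub-steps.

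\emph{Entries of $\rho_{b,n}$.} These have an FPSEAS relative to $L$, by the same divide-and-conquer over the channel matrix representation used in the proof of \Cref{prop:bqpspace}. That argument relativizes, and the circuits are polynomial size.

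\emph{Entries of $P'_n$.} These come from the entries of $\Delta_n$ via a polynomial-space matrix-function computation. We apply polynomial-space linear algebra to the $2^m\times 2^m$ matrix, e.g.\ Csanky/$\NC$-style characteristic-polynomial methods scaled up. Those run in polylog space in the dimension, hence $\poly(n)$ space.

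Concretely, we approximate $f(\Delta_n)$ by a polynomial $q(\Delta_n)$ of degree $\poly(n,\log1/\eps)$, using standard polynomial approximations of a smoothed step function on $[-1,1]$ with gap $\tau$. Iterated matrix powering is computable in polynomial space by repeated squaring with recursion, exactly like the recurrence in \Cref{prop:bqpspace}. The entries of the rotation unitary are then continuous functions of the entries of $f(\Delta_n)$, which we handle as in \Cref{lem:continuity_composition}. An even simpler route is to construct the block encoding directly inside $\uPSPACE^L$ via QSVT on a block-encoding of $\Delta_n$, obtained from purifications of $\rho_{b,n}$. That route relies on self-lowness (\Cref{prop:tidy_low}) and avoids explicit numerics.

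\textbf{Step 3: break the EFI pair.}
From $\uP^L=\uPSPACE^L$ we get a time-uniform $L$-oracle circuit implementing $\calU$ to error $\eps = 1/(8p(n))$. The distinguisher $D_n$ prepares $\rho_{b,n}\otimes\ket0$, applies this circuit, and measures the flag qubit.

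Its acceptance probability is within $2\eps$ of $\tr(f(\Delta_n)\rho_{b,n})$. The advantage is therefore at least $\tr(f(\Delta_n)\Delta_n)-4\eps \ge 1/(2p(n))-1/(2p(n))$. We tune the constants so this is at least $1/(4p(n))$, which is non-negligible and contradicts security.

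\textbf{Main obstacle.}
The main obstacle is Step 2: making the spectral filter computable with only $\polylog(1/\eps)$ dependence while losing just $1/\poly$ in advantage. The issue is that the exact Helstrom projector is discontinuous in $\Delta_n$. I would first try the QSVT route with a smoothed sign function of degree $\tilde O(1/\tau)=2^{O(m)}$. This is affordable in polynomial space, since the circuit may be exponentially long, and it sidesteps the numerical eigen-decomposition entirely.
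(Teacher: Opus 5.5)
\textbf{Verdict: genuine gap in Step 2.} Your outer structure is right and matches the paper. You place a Helstrom-type test for $\rho_{0,n}$ versus $\rho_{1,n}$ in $\uPSPACE^L$, collapse it to $\uP^L$, and lose at most $2^m\tau$ to eigenvalues of $\Delta_n$ in the ambiguous band. Step 3's constants need a trivial adjustment, e.g.\ $\eps = 1/(16p(n))$. The gap is Step 2, which you flag as the main obstacle but never close, and each concrete route you sketch fails as written.

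\begin{itemize}
\item \emph{Degree.} Any polynomial uniformly approximating a filter that rises from $0$ to $1$ over $[0,\tau]$ has degree $\Omega(1/\tau) = 2^{\Omega(m)}$ by Bernstein's inequality. So ``degree $\poly(n,\log 1/\eps)$'' is false.
\item \emph{Monomial evaluation.} At that degree, evaluating $\sum_k c_k\Delta_n^k$ by iterated matrix powering is not a polynomial-space computation. The monomial coefficients of such a polynomial are necessarily of size $2^{\Omega(1/\tau)}$ (a two-constants/harmonic-measure argument shows this), so the cancellation needs exponentially many bits of precision. A numerically stable Chebyshev- or Fourier-basis evaluation might rescue this, but you do not supply one.
\item \emph{Rotation entries.} The entries of your rotation unitary are entries of the matrix functions $\sqrt{f(\Delta_n)}$ and $\sqrt{I - f(\Delta_n)}$. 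They are not entrywise functions of the entries of $f(\Delta_n)$, so \Cref{lem:continuity_composition} does not apply.
\item \emph{QSVT.} This route needs the space-uniform generator to output each of $2^{\Omega(m)}$ phase angles in polynomial space, which you do not address.
\item \emph{Threshold choice.} Polynomially many candidate thresholds cannot be guaranteed to avoid $2^m$ eigenvalues. With exponentially many candidates you would still need spectral information to find a good one.
\end{itemize}

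\textbf{The missing idea} is to avoid evaluating any spectral filter classically, which is how the paper proceeds. Take $\mathcal{E} = \{e^{i\Delta_n}\}_n$. Since $\opnorm{\Delta_n} \le 1$ and $\exp$ is entire, the Taylor series truncated at degree $O(\log 1/\eps)$ is accurate, with harmless coefficients $1/k!$. Hence the entries of $e^{i\Delta_n}$ have an FPSEAS relative to $L$, and $\mathcal{E} \in \uPSPACE^L$ by \Cref{thm:entrywise_characterization}.

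Next, take the phase-estimation circuit $V_n$ for $e^{i\Delta_n}$ with precision $\tau = \delta/2^{m+2}$ and failure probability $\eta = \delta/8$. It uses polynomially many qubits and $2^{\poly(n)}$ controlled calls, so it is an exactly specified unitary in $\uPSPACE^{\mathcal{E}} \subseteq \uPSPACE^L$ by \Cref{prop:tidy_low}.

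Accept iff the estimate is nonnegative:
\begin{itemize}
\item eigenvectors with $\lambda_j > \tau$ are accepted with probability at least $1-\eta$;
\item eigenvectors with $\lambda_j < -\tau$ are accepted with probability at most $\eta$;
\item the band $\abs{\lambda_j} \le \tau$ costs at most $2^m\tau$.
\end{itemize}
This gives advantage at least $\delta - 2^m\tau - 2\eta = \delta/2$. Phase estimation does the thresholding quantumly inside $\uPSPACE$, so neither the discontinuity of the Helstrom projector nor the high-degree approximation problem ever arises.
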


\begin{proof}
    Let $\{\rho_{0,n}, \rho_{1,n}\}_{n \in \N}$ be mixed states produced by an EFI pair relative to $L$, and assume without loss of generality that they are $n$-qubit states.
    Define $A_n = \rho_{0,n} - \rho_{1,n}$, and let $\delta \ge 1/\poly(n)$ be a lower bound on the trace distance $\frac{1}{2}\norm{A_n}_1$ between $\rho_{0,n}$ and $\rho_{1,n}$.
    
    We first observe that the entries $\braket{x}{A_n|y}$ of $A_n$ admit an FPSEAS relative to $L$, by a standard Feynman path integral calculation such as the one we performed in \Cref{prop:bqpspace}.
    Consequently, we claim that this yields an FPSEAS relative to $L$ for the entries of the unitary $U_n = e^{i A_n}$.
    Indeed, it suffices to truncate the Taylor series
    \[
    \sum_{k=0}^\infty \frac{i^k A_n^k}{k!}
    \]
    at degree $k = O(\log 1/\eps)$, as $\opnorm{A_n} \le 1$:
    \[
    \opnorm[\Big]{e^{i A_n} - \sum_{k=0}^K \frac{i^k A_n^k}{k!}}
    \le
    \sum_{k=K+1}^\infty \frac{\opnorm{A_n}^k}{k!}
    \le \sum_{k=K+1}^\infty \frac{1}{k!}
    \le \frac{e}{(K+1)!}.
    \]
    Each entry of the truncated series can be evaluated in space polynomial in $n$ and $\log 1 / \eps$, by expanding the matrix products and summing the terms with sufficient precision.
    
    As a consequence of the entrywise characterization of \Cref{thm:entrywise_characterization}, we deduce that the unitary language $\calU = \{U_n\}$ is contained in $\uPSPACE^L$.

    Choose the parameters
    \begin{equation}
    \label{eq:def_tau_eta}
    \tau \coloneqq \frac{\delta}{2^{n+2}}, \qquad \eta \coloneqq \frac{\delta}{8}.
    \end{equation}
    Let $V_n$ be the unitary phase estimation circuit for $U_n$ that, on input an eigenstate with eigenvalue $e^{i \lambda}$, produces an estimate of $\lambda \in [-1, 1]$ to additive error $\tau$, except with failure probability at most $\eta$.\footnote{Note that $V_n$ acts on more than $n$ qubits.}
    Then $\calV = \{V_n\}_{n \in \N}$ is contained in $\uPSPACE^\calU$, because phase estimation requires $O(n + \log 1/\tau + \log 1/\eta) \le \poly(n)$ qubits and $2^{\poly(n)}$ controlled calls to $U_n$.
    By the self-lowness of $\uPSPACE$ (\Cref{prop:tidy_low}), we conclude that $\calV \in \uPSPACE^L$ as well.

    Now consider the distinguisher $D_n(\rho)$ that applies $V_n$ to the input state $\rho$ and accepts when the estimated eigenvalue is nonnegative.
    Decompose $A_n$ into an eigenbasis
    \[
    A_n = \sum_{j=1}^{2^n} \lambda_j \ketbra{v_j}{v_j},
    \]
    and let $q_j$ be the acceptance probability of $D(\ket{v_j})$.
    Then the advantage satisfies
    \begin{align*}
        \Pr[D(\rho_{0,n}) = 1] - \Pr[D(\rho_{1,n}) = 1]
        &= \sum_{j=1}^{2^n} \lambda_j q_j\\
        &\ge \sum_{\lambda_j > \tau} \lambda_j (1 - \eta) + \sum_{0 < \lambda_j \le \tau} 0 + 
        \sum_{-\tau \le \lambda_j \le 0} \lambda_j + \sum_{\lambda_j < -\tau} \lambda_j \eta\\
        &= \sum_{\lambda_j > 0} \lambda_j - \sum_{\abs{\lambda_j} \le \tau} \abs{\lambda_j} - \eta \sum_{\abs{\lambda_j} > \tau} \abs{\lambda_j}\\
        &\ge \sum_{\lambda_j > 0} \lambda_j - \sum_{j = 1}^{2^n} \tau - \eta \sum_{j = 1}^{2^n}\abs{\lambda_j}\\
        &\ge \delta - 2^n\tau - 2\eta\\
        &= \delta / 2.
    \end{align*}
    Above, the second line uses the fact that $q_j \ge (1 - \eta)$ when $\lambda_j > \tau$, $q_j \le \eta$ when $\lambda_j < -\tau$, and otherwise trivially $q_j \in [0,1]$; the fourth line expands the range of the subtracted summations; the fifth line uses the fact that $\norm{A_n}_1 \le \norm{\rho_{0,n}}_1 + \norm{\rho_{1,n}}_1 = 2$; and the last line substitutes the definitions of $\tau$ and $\eta$ (\Cref{eq:def_tau_eta}).

    To conclude, suppose for a contradiction that $\uP^L = \uPSPACE^L$.
    Then we could implement $V_n$ to error $\delta / 16$ in $\poly(n)$ time.
    Using this implementation in place of $V_n$ in $D$ would maintain a $\poly(n)$ runtime and change the acceptance probability of $D$ by at most $\delta / 8$, leaving distinguishing advantage at least $\delta / 4$, which is non-negligible.
    That would violate the assumption that the states define a valid EFI pair, so we must have $\uP^L \neq \uPSPACE^L$.
\end{proof}

Plugging into the corollaries from before directly yields the following:

\begin{corollary}
    Suppose $L$ is a language for which an EFI pair exists relative to $\PSPACE^L$.
    Then at least one of the following holds:
    \begin{enumerate}
        \item $\uALL \neq \uP^\ALL$, or
        \item $\NC \neq \NP$ and $\AM^L \neq \NEXP^L$.
    \end{enumerate}
\end{corollary}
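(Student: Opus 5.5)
We argue by contraposition, chaining \Cref{prop:efi_separates_up_upspace} with \Cref{cor:nc_np_separation}. Suppose both items fail, so that $\uALL = \uP^\ALL$, and moreover $\NC = \NP$ or $\AM^L = \NEXP^L$. We will show that no EFI pair can exist relative to $L' \coloneqq \PSPACE^L$. Here $\PSPACE^L$ is read as a fixed $\PSPACE^L$-complete language (for instance, the standard TQBF-with-$L$-oracle language), so that $L'$ is a single language.

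The first step applies \Cref{prop:efi_separates_up_upspace} with the language $L'$. An EFI pair relative to $L'$ would give $\uP^{L'} \neq \uPSPACE^{L'}$, so it suffices to prove $\uPSPACE^{L'} \subseteq \uP^{L'}$. Completeness and the relativized identity $\PSPACE^{\PSPACE^L} = \PSPACE^L$ (the oracle queries have polynomial length, so they can be answered in space) together give $\uPSPACE^{L'} = \uPSPACE^L$. To see this, note that the entrywise characterization (\Cref{thm:entrywise_characterization}, \Cref{item:PSPACE}) identifies $\uPSPACE^{K}$ with $\{\calU : \calU\Entry\Apx \in \PSPACE^K\}$ for any language $K$, and this set depends only on the class $\PSPACE^K$.

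The second step runs the contrapositive of \Cref{cor:nc_np_separation} with the language $L$. Under our assumptions its proof yields $\uPSPACE^L \subseteq \uP^{\PSPACE^L}$. Concretely, \Cref{cor:u-all-up-all} gives $\uPSPACE^L \subseteq \uP^{\FNEXP^{L[\poly]}}$, and \Cref{cor:nc_or_am} supplies witness oracles in $\PSPACE^L$. Each such witness oracle is a language in $\PSPACE^L$ and hence reduces to $L'$ in polynomial time. By \Cref{prop:tidy_contains_bbqp,prop:tidy_low}, we may then replace it by the single oracle $L'$, giving $\uPSPACE^L \subseteq \uP^{L'}$. Combining the two steps, $\uPSPACE^{L'} = \uPSPACE^L \subseteq \uP^{L'}$, which contradicts \Cref{prop:efi_separates_up_upspace}.

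The only delicate point is the bookkeeping around the class-versus-language oracle $\PSPACE^L$. We must check that $\uP^{\PSPACE^L}$ as it appears in \Cref{cor:nc_np_separation} (a union over $\PSPACE^L$ languages) lands inside $\uP^{L'}$ for one fixed complete $L'$. We must also check that $\uPSPACE^{L'}$ does not grow beyond $\uPSPACE^L$. Both facts follow from completeness together with the self-lowness and entrywise results cited above, and I expect this verification to be the main, though routine, obstacle.
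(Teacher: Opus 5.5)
Your proof is correct and takes essentially the paper's route: apply \Cref{prop:efi_separates_up_upspace} with a $\PSPACE^L$-complete oracle, collapse $\uPSPACE^{\PSPACE^L}$ to $\uPSPACE^L$, and invoke \Cref{cor:nc_np_separation}. The differences are cosmetic: the paper gets the collapse from self-lowness (\Cref{prop:tidy_low}) rather than the entrywise characterization, and cites \Cref{cor:nc_np_separation} directly instead of unpacking its proof and the class-versus-language bookkeeping as you do.
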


\begin{proof}
    By \Cref{prop:efi_separates_up_upspace}, the assumption of the corollary implies that $\uP^{\PSPACE^L} \neq \uPSPACE^{\PSPACE^L}$, and therefore $\uP^{\PSPACE^L} \neq \uPSPACE^L$ by the self-lowness of $\uPSPACE$ (\Cref{prop:tidy_low}).
    The claim thus follows from \Cref{cor:nc_np_separation}.
\end{proof}

\begin{corollary}
    \label{cor:efi_bpp_nexp}
    Suppose $L$ is a language for which an EFI pair exists relative to $L$.
    Then at least one of the following holds:
    \begin{enumerate}
        \item $\uALL \neq \uP^\ALL$, or
        \item $\BPP^L \neq \NEXP^L$.
    \end{enumerate}
\end{corollary}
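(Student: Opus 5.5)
The plan is to chain \Cref{prop:efi_separates_up_upspace} with \Cref{cor:bpp_nexp}, exactly as in the preceding corollary but with the weaker hypothesis. Suppose an EFI pair exists relative to $L$. By \Cref{prop:efi_separates_up_upspace}, this gives $\uP^L \neq \uPSPACE^L$. That is precisely the hypothesis of \Cref{cor:bpp_nexp}, whose conclusion is that either $\uP^\ALL \neq \uALL$ or $\BPP^L \neq \NEXP^L$. This is the desired dichotomy, since $\uALL \neq \uP^\ALL$ and $\uP^\ALL \neq \uALL$ are the same statement.

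There is no real obstacle here; the work has already been done in the earlier results. The only care needed is to check that the oracle conventions match. The EFI pair is relative to the classical language $L$, so its generators and adversaries are $L$-oracle circuits. \Cref{prop:efi_separates_up_upspace} is stated for exactly this setting, and its output separation is between $\uP^L$ and $\uPSPACE^L$ with $L$ accessed through its unitary embedding. This is the form of oracle used in \Cref{cor:bpp_nexp}.

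For the reader, I would also recall the content of \Cref{cor:bpp_nexp} in the contrapositive. If $\uALL = \uP^\ALL$ and $\BPP^L = \NEXP^L$, then $\AM^L = \NEXP^L$. The easy witness route (\Cref{cor:nc_or_am}, \Cref{cor:nc_np_separation}) then gives $\uPSPACE^L \subseteq \uP^{\PSPACE^L} = \uP^{\BPP^L} = \uP^L$. This contradicts the separation provided by the EFI pair.

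The unrelativized \Cref{thm:main-informal} is then the special case where $L$ is the empty language.
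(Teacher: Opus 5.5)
Your proposal is correct and matches the paper's proof: \Cref{prop:efi_separates_up_upspace} turns the EFI pair relative to $L$ into $\uP^L \neq \uPSPACE^L$, and \Cref{cor:bpp_nexp} then gives the dichotomy. Your recap of the contrapositive of \Cref{cor:bpp_nexp}, namely $\uPSPACE^L \subseteq \uP^{\PSPACE^L} = \uP^{\BPP^L} = \uP^L$, also matches how the paper proves that corollary.
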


\begin{proof}
    By \Cref{prop:efi_separates_up_upspace}, the assumption of the corollary implies that $\uP^{L} \neq \uPSPACE^{L}$.
    The claim thus follows from \Cref{cor:bpp_nexp}.
\end{proof}

\section{AI Disclosure}

We used ChatGPT to help prepare the manuscript.
GPT suggested some of the notation, including for uniform circuit families and the name $\puP$.
GPT's largest contributions were to the proofs in \Cref{sec:characterization,sec:synthesis}.
We wrote the proofs of \Cref{lem:fnp_fnc,lem:rotation_to_fqpseas,thm:projunitarypspace_equiv,thm:no_global_phase,lem:unitarypspace_dichotomy_witnexp,lem:u_to_entry,prop:efi_separates_up_upspace,lem:entry_to_column} interactively with GPT by providing the model with sketches conceived by us.
In the case of \Cref{thm:projunitarypspace_equiv}, GPT considerably simplified our proof strategy.
For \Cref{lem:u_to_entry,prop:efi_separates_up_upspace,lem:entry_to_column}, GPT suggested substantially different proofs; we chose to include them because they are simpler.
GPT provided the proof of \Cref{lem:no_approximate_projective_section} en route to \Cref{thm:no_global_phase}, with minimal input from the authors.
The authors verified all of the AI-generated proofs and reworked them for clarity.
We used ChatGPT-5.5 everywhere, with the sole exception of \Cref{prop:efi_separates_up_upspace}, which used ChatGPT-6.

\section{Acknowledgments}
We thank Fermi Ma and John Wright for vigorously debating the merits of our definitions in the early stages of this work.
Part of this work was done while W.K.\ was visiting the Simons Institute for the Theory of Computing during the 2025 Quantum Industry Day
and the 2026 Summer Cluster on Quantum Computing.
E.T.\ is supported by the Miller Institute for Basic Research in Science, University of California Berkeley.

\printbibliography

\end{document}